\documentclass[pra,aps,superscriptaddress,twocolumn,letter,nopacs,nofootinbib,longbibliography,notitlepage]{revtex4-1}

\usepackage{graphicx,color,amsmath,amsfonts,enumerate,amsthm,amssymb,mathtools,enumitem,thmtools,hyperref,mathdots,enumitem,centernot,bm,soul,bbm,stmaryrd}
\usepackage[capitalise, noabbrev]{cleveref}

\usepackage{floatrow}
\usepackage[caption=false]{subfig}

\floatsetup[figure]{style=plain,subcapbesideposition=top}

\usepackage{multirow}

\usepackage{tikz,pgfplots}
\pgfplotsset{compat=1.15}
\hypersetup{colorlinks=true,linkcolor=blue,citecolor=blue,urlcolor=blue}

\usepackage{algorithm,algpseudocode}
\usepackage{algorithmicx}


\def\F{ {\cal F} }

\def\L{ {\cal L} }

\def\Rcal{ {\cal R} }

\def\>{\rangle}
\def\<{\langle}
\newcommand{\abs}[1]{\left| {#1} \right|} 
\newcommand{\ketbra}[2]{\ensuremath{\left|#1\right\rangle\!\!\left\langle#2\right|}}

\newcommand{\tr}[1]{\mathrm{Tr}\left( #1 \right)}

\newcommand{\iden}{\mathbbm{1}}

\renewcommand{\v}[1]{\ensuremath{\boldsymbol #1}}

\definecolor{ppblue}{RGB}{46,117,182}
\definecolor{ppred}{RGB}{197, 90, 17}

\newcommand{\Tgate}[1]{T_{#1}}
\newcommand{\Tket}[1]{|T^\dagger_{#1}\rangle}
\newcommand{\Tkets}{|T_{\v{\phi}}^\dagger\rangle}
\newcommand{\Tketsprime}{|T_{\v{\phi}'}^\dagger\rangle}
\newcommand{\Tperpket}[1]{|T^{\dagger\perp}_{#1}\rangle}
\newcommand{\Tbra}[1]{\langle T^\dagger_{#1}|}
\newcommand{\Tbras}{\langle T_{\v{\phi}}^\dagger|}

\newcommand{\rele}{\varepsilon}
\newcommand{\vname}{deterministic number}
\newcommand{\rname}{projector rank}
\mathchardef\mhyphen="2D
\newcommand{\hide}[1]{}


\newcommand{\bbc}{\mathbb{C}}
\newcommand{\bbn}{\mathbb{N}}
\newcommand{\bbr}{\mathbb{R}}
\newcommand{\bbz}{\mathbb{Z}}
\newcommand{\set}[1]{\left\{#1\right\}}
\newcommand{\bitstring}[1]{\left\{0,1\right\}^{#1}}
\newcommand{\bigpr}[1]{{\rm Pr}\left(#1\right)}
\newcommand{\pr}{{\rm Pr}}
\newcommand{\brackn}[1]{\left(#1\right)}
\newcommand{\twonorm}[1]{\left\lVert #1 \right\rVert_2}
\newcommand{\sonenorm}[1]{\left\lVert #1 \right\rVert_1}
\newcommand{\widebar}[1]{\mathop{\overline{#1}}}
\newcommand{\order}[1]{O\left(#1\right)}
\newcommand{\orderlog}[1]{\tilde{O}\left(#1\right)}
\newcommand{\density}[1]{\left\lvert #1 \right\rangle \!\! \left\langle #1 \right\rvert}
\newcommand{\innerbraket}[2]{\left\langle \left. #1 \right\vert  #2 \right\rangle}
\newcommand{\innerbrakets}[2]{\langle #1 \vert  #2 \rangle}
\newcommand{\expval}[2]{\underset{#1}{\mathbb{E}}\left[{#2}\right]}
\newcommand{\childvar}[2]{\textsc{RawEstim}(#1,#2)}
\newcommand{\child}{\textsc{RawEstim}}

\newcommand{\parent}{\textsc{Estimate}}

\newcommand{\main}{\textsc{Estimate}}

\newcommand{\mainruntime}{\textsc{run-time}}

\newcommand{\compute}{\textsc{Compute}}

\newcommand{\qcompress}{\textsc{Compress}}

\newcommand{\gateseq}{\textsc{GateSeq}}

\newcommand{\zxform}{\textsc{ZX-Form}}

\newcommand{\extract}{\textsc{ConstrainStabs}}

\newcommand{\chformvar}[3]{\textsc{CH-Form}(#1,#2,#3)}

\newcommand{\overlapsvar}[2]{\textsc{SumOverlaps}(#1,#2)}
\newcommand{\overlaps}{\textsc{SumOverlaps}}
\newcommand{\cextent}{\xi^*}
\newcommand{\ubex}{\xi^*}
\newcommand{\rset}{\left\{0,1,\ldots,\min~\left\{t,n-w\right\}\right\}}
\newcommand{\vset}{\left\{0,1,\ldots,w\right\}}

\newcommand{\gen}[1]{\langle #1 \rangle}
\newcommand{\genset}[2]{\mathcal{G}({#1}, {#2})}
\newcommand{\pauli}[1]{\mathcal{P}_{#1}}
\newcommand{\pfac}[2]{\abs{#1}_{#2}}
\newcommand{\pphase}[1]{\omega(#1)}
\newcommand{\rega}{\rm a}
\newcommand{\regb}{\rm b}
\newcommand{\regc}{\rm c}
\newcommand{\parentrounds}{K}
\newcommand{\modeltau}{\tau_{\mathrm{model}}}


\usepackage{color}

\definecolor{hpCol}{RGB}{150, 0, 150}
\definecolor{newtextCol}{RGB}{150, 20, 150}

\newcommand{\newtext}[1]{#1}

\newcommand{\estimate}{\textsc{Estimate}} 
\newcommand{\join}[2]{{{#1}\!:\!{#2}}}
\newcommand{\CXgate}[2]{{\operatorname{CX_{#1,#2}}}}
\newcommand{\CX}{{\operatorname{CX}}}
\newcommand{\CZgate}[2]{{\operatorname{CZ_{#1,#2}}}}
\newcommand{\Sgate}[1]{{\operatorname{S_{#1}}}}
\newcommand{\SWAPgate}[2]{{\operatorname{SWAP_{#1,#2}}}}

\newcommand{\ket}[1]{\left\lvert{#1}\right\rangle}
\newcommand{\bra}[1]{\left\langle{#1}\right\rvert}
\newcommand{\braket}[2]{\left\langle{#1}\middle|{#2}\right\rangle}





\theoremstyle{plain}
\newtheorem{thm}{Theorem}
\newtheorem{lem}[thm]{Lemma}

\newtheorem{cor}[thm]{Corollary}

\theoremstyle{definition}
\newtheorem{defn}[thm]{Definition}
\newtheorem{rmk}{Remark}


\begin{document}


\title{Fast estimation of outcome probabilities for quantum circuits}
\date{\today}

\author{Hakop Pashayan}
\affiliation{Institute for Quantum Computing and Department of Combinatorics and Optimization, University of Waterloo, ON, N2L 3G1 Canada}
\affiliation{Perimeter Institute for Theoretical Physics, Waterloo, ON, N2L 2Y5 Canada}

\author{Oliver Reardon-Smith}
\affiliation{Faculty of Physics, Astronomy and Applied Computer Science, Jagiellonian University, 30-348 Krak\'{o}w, Poland}

\author{Kamil Korzekwa}
\affiliation{Faculty of Physics, Astronomy and Applied Computer Science, Jagiellonian University, 30-348 Krak\'{o}w, Poland}

\author{Stephen D. Bartlett}
\affiliation{Centre for Engineered Quantum Systems, School of Physics, The University of Sydney, Sydney, NSW 2006, Australia}

\begin{abstract}
	We present two classical algorithms for the simulation of universal quantum circuits on $n$ qubits constructed from $c$ instances of Clifford gates and $t$ arbitrary-angle $Z$-rotation gates such as $T$ gates. Our algorithms complement each other by performing best in different parameter regimes. The $\main$ algorithm produces an additive precision estimate of the Born rule probability of a chosen measurement outcome with the only source of run-time inefficiency being a linear dependence on the stabilizer extent (which scales like $\approx 1.17^t$ for $T$ gates). Our algorithm is state-of-the-art for this task: as an example, in approximately $13$ hours (on a standard desktop computer), we estimated the Born rule probability to within an additive error of $0.03$, for a $50$-qubit, $60$ non-Clifford gate quantum circuit with more than $2000$ Clifford gates. Our second algorithm, $\compute$, calculates the probability of a chosen measurement outcome to machine precision with run-time $\order{2^{t-r} t}$ where $r$ is an efficiently computable, circuit-specific quantity.  With high probability, $r$ is very close to $\min \set{t, n-w}$ for random circuits with many Clifford gates, where $w$ is the number of measured qubits. $\compute$ can be effective in surprisingly challenging parameter regimes, e.g., we can randomly sample Clifford+$T$ circuits with $n=55$, $w=5$, $c=10^5$ and $t=80$ $T$ gates, and then compute the Born rule probability with a run-time consistently less than $10$ minutes using a single core of a standard desktop computer. We provide a C+Python implementation of our algorithms \newtext{and benchmark them using random circuits, the hidden shift algorithm and the quantum approximate optimization algorithm (QAOA).}
\end{abstract}

\maketitle  


\section{Introduction}
\label{sec:intro}

With the rapid advancement in experimental control over noisy intermediate-scale quantum (NISQ) systems~\cite{preskill2018quantum}, claims of quantum advantage~\cite{harrow2017quantum} have recently been made using several different platforms~\cite{google2019supremacy,zhong2020quantum}. In addition to the enormous challenges in building complex quantum devices that can exhibit quantum advantage, two important but difficult problems are: how to test if these devices are operating as intended, and how to make effective computational use of NISQ systems. Alongside the improvements to quantum computing hardware, innovative ideas continue to improve the methods for simulating these devices on classical computers. Classical simulators are cheaper, more accessible, more reliable and sometimes even faster than modern quantum computers, and so classical simulation algorithms continue to play a significant role in assessing the performance of quantum devices and testing the feasibility or performance of new proposals for NISQ device applications. In this work we present a suite of classical algorithms that are state-of-the-art for simulating quantum circuits. This suite significantly broadens and diversifies the size and type of quantum circuits that can be classically simulated within a feasible run-time. In particular, our algorithms can access regimes that will be important for the verification of NISQ devices and the assessment of NISQ proposals such as the quantum machine learning protocol of Ref.~\cite{Havlicek2019}. 

In addition to other more pragmatic goals, research into the classical simulation of quantum circuits is a means of studying and quantifying the distinction between the computational power of quantum and classical computers. Here, one aims to ``simulate'' certain properties of a family of quantum circuits using only classical means in order to upper bound the classical resource costs associated with the simulation task. Quantum computational power translates into the exponential time complexity of classically simulating arbitrary sequences of universal quantum circuits. The hardness of simulating universal quantum circuits should be contrasted with particular classes of quantum circuits that can be efficiently simulated classically. The celebrated and ubiquitous example is given by stabilizer circuits~\cite{gottesman1998heisenberg}. These consist of an $n$-qubit system initialized in a computational basis state with gates composed of $poly(n)$ elementary Clifford gates and measurements in the computational basis. In this framework, the restriction to a non-universal gate set does not allow the quantum system to explore the full richness of the quantum state space and permits a $poly(n)$ run-time classical simulator of this family of circuits. \newtext{Subsequently these classical simulators have been extended to a universal gate-set consisting of the Clifford gates complemented with an additional elementary gate (commonly the $T$ gate) which promotes the gate-set to universality~\cite{bravyi2016improved, bravyi2019simulation, seddon2020quantifying}.} The run-time of these simulators grows at most polynomially in all variables except $t$, the number of elementary non-Clifford gates. This is \newtext{among the most notable achievements} of modern classical simulators since they can not only efficiently simulate stabilizer circuits but have a run-time sensitivity to the degree of departure from stabilizer circuits. 

The task of classically ``simulating'' a quantum circuit can take several different forms~\cite{jozsa2013classical}. A so-called \emph{weak simulator} is a classical algorithm which returns samples drawn from the exact or approximate outcome probability distribution of a given quantum circuit; while a \emph{strong simulator} calculates or approximates these probabilities directly. It is important to further specify the degree of accuracy required of the strong or weak simulator as this has a dramatic impact on the computational complexity associated with the simulation task (see Ref.~\cite{pashayan2017sampling} for an extended discussion). For example, the ability to exactly compute a desired Born rule probability allows one to solve extremely difficult optimization and counting problems believed to be well beyond the reach of even ideal \emph{quantum} computers restricted to polynomial run-time (problems that are hard for the complexity class NP and even \#P~\cite{valiant1979sharpp}). It is strongly believed that quantum computers cannot even approximate an arbitrary Born rule probability $p$ to within an additive approximation error $\epsilon$ in a run-time that scales at most polylogarithmically in $1/\epsilon$ or polynomially in $\epsilon/p$ since these are also \#P-hard~\cite{goldberg2017complexity,Fujii2017commuting}. Nevertheless, many works have focused on such simulators for a restricted family of quantum circuits, most notably see Ref.~\cite{gottesman1998heisenberg, aaronson2004improved} for classical strong simulators of stabilizer circuits and Refs.~\cite{valiant2002quantum, terhal2002classical} for match-gate circuits. \newtext{Our $\compute$ algorithm (introduced shortly) also satisfies this strong notion of simulation. Specifically, it computes target Born rule probabilities to machine precision.
In contrast, ideal universal quantum computers can \emph{estimate} Born rule probabilities to within an additive error of $\pm \epsilon$ in a run-time that scales polynomially in $1/\epsilon$. This means that rather than outputting the target Born rule probability $p$, they output an estimate $\hat{p}$ that is with high probability in the interval $[p-\epsilon,p+\epsilon]$. Further improvement in accuracy requires additional run-time which scales exponentially in each additional digit of precision required of $\hat{p}$.} This simulation task is also computationally easier than weak simulation (using commonly employed notion of approximate sampling)~\cite{pashayan2017sampling}. Importantly, this task is also believed to be hard for classical computers~\cite{bernstein1997quantum} and has received limited attention. Our $\main$ algorithm targets this natural notion of simulation as it is within the capabilities of universal quantum computers and finds useful near term applications.

In this paper, we present an estimator of outcome probabilities for quantum circuits on $n$ qubits consisting of~$c$ Clifford gates and $t$ instances of $T_\phi$ gates, with $w$ qubits being measured. Our estimator is actually a pair of distinct algorithms, $\main$ and $\compute$, that work as part of a larger procedure, utilizing their respective performance advantages in complementary regimes, and is state-of-the-art for the task. A novel component of our algorithm is a tailored circuit analysis, $\qcompress$, which runs in polynomial time in $\{n,c,t\}$ and outputs a compressed representation of the circuit along with a parameter $r$ \newtext{that is a key driver of run-time.} For large values of $r$, the $\compute$ algorithm can compute the exact Born rule probabilities in feasible run-times (up to machine precision). More precisely, its run-time depends exponentially on  $(t-r)$ where $r$ can generally be as large as the minimum of $t$ and the number of unmeasured qubits $(n-w)$. In this way we identify circuits that are easy to simulate using $\compute$ i.e. circuits with small $(t-r)$. Thus, our first key contribution is the identification of a previously unknown class of quantum circuits that can be efficiently simulated classically, analogous to the case of Clifford circuits with small $T$ gate count. What is more, empirical evidence indicates that randomly generated circuits generically have $r$ close to its maximal allowed value, $r\approx \min \set{t, n-w}$. \newtext{Additionally, our algorithm permits the identification of an effective number of $T$ gates in the circuit which can improve run-times exponentially in the $T$-count reduction. We assessed the empirical run-time performance of $\compute$ using the hidden shift and the quantum approximate optimization algorithm (QAOA) benchmarks from Ref.~\cite{bravyi2019simulation}. We observed performance improvements of the order $10^3$ to $10^5$ compared to the prior state-of-the-art simulation methods~\cite{bravyi2019simulation}.}

Our second key contribution is the $\main$ algorithm, which is highly complementary to $\compute$, performing particularly well for rare and difficult to simulate circuits (those with small $r$), and significantly improving the run-time compared to the previous state-of-the-art. It is an additive precision estimator, whose run-time depends exponentially on $t$, and polynomially on $\{n,w,c,r,1/\epsilon\}$, where $\epsilon$ is the desired estimation error. The exponential dependence on the non-Clifford gates is quantified by a number $\xi^*$, called the \emph{stabilizer extent}, and is the same as in Ref.~\cite{bravyi2016improved,bravyi2019simulation}. However, by extending existing methods and developing a number of new techniques, we substantially improve the scaling of the polynomial prefactors allowing our estimation algorithm to perform many orders of magnitude faster than those of Refs.~\cite{bravyi2016improved,bravyi2019simulation} in certain practically relevant parameter regimes. In particular, we improve the dependence of the run-time on the probability $p$ being estimated. First, by expressing the target Born rule probability as the norm of the average of exponentially many vectors, we can employ Monte Carlo techniques. This allows us to exploit a concentration inequality for vectors, which has not previously been used in the simulation context.
This has the desired effect of reducing run-time for estimation tasks where the target Born rule probability is small or alternatively allowing smaller probability events to be estimated to higher accuracy in a given time. Additionally, this approach bridges the conceptual divide between two well known simulation techniques: stabilizer rank based simulators and quasi-probabilistic simulators.
Second, instead of using a naive approach employing the upper bound $p\leq 1$ to obtain an upper bound on the  estimatation error, we make use of a novel algorithm that iteratively learns tighter upper bounds on $p$ to substantially tighten our upper bound on estimation error, improving the run-time. As a result, in the regime where $\epsilon\leq p$, we improve the total run-time by a factor scaling as $p^{-3}$ compared to the best previously known algorithms~\cite{bravyi2016improved,bravyi2019simulation}.

Given the current technological landscape, our algorithms offer a feasible, reliable and accessible means to simulate intermediate scale quantum computations. We believe this is a timely and important contribution with key applications including the characterization and verification of NISQ devices and the appraisal of proposals for NISQ device applications.

The paper is structured as follows. In Section~\ref{sec:overview}, we first provide a brief review of Born rule probability estimation problem, and then present our results. These include the description of the three aforementioned classical algorithms, $\qcompress$, $\compute$ and $\main$, together with theorems detailing their run-times. In Section~\ref{sec:performance} we compare our results with the previous state-of-the-art and analyse the performance of our algorithms for quantum circuits in various parameter regimes. \newtext{This section also includes the results of the numerical simulations on random, hidden shift and QAOA circuits that we employed to benchmark our algorithms. We conclude with an outlook in Sec.~\ref{sec:outlook}.} The detailed proofs of our main theorems can be found in Appendices~\ref{app:compress}-\ref{app:parent}, while the Supplemental Material contains proofs of intermediate lemmas.


\section{Results}
\label{sec:overview}

In this section, we formally state the problem that we solve, and we present our main results. These consist of two classical algorithms, $\compute$ and $\main$, that either exactly compute or estimate the outcome probability $p$ given the description of a quantum circuit. These make use of two auxiliary algorithms, $\qcompress$ and $\child$, the relation between them is presented in Fig.~\ref{fig:flowchart}. The C+Python implementation of our algorithms used to generate Figures~\ref{fig:compute-perf},~\ref{fig:optimize-perf},~\ref{fig:run-time-bounds},~\ref{fig:hidden-shift-2}, and~\ref{fig:qaoa-full-fig} and tables~\ref{tab:hidden-shift-comparison} and~\ref{tab:qaoa-comparison} can be found in Ref.~\cite{smith2020clifford}.

\begin{figure*}[t]
	\centering
        \includegraphics[width=\columnwidth]{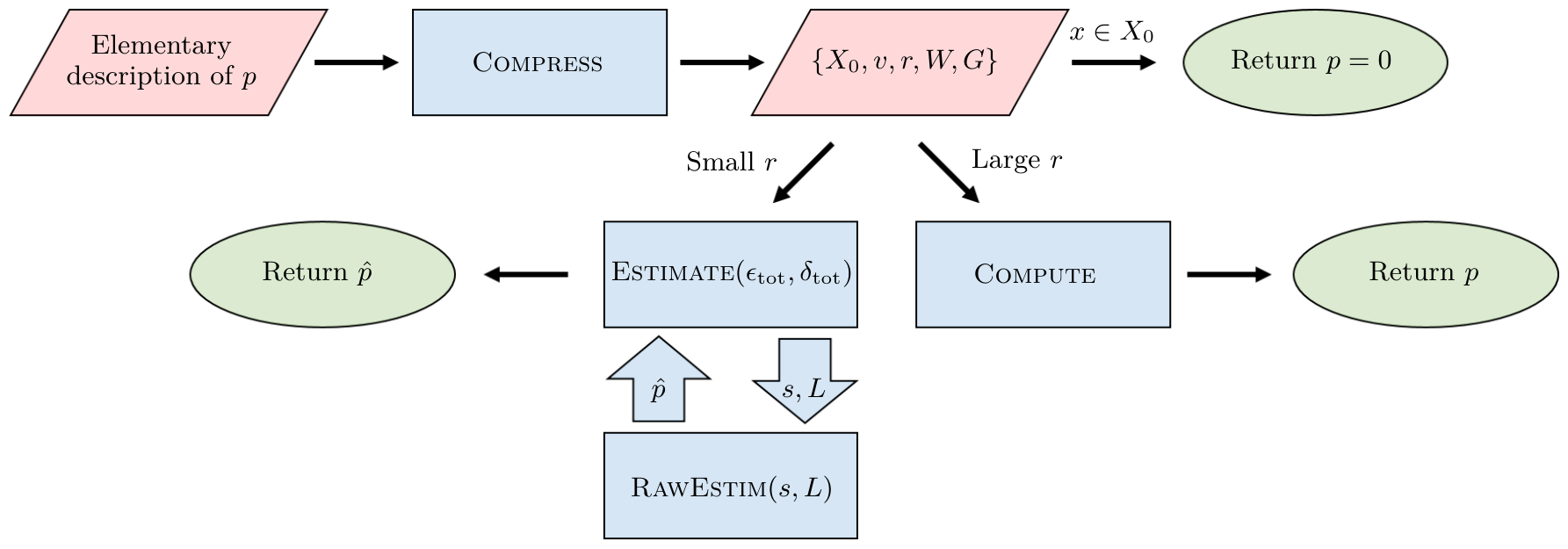}
	\caption{\label{fig:flowchart}\textbf{Flowchart for our main algorithms.} A component of the output of $\qcompress$ is the identification of $v$ measured qubits which have a deterministic outcome. Here, $X_0$ represents the set of measurement outcomes that are inconsistent with this deterministic outcome (also output by $\qcompress$).
	Owing to the $\compute$ algorithm's run-time having an inverse exponential dependence on $r$ (like $2^{t-r}$), the choice of preferred algorithm is informed by the size of $r$ (relative to $t$).
	The broad blue arrows indicate multiple calls by $\parent$ to $\child$ with different parameters $s$ and $L$, and multiple outputs of $\hat{p}$ that are fed back to $\parent$.   
 }
\end{figure*}

\subsection{Statement of problem}

Quantum circuits that initiate in a computational basis state, evolve under Clifford unitary transformations and are measured in the computational basis are efficiently classically simulable by the Gottesman-Knill theorem~\cite{gottesman1998heisenberg, aaronson2004improved}. The group of Clifford unitary transformations is generated by the gate-set $\set{S, H, CX}$ (and contains $CZ$):
\begin{equation}
    \begin{aligned}
	S&\equiv\begin{bmatrix}
		1 & 0\\ 
		0 & i 
	\end{bmatrix},&
	H&\equiv\frac{1}{\sqrt{2}}\begin{bmatrix}
		1 &~~1\\ 
		1 &-1 
	\end{bmatrix},\\ 
	CX&\equiv\begin{bmatrix}
		1 & 0 & 0 & 0\\
		0 & 1 & 0 & 0\\
		0 & 0 & 0 & 1\\
		0 & 0 & 1 & 0
	\end{bmatrix}, &
	CZ&\equiv\begin{bmatrix}
		1 & 0 & 0 & ~~0\\
		0 & 1 & 0 & ~~0\\
		0 & 0 & 1 & ~~0\\
		0 & 0 & 0 & -1
	\end{bmatrix}\text{.}
	\end{aligned}
\end{equation}
This gate-set is promoted to universality by the inclusion of a non-Clifford gate, \newtext{in particular the} diagonal gate
\begin{equation}
	\Tgate{\phi}\equiv\begin{bmatrix}
		1 & 0\\ 
		0 & e^{i\phi}
	\end{bmatrix}, 
\end{equation}
where $\phi \in (0,\pi/2)$ can be arbitrary.  A standard choice is the $T$ gate~\cite{boykin1999}, defined with $\phi=\pi/4$.

We consider a system composed of $n$ qubits initially prepared in the state $\ket{0}^{\otimes n}$. The system then evolves according to a unitary transformation $U$ to the final state \mbox{$U\ket{0}^{\otimes n}$}.  We consider circuits where $U$ is constructed via a sequence of elementary gates from the gate-set consisting of $S, H, CX, CZ$ and $\Tgate{\phi_j}$ gates, where each $\phi_j \in (0,\pi/2)$ can be arbitrary. We will call such a description of the circuit an \emph{elementary description} of $U$ and reserve the variables $c, h$ and $t$ to respectively denote the number of all Clifford gates, Hadamard gates and non-Clifford gates occurring in this description. Associated with the non-Clifford gates $\Tgate{\phi_j}$, for $j\in [t]$, are the non-stabilizer single qubit states $\Tket{\phi_j}$ and the product state $\Tkets$ defined by:
\begin{equation}
    \Tkets:=\Tket{\phi_1} \otimes \ldots \otimes\Tket{\phi_t},\qquad    \Tket{\phi_j} :=\Tgate{\phi_j}^\dagger H\ket{0}. 
\end{equation}
We use $\cextent$ to denote the quantity known as the \emph{stabilizer extent}~\cite{bravyi2019simulation} of the state $\Tkets$, which is formally defined in Eq.~\eqref{eq:extent} in Appendix~\ref{app:child}. For the moment, we simply note that for product states of qubit systems the stabilizer extent is multiplicative~\cite{bravyi2019simulation}, i.e., $\cextent$ is a product of stabiliser extents $\xi(\Tket{\phi_j})$ (so that $\cextent$ is a mild exponential in $t$), and that the extent of the single qubit state $\xi(\Tket{\phi})$ is a simple function of $\phi$ that is upper bounded by $\approx 1.17$. 

Given some ordered subset $\mathcal{J}\subseteq [n]$ of $w$ qubits to be measured in the computational basis and some outcome $x=(x_1, \ldots, x_w)\in \bitstring{w}$, our aim is to compute or estimate the probability $p(\mathcal{J},x)$ of observing the outcome $x$ when measuring the final state \mbox{$U\ket{0}^{\otimes n}$}. Without loss of generality, we will assume that the first $w$ qubits are measured and hence $\mathcal{J}=\set{1,\ldots, w}$. We refer to the first $w$ qubits as the \emph{measured} register `$\rega$' and the remaining $(n-w)$ qubits as the \emph{marginalized} register `$\regb$'.  Our central goal is to exactly or approximately compute the Born rule probability:
\begin{equation}
	\label{eq:target-prob}
	p:=\twonorm{\bra{x}_{\rega} U\ket{0}^{\otimes n}_{\rega\regb}}^2,
\end{equation}
\newtext{where the notation $\bra{0}_{\rega}$ should be interpreted as having an implicit identity matrix acting on the qubits in register $\regb$}. The Born rule probability in the above equation can be described by specifying $n\in \bbn$, $w\in [n]$, $x\in \bitstring{w}$ and an elementary description of an $n$-qubit unitary $U$. We will refer to this information as an \emph{elementary description} of $p$.

We note that estimating or exactly computing the Born rule probability for a single qubit computational basis measurement is sufficient to also allow estimation or computation (respectively) of expectation values of arbitrary tensor products of Pauli-operators. We give details of this construction in Supplemental Material Sec.~\ref{sec:supp-pauli-expectation-values}.


\subsection{The {\normalfont\texorpdfstring{$\qcompress$}{Compress}} algorithm}

The $\qcompress$ algorithm is the starting point of our Born rule probability estimator.  It takes as input the elementary description of $p$, and the purpose of this algorithm is to efficiently transform the elementary description of the Born rule probability into an alternative form, where we can decide what type of estimator is most suitable.

The $\qcompress$ algorithm is composed of three steps, which we summarize here; a detailed description is presented in Appendix~\ref{app:compress}. In the first step, we use a ``reverse gadgetization'' of $\Tgate{\phi_j}$ gates (see Eq.~\eqref{eq:reverse-gadget}) to re-express the general circuit $U$ acting on $\ket{0}_{\rega\regb}^{\otimes n}$ as a Clifford circuit $V$ acting on \mbox{$\ket{0}_{\rega\regb}^{\otimes n}\otimes\ket{0}_{\regc}^{\otimes t}$}, with the $t$ ancillary qubits in register~`$\regc$' post-selected on the state $\Tkets$. Thus, we re-express the Born rule probability from Eq.~\eqref{eq:target-prob} as:
	\begin{equation}
		p=2^{t}\twonorm{\bra{x}_{\rega} \Tbras_{\regc}V\ket{0}_{\rega\regb\regc}^{\otimes n+t}}^2.
	\end{equation}
\newtext{Second, we express the probability $p$ in terms of the trace of the product of two projectors, $\Pi_G = V \ketbra{0}{0}_{\rega\regb\regc}^{\otimes n+t}V^\dagger$ and  $\ketbra{x}{x}_{\rega} \otimes I_{\regb} \otimes \Tkets \! \Tbras_{\regc}$, and identify constraints that a given stabilizer generator $g\in G$ has to satisfy in order to contribute non-trivially to the Born rule probability. By imposing these constraints, we are able to remove some stabilizer generators and qubits from consideration and} re-express the Born rule probability $p$ as:
	\begin{equation}
		\label{eq:p-for-compute}
		p=2^{v-w}  \Tbras \prod_{i=1}^{t-r}(I+g_i) \Tkets,
	\end{equation}
where $r\in \rset$ and $v\in \vset$ are circuit specific quantities (dependent on $V$) that are efficiently computable, and $\{g_i\}_{i=1}^{t-r}$ are $t$-qubit Pauli operators (generators of a stabiliser group).  Finally, by explicitly constructing a gate sequence based on a stabiliser generator matrix, we re-express $p$ as:
	\begin{equation}
		\label{eq:p-for-child}
		p=2^{t-r+v-w}\twonorm{\bra{0}^{\otimes t-r}W \Tkets}^2,
	\end{equation}
where $W$ is a $t$-qubit Clifford circuit of length $O(t^2)$ with $O(t)$ Hadamard gates.  

The performance of $\qcompress$ is captured by the following theorem: 

\begin{thm}[\texorpdfstring{$\qcompress$}{Compress} algorithm]
	\label{thm:compress}
	Given an elementary description of $p$, $\qcompress$ outputs the \emph{\vname} \mbox{$v\in \vset$}, specifies a size $v$ subset of the measured qubits that have deterministic outcomes and provides the measurement outcomes these qubits must produce. If the input $x$ is consistent with these deterministic outcomes, then the algorithm outputs the \emph{\rname} \mbox{$r\in \rset$} and an elementary description of the $t$-qubit Clifford unitary $W$ together with a set $G$ of $(t-r)$ Pauli operators $g_i$ on $t$ qubits such that:
	\begin{equation}
	\label{eq:compressed}
	\begin{split}
		p&=2^{t-r+v-w}\twonorm{\bra{0}^{\otimes t-r}W \Tkets}^2\\
		&=2^{v-w}  \Tbras \prod_{i=1}^{t-r}(I+g_i) \Tkets.
	\end{split}
	\end{equation}
	The run-time $\tau_{\qcompress}$ of the algorithm scales as
	\begin{equation}
		\label{eq:qcompress run-time}
		\tau_{\qcompress} = {\rm poly}(n,c,t).
	\end{equation}	
\end{thm}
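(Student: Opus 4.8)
The plan is to verify that the three steps sketched above each run in time ${\rm poly}(n,c,t)$ and together produce objects satisfying Eq.~\eqref{eq:compressed}; the identities in \eqref{eq:compressed} then follow by carefully tracking the prefactors, ranks and signs through the reductions. For Step~1 I would establish the reverse-gadget identity $\Tgate{\phi}\ket{\psi}=\sqrt{2}\,(\Tbra{\phi}\otimes I)\,\CX\,(\ket{\psi}\otimes\ket{0})$ (data qubit as control, fresh ancilla as target), checked by direct computation. Applying it once per non-Clifford gate moves all $t$ rotations onto register `$\regc$', leaving a Clifford circuit $V$ on $n+t$ qubits together with the postselection $\Tbras_{\regc}$; each gadget contributes a factor $\sqrt{2}$, so squaring the amplitude yields the stated $2^{t}$ prefactor. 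Constructing $V$ is linear in the gate count, and compiling the stabiliser tableau of $V\ket{0}^{\otimes n+t}$ costs $\order{(n+t)^2}$ per gate, hence ${\rm poly}(n,c,t)$.

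The crux is Step~2. I would write $p=2^{t}\,\tr{\Pi_G\,M}$ with $\Pi_G=2^{-(n+t)}\sum_{g\in G}g$ and $M=\ketbra{x}{x}_{\rega}\otimes I_{\regb}\otimes\Tkets\!\Tbras_{\regc}$. Factorising each Pauli as $g=g_{\rega}\otimes g_{\regb}\otimes g_{\regc}$, the trace factorises, and $\tr{g_{\regb}}$ vanishes unless $g_{\regb}=I$, while $\bra{x}g_{\rega}\ket{x}$ vanishes unless $g_{\rega}$ is $Z$-type; each is a single linear condition over $\mathbb{F}_2$ on the tableau, so the surviving subgroup $G''\le G$ is found by Gaussian elimination in ${\rm poly}(n,c,t)$. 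The deterministic structure comes from the kernel $K$ of the projection $G''\to\pauli{t}$ onto `$\regc$': its elements are purely $Z$-type strings on `$\rega$', each fixing a parity of the measured bits, so Gaussian elimination pins down $v=\dim_{\mathbb{F}_2}K\le w$ qubit outcomes, and $x$ is consistent precisely when $\bra{x}s\ket{x}=+1$ for every $s\in K$ (otherwise $p=0$). Summing over cosets of $K$ then collapses the $K$-sum to a factor $2^{v}$ and leaves a sum over the quotient $\tilde G\cong G''/K\le\pauli{t}$, an abelian group of order $2^{t-r}$ with $t-r:=\dim_{\mathbb{F}_2}\tilde G$; absorbing the signs $\bra{x}g_{\rega}\ket{x}$ into the generators defines the $g_i$ and gives the second line of Eq.~\eqref{eq:compressed}, while the ranges $v\in\vset$ and $r\in\rset$ follow from counting independent generators under the two constraints in the symplectic picture.

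For Step~3, given the $t-r$ commuting independent generators $g_i$, I would use symplectic Gaussian elimination (a gate-sequence construction, \gateseq) to synthesise a Clifford $W$ with $Wg_iW^{\dagger}=Z_i$ for $i\le t-r$; standard stabiliser-circuit synthesis yields such a $W$ of length $\order{t^2}$ with $\order{t}$ Hadamards. Since $\prod_{i=1}^{t-r}(I+g_i)=2^{t-r}\,W^{\dagger}\big(\ketbra{0}{0}^{\otimes t-r}\otimes I\big)W$, substituting into the second line of Eq.~\eqref{eq:compressed} and using $\tfrac{I+Z}{2}=\ketbra{0}{0}$ gives $p=2^{t-r+v-w}\twonorm{\bra{0}^{\otimes t-r}W\Tkets}^2$, the first line. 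Collecting the costs of the three polynomial-time steps then establishes $\tau_{\qcompress}={\rm poly}(n,c,t)$, i.e.\ Eq.~\eqref{eq:qcompress run-time}.

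The main obstacle is the bookkeeping in Step~2 rather than any isolated hard inequality. One must propagate Pauli phases through the tableau so that the signs $\bra{x}g_{\rega}\ket{x}$ are well defined on cosets of $K$ (this is what makes ``consistency of $x$'' a clean $\mathbb{F}_2$ condition) and can be absorbed coherently into the $g_i$, and one must carry out the symplectic dimension count precisely enough to prove the sharp ranges $v\le w$ and $r\le\min\set{t,n-w}$ --- in particular the bound $r\le n-w$, which ties the number of surviving `$\regc$'-generators to the $n-w$ marginalised qubits. Verifying that this kernel/quotient decomposition is computed correctly and in polynomial time is where most of the care is required.
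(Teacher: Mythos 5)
Your proposal follows the paper's own proof essentially step for step: the same reverse gadget and $2^t$ prefactor in Step~1; the same constraint extraction in Step~2, where your signed projection $G''\to\pauli{t}$ is exactly the paper's homomorphism $f_x(g)=\pphase{g}\bra{x}\pfac{g}{\rega}\ket{x}\,\pfac{g}{\regc}$, your kernel $K$, consistency condition $\bra{x}s\ket{x}=+1$ for all $s\in K$, and coset factor $2^v$ matching the paper's deterministic outcomes and deleted generators; and the same tableau-based synthesis of $W$ with $\order{t^2}$ gates and $\order{t}$ Hadamards in Step~3. The identities in Eq.~\eqref{eq:compressed} and the polynomial run-time are obtained exactly as in the paper.

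The one genuine gap is the range claim $r\le\min\set{t,n-w}$, which is part of the theorem statement. The ``symplectic dimension count'' you defer to cannot deliver it in the form you describe: imposing identity on register `$\regb$' is $2(n-w)$ linear conditions and $Z$-type on register `$\rega$' is $w$ conditions, so naive counting gives $\dim_{\mathbb{F}_2}G''\ge (n+t)-2(n-w)-w=t-(n-w)$, hence $t-r=\dim G''-v\ge t-(n-w)-v$, i.e.\ only $r\le n-w+v$ --- off by exactly $v$. The paper closes this with a dedicated lemma (Lemma~\ref{lem: independence under fx}): working qubit by qubit on register `$\rega$' in $ZX(j)$-form, if qubit $j$ carries both a leading-$X$ and a leading-$Z$ generator, then after projecting qubit $j$ the image of the leading-$Z$ generator remains independent of the images of the non-leading generators; consequently each measured qubit contributes at most one deletion in total, shared between the register-`$\rega$' constraint and the kernel, so the `$\rega$' deletions number at most $w-v$ and the count sharpens to $r\le n-w$. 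If you want to avoid that bookkeeping, there is a cleaner route: by Eq.~\eqref{eq:extract short}, $2^{-r+v-w}\Pi_G=\trr{\regb}{\ketbra{\chi}{\chi}}$ where $\ket{\chi}=\bra{x}_{\rega}V\ket{0}^{\otimes n+t}$ is a (subnormalized) pure state on registers `$\regb\regc$'; the rank of this partial trace is the Schmidt rank across the $\regb$--$\regc$ cut, hence at most $2^{\min\set{n-w,\,t}}$, while ${\rm rank}(\Pi_G)=2^{r}$, giving $r\le\min\set{t,n-w}$ whenever $\Pi_G\neq 0$. Either argument must be added for your plan to prove the theorem as stated.
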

\noindent The proof of this theorem is given in Appendix~\ref{sec:gate-seq}. 
If $x$ is not consistent with the deterministic outcomes specified by the $\qcompress$ algorithm then we immediately conclude that $p=0$ and we have efficiently calculated the target Born rule probability. Otherwise, we have two choices: either to use the $\compute$ or the $\main$ algorithm. In making this choice, the size of $r$ relative to $t$ will be important. The quantity $r$ is the binary logarithm of the rank of the stabilizer projector defined by $G$ but we will call it the \emph{\rname} for short. On the one hand, $r$ can be interpreted as a measure of quantum circuit's compressibility for the $\compute$ algorithm, since we will show that the only exponential component of the run-time of $\compute$ is a factor of $2^{t-r}$. On the other hand, $r$ can be interpreted as an incompressibility measure for the $\main$ algorithm, since we will show that its run-time scales as $r^3$, because $r$ will be the number of qubits appearing in the most computationally expensive step of $\main$.

\newtext{
\begin{rmk}[Reducing the $T$-count]\label{rmk}
An extension of Theorem~\ref{thm:compress} provides us with a polynomial-time algorithm to reduce the $T$-count $t$ to an effective $T$-count $t^\prime \leq t$. This procedure will result in the creation of ``primed'' counterparts to the variables $t,r$ and $\xi^*$ that satisfy the following properties: $\xi^\prime \leq \xi^*$, $t^\prime \leq t$, $r^\prime \leq r$ and $t^\prime - r^\prime \leq t -r$.

This procedure has the effect of improving the run-times of $\compute$, $\child$ and $\parent$ exponentially in the $T$-count reduction. Details of the procedure may be found in Appendix~\ref{sec:T-count reduction}. For the sake of simplifying notation we give run-times for the subsequent $\compute$, $\child$ and $\parent$ algorithms assuming the ``unprimed'' variables and note that they remain true if the unprimed variables are replaced with their primed counterparts. We specify the run-times for the general case in Appendix~\ref{sec:T-count reduction}. Although we do not expect substantial reductions in $\Tgate{}$ gate count for worst-case circuits, we see $t^\prime$ values substantially lower than $t$ driving the dramatic performance improvements of our algorithms on the hidden-shift and QAOA benchmark we consider in Sec.~\ref{sec:performance-existing-bounds}.
\end{rmk}}

\newtext{We note that our $\qcompress$ algorithm builds on a key idea from Ref.~\cite{bravyi2016improved}; that from a given $n+t$ qubit stabilizer projector $\Pi$ one can compute a $t$ qubit stabilizer project $\Pi'$ and an integer $u$ such that $\bra{0^n}\Pi\ket{0^n}=2^{-u}\Pi'$. Our analysis, however, goes substantially further. In particular, the algorithmic identification of deterministic outcomes, the reduction of the effective $T$-count and the identification of the parameter $r$ as a key driver of run-time are, to our knowledge, new.}

\subsection{The {\normalfont\texorpdfstring{$\compute$}{Compute}} algorithm}

If $\qcompress$ outputs a large value of $r$ relative to $t$ (in the sense that $(t-r)$ is small), the $\compute$ algorithm is likely to outperform our $\main$ algorithm.
The key idea behind the $\compute$ algorithm is that the target quantity can be calculated in run-time $\order{2^{t-r}t}$ by directly summing the $2^{t-r}$ terms appearing in the expansion of Eq.~\eqref{eq:p-for-compute}. The $\compute$ algorithm slightly improves on this by using a Gray code ordering to cycle through the terms in the sum with minimal effort. Full details of this algorithm are given in Appendix~\ref{app:compress}.

The performance of $\compute$ is captured by the following theorem:

\begin{thm}[\compute~algorithm]
	\label{thm:compute}
	Given the output of the $\qcompress$ algorithm, $\compute$ outputs $p$ (up to machine precision) in the run-time:
	\begin{align}\label{alg:run-time compute}
		\tau_\compute=\order{2^{t-r}t}.
	\end{align}
\end{thm}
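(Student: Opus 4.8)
The plan is to compute $p$ directly from the second line of Eq.~\eqref{eq:compressed}, namely $p = 2^{v-w}\,\Tbras \prod_{i=1}^{t-r}(I+g_i)\,\Tkets$, by expanding the operator product into its $2^{t-r}$ monomials. Since the $g_i$ are mutually commuting Hermitian Pauli operators (elements of a stabiliser group),
\begin{equation}
\prod_{i=1}^{t-r}(I+g_i) = \sum_{S\subseteq [t-r]} g_S, \qquad g_S := \prod_{i\in S} g_i,
\end{equation}
where each $g_S$ is again a Hermitian $t$-qubit Pauli, i.e.\ a signed tensor product $g_S = \epsilon_S\bigotimes_{j=1}^t P_{S,j}$ with $\epsilon_S\in\{\pm1\}$ and $P_{S,j}\in\{I,X,Y,Z\}$. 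Hence
\begin{equation}
p = 2^{v-w}\sum_{S\subseteq[t-r]} \Tbras g_S \Tkets.
\end{equation}

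First I would exploit the product structure $\Tkets=\Tket{\phi_1}\otimes\cdots\otimes\Tket{\phi_t}$ to factorise every term,
\begin{equation}
\Tbras g_S \Tkets = \epsilon_S \prod_{j=1}^{t}\Tbra{\phi_j} P_{S,j}\Tket{\phi_j},
\end{equation}
so each of the $2^{t-r}$ terms is a product of $t$ single-qubit expectation values. These values are drawn from a table of only $4t$ numbers $\{\Tbra{\phi_j}P\Tket{\phi_j}\}$ for $P\in\{I,X,Y,Z\}$, which I would precompute once in $O(t)$ time from the closed form $\Tket{\phi_j}=\tfrac{1}{\sqrt2}(\ket{0}+e^{-i\phi_j}\ket{1})$.

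Next I would enumerate the subsets $S$. A naive enumeration recomputes each $g_S$ and its factorised product from scratch in $O(t)$ time per subset, already giving the stated $O(2^{t-r}t)$. To reduce the constant, $\compute$ visits subsets in Gray-code order so that consecutive subsets differ by toggling a single generator $g_i$. I would maintain the current Pauli $g_S$ in symplectic form together with its running product $\prod_j \Tbra{\phi_j}P_{S,j}\Tket{\phi_j}$; at each step, multiplying by the toggled $g_i$ changes the single-qubit Pauli only on the qubits where $g_i$ is supported, so I divide out and multiply back in only those factors. The per-step cost is the weight of $g_i$ (at most $t$), and summing over the $2^{t-r}$ steps gives $O(2^{t-r}t)$; multiplying the accumulated sum by $2^{v-w}$ yields $p$.

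The hard part will be the phase and sign bookkeeping. Because $XY=iZ$, forming $g_S g_i$ as a genuine Hermitian Pauli (rather than picking up spurious factors of $i$) requires tracking the sign $\epsilon_S$ via the standard symplectic Pauli-multiplication rule, whose update depends on the symplectic inner product of $g_S$ with $g_i$ and is computable in time proportional to the weight of $g_i$. Since the $g_i$ commute, this incremental phase is path-independent and consistent with toggling generators on and off (multiplication by $g_i$ being an involution up to the tracked phase), and Hermiticity of each $g_S$ guarantees every term, and hence $p$, is real---a useful consistency check. All of this is routine in the stabiliser formalism and costs only $O(1)$ per affected qubit, so it does not disturb the $O(2^{t-r}t)$ bound.
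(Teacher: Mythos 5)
Your proposal follows essentially the same route as the paper's proof: expand $\prod_{i=1}^{t-r}(I+g_i)$ into the $2^{t-r}$ elements of $\langle G\rangle$, evaluate each expectation value $\Tbras g \Tkets$ in $\order{t}$ time using the tensor-product structure of $\Tkets$, and enumerate the group in reflected-binary Gray-code order so that each successive term is obtained from the previous one by a single generator multiplication, with standard symplectic sign bookkeeping. This is exactly the paper's algorithm, and your core argument establishes the stated bound.

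Two side remarks in your write-up are, however, incorrect and worth fixing. First, the claim that a naive enumeration recomputing each $g_S$ from scratch costs $O(t)$ per subset, ``already giving the stated $O(2^{t-r}t)$,'' is wrong: computing $g_S=\prod_{i\in S}g_i$ from scratch requires $|S|-1$ Pauli multiplications and hence $\order{|S|\,t}$ time, so the naive bound is $\order{2^{t-r}(t-r)\,t}$. The Gray-code ordering is therefore not a constant-factor refinement but precisely the ingredient that secures the claimed run-time, which is exactly how the paper deploys it. Second, your divide-out-and-multiply-back scheme for maintaining the running product $\prod_j \Tbra{\phi_j}P_{S,j}\Tket{\phi_j}$ breaks down whenever a factor vanishes, and factors vanish exactly: $\Tbra{\phi_j}Z\Tket{\phi_j}=0$ for every $j$, since each $\Tket{\phi_j}$ lies on the equator of the Bloch sphere (the remaining table entries, $\cos\phi_j$ and $-\sin\phi_j$, are nonzero for $\phi_j\in(0,\pi/2)$). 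Any $g_S$ with a $Z$ on some qubit zeroes the running product, after which division is undefined. This is easily patched---keep a count of zero factors together with the product of the nonzero ones, or simply recompute the $t$-fold product from the lookup table at each Gray-code step, which still costs only $\order{t}$---so the theorem's bound survives, but the optimization as literally described would fail.
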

\noindent The proof of this theorem is given in Appendix~\ref{app:compute_proof}.

\subsection{The {\normalfont\texorpdfstring{$\main$}{Estimate}} algorithm}

If the \rname~is too small and $\tau_\compute$ becomes infeasible, we may use our main result: the $\main$ algorithm.  This algorithm produces Born rule probability estimates satisfying a desired additive error and failure probability. Our $\main$ algorithm makes use of a crucial subroutine we call $\child$. This subroutine produces an estimate $\hat{p}$ of $p$ given run-time constraints specified by a pair of parameters $s$ and $L$. Optimal values for these parameters leading to estimates that satisfy a desired additive error and failure probability are determined by $\parent$. Here, we will first summarize the $\child$ subroutine (details of which are presented in Appendix~\ref{app:child}), and then briefly describe our $\main$ algorithm (with details in Appendix~\ref{app:parent}).

At its core, the $\child$ algorithm uses a concentration inequality (see Lemma~\ref{lem:hoeffding}) to bound the norm between a target vector $\ket{\mu}$ and a ``simulated'' approximation $\ket{\widebar{\psi}}$. The target quantity $p$ is directly related to the Euclidean norm of the target vector $\ket{\mu}$. Thus, an estimate of the Euclidean norm of the approximation vector $\ket{\widebar{\psi}}$ is used to compute an estimate of $p$. The approximation vector $\ket{\widebar{\psi}}$ is a uniform superposition of $s$ randomly sampled stabilizer states. The sample space of stabilizer states and the probability distribution over these is directly constructed from stabilizer decompositions of magic states $\Tkets$. \newtext{The expert reader may note that
$\ket{\widebar{\psi}}$ is an unbiased estimator of $\ket{\mu}$ constructed as a vector level analogue of quasi-probabilistic estimators (cf. \cite{pashayanthesis, pashayan2015estimating, seddon2020quantifying}) and our Lemma~\ref{lem:hoeffding} can been seen as a vector level analogue of the Hoeffding inequality employed in these works. $\ket{\widebar{\psi}}$ is also closely related to the random state $\ket{\Omega}$ in the sparsification lemma of Ref.~\cite{bravyi2019simulation} and our Lemma~\ref{lem:hoeffding} can been seen as an analogue of the sparsification tail bound of Ref.~\cite{bravyi2019simulation}. Thus, at a conceptual level, $\child$ unifies the quasi-probabilistic and stabilizer rank based approaches.} The $\child$ algorithm also uses a number of novel techniques to improve the run-time.

The $\child$ algorithm is composed of three steps, which are briefly summarized as follows.  
In the first step, we decompose the state $\Tkets$ appearing in Eq.~\eqref{eq:p-for-child} into a superposition of stabilizer states, thus re-expressing the Born rule probability $p$ as the length $\twonorm{\ket{\mu}}^2$ of the following vector:
	\begin{equation}
		\ket{\mu}=\sum_y q(y) \ket{\psi(y)}.
	\end{equation}
	Here, the sum is over all binary strings $y$ of length $t$, $q(y)$ is a product probability distribution and $\ket{\psi(y)}$ are unnormalised stabiliser states on $r$ qubits given by:
	\begin{equation}\label{eq:step3 psi y}
		\ket{\psi(y)}\propto \bra{0}^{\otimes t-r}W \ket{\tilde{y}},
	\end{equation}
	where $\ket{\tilde{y}}$ is a $t$-fold tensor product of single qubit stabiliser states with $y_j=0$ or $y_j=1$ meaning that qubit $j$ is in a stabiliser state $\ket{+}$ or $\ket{-i}$.
	We independently sample bit strings $y$, with probability $q(y)$, a total of $s$ times, each time returning an $r$-qubit stabilizer state $\ket{\psi_j}$ equal to $\ket{\psi(y)}$ for the sampled $y$ (the fast computation of $\ket{\psi(y)}$ is discussed in the next step). The uniform superposition of all $s$ sampled stabilizer states $\ket{\widebar{\psi}}$ is used to approximate $\ket{\mu}$. The distance between $\ket{\mu}$ and $\ket{\widebar{\psi}}$ for a given $s$ is sensitive to the lengths of $\ket{\psi(y)}$, which we upper-bound for all $y$ using the stabilizer extent: 
	\begin{equation}\label{eq:step3 max psi y}
		\max_{y}{\twonorm{\ket{\psi(y)}}^2}\leq \ubex.
	\end{equation}
	
In the second step, each sampled state $\ket{\psi_j}$ in the previous step is an unnormalised stabiliser state given by Eq.~\eqref{eq:step3 psi y}. We compute and represent these states in the phase sensitive CH form introduced in Ref.~\cite{bravyi2019simulation}. In order to obtain the needed CH forms of $\ket{\psi_j}$ we do the following. First, even before taking any samples, we pre-compute the CH form of $W\ket{\tilde{0}\dots\tilde{0}}$ using the phase-sensitive simulator of Ref.~\cite{bravyi2019simulation}. Then,	for each sampled $y$, we efficiently update the CH form of $W \ket{\tilde{0}\dots\tilde{0}}$ to get the CH form of $W \ket{\tilde{y}}$. Finally, we use a novel subroutine that efficiently yields the CH form of the post-selected state $\bra{0}^{\otimes t-r}W \ket{\tilde{y}}$, and so of $\ket{\psi(y)}$. The vector $\ket{\widebar{\psi}}$ is represented and stored as the CH forms of $\ket{\psi_j}$ for $j\in [s]$.
			
Finally, as the third step, we employ the fast norm estimation algorithm from Ref.~\cite{bravyi2019simulation} to estimate the norm of~$\ket{\widebar{\psi}}$. The square of the returned norm is the $\child$ algorithm's Born rule probability estimate $\hat{p}$.

The $\child$ algorithm's performance is characterized by the following theorem:
\begin{thm}[$\child$ algorithm]
	\label{thm:child alg}
	Given the output of the $\qcompress$ algorithm and two positive integers $s$ and $L$, $\child$ outputs an estimate $\hat{p}$ of the outcome probability $p$ such that for all $\epsilon_{\rm tot}>0$ and \mbox{$\epsilon \in (0,\epsilon_{\rm tot})$}:
	\begin{equation}\label{eq:thm3}
	\begin{aligned}
	\hspace{-2ex}\bigpr{\abs{\hat{p}-p}\geq \epsilon_{\rm tot}}\leq &2e^2 \exp\left(\frac{-s(\sqrt{p+\epsilon}-\sqrt{p})^2}{2(\sqrt{\ubex}+\sqrt{p})^2} \right)\\+&\exp\left(-\left(\frac{\epsilon_{\rm tot}-\epsilon}{p+\epsilon}\right)^2 L \right)=:\delta_{\mathrm{tot}}.
	\end{aligned}
	\end{equation}
	The run-time $\tau_{\child}$ of the algorithm scales as
	\begin{equation}
		\label{eq:child run-time}
		\tau_{\child} = \order{s t^3 + sL r^3}.
	\end{equation}	
\end{thm}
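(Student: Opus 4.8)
The plan is to treat the two independent sources of randomness in $\child$ separately and then combine them with a union bound, splitting the total error budget $\epsilon_{\rm tot}$ into a sampling part $\epsilon$ and a norm-estimation part $\epsilon_{\rm tot}-\epsilon$. The first source is the Monte Carlo sampling of Step 1, which produces $\ket{\widebar\psi}=\frac{1}{s}\sum_{j=1}^{s}\ket{\psi_j}$, where each $\ket{\psi_j}=\ket{\psi(y)}$ for an independent draw $y\sim q$. Since $q$ is a probability distribution and $\ket{\mu}=\sum_y q(y)\ket{\psi(y)}$, the vector $\ket{\widebar\psi}$ is an unbiased estimator of $\ket{\mu}$, and by Eq.~\eqref{eq:step3 max psi y} each summand obeys $\twonorm{\ket{\psi_j}}\le\sqrt{\ubex}$. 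The second source is the randomness internal to the norm-estimation subroutine of Step 3, which is independent of the sampling and returns $\hat p$, an estimate of the exact quantity $\bar p:=\twonorm{\ket{\widebar\psi}}^2$.

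For the sampling error I would apply Lemma~\ref{lem:hoeffding} to the centred vectors $\ket{\psi_j}-\ket{\mu}$, whose norms are bounded by $\twonorm{\ket{\psi_j}}+\twonorm{\ket{\mu}}\le\sqrt{\ubex}+\sqrt p$ (recall $\twonorm{\ket{\mu}}^2=p$). This yields
\begin{equation}
\bigpr{\twonorm{\ket{\widebar\psi}-\ket{\mu}}\ge\delta}\le 2e^2\exp\!\left(\frac{-s\,\delta^2}{2(\sqrt{\ubex}+\sqrt p)^2}\right),
\end{equation}
and I would choose $\delta=\sqrt{p+\epsilon}-\sqrt p$, which reproduces the first term of $\delta_{\mathrm{tot}}$. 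The key step is to convert this bound on the $\ell_2$ deviation of the vector into a two-sided bound on the squared norm. On the good event $G:=\{\twonorm{\ket{\widebar\psi}-\ket{\mu}}<\delta\}$, the reverse triangle inequality gives $\sqrt{\bar p}\in(\sqrt p-\delta,\sqrt p+\delta)$. The upper side gives $\bar p<p+\epsilon$ immediately; for the lower side I would use concavity of $\sqrt{\cdot}$, namely $\sqrt p-\sqrt{p-\epsilon}\ge\sqrt{p+\epsilon}-\sqrt p=\delta$, to conclude $\sqrt{\bar p}>\sqrt p-\delta\ge\sqrt{p-\epsilon}$ and hence $\bar p>p-\epsilon$. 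Thus $G$ forces $|\bar p-p|<\epsilon$ on both tails from the single deviation event.

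Next I would invoke the multiplicative guarantee of the fast norm-estimation algorithm of Ref.~\cite{bravyi2019simulation}: for any fixed input it returns $\hat p$ with $\bigpr{|\hat p-\bar p|\ge\eta\,\bar p}\le\exp(-\eta^2 L)$, and I would set $\eta=(\epsilon_{\rm tot}-\epsilon)/(p+\epsilon)$ so that this reproduces the second term. Intersecting the two good events, $\bar p<p+\epsilon$ gives $\eta\bar p<\epsilon_{\rm tot}-\epsilon$, whence $\hat p<\bar p+\eta\bar p<(p+\epsilon)+(\epsilon_{\rm tot}-\epsilon)=p+\epsilon_{\rm tot}$ and, using $\bar p>p-\epsilon$, $\hat p>\bar p-\eta\bar p>(p-\epsilon)-(\epsilon_{\rm tot}-\epsilon)=p-\epsilon_{\rm tot}$. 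Hence $G\cap\{|\hat p-\bar p|<\eta\bar p\}$ forces $|\hat p-p|<\epsilon_{\rm tot}$. A union bound over the complementary events then gives $\bigpr{|\hat p-p|\ge\epsilon_{\rm tot}}\le\bigpr{G^c}+\bigpr{|\hat p-\bar p|\ge\eta\bar p}$; independence of the norm-estimation randomness from the sampling lets me bound the second term by $\exp(-\eta^2 L)$ uniformly over the sampled $\ket{\widebar\psi}$ (since $\eta$ does not depend on $\bar p$), yielding exactly $\delta_{\mathrm{tot}}$.

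For the run-time I would account for the three stages. The one-time precomputation of the CH form of $W\ket{\tilde0\cdots\tilde0}$ applies the length-$O(t^2)$ Clifford $W$ (with $O(t)$ Hadamards) to a $t$-qubit CH form, costing $O(t^3)$. Each of the $s$ samples then updates this CH form to that of $W\ket{\tilde y}$ and post-selects $t-r$ qubits, at a cost of $O(t^3)$ per sample and hence $O(s\,t^3)$ overall. Finally, Step 3 runs norm estimation on $\ket{\widebar\psi}$, a superposition of $s$ stabiliser states on $r$ qubits, with $L$ repetitions; each repetition evaluates $O(s)$ CH-form inner products of $r$-qubit states at $O(r^3)$ each, giving $O(s\,L\,r^3)$. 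Summing yields $\tau_{\child}=O(st^3+sLr^3)$. I expect the main obstacle to be the error-propagation in the second and third paragraphs: correctly turning the vector-level concentration of Lemma~\ref{lem:hoeffding} into a clean two-sided additive bound on $p$ via the concavity estimate, and then threading the multiplicative norm-estimation error through the conditioning so that the two exponential tails combine into precisely $\delta_{\mathrm{tot}}$ rather than a looser bound.
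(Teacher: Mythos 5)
Your proof is correct, and its overall architecture---vector-level concentration for the sampling error, multiplicative fast norm estimation with relative error $\eta=(\epsilon_{\rm tot}-\epsilon)/(p+\epsilon)$, a union bound over the two independent failure events, and the same three-stage run-time accounting---matches the paper's. Where you genuinely diverge is in converting the bound on $\twonorm{\ket{\widebar{\psi}}-\ket{\mu}}$ into a two-sided bound on $\abs{\twonorm{\ket{\widebar{\psi}}}^2-p}$. The paper works at the level of rank-one density operators: the second half of Lemma~\ref{lem:hoeffding} bounds $\sonenorm{\density{\widebar{\psi}}-\density{\mu}}\le \twonorm{\ket{\Delta}}\left(\twonorm{\ket{\Delta}}+2\sqrt{p}\right)$ with $\ket{\Delta}=\ket{\mu}-\ket{\widebar{\psi}}$, solves the resulting quadratic to produce the $(\sqrt{p+\epsilon}-\sqrt{p})^2$ factor in Eq.~\eqref{hayes hoeffding onenorm}, and Corollary~\ref{cor:bound} then applies $\abs{\tr{A}}\le\sonenorm{A}$. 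You stay entirely with scalars: the reverse triangle inequality gives $\abs{\twonorm{\ket{\widebar{\psi}}}-\sqrt{p}}<\delta$, and concavity of the square root ($\sqrt{p}-\sqrt{p-\epsilon}\ge \sqrt{p+\epsilon}-\sqrt{p}$) closes the lower tail. With $\delta=\sqrt{p+\epsilon}-\sqrt{p}$ the two routes yield numerically identical tails, so nothing is lost; your version is arguably more elementary, since it needs only the two-norm statement \eqref{hayes hoeffding twonorm} of Lemma~\ref{lem:hoeffding} and bypasses Schatten norms and Corollary~\ref{cor:bound} altogether. The only blemish is the edge case $p<\epsilon$, where $\sqrt{p-\epsilon}$ in your concavity step is undefined; there the lower tail holds vacuously because $\twonorm{\ket{\widebar{\psi}}}^2\ge 0>p-\epsilon$, so a one-line case distinction completes the argument.
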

\noindent The proof of this theorem is presented in Appendix~\ref{app:child}.

Given the output of the $\qcompress$ algorithm and accuracy parameters \mbox{$\epsilon_{\rm tot}, \delta_{\rm tot}>0$}, $\main$ outputs an estimate $\hat{p}$ of the outcome probability $p$ such that:
	\begin{align}\label{thm:main}
		\bigpr{\abs{\hat{p}-p}\geq \epsilon_{\rm tot}}\leq \delta_{\rm tot}.
	\end{align}
The $\child$ algorithm is used as a subroutine of the $\main$ algorithm to achieve the desired error $\epsilon_{\rm tot}>0$ and failure probability $\delta_{\rm tot}>0$.  With the proper choice of input parameters $s$ and $L$, the $\child$ algorithm can achieve a desired failure probability $\delta_{\mathrm{tot}}$ of the estimate $\hat{p}$. However, this proper choice depends on the unknown quantity $p$ that we want to estimate. One could always make the conservative choice of $p=1$ in Eq.~\eqref{eq:thm3}, which will result in well-defined but highly suboptimal (too large) input parameters $s$ and $L$. In contrast, the run-time of our $\main$ algorithm takes advantage of improvements that become significant for small $p$. The $\main$ algorithm achieves this by calling the $\child$ subroutine multiple times, with different choices of $s$ and $L$. It starts with $s=s_0$ and $L=L_0$ so small that they cannot possibly satisfy the desired accuracy requirement. Then, at each step it chooses larger $s_k,L_k$ that lead to estimates $\hat{p}_k$, which are used to learn upper bounds on $p$ that decrease with each iteration. These, in turn, allow one to estimate sharper values of $s$ and $L$ to achieve the desired accuracy.

The run-time of $\main$, $\tau_{\main}$, has two distinct components we call the \emph{circuit-sensitive} and the \emph{circuit-insensitive} components. The circuit-sensitive component of $\tau_{\main}$ is associated with the total run-time over all calls to the $\child$ subroutine. The run-time of the $\child$ subroutine will approximately double in each subsequent call with the run-time of each round and the total number of rounds depending on circuit parameters (such as $t$) and accuracy parameters (such as $\epsilon_{\rm tot}$). Typically, this component constitutes the overwhelming majority of~$\tau_{\main}$. The circuit-insensitive component of~$\tau_{\main}$ arises from various numerical optimizations that are executed in each step of the $\main$ algorithm, e.g. to determine the choice of $s_k, L_k$ for each step $k$. The run-time of each such step is of order $\sim 1$ second (for a standard desktop computer) and it is insensitive to the various parameters that define the Born rule probability estimation task. The total number of steps is also small with more than $\sim 50$ steps being infeasible due to the exponential growth of the run-time of $\child$ in the step number $k$. For this reason, we treat the circuit-insensitive component of $\tau_{\main}$ as a fixed run-time cost.

Consistent with Eq.~\eqref{eq:child run-time}, we model the run-time of $\child$ as:
\begin{align}\label{eq:model child run-time}
    \modeltau(s,L):=c_1 s t^3 + c_2 s L r^3,
\end{align}
where $c_1, c_2$ are hardware specific positive constants (in units of seconds per elementary operation) that can be used to model the actual run-time of $\child$. The $\parent$ algorithm aims to minimize the quantity:
\begin{align}\label{eq:child cost}
	\mathcal{C}:=\sum_{k\in [\parentrounds]} \modeltau(s_k,L_k),
\end{align}
where $\parentrounds$ is the total number of times the $\child$ algorithm will be called and $s_k$, $L_k$ indicate the input parameters used on the $k^{th}$ call. We call $\mathcal{C}$ the \emph{run-time cost}; it represents our modelled circuit-sensitive component of the run-time of $\main$.

The run-time cost $\mathcal{C}$ is probabilistic and depends on the unknown $p$. Our $\mainruntime$ algorithm efficiently computes a probabilistic upper bound of $\mathcal{C}$ for any assumed $p$. This may be useful for informing expected run-times particularly when prior information about $p$ is known. Our $\main$ and $\mainruntime$ algorithms, together with related details, can be found in Appendix~\ref{app:parent}. We note that our $\main$ algorithm allows the user to fix the accuracy parameters, $\epsilon_{\rm tot}$ and $\delta_{\rm tot}$, for the price of moving their dependence on $p$ to $\tau_{\main}$.


\section{Discussion of the performance of our algorithms}
\label{sec:performance}

In this section we first review the existing simulation algorithms, and then compare our results with them, demonstrating that our suite of algorithms offers state-of-the-art performance in Born rule probability estimation across a broad range of parameter regimes.

\subsection{Related research}
\label{sec:related}

Brute force simulation algorithms such as Schr\"{o}dinger-style~\cite{fatima2020faster}, Feynman-style~\cite{de2019massively, markov2008simulating, de2007massively} or hybrid simulators~\cite{markov2018quantum} offer high precision general purpose classical simulation capabilities for universal quantum circuits. However, such simulations can be extremely resource intensive for moderate circuit width (number of qubits $n \approx 40$) and/or depth. Alternatively, there exist efficiently classically simulable families of (non-universal) quantum circuits~\cite{gottesman1998heisenberg,bartlett2002efficient,terhal2002classical,aaronson2004improved,Jozsa2008matchgates}.  In particular, the Gottesman-Knill theorem makes it possible to classically simulate thousands of qubits with hundreds of thousands of gates provided that we restrict to so-called stabilizer circuits~\cite{gottesman1998heisenberg}.

Between these two extremes, Aaronson and Gottesman~\cite{aaronson2004improved} were the first to present  a classical simulation algorithm that is efficient for stabilizer circuits but can also simulate non-stabilizer circuits with a run-time cost that is exponential in the number of non-stabilizer gates (non-Clifford gates). A limitation of this work is that the run-time does not depend on the specifics of the additional non-stabilizer gates. Thus, their simulator pays a heavy run-time penalty for introducing a small number of non-stabilizer gates even if these are arbitrarily close to stabilizer gates.  Research to overcome this limitation falls into two broad categories:  Born rule probability estimators based on using a quasi-probabilistic representation of the density matrix~\cite{Rall2019, pashayan2017sampling, pashayanthesis,howard2017application,seddon2020quantifying,pashayan2015estimating, veitch2012negative, mari2012positive}, and pure-state sampling simulators~\cite{garcia2012efficient, garcia2014geometry, bravyi2016trading,bravyi2016improved}.  

In the pure state formalism, a number of works~\cite{garcia2012efficient, garcia2014geometry, bravyi2016trading} have culminated in two important simulation algorithms by Bravyi and Gosset (BG)~\cite{bravyi2016improved,bravyi2019simulation}. The first of these, which we refer to as the \textit{BG-estimation algorithm}, produces multiplicative precision estimates of Born rule probabilities. The second of these, which we refer to as the \textit{BG-sampling algorithm}, approximately samples from the outcome distribution of the quantum circuit. These algorithms exactly or approximately represent the initial quantum state by a linear combination of stabilizer states. The efficiently simulable circuits consist of initial states that are a superposition of at most polynomially many stabilizer states, together with Clifford gates and computational basis measurements. These circuits can be promoted to universality by allowing initial states to include many copies of a magic state. The run-times of the BG-estimation and BG-sampling algorithms depend linearly on the exact and approximate stabilizer rank of the initial quantum state respectively. Roughly, the exact (or approximate) stabilizer rank of a quantum state is the minimal number, $\chi$, of stabilizer states required such that this state can exactly (or approximately) be written as a linear combination of $\chi$ stabilizer states. Both algorithms have run-times that scale linearly in their respective stabilizer ranks and efficiently in all other circuit parameters, although some of the polynomial dependencies are nevertheless significant and can be prohibitive. Both the exact and approximate stabilizer ranks are computationally hard to compute even for product states although upper bounds exist for some important examples. Ref.~\cite{bravyi2019simulation} introduced a computationally better-behaved quantity $\xi$, called the stabilizer extent, and showed that the approximate stabilizer rank of an initial state $\ket{\psi}$ can be upper bounded by $\xi(\ket{\psi})/\epsilon^2$, where $\epsilon$ quantifies the degree of error in the approximation of $\ket{\psi}$. Ref.~\cite{bravyi2019simulation} also presented the \emph{sum over Cliffords} sampling algorithm: a new variant of the BG-sampling algorithm where non-Clifford gates are directly simulated by expressing them as a linear combination of Clifford gates. To compare this to our work, we consider the application of this technique to diagonal single qubit non-Clifford gates $\Tgate{\phi}$ inducing a $Z-$rotation of angle $\phi\in (0,\pi/4)$. For circuits composed of exactly $t$ uses of $\Tgate{\phi}$, the run-time of the sum over unitaries algorithm scales linearly in the stabilizer extent of the state \mbox{$\Tkets$}.

In the density matrix formalism, algorithms known as quasi-probabilistic simulators~\cite{pashayan2015estimating, howard2017application, pashayanthesis, seddon2020quantifying} produce additive precision estimates of Born rule probabilities. These algorithms represent the quantum density matrix as a linear combination of a preferred set of operators known as a \emph{frame}~\cite{pashayan2015estimating, Ferrie_2008}. Many frame choices have been considered including Weyl-Heisenberg displacement operators~\cite{Rall2019, pashayan2017sampling, pashayanthesis}, frames constructed from stabilizer states~\cite{howard2017application, seddon2020quantifying} and phase-point operators~\cite{pashayan2015estimating, veitch2012negative, mari2012positive} used in the construction of the discrete Wigner function~\cite{GrossHudson, Gibbons2004Wootters}. Particularly relevant to our work is the dyadic frame simulator of Seddon \textit{et al.}~\cite{seddon2020quantifying}.  In this simulator, density matrices are decomposed into a linear combination of stabilizer dyads: operators of the form $\ket{L}\!\!\bra{R}$ where $\ket{L}$ and $\ket{R}$ are pure stabilizer states. The efficiently simulable circuits consist of initial states that are tensor products of convex combination of stabilizer dyads, stabilizer preserving operations including Clifford gates and computational basis measurements. These circuits can be promoted to universality by allowing initial states to include many copies of a magic state: states that are not a convex combination of stabilizer dyads and can be used to teleport non-Clifford gates into the circuit. The degree to which the initial state's optimal linear decomposition into stabilizer dyads departs from a convex combination is quantified by the \emph{dyadic negativity}. The run-time of the dyadic frame simulators depends quadratically on the dyadic negativity. The dyadic negativity can in general be exponentially large and is the only source of run-time inefficiency. Nevertheless, in contrast to the Aaronson and Gottesman simulator, the dyadic frame simulator's run-time will be responsive to the level of deviation from the efficiently simulable operations. The dyadic frame simulator of Ref.~\cite{seddon2020quantifying} is the current state-of-the-art quasi-probabilistic simulator for simulating stabilizer circuits promoted to universality via magic state injection.

The mixed-state stabilizer rank simulator of Ref.~\cite{seddon2020quantifying} made further improvements to the BG-sampling algorithm by improving the run-time dependence on the error tolerance for the approximate sampling task and by generalizing the algorithm to the setting where initial states can be mixed states. The run-time of this improved algorithm scales linearly in a quantity known as the mixed state extent~\cite{seddon2020quantifying}. Ref.~\cite{seddon2020quantifying} also showed that for any $n-$qubit product states, its dyadic negativity, stabilizer extent and mixed state extent are all equal. This result allows one to compare performance across multiple simulation algorithms in the practically relevant setting where initial states are product states.

\subsection{Performance improvements}

As compared with the related BG-estimation algorithm~\cite{bravyi2016improved}, our $\compute$ algorithm exhibits \newtext{three} obvious benefits. \newtext{First, our $\qcompress$ algorithm, can significantly reduce the complexity of the circuit to be simulated}. Second, our algorithm is exact, while the one of Ref.~\cite{bravyi2016improved} runs with a failure probability $\delta$ and relative error $\rele$, and to improve these precision parameters one has to pay the price of longer run-times. Specifically, the run-time of that algorithm is given by $\order{2^{\beta t} t^3 \rele^{-2} \log(\delta^{-1})}$, where $\beta=(1/6)\log_2 7 \approx 0.47$. Comparing this with $\tau_{\compute}$, we see that the performance of our algorithm is better in certain parameter regimes when $(t-r)\leq \beta t$. As discussed above, this happens generically for random circuits when \mbox{$(1-\beta)t \leq n-w$}. \newtext{Since $\compute$ produces results that are exact (to machine precision), it is straightforward to employ it to compute expectation values of operators expressed as sums of Pauli operators.}

The discussion of the performance of $\main$ will be divided into three parts. First, we will discuss the crucial $\child$ subroutine and point out the run-time improvements over the existing Born rule estimation algorithms. Second, we will explain additional run-time improvements that arise from the $\main$ algorithm itself, i.e., from the adaptive choice of optimal input parameters $s$ and $L$ for the $\child$ subroutine. Finally, we will justify why we expect the total run-time of $\main$ to be closely related (to within 1-2 orders of magnitude) to the run-time of $\child$ with the optimal choice of parameters.

To analyse the performance of the $\child$ subroutine, we start by employing Eq.~\eqref{eq:thm3} to note that for arbitrary $\epsilon\in (0,\epsilon_{\mathrm{tot}})$ and $\delta\in (0,\delta_{\mathrm{tot}})$ the choice of parameters $s$ and $L$ satisfying
\begin{equation}\label{eq:optimal s and L}
    \begin{aligned}
	s&\geq \frac{2 (\sqrt{\ubex} +\sqrt{p})^2}{{\left(\sqrt{p+\epsilon}-\sqrt{p}\right)}^{2}} \log\left(\frac{2 e^2}{\delta}\right),\\ %
	L&\geq \brackn{\frac{p+\epsilon}{\epsilon_{\mathrm{tot}}-\epsilon}}^2 \log\left(\frac{1}{\delta_{\mathrm{tot}}-\delta}\right),
	\end{aligned}
\end{equation}
guarantees an estimate $\hat{p}$ with error smaller than $\epsilon_{\mathrm{tot}}$ and failure probability smaller than $\delta_{\mathrm{tot}}$. The meaningful parameter regime is given by $\epsilon_{\mathrm{tot}}\ll p$ (estimation error should be smaller than the estimated value) and $\ubex\gg 1$ (we want to simulate non-Clifford circuits, as Clifford ones are already efficiently simulable). Then, the two terms of the run-time $\tau_{\child}$ characterized by Eq.~\eqref{eq:child run-time} scale as
\begin{equation}
\begin{aligned}
\tau_\child^{(1)}&=\orderlog{\ubex t^3p\epsilon_{\mathrm{tot}}^{-2}},\\ \tau_\child^{(2)}&=\orderlog{\ubex r^3p^3\epsilon_{\mathrm{tot}}^{-4}},   
\end{aligned}
\end{equation}
where $\tilde{O}$ notation hides the logarithmic dependence on the failure probability $\delta_{\mathrm{tot}}$. Importantly, note that the relative error $\rele_{\mathrm{tot}}$ introduced by the additive error $\epsilon_{\mathrm{tot}}$ is given by $\rele_{\mathrm{tot}}=\epsilon_{\mathrm{tot}}/p$. Thus, the run-time only weakly depends on the additive error as $\order{\epsilon_{\mathrm{tot}}^{-1}}$ for both terms, with the remaining scaling dependent on the relative error as $\order{\rele_{\mathrm{tot}}^{-1}}$ and $\order{\rele_{\mathrm{tot}}^{-3}}$, respectively.

We first compare the performance of $\child$ with the results of Ref.~\cite{bravyi2019simulation}, where the authors provide a subroutine approximating Born rule probabilities to additive polynomial precision. It is based on the approximate stabiliser decomposition of magic states and on a novel fast norm estimation subroutine. First, one computes $k$-rank stabiliser decomposition taking $\order{kt^3}$ steps. The crucial Theorem~1 of Ref.~\cite{bravyi2019simulation} proves that by choosing $k\approx \cextent/\epsilon_1^2$, the additive error introduced in this step will be bounded by $\epsilon_1$. Next, one uses the fast norm estimation with a failure probability $\delta_{\mathrm{tot}}$ and a relative error $\rele_2$, which takes $\orderlog{kt^3\rele_2^{-2}}$ steps, and the run-time of this step dominates the total run-time. Note that the worst case total additive error $\epsilon_{\mathrm{tot}}$ can be lower-bounded by $\epsilon_1+p\rele_2$. Thus, the term $\order{\epsilon_1^{-2}\rele_2^{-2}}$ can be optimally replaced by $\order{p^2\epsilon^{-4}_{\mathrm{tot}}}$. Taking this into account, one gets that the total run-time is $\orderlog{\cextent t^3p^2\epsilon_{\mathrm{tot}}^{-4}}$. Combining the described algorithm with its variation, \emph{the sum over Cliffords method}~\cite{bravyi2019simulation}, one gets the run-time scaling as $\orderlog{\cextent \min \set{n^3,t^3}p^2\epsilon_{\mathrm{tot}}^{-4}}$. Comparing this with $\tau_\child^{(1)}$ and $\tau_\child^{(2)}$ (and noting that $r\leq t$, $r\leq n$, $p\leq 1$ and $\epsilon_{\mathrm{tot}}^{2}/p\leq 1$), we see that $\child$ compares favourably in almost all regimes. More precisely, there is a performance advantage scaling as $\orderlog{p\epsilon_{\mathrm{tot}}^{-2}\,\min\{1,(n/t)^3\}}$ and $\orderlog{p^{-1}\,\min\{(n/r)^3,(t/r)^3\}}$ for the two components of the run-time. 

\begin{figure*}[ht]
\centering
	\subfloat{\includegraphics[width=0.45\columnwidth]{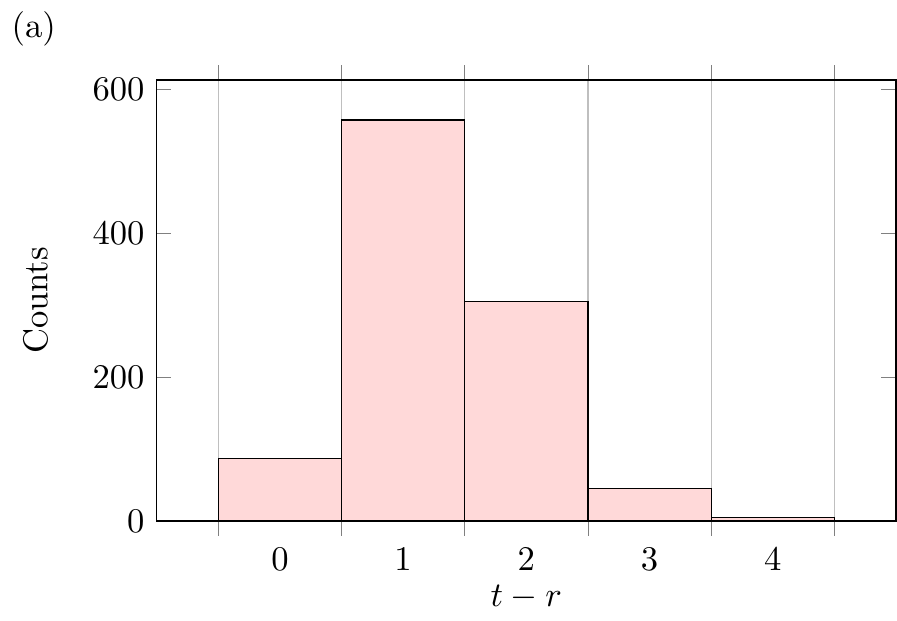}}\hspace{1cm}
	\subfloat{\includegraphics[width=0.45\columnwidth]{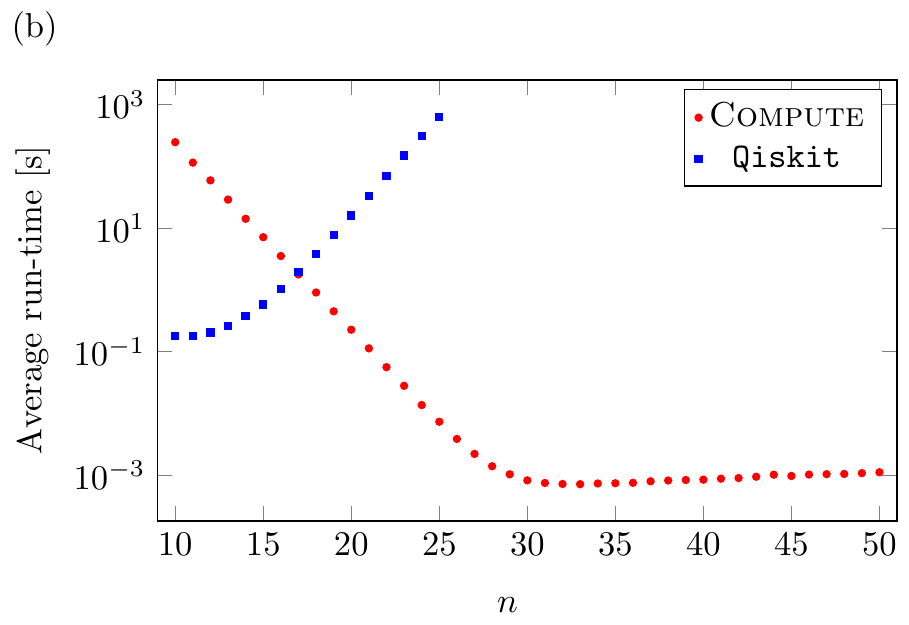}}
	\caption{\label{fig:compute-perf}\textbf{Performance of the} $\compute$ \textbf{algorithm for random circuits.} Random circuits are generated as follows: we generate $c$ Clifford gates acting on random qubits (equal probability of choosing $S$, $H$, $CX$ and $CZ$), we then replace a size $t$ random selection of these with $T$ gates. (a) The distribution of $(t-r)$ for $10^3$ random circuits with $n=100$ qubits, $c=10^5$ Clifford gates, $t=80$ $T$~gates and $w=20$ measured qubits. (b)~Average run-times for calculating the Born rule probability for random circuits with $n$ qubits, $c=10^3$ Clifford gates, $t=30$ $T$~gates and $w=10$ measured qubits, with the average taken over $10^2$ random circuits for each $n$. The red circles correspond to our $\compute$ algorithm (including the run-time needed to run $\qcompress$), while the blue squares correspond to classical state vector simulation framework of IBM's quantum programming suite \texttt{Qiskit}~\cite{aleksandrowicz2019qiskit}. Simulations were performed using a single core of a standard desktop computer.}
\end{figure*}

Next, we compare the performance of $\child$ with the results of Ref.~\cite{seddon2020quantifying}. We start by noting that the \emph{mixed-state stabilizer rank} simulator of Ref.~\cite{seddon2020quantifying} improved the run-time by a factor of up to $\epsilon_{\rm tot}^{-1}$ as compared to the sampling based simulation of Ref.~\cite{bravyi2019simulation}. This should be contrasted with our improvement factors of $p^{-1}$ and $\epsilon_{\rm tot}^{-2}/p$, and so, depending on the regime, the mixed-state stabilizer rank simulator could be better or worse than $\child$. However, it should be noted that the improvement in Ref.~\cite{seddon2020quantifying} applies specifically to the task of approximately sampling from the outcome distribution of a quantum circuit. Therefore, it is unclear how to attain such an improvement directly for the task of Born probability estimation (we note that one can attain Born probability estimates by using $\order{\epsilon_{\rm tot}^{-2}}$ samples but this invalidates the run-time advantage). Reference~\cite{seddon2020quantifying} also presents the \emph{dyadic frame simulator}. It performs exactly the same task as $\main$, i.e., it estimates a single Born rule probability with an additive error $\epsilon_{\mathrm{tot}}$, and we note that the dyadic frame simulator is more generally applicable as it is also suitable for mixed states. Ignoring the polynomial and logarithmic pre-factors, its dominant run-time scales as $\orderlog{\xi^{*2} \epsilon_{\mathrm{tot}}^{-2}}$. Therefore, we see that our $\child$ algorithm compares favorably, as it has a run-time advantage of $\cextent$ that is exponential in the number $t$ of non-Clifford gates.

We now discuss the second source of performance advantage that arises from the adaptive nature of the $\main$ algorithm. In order to produce a meaningful estimate, we require guarantees on its error $\epsilon_{\mathrm{tot}}$ and failure probability $\delta_{\mathrm{tot}}$. We note that neither $\child$ nor any of the above mentioned competing algorithms have such an accuracy guarantee, as in order to choose proper simulation parameters (like our $s$ and $L$), achieving given $\epsilon_{\mathrm{tot}}$ and $\delta_{\mathrm{tot}}$, one would need to know the unknown value of $p$. Thus, one is left to make a conservative choice of $p=1$ that kills any run-time advantage coming from the polynomial dependence on $p$. In contrast, our $\main$ algorithm is able to take advantage of this $p$ dependence. As a result, the run-time improvements related to the estimated probability and its error effectively scale as $\orderlog{p^{-1}\epsilon_{\mathrm{tot}}^{-2}}$ and $\orderlog{p^{-3}}$ (rather than the above-mentioned $\orderlog{p\epsilon_{\mathrm{tot}}^{-2}}$ and $\orderlog{p^{-1}}$). The run-time price of using $\main$, as compared to $\child$ with optimally chosen parameters $s$ and $L$, is a small circuit-insensitive overhead related to parameter optimisation, and an additional circuit-sensitive overhead arising from the fact that we make multiple calls to $\child$. The former one is so small that can be ignored, while we explain how to effectively upper-bound the latter one below. To conclude, $\main$ exhibits the following run-time improvements as compared to the run-time $\tau_{\rm previous}$ of the two methods of Ref.~\cite{bravyi2019simulation}:
\begin{equation}
\begin{aligned}
\frac{\tau_{\mathrm{previous}}}{\tau_\child^{(1)}}&=\orderlog{p^{-1}\epsilon_{\mathrm{tot}}^{-2}\,\min\{1,(n/t)^3\}},\\ \frac{\tau_{\mathrm{previous}}}{\tau_\child^{(2)}}&=\orderlog{p^{-3}\,\min\{(n/r)^3,(t/r)^3\}}.
\end{aligned}
\end{equation}

Finally, we explain why we expect that $\modeltau(s^*,L^*)$, with $(s^*,L^*)$ being the choice of parameters $s$ and $L$ optimized with respect to the unknown $p$, can act as a proxy for $\tau_\estimate$ in the regime where $p\geq \epsilon_{\mathrm{tot}}$. The $\main$ algorithm runs the $\child$ subroutine $K$ times. At each step $k$, the parameters $s_k$ and $L_k$ are chosen optimally with respect to $p_k^{\mathrm{UB}}$, an upper bound for $p$. It can be shown that in the final step, $p_K^{\mathrm{UB}}\leq p+2\epsilon_{\mathrm{tot}}$. Thus, in the regime where $p\geq \epsilon_{\mathrm{tot}}$, the final optimization is with respect to $p_K^{\mathrm{UB}}=O(p)$ with $\modeltau(s^*,L^*)$ having a cubic dependence on $p$. An additional source of discrepancy arises since the final step's optimisation uses a failure probability of $\delta_K=\frac{6}{\pi^2 K^2}\delta_{\mathrm{tot}}$ in contrast to $\delta_{\mathrm{tot}}$ used in determining $s^*$ and $L^*$. However, due to $\modeltau(s^*,L^*)$ having only a poly-logarithmic dependence on $\delta_{\mathrm{tot}}$, this also contributes a small run-time overhead to the final step's call to $\child$. Since the final call's cost is approximately half of the total run-time cost, we conclude that run-time of $\main$ should be close to the run-time of $\tau_\child$ when $p\geq \epsilon_{\mathrm{tot}}$. \newtext{As we will shortly see,} these expectations are indeed confirmed by our numerical analysis.

\subsection{Performance on random circuits}
\label{sec:performance-random}

\begin{figure*}[ht]
\subfloat{\includegraphics[width=0.45\columnwidth]{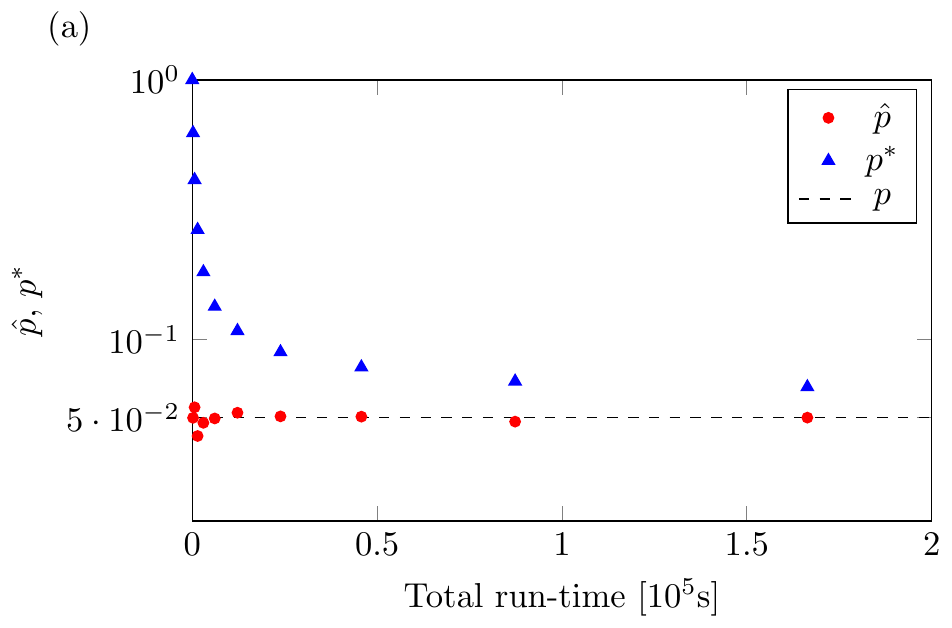}}
\hspace{1cm}
\subfloat{\includegraphics[width=0.45\columnwidth]{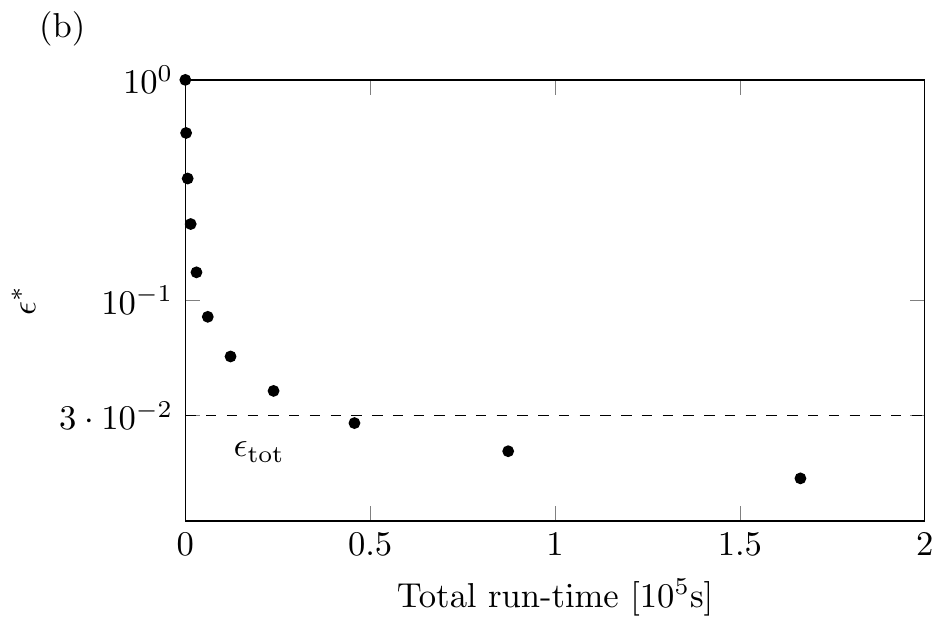}}	
	\caption{\label{fig:optimize-perf}\textbf{Performance of the} $\parent$ \textbf{algorithm.} An $n=50$ qubit, $t=60$ non-Clifford gate circuit of the form \mbox{$U U^\dagger V(p)$} as described in Sec.~\ref{sec:performance-random}. The unitary $U$ is randomly constructed as described in the caption of Fig.~\ref{fig:compute-perf}, and consists of $1000$ gates of which $26$ are non-Clifford $T_\theta$ gates. The unitary $V(p)$ acts non-trivially on the first $w=8$ qubits, which are then measured in the computational basis, leading to the the probability of the all-zero outcome $p=0.05$. The parameter $\theta$ is chosen such that the total circuit has stabiliser extent $\cextent\approx 3767$, equivalent to $52$ $T$ gates. For this circuit the value of the \rname~is $r=10$. The total run-time of approximately $1.6\times 10^5\mathrm{s}$ includes approximately $34\mathrm{s}$ of fixed overhead from the $\parent$ algorithm. (a)~The estimate $\hat{p}$ (red circles) and its upper bound $p^*$ (blue triangles) as a function of the total run-time. The dashed line indicates the chosen value of $p$. (b)~The upper-bound $\epsilon^*$ for the total estimation error as a function of the total run-time. The dashed line indicates the target error $\epsilon_{\mathrm{tot}}=0.03$, and a failure probability of $\delta_{\mathrm{tot}}=10^{-3}$ was used. The target error was obtained within $4.7\times 10^4 \mathrm{s}$}
\end{figure*}

The run-time of the $\compute$ algorithm depends exponentially on $(t-r)$, so its performance depends crucially on the value of $r$. In the case of random circuits where many Clifford gates are interleaved between each non-Clifford gate, our numerical investigations show that $r$ very strongly concentrates around the maximum allowed value of $\min\{t,n-w\}$, see Fig.~\hyperref[fig:compute-perf]{2a} for details. Thus, in certain parameter regimes, e.g., when $(n-w)\geq t$, the $\compute$ algorithm has a very quick run-time. In Fig.~\hyperref[fig:compute-perf]{2b} we present the comparison of the run-times between our $\compute$ algorithm and the IBM's \texttt{Qiskit} state vector simulator~\cite{aleksandrowicz2019qiskit}. While the run-times for the latter algorithm become infeasible on a standard desktop computer for $n>35$ (due to memory limitations), our algorithm can, within feasible run-times, compute the Born rule probabilities as long as the number of non-Clifford gates $t$ is not significantly larger than $(n-w)$. Thus, for random circuits it is not the total number of non-Clifford gates that makes our simulation infeasible, but rather the number of non-Clifford gates in excess of the number of unmeasured qubits. To illustrate this, we employed $\compute$ to obtain the Born rule probability of $1000$ random circuits with $n=55$, $w=5$, $c=10^5$ and $t=80$, the mean run-time was $459$ seconds with a maximum of $555$ seconds. For every one of these circuits, $r$ was found to take its maximal value $r=50$. Hence our $\compute$ run-time for such circuits is typical.

\begin{figure*}
\centering
\includegraphics[width=0.8\columnwidth]{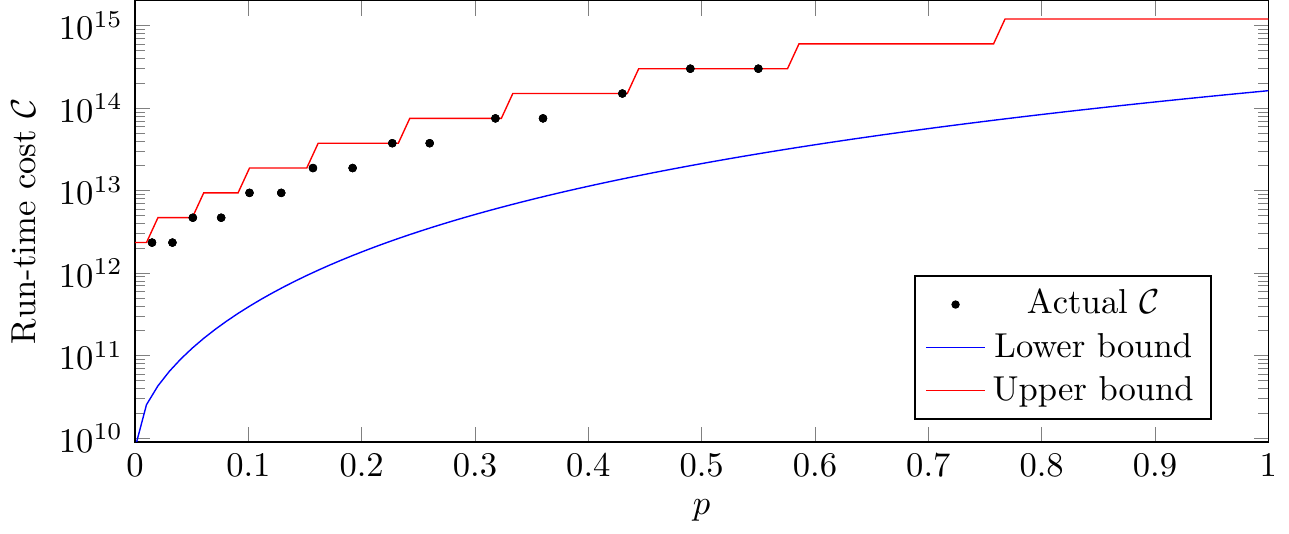}
\caption{\label{fig:run-time-bounds}\textbf{Estimate run-time cost and its bounds.} The run-time cost of the $\main$ algorithm with $\epsilon_{\rm tot}= 0.05$, $\delta_{\rm tot} = 10^{-3}$ for random circuits $UU^\dagger V(p)$. The black dots represent the total run-time cost $\mathcal{C}$, as defined in Eqs.~\eqref{eq:model child run-time}-\eqref{eq:child cost} using $c_1=c_2=1$. The circuits were acting on $n=40$ qubits and were composed of $t=40$ non-Clifford gates with the total stabilizer extent $\cextent\approx159$ (equivalent to $32$ $T$ gates) and $w=8$ measured qubits. For these circuits the value of the \rname~is $r=8$. Each black dot is in fact a cluster of $3$ independent $\main$ simulations that produce $\mathcal{C}$ values that are too close to resolve on this plot. All final Born rule probability estimates produced by $\main$ were within an additive error $0.2\epsilon_{\mathrm{tot}}$ of $p$. The top red line indicates the probabilistic upper bound (with failure probability less than $\delta_{\rm UB}=0.05$) for the total run-time cost $\mathcal{C}$ obtained with the efficient $\mainruntime$ algorithm. The bottom blue line indicates the lower bound on $\mathcal{C}$ obtained from $\tau_\child$ with the choice of parameters $s$ and $L$ being optimized using knowledge of the value of $p$. The run-time cost of the first data point with $p=0.015$ corresponds to actual computational time of $7$ minutes and the last with $p=0.55$ required approximately $10$ hours. }
\end{figure*}

\newtext{To numerically support our analysis of the performance of the $\main$ algorithm}, we use quantum circuits of the form $UU^\dagger V(p)$. Here $U$ is a random non-Clifford circuit composed of Clifford and $T_\theta$ gates, and $V(p)$ is a non-Clifford circuit that acts non-trivially on the first $w$ measured qubits as:
\begin{equation}
 \!\!\! {\begin{matrix}\includegraphics[clip,trim=0cm 0cm 0cm 0cm,width=0.2\linewidth]{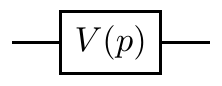}\end{matrix}}
  \!=\!\bigg(\!\!\begin{matrix}\includegraphics[clip,trim=0cm 0cm 0cm 0cm,width=0.4\linewidth]{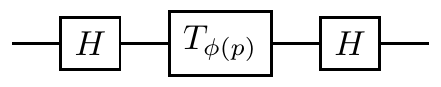}\end{matrix}\!\!\bigg)^{\otimes w}\!\!\!\otimes I^{\otimes(n-w)},
\end{equation}
\newtext{for a given $p$ the corresponding phase is given by
\begin{equation}
    \phi(p) = 2 \arccos p^{\frac{1}{2 w }}.
\end{equation}}
In this way we are able to generate random non-Clifford circuits with a chosen probability $p\in[0,1]$ of the all zero outcome controlled by the choice of parameter $\phi(p)$, and a stabilizer extent $\cextent$ that is made independent of $p$ by controlling $\theta$. Via this construction, we can numerically verify the actual run-time dependence of $\main$ on $p$ for a family of random circuits. The performance of $\main$ algorithm for $UU^\dagger V(p)$ circuits is illustrated in Fig.~\ref{fig:optimize-perf}.

\newtext{Finally,} using the $\mainruntime$ algorithm for random circuits of the described form $UU^\dagger V(p)$, we found the upper-bound of run-time cost $\mathcal{C}$ of the $\main$ algorithm as a function of $p$. We have also lower-bounded this cost by the run-time cost of $\child$ with the optimal choice of $s$ and $L$ (as we know the value of $p$ this can be easily done using Theorem~\ref{thm:child alg}). We present both bounds in Fig.~\ref{fig:run-time-bounds}, where it is clear that they differ by less than 2 orders of magnitude when $p\geq \epsilon_{\mathrm{tot}}$. To further strengthen our point, we have also run the $\main$ algorithm on circuits $UU^\dagger V(p)$ for a few chosen values of $p$, and also plotted the actual run-time costs in Fig.~\ref{fig:run-time-bounds}. This shows that, provided $p\geq \epsilon_{\mathrm{tot}}$, $\mathcal{C}$ is indeed close to the run-time cost of $\child$ with the choice of $s$ and $L$ being optimised using knowledge of the value of $p$.

\subsection{Performance on existing benchmarks}
\label{sec:performance-existing-bounds}

Numerous benchmarks have been used to assess the performance of classical simulators, see for example Refs.~\cite{garcia2014simulation, de2019massively,google2019supremacy,villalonga2019flexible}. To directly compare our algorithms with the prior state-of-the-art for Clifford+T simulation, we adopt the two benchmarks used in Refs.~\cite{bravyi2016improved} and~\cite{bravyi2019simulation}. The first of these is a simulation of an algorithm to solve a task known as the hidden-shift problem, introduced in Ref.~\cite{Rotteler2010hiddenshift}; while the second is an implementation of the quantum approximate optimization algorithm, (QAOA), developed by Farhi et al.~\cite{farhi2014-qaoa}. We apply the QAOA implementation to to solve a problem known as Max-E3LIN2 with bounded degree. Performance comparisons on those two benchmarks between our algorithms and those of Refs.~\cite{bravyi2016improved} and~\cite{bravyi2019simulation} are summarized in Table~\ref{tab:hidden-shift-comparison} (for the hidden shift problem) and Table~\ref{tab:qaoa-comparison} (for the QAOA).

\subsubsection{Hidden shift}

\begin{figure*}[hbtp]
\centering
	\sidesubfloat[a]{\includegraphics[width=0.42\columnwidth]{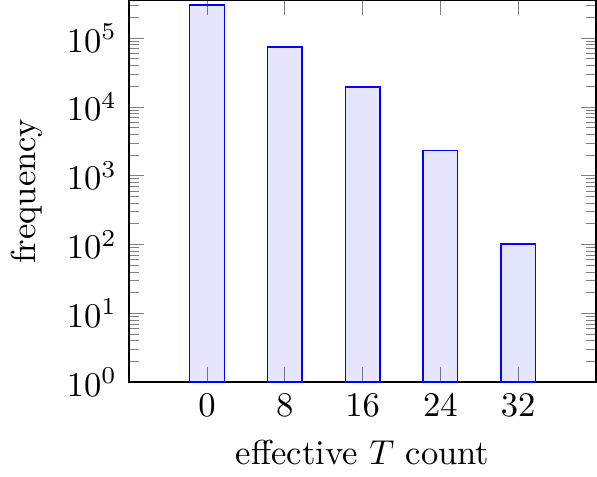}\label{fig:compress-hidden-shift-data}}\hspace{1cm}
	\sidesubfloat[b]{\includegraphics[width=0.42\columnwidth]{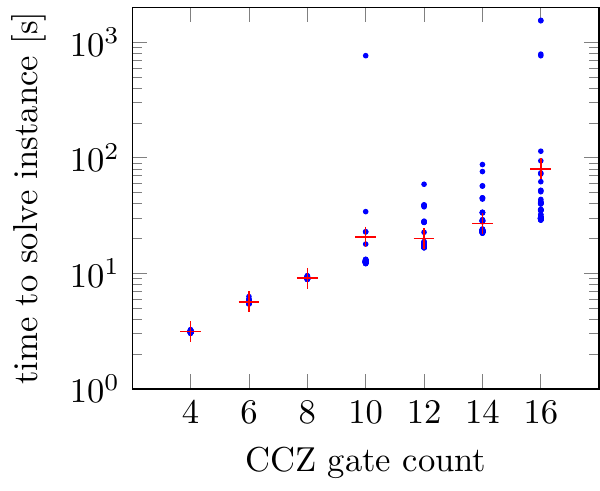}\label{fig:compute-hidden-shift-data}}

	\caption{\label{fig:hidden-shift-2} \newtext{\textbf{Performance of our algorithms for the hidden shift circuits.} On the left (a), we show the effective $T$-count $t^\prime$ for $10,000$ hidden-shift circuits each consisting of $40$ qubits, $8$ CCZ gates and $200$ diagonal Clifford gates per CCZ. Since we run $\compute$ once for each qubit, for each circuit there are $400,000$ data-points summarised in the histogram. The bar at $t=0$ consists of $304,307$ examples of which $200,000$ correspond to the measurements on the first $20$ qubits of each circuit, which are always deterministically solved by $\qcompress$. The timing data (b) refers to simulations performed using a single thread (no parallelism) on an desktop computer with an AMD Ryzen 9 3900X 12-Core Processor operating at approximately 2.2~GHz. For each indicated CCZ gate count we solve $100$ instances of the hidden-shift problem and plot the run-time for our algorithms (blue dots) and the means (red crosses). The longest run-time of $1,547$ seconds at $16$ CCZ gates should be compared to the run-time of approximately $10^5$ seconds for the algorithm of Refs.~\cite{bravyi2016improved} and \cite{bravyi2019simulation} for the same parameters. Each blue dot represents the time to solve a full hidden-shift instance, requiring 40 calls to our simulation algorithms. }}
\end{figure*}

\begin{table*}[hbtp]
\centering
\newtext{
\begin{tabular}{c|cccc}
     & qubit count & CCZ-count & $T$ gate count & runtime\\ \hline
     Ref.~\cite{bravyi2016improved} & 40 & 12 & 48 & ``several hours"\\
     Ref.~\cite{bravyi2019simulation} &  40 & 16 & n/a$^*$&roughly $10^5$s\\
     Compress+Compute & 40& 16 & 112 & min $29$s, mean $80$s, max $1,547$s \\
\end{tabular}
}
\caption{\label{tab:hidden-shift-comparison}\newtext{Direct comparison of our hidden shift results and those of prior works. In each paper referenced the authors provide a run-time for solving a single instance of the hidden shift problem. Our mean and max numbers are over the same sample of $100$ randomly generated bent functions that is plotted at $16$ CCZ gates in figure~\ref{fig:compute-hidden-shift-data}. Each instance was solved using a single thread (i.e. with no parallelization) on an AMD Ryzen 9 3900X 12-Core Processor operating at approximately 2.2 GHz. The n/a marked $^*$ occurs because authors employ direct sum-over-Cliffords decomposition of CCZ gates without first decomposing into $\Tgate{}$ gates. It should be noted that while the methods of Ref.~\cite{bravyi2019simulation} provide an approximate answer with some error probability, $\compute$ deterministically provides the exact answer.}}
\end{table*}

The circuits we use exactly match the benchmarks performed in Ref.~\cite{bravyi2019simulation}, with $n=40$ qubits, CCZ-counts between $4$ and $16$, and $200$ diagonal Clifford gates per CCZ. The circuits are constructed as described in Ref.~\cite{bravyi2016improved}, with the exception that we use a $7$ $T$ gate decomposition of the non-Clifford CCZ gate, whereas they use a decomposition employing only $4$ $T$ gates (however, for the price of complicating the circuit with additional ancillary qubits and intermediate measurements). Naively, the additional $3$ $T$ gates per CCZ should increase the cost of the $\compute$ algorithm by a factor of $2^{3k}$, where $k$ is the CCZ-count. Interestingly, this is not the case, as the $\qcompress$ algorithm brings the effective $T$-count, $t^\prime$, of the circuits below $4k$ in every case we have examined, including the $400,000$ data points at $k=8$ summarised in Fig.~\ref{fig:compress-hidden-shift-data}. 
In these data, we observe an average effective $T$-count of $2.4$ $T$ gates per circuit; down from $7k=56$ $T$ gates in the original circuit inputs to $\qcompress$. Since run-time is exponential in the effective $T$-count, our dramatic improvement on the run-time of Refs.~\cite{bravyi2016improved} and~\cite{bravyi2019simulation} is primarily due to the $T$-count reduction achieved by $\qcompress$. See Supplemental Material Sec.~\ref{sec:supp-hidden-shift-details}. for more details on the effective $T$-count.

Additionally, we observed that the effective $T$-count is a multiple of $8$ for every hidden-shift circuit we have examined. The structure of the circuits described in Ref.~\cite{bravyi2016improved} is such that the CCZ gates come in pairs, acting on different qubits. A possible explanation (that remains to be verified) is that the $\qcompress$ algorithm is reducing the $T$-count of each CCZ gate to exactly $4$ while removing matched pairs of CCZ gates.

Since the output Born-rule probability distribution of the hidden shift circuits is deterministic, we can reconstruct it perfectly by computing $n$ Born-rule probabilities, corresponding to single qubit measurements on each qubit. Hence, we run the $\qcompress$ algorithm $n=40$ times for each hidden-shift instance. In general, one would expect then to run the $\compute$ algorithm $40$ times; however, we observe that many of the cases are already solved (in polynomial time) by $\qcompress$ without having to call either of our exponential-time algorithms. As noted by the authors of Ref.~\cite{bravyi2016improved}, the first $\frac{n}{2}$ bits of the hidden-shift are always recoverable by a polynomial-time computation. The $\qcompress$ algorithm recovers not only these ``free" $\frac{n}{2}$ bits, but an additional fraction of the bit-string, depending on the problem instance. In the $10,000$ instances summarised in Fig.~\ref{fig:compress-hidden-shift-data}, $\qcompress$ recovered $304,307$ bits of the total $400,000$ hidden bits.

To provide data that can be directly compared to the results reported in Ref.~\cite{bravyi2019simulation}, we used our algorithms to solve 100 instances of the hidden shift problem for varying number of CCZ gates. In Fig.~\ref{fig:compute-hidden-shift-data}, we present the run-times and their means for these circuits. As we summarize in Table~\ref{tab:hidden-shift-comparison}, the run-time improvement factors vary from $10^2$ to $10^4$, with the mean of $10^3$.

\begin{figure*}[t]
  \centering
	\sidesubfloat[a]{\includegraphics[width=0.42\columnwidth,viewport=0 0 170 150]{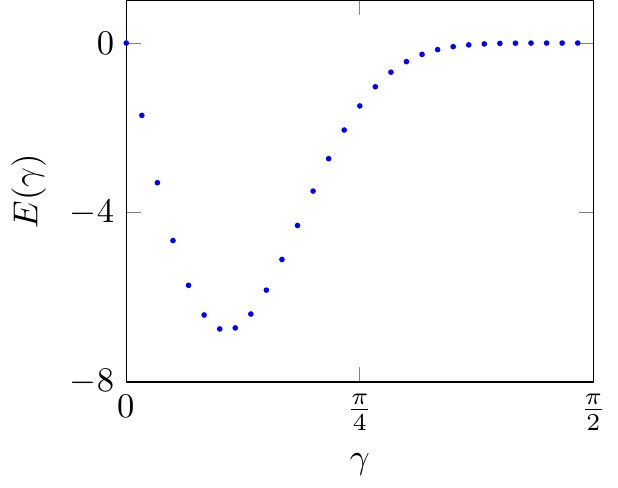}\label{fig:qaoa-1d}}~\hspace{1cm}~\sidesubfloat[b]{\includegraphics[width=0.5\columnwidth]{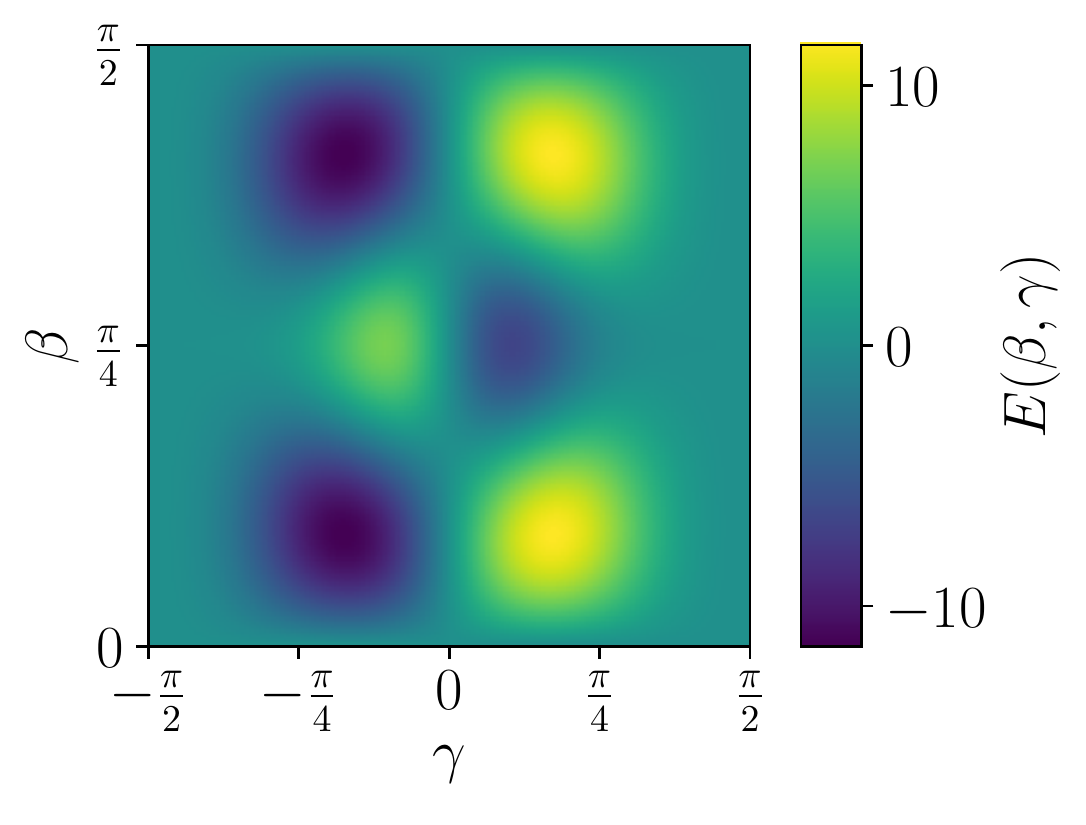}\label{fig:qaoa-2d}}
	\caption{\label{fig:qaoa-full-fig} \textbf{Performance of our algorithms for the QAOA circuits.} The graphs show the QAOA expectation value computed with the $\compute$ algorithm. (a) A slice is shown through the $\beta$, $\gamma$ plane with $\beta = \frac{\pi}{4}$. The authors of Ref.~\cite{bravyi2019simulation} restricted their benchmarks to this line, since doing so substantially reduces the number of non-Clifford gates in the circuit. The data-points are accurate to machine precision and were computed in $1.55$ seconds (total run-time of $\qcompress$ and $\compute$). In contrast, Ref.~\cite{bravyi2019simulation} reports a run-time of ``less than $3$ days''. (b) Unlike the results presented in Ref.~\cite{bravyi2019simulation} we are able to compute QAOA expectation values for points off the line $\beta=\frac{\pi}{4}$. This image was generated from $10,000$ data-points, comprising $100$ $\beta$ and $100$ $\gamma$ values. The total run-time for $\qcompress$ and $\compute$ to generate these data was $563$ seconds.}
\end{figure*}

\begin{table*}[htbp]
    \centering
    \begin{tabular}{c|cccccc}
         & qubit count & D & gamma values & beta values & non-Clifford gate count & runtime\\ \hline
        Ref.~\cite{bravyi2019simulation} & 50 & 4 & 31&1& 66 &  ``less than 3 days"\\
        Fig.~\ref{fig:qaoa-1d} & 50 & 4 & 31&1& 66 &  1.55 seconds\\
        Fig.~\ref{fig:qaoa-2d} & 50 & 4 & 100&100& 116$^{\dagger}$ & 563 seconds\\
        $200$ instances$^*$ & 50 & 4 & 31& 1 & 66 & mean 1.64 seconds, max 2.23 seconds
    \end{tabular}
    \caption{\label{tab:qaoa-comparison}Direct comparison of our QAOA (Max E3LIN2) results with prior works. For the row marked $^*$ mean and max numbers are over a sample of 200 randomly generated QAOA circuits for the Max-E3LIN2 problem. For each instance we use the same parameters as Ref.~\cite{bravyi2019simulation}: $\beta=\frac{\pi}{4}$ and 31 values of $\gamma$ evenly spaced in $\left[0, \frac{\pi}{2}\right]$. Each instance was solved using a single thread (i.e. with no parallelization) on an AMD Ryzen 9 3900X 12-Core Processor operating at approximately 2.2~GHz. It should be noted that while the methods of Ref.~\cite{bravyi2019simulation} provide an approximate answer (with some small probability of failure), $\compute$ deterministically provides the exact answer (to machine precision).\newline 
    $^{\dagger}$ The count of 116 non-Clifford gates is the correct value for the majority of the points plotted, however along certain lines (three horizontal lines where $\beta$ is a multiple of $\frac{\pi}{4}$ and five vertical lines where $\gamma$ is a multiple $\frac{\pi}{4}$) the non-Clifford gate count is lower.}   
\end{table*}

\subsubsection{Quantum approximate optimization algorithm}

The QAOA benchmarks performed in Ref.~\cite{bravyi2019simulation} consist of computing the following expectation values
\begin{align}
    E(\beta,\gamma) = \bra{\psi_{\beta\gamma}} C \ket{\psi_{\beta\gamma}},
\end{align}
where
\begin{align}
    &\!\!C =\!\!\!\!\!\!\!\! \sum_{1\leq u < v < w \leq n} \!\!\!\!\!\frac{d_{uvw}}{2} Z_u Z_v Z_w\mathrm{~~with~~}d_{uvw}\in \{-1,0,1\}\label{eqn:qaoa-c-operator}
\end{align}
and
\begin{align}
    &\ket{\psi_{\beta\gamma}} = e^{-\beta B} e^{-\gamma C} H^{\otimes n}\ket{0},
\end{align}
with $B$ being the so-called transverse field operator, \mbox{$B=\sum_{j=1}^n X_j$}. A problem instance is specified by a particular tensor~$d_{uvw}$. In the problem addressed in Ref.~\cite{bravyi2019simulation}, the \emph{degree} of the problem $D$ is chosen to be $4$, meaning that each qubit appears in at most $4$ non-zero terms in the sum in Eq.~\eqref{eqn:qaoa-c-operator}. To match the benchmarks of Ref.~\cite{bravyi2019simulation}, we use instances where all but one of the $n=50$ qubits appear in exactly $4$ terms. This choice gives a $C$ operator consisting of a sum of $66$~Pauli operators.

We explain in the Supplemental Material Sec.~\ref{sec:supp-pauli-expectation-values}. how our algorithms may be used to estimate or compute expectation values of Pauli operators with very low (polynomial) overhead. Since the Pauli operators are a basis for the space of self-adjoint operators on $\mathbb{C}^{2^n}$, one can in principle use our algorithms to compute expectation values of arbitrary self-adjoint operators. In general, the decomposition of a self-adjoint operator may require exponentially many Pauli operators, making this procedure infeasible. However, since the operator $C$ is a sum of $66$ Pauli operators, we only have to run our algorithms $66$ times to obtain a single QAOA expectation value.

In the benchmarks of Ref.~\cite{bravyi2019simulation}, the authors fixed $\beta=\frac{\pi}{4}$ and chose $31$ values of $\gamma$ in the interval $\left[0,\frac{\pi}{2}\right]$. With this choice of $\beta$, the number of non-Clifford gates in the circuit is reduced from $116$ to $66$, since $e^{-i \frac{\pi}{4} B} \propto \left(H S H\right)^{\otimes n}$, where the $\propto$ symbol hides an irrelevant global phase. The authors of Ref.~\cite{bravyi2019simulation} reported a run-time of ``less than 3 days'' for this simulation. We repeated this benchmark with our algorithms (see Fig.~\ref{fig:qaoa-1d}), and obtained a run-time of less than 2 seconds.
The performance of our algorithms is such that we can also easily compute QAOA expectation values $E(\beta,\gamma)$ for $\beta\neq \pi/4$. Figure~\ref{fig:qaoa-2d} shows the QAOA expectation values for 10 000 points evenly spaced in the region $(0,\pi/2)\times (-\pi/2,\pi/2)$. These 10 000 points, the majority of which (e.g. when $\beta$ and $\gamma$ are not multiples of $\pi/4$) would have been out of the reach of the prior state of the art simulation methods, were simulated in under 10 minutes on a desktop computer using our algorithms.

It is interesting to consider the reason our algorithms exhibit such a dramatic performance improvement. Taking a single Pauli operator from the sum in Eq.~\eqref{eqn:qaoa-c-operator} and considering the expectation value
\begin{align}
    E_{uvw}(\beta,\gamma) = \bra{+}^{\otimes n} e^{i\gamma C} e^{i\beta B} Z_u Z_v Z_w  e^{-i\beta B}e^{-i\gamma C}\ket{+}^{\otimes n},
\end{align}
it is possible to commute all but three of the $e^{-i\beta X_j}$ terms and all but $10$ of the $e^{-i\gamma Z_a Z_b Z_c}$ terms through the central $Z_u Z_v Z_w$ operator. Performing this computation ``by hand" the complexity of this circuit is not $66$ non-Clifford gates, but instead only $13$. We did \emph{not} perform this optimisation in the circuits we gave to our algorithms. However we observed empirically that the output of the $\qcompress$ algorithm had an effective $T$-count not greater than $13$, suggesting that $\qcompress$ is capable of noticing this optimisation.


\section{Conclusions and outlook}
\label{sec:outlook}

We have developed state-of-the-art classical simulators for computing and estimating Born rule probabilities associated with universal quantum circuits.  We have made Python+C implementations of these simulators available~\cite{smith2020clifford}.
These simulators allow us to probe previously uncharted parameter regimes such as circuits with larger numbers of qubits and non-Clifford gates than was previously possible. Our results should find direct applications in the verification and validation of near-term quantum devices, and the evaluation of proposals for NISQ device applications.

Although we have tested the implementation of our algorithms on standard desktop hardware, the $\compute$ and $\child$ algorithms are \emph{embarrassingly parallel}, that is they may be divided into large numbers of sub-tasks such that the process addressing each sub-task may proceed without any dependency on, or communication with, any other process. This means that they can trivially be applied in a high performance computing context and so may be useful for very large simulations.

Through the use of our $\qcompress$ algorithm we were able to distill a complex circuit specification to a simpler form more amenable to the task of Born rule probability estimation. The circuit specific parameter, $r$, emerged as a key driver of run-time, with higher values of $r$ improving the run-time of $\compute$ and lower values of $r$ (often) improving the run-time of $\estimate$. Thus the \rname~$r$ is useful in identifying which simulator will be the fastest. In our work, the primary role served by $\compute$ has been to exclude all of the `high $r$ value' circuits from consideration, thus emphasising the performance advantages of our $\estimate$ algorithm over its alternatives. However, it remains an open question if and when the $\compute$ algorithm can be useful in its own right or has a genuinely interesting application. Indeed, in the extreme regime where $r=t$, $\compute$ outputs a Born rule probability consistent with the uniform distribution on all measured `non-deterministic' qubits. However, as $r$ moves away from $t$, perhaps the quantity $t-r$ (or other information contained in the stabilizer generating set $G$) can be viewed as a measure of departure from `non-uniformity'. This is broadly consistent with the outcome of our numerical analysis of high Clifford count randomly generated circuits, where we found that $r$ strongly concentrates near its maximum value of $\min \set{t, n-w}$. We leave the exploration of this narrative and the identification of other key drivers of outcome distribution structure to future work. 

This work has focused on the task of Born rule probability estimation without discussing the related task of approximately sampling from the quantum outcome distribution. Some of the techniques we have developed here may also be useful for achieving performance improvements for the task of approximate sampling.

We have restricted our attention to the simulation of ideal or noise-free quantum processes. Realistic implementations of quantum circuits are subject to noise the presence of which can significantly ease the computational cost of classical simulation. We leave open the generalization of our work to the mixed state formalism. We point out that for the task of approximate sampling, an analogous generalization (to the BG-sampling algorithm~\cite{bravyi2016improved}) was recently shown in Ref.~\cite{seddon2020quantifying} with additional performance gains being achieved as a consequence of this generalization.


\section*{Acknowledgements}

HP acknowledges Marco Tomamichel for identifying an error in the statement of Lemma~\ref{lem:hoeffding} in an early draft; David Gosset for useful discussions regarding the CH-form; and Daniel Grier and Luke Schaeffer for useful discussions regarding the hardness of computing tight upper bounds associated with Lemma~\ref{lem:psi(y)_bound}.
Research at Perimeter Institute is supported in part by the Government of Canada through the Department of Innovation, Science and Economic Development Canada and by the Province of Ontario through the Ministry of Colleges and Universities. HP also acknowledges the support of the Natural Sciences and Engineering Research Council of Canada (NSERC) discovery grants [RGPIN-2019-04198] and [RGPIN-2018-05188]. KK and ORS acknowledge financial support by the Foundation for Polish Science through TEAM-NET project (contract no. POIR.04.04.00-00-17C1/18-00). This work is supported by the Australian Research Council (ARC) via the Centre of Excellence in Engineered Quantum Systems (EQuS) project number CE170100009.  Research was partially sponsored (SB) by the ARO and was accomplished under Grant Number: W911NF-21-1-0007. The views and conclusions contained in this document are those of the authors and should not be interpreted as representing the official policies, either expressed or implied, of ARO or the U.S. Government.

\appendix
\onecolumngrid

\section{The {\normalfont\texorpdfstring{$\qcompress$}{Compress}} algorithm}
\label{app:compress}

\subsection{Step 1: Gadgetization}
\label{sec:gadget}

It is well known that a $T$ gate acting on a given qubit can be replaced by its gadgetized version~\cite{gottesman1999demonstrating,zhou2000methodology}. More precisely, one can prepare an ancillary qubit in a magic state
\begin{equation}
	\ket{T}=\frac{1}{\sqrt{2}}(\ket{0}+\exp(i\pi/4)\ket{1}),\label{eq: magic T state}
\end{equation}
couple it to the original qubit by a $CX$ gate (with the original qubit acting as the control) and measure in the computational basis. Then, if the outcome is $\ket{1}$, one also needs to apply a correction Clifford phase gate $S$ to the original qubit. The effect of the above procedure is the same as direct application of the $T$ gate to a given qubit. Diagrammatically, with circuits read from right to left throughout the paper,
\begin{equation} %
  \begin{matrix}\scalebox{-1}[1]{\includegraphics{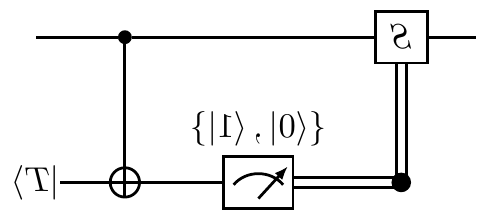}}\end{matrix}
  =\begin{matrix}
    \includegraphics{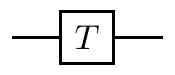}
  \end{matrix}.
\end{equation}
Here, we will employ an alternative construction that replaces the ancillary non-stabiliser $\ket{T}$ state with a non-Clifford measurement, and allows one to implement any diagonal $\Tgate{\phi}$ gate. Our reverse gadget is obtained as follows:
\begin{equation}
	\begin{matrix}\scalebox{-1}[1]{\includegraphics{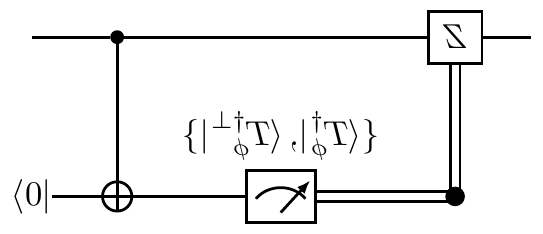}}\end{matrix}
	=\begin{matrix}
          \includegraphics{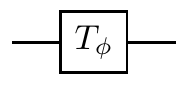}
	\end{matrix},
\end{equation}
with
\begin{equation}
	\Tket{\phi}=\frac{1}{\sqrt{2}}(\ket{0}+\exp(-i\phi)\ket{1}),\quad\Tperpket{\phi}=\frac{1}{\sqrt{2}}(\ket{0}-\exp(-i\phi)\ket{1}).
\end{equation}
Next, it is straightforward to show that in the above reverse gadget the measurement outcomes of the ancillary qubit are equally likely. Therefore, we can focus on the $\Tket{\phi}$ outcome (as no correction gates are then needed), and consider a simplified post-selected circuit:
\begin{equation}
\label{eq:reverse-gadget}
\begin{matrix}
  \includegraphics{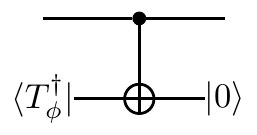}
\end{matrix}
=\begin{matrix}
  \includegraphics{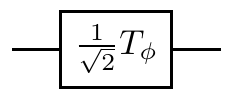}
\end{matrix}.
\end{equation}

\begin{figure}
\includegraphics[width=\columnwidth]{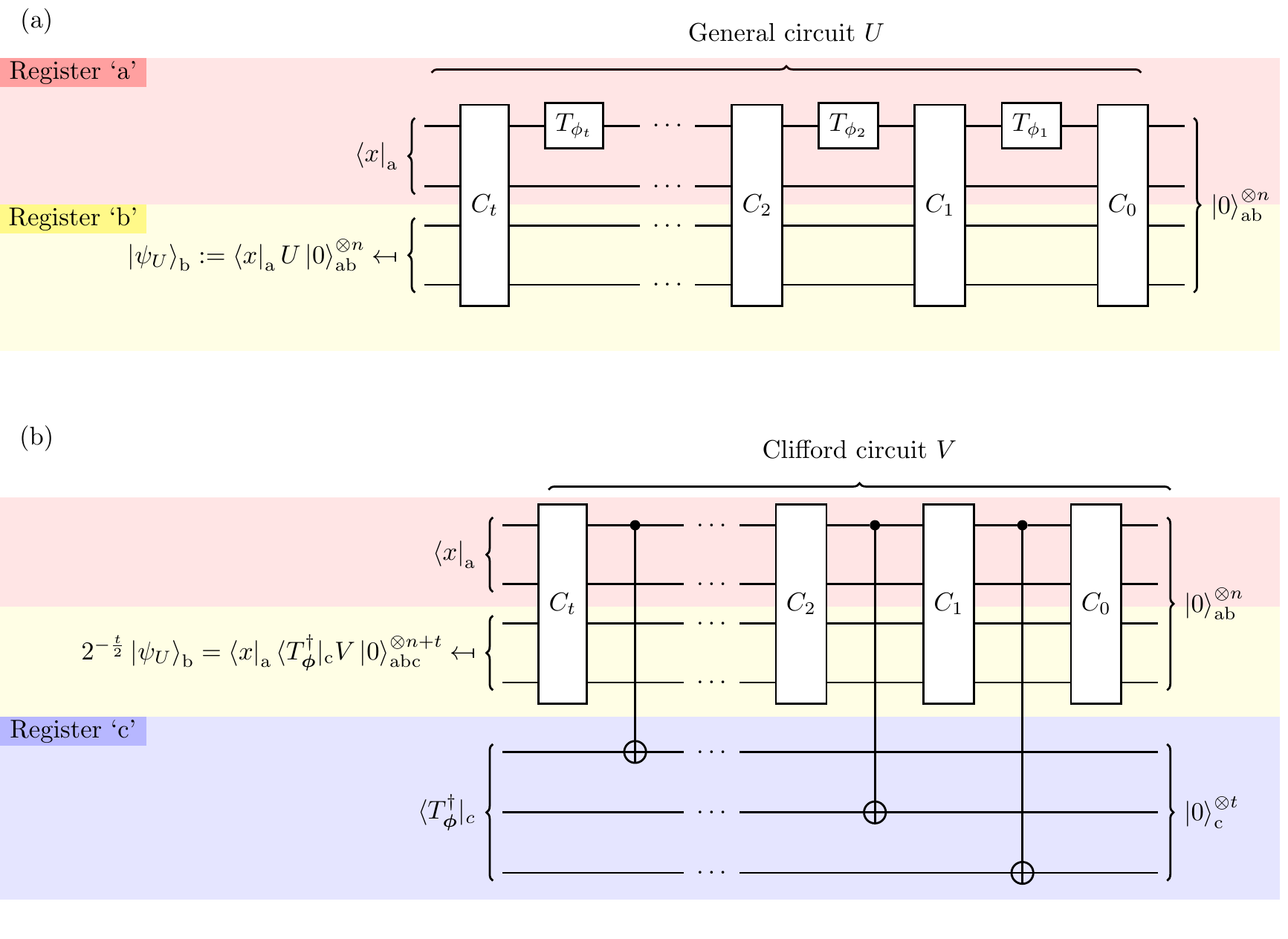}
	\caption{\textbf{Circuits and registers.} Circuit diagrams should be read from right to left. (a) General circuit $U$ composed of Clifford gates and $t$ diagonal $\Tgate{\phi_j}$ gates, post-selected on the $x$ outcome of $w$ qubits in register `$\rega$'. Gates $C_j$ consist of all Clifford gates appearing before the $j$-th diagonal gate $\Tgate{\phi_j}$. Note that non-Clifford diagonal gates act only on the first qubit for clarity of the figure and without loss of generality (since $SWAP$ gates are Clifford, each $\Tgate{\phi_j}$ gate can effectively act on any qubit). (b) Post-selected circuit obtained by reverse gadgetization of~$U$. The unitary $V$ is obtained from $U$ by replacing each $\Tgate{\phi_j}$ with a $CX$ gate between the original qubit and an ancillary qubit in register `$\regc$'. Qubits in register `$\regc$' are then post-selected on the non-stabiliser state $\Tkets$, while qubits in register `$\regb$' are post-selected on the same outcome as in the original $U$ circuit.}
	\label{fig:circuit}
\end{figure}
Now, for a circuit $U$ consisting of $c$ Clifford gates and $t$ diagonal gates $\{\Tgate{\phi_j}\}$, we can gadgetize each of the $t$ occurrences of the non-Clifford gate in the way described above. Hence, we can replace a general unitary circuit $U$ on $n$ qubits in a state $\ket{0}^{\otimes n}_{\rega\regb}$ by a Clifford circuit $V$ on $n+t$ qubits in a state $\ket{0}^{\otimes n+t}_{\rega\regb\regc}$, which is post-selected on $\Tkets$ outcome on the ancillary qubits in register `$\regc$'. The unitary $V$ is composed of $c+t$ Clifford gates: the original $c$ Clifford unitaries appearing in the decomposition of $U$ into Cliffords and non-Clifford gates, plus $t$ instances of $CX$ gates between computational and ancillary qubits arising from the reverse gadgetization of $\Tgate{\phi_i}$ gates. We illustrate this in Fig.~\ref{fig:circuit}, where we also present the division of all $n+t$ qubits into 3 registers: the measured register `$\rega$' that we post-select on the $\ket{x}$ outcome, the marginalised register `$\regb$', and register `$\regc$' consisting of the ancillary qubits that we post-select on the $\Tkets$ outcome. Due to the fact that all measurement outcomes in reverse gadgets are equally probable, such a post-selected circuit $V$ will realise $U$ up to a renormalization factor:
\begin{align}
	\label{eq:psi-U-gadget}
	U \ket{0}^{\otimes n}_{\rega\regb}=2^{t/2} \Tbras_{\regc} V \ket{0}_{\rega\regb\regc}^{\otimes n+t}.
\end{align}
The probability of observing outcome $x$ is thus given by
\begin{equation}
	\label{eq:born-prob}
	p= 2^t\twonorm{\bra{x}_{\rega} \Tbras_{\regc} V\ket{0}_{\rega\regb\regc}^{\otimes n+t}}^2.
\end{equation}
The process of constructing $V$ given an elementary description of $U$ obviously has a polynomial run-time \mbox{$\mathrm{poly}(n,c,t)$}.


\subsection{Step 2: Constraining stabilisers}
\label{sec:constraining}

In this step we will use the stabilizer formalism introduced in Refs~\cite{gottesman1998heisenberg, aaronson2004improved} to rewrite the expression for $p$ given in Eq.~\eqref{eq:born-prob} in a simplified form. It will lead directly to the $\compute$ algorithm (see Appendix~\ref{app:compute_proof}), and will be further simplified in the next step before serving as an input to the $\child$ algorithm. Moreover, we will also extract the crucial parameters describing the circuit $V$: the \rname~$r$ and the \vname~$v$. The first one of these effectively characterizes how much the number $t$ of non-Clifford diagonal gates can be compressed, while the latter one is related to the number of outcomes with a zero probability.

Let us first briefly introduce some notation and recall standard techniques within the stabilizer formalism. An $n$-qubit Pauli operator, $P$, is any operator of the form \mbox{$\omega P_1 \otimes \ldots \otimes P_n$} where \mbox{$\omega\in \set{\pm 1, \pm i}$} and \mbox{$P_j \in \set{I, X, Y, Z}$} are single qubit Pauli operators. We denote the set of all $n$-qubit Pauli operators by $\pauli{n}$. For any $P\in\pauli{n}$ and $j\in [n]$, we use $\pfac{P}{j}$ to denote the $j^{\rm th}$ tensor factor $P_j$ and $\pphase{P}$ to denote the phase factor $\omega$. We will slightly abuse notation by using $\pfac{P}{\rega}$ to denote the sub-string of tensor factors associated with register~`$\rega$', i.e. $\pfac{P}{\rega}:=\otimes_{j\in [w]} P_j$ and similarly for $\pfac{P}{\regb}$ and $\pfac{P}{\regc}$. 

We say that $P\in \pauli{n}$ stabilizes an $n$-qubit quantum state $\ket{\psi}$ if and only if $P\ket{\psi}=\ket{\psi}$. The subset $S(\ket{\psi})\subset \pauli{n}$ consisting of all stabilizers of $\ket{\psi}$ is an Abelian group isomorphic to $\bbz_2^n$. This group can be non-uniquely represented by a generator set $G=\set{g_1,\ldots,g_n}\subset S(\ket{\psi})$ such that $S(\ket{\psi})=\gen{G}$. For $k\leq n$, $G=\set{g_1,\ldots,g_k}$ is an $n$-qubit, $k$-element generating set if and only if $g_i\in \pauli{n}$ for all $i\in [k]$, all pairs $g_i, g_j\in G$ commute and $G$ is independent, i.e. for all $i\in [k]$, $g_i \not\in \gen{G\setminus \set{g_i}}$. We denote the set of all $n$-qubit, $k$-element generating sets by $\genset{n}{k}$. For $G=\set{g_1,\ldots,g_k}\in \genset{n}{k}$ we define the associated projector:
\begin{subequations}
	\begin{align}
	\Pi_G&:=\prod_{i=1}^k \frac{I+g_i}{2}\\
	&=2^{-k}\sum_{g\in \gen{G}} g \label{eq:stab projector defn}.
	\end{align}
\end{subequations}

We can now state the critical lemma of this step. Its rigorous proof including the pseudo-code of the algorithm and associated sub-procedures can be found in Supplemental Material Sec.~\ref{sec:extract detailed}. Here, we will limit ourselves to a high level description of the main idea behind the proof. 

\begin{lem}[\extract~algorithm]
	\label{lem:extract}
	Given an elementary description of $p$, $\extract$ outputs \vname~\mbox{$v\in \vset$}, \rname~\mbox{$r\in \rset$}, a set $J=\set{j_1,\ldots,j_v}\subseteq [w]$, a bitstring $x'=(x_1',\ldots,x_v')$ and two generating sets $\tilde{G} \in \genset{n+t}{t-r+v}$ and  $G \in \genset{t}{t-r}$ such that:
	\begin{subequations} 
		\begin{align}
		{\rm Tr}_{\rega \regb}\left(V\density{0}^{\otimes n+t}_{\rega \regb \regc} V^{\dagger} \density{x}_{\rega}\right)&=2^{-n-r+v}{\rm Tr}_{\rega \regb}\left(\Pi_{\tilde{G}} \density{x}_{\rega}\right)\label{eq:extract long}\\
		&=2^{-r+v-w}\Pi_G,\label{eq:extract short}
		\end{align}
	\end{subequations} 
	and for all $k\in [v]$, $x_{j_k}\neq x_k'$ immediately implies $\Pi_G=0$. The run-time of the $\extract$ algorithm is polynomial in the relevant parameters:
	\begin{equation}
	\label{eq:congens run-time}
	\tau_{\extract}=\mathrm{poly}(c,n,t).
	\end{equation}
\end{lem}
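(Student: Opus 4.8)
The plan is to push the entire computation through the stabilizer formalism so that every step reduces to $\bbz_2$-linear algebra on the tableau of $V\ket{0}^{\otimes n+t}$, which is what yields the polynomial run-time. First I would note that $\density{0}^{\otimes n+t}_{\rega\regb\regc}=\prod_{i=1}^{n+t}(I+Z_i)/2$, so that $V\density{0}^{\otimes n+t}V^\dagger=\Pi_{G_0}$ with $G_0=\set{VZ_iV^\dagger}_{i\in[n+t]}\in\genset{n+t}{n+t}$, computed in time $\mathrm{poly}(n,c,t)$ by propagating each $Z_i$ through the $c+t$ Clifford gates of $V$. Expanding $\Pi_{G_0}=2^{-(n+t)}\sum_{g\in\gen{G_0}}g$ and writing $g=\pphase{g}\,\pfac{g}{\rega}\otimes\pfac{g}{\regb}\otimes\pfac{g}{\regc}$, the identities $\trr{\regb}{\pfac{g}{\regb}}=2^{n-w}$ iff $\pfac{g}{\regb}=I$ (and $0$ otherwise) and $\bra{x}\pfac{g}{\rega}\ket{x}\neq 0$ only when $\pfac{g}{\rega}$ is diagonal (a product of $I$ and $Z$), in which case it equals $(-1)^{a(g)\cdot x}$ with $a(g)$ the $Z$-support of $g$ on $\rega$, show that only the subgroup
\[
\hat S:=\set{g\in\gen{G_0}:\ \pfac{g}{\regb}=I,\ \pfac{g}{\rega}\text{ diagonal}}
\]
survives, giving $\trr{\rega\regb}{\Pi_{G_0}\density{x}_{\rega}}=2^{-w-t}\sum_{g\in\hat S}\pphase{g}(-1)^{a(g)\cdot x}\pfac{g}{\regc}$.

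Next I would analyse $\hat S$ through the projection $\pi_{\regc}:g\mapsto\pfac{g}{\regc}$. Its kernel $K:=\set{g\in\hat S:\pfac{g}{\regc}=I}$ consists of operators $\pphase{g}\prod_{j\in\rega}Z_j^{a_j}$, i.e.\ exactly the deterministic-outcome constraints; I set $v:=\dim K$ and read off $J=\set{j_1,\dots,j_v}$ and the forced values $x'_k$ by Gaussian elimination. The assignment $g\mapsto\pphase{g}(-1)^{a(g)\cdot x}$ is a homomorphism $\hat S\to\set{\pm1}$ that is trivial on $K$ precisely when $x$ agrees with every forced value. If some $x_{j_k}\neq x'_k$, this homomorphism is nontrivial on $K$, each coset of $K$ sums to zero, the whole expression vanishes, and we declare $\Pi_G=0$; otherwise the sign descends to a character on $\hat S/K\cong\pi_{\regc}(\hat S)=:\gen{G}$, and grouping the sum into its $2^{v}$ cosets collapses it to $2^{-w-t+v}2^{\,\dim G}\Pi_G$ once the induced signs are absorbed into the generators. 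Defining $r:=t-\dim G$ gives the claimed $2^{-r+v-w}\Pi_G$, and $\dim\hat S=\dim K+\dim\gen{G}=v+(t-r)$.

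For the middle equality I would take $\tilde G$ to generate $\hat S$ (with identity tensor factors on $\regb$), so $\tilde G\in\genset{n+t}{t-r+v}$, and repeat the coset computation for $\Pi_{\tilde G}$—where now every term already survives the partial traces—to get $\trr{\rega\regb}{\Pi_{\tilde G}\density{x}_{\rega}}=2^{n-w}\Pi_G$, which combined with the previous paragraph gives the prefactor $2^{-n-r+v}$. The ranges follow by counting: $\tilde G$ and $G$ are commuting and independent because they are inherited from the abelian group $\gen{G_0}$ and projection to $\regc$ preserves commutation (the surviving operators are diagonal on $\rega$ and trivial on $\regb$); $v\leq w$ because $K$ is built from diagonal Paulis on the $w$-qubit register $\rega$; $0\leq r\leq t$ since $\dim\gen{G}\leq t$; and $r\leq n-w$ because $2^{r}=\mathrm{rank}\,\Pi_G$ is the rank of a register-$\regc$ operator obtained from a pure state by post-selecting $\rega$ and tracing out the $(n-w)$-qubit register $\regb$, hence at most $2^{n-w}$.

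All operations—propagating $G_0$, and the sub-procedures \bcon and \acon that intersect $\gen{G_0}$ with the linear conditions ``$X$- and $Z$-support on $\regb$ vanish'' and ``$X$-support on $\rega$ vanishes'' and then split off $K$—are Gaussian eliminations on an $(n+t)\times 2(n+t)$ binary tableau, giving $\tau_{\extract}=\mathrm{poly}(c,n,t)$. The step I expect to be most delicate is the phase bookkeeping: verifying that $\pphase{g}(-1)^{a(g)\cdot x}$ really is constant on cosets of $K$ so that the coset sum collapses to a genuine signed stabilizer projector $\Pi_G$ (rather than an operator sum that is not a projector), and dually that an inconsistent $x$ forces exact cancellation. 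Tracking the $\pm1$ and $\pm i$ phases through the Clifford propagation, so that the $\regc$-projected generators form a valid element of $\genset{t}{t-r}$, is the part most prone to sign errors; the dimension counts, by contrast, are routine once the tableau picture is in place.
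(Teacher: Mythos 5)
Your proposal is correct in substance and shares the paper's skeleton: you define $G_0=\set{VZ_jV^\dagger}$, restrict to the subgroup of stabilizers that are diagonal on register `$\rega$' and trivial on register `$\regb$' (your $\hat{S}$ is exactly the paper's $\gen{\tilde{G}}$), and push the partial trace onto signed $t$-qubit Paulis on register `$\regc$'. Where you genuinely diverge is in organization and in one key bound. The paper imposes the constraints qubit-by-qubit via $ZX(j)$-forms, defines $v$ as the number of generators deleted during that sweep, and invokes an auxiliary independence lemma; you instead package everything as a single kernel/quotient argument ($K$ the kernel of the projection to `$\regc$', $v=\dim K$, consistency of $x$ equivalent to triviality of the sign character on $K$, vanishing obtained by summing a nontrivial character over cosets), which is an equivalent but cleaner route to Eqs.~\eqref{eq:extract long}--\eqref{eq:extract short}. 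More notably, your proof of $r\le n-w$ by a Schmidt-rank argument (the rank $2^{r}$ of $\Pi_G$ equals the rank of a reduced operator of a pure state on registers `$\regb\regc$' after post-selecting `$\rega$', hence is at most $2^{n-w}$) is genuinely different from the paper's generator-counting argument, and arguably slicker; just note that it presupposes a consistent $x$ (so the expression is nonzero), which suffices because your $r$ is manifestly $x$-independent (it is fixed by $\hat{S}$ and $K$ alone) and a consistent $x$ always exists.

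One intermediate claim is wrong as stated and needs exactly the repair you yourself flagged as the delicate step. The map $g\mapsto\pphase{g}(-1)^{a(g)\cdot x}$ is \emph{not} a homomorphism $\hat{S}\to\set{\pm1}$: for $g,g'\in\hat{S}$ the product of their `$\regc$' strings can generate extra phases (e.g., the commuting strings $X_1X_2$ and $Z_1Z_2$ multiply to $-Y_1Y_2$), so the scalar prefactor is not multiplicative on $\hat{S}$. What is multiplicative is the signed-Pauli-valued map $f_x(g)=\pphase{g}(-1)^{a(g)\cdot x}\pfac{g}{\regc}$ into $\pauli{t}$ --- precisely the homomorphism the paper proves --- and its restriction to $K$ (where $\pfac{g}{\regc}=I$) is a genuine $\set{\pm 1}$ character. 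Every step you actually use survives this correction: a coset sum equals $f_x(h)\sum_{k\in K}\chi(k)$, which vanishes when $\chi|_K$ is nontrivial and equals $2^{v}f_x(h)$ otherwise; and when $\chi|_K$ is trivial, the sign is constant (though not multiplicative) on each coset, so the image $f_x(\hat{S})$ is a stabilizer group not containing $-I^{\otimes t}$ and the sum collapses to $2^{v+(t-r)}\Pi_G$, reproducing the claimed prefactors. With that one repair, your argument is complete.
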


Using Eq.~\eqref{eq:born-prob}, we note that Lemma~\ref{lem:extract} immediately implies that we can rewrite the Born rule probability $p$ in the following two ways:
\begin{subequations}
	\begin{align}
	p&=2^{-n+t-r+v} \tr{\Pi_{\tilde{G}} \density{x}_{\rega} \otimes I^{\otimes n-w}_{\regb} \otimes \Tkets \! \Tbras_{\regc}}\label{eq:congens long}\\
	&=2^{t-r+v-w} \tr{\Pi_{G} \Tkets \! \Tbras}=2^{v-w}  \Tbras\prod_{i=1}^{t-r}(I+g_i) \Tkets,\label{eq:congens short}
	\end{align}
\end{subequations} 
where in the last equality the product is over all $g_i\in G$. Moreover, since for all $k\in [v]$, $x_{j_k}\neq x_k'$ immediately implies $\Pi_G=0$, it also implies $p=0$. 

The high level description of the proof of Lemma~\ref{lem:extract} goes as follows. First, we rewrite $V\density{0}^{\otimes n+t}_{\rega \regb \regc} V^{\dagger}$ appearing on the left hand side of Eq.~\eqref{eq:extract long} as a stabilizer projector $\Pi_{\gen{G^{(0)}}}$ in the form of Eq.~\eqref{eq:stab projector defn}. Viewing $\Pi_{\gen{G^{(0)}}}$ as a sum over stabilizers, we note that to contribute non-trivially to the sum in Eq.~\eqref{eq:extract long}, a stabilizer must satisfy certain constraints. In particular, for a fixed $g\in \gen{G^{(0)}}$ to produce a non-zero contribution to the sum, it is necessary that:
\begin{itemize}
	\item Register `$\rega$' constraints: for all $j\in [w]$, $\pfac{g}{j}\in \set{I,Z}$,
	\item Register `$\regb$' constraints: for all $j\in [n-w]$, $\pfac{g}{w+j}= I$.
\end{itemize}
The generating set $\tilde{G}\in \genset{n+t}{t-r+v}$ is defined (and computed from $G^{(0)}$) such that the stabilizer group $\gen{\tilde{G}}$ contains $g\in \gen{G^{(0)}}$ if and only if $g$ satisfies all of these constraints. From this $(n+t)$-qubit stabilizer group, we compute a ``compressed'' generating set, $G$, of a $t$-qubit stabilizer group $\gen{G}$. The quantity $r$ is indirectly defined by:
\begin{align}
\abs{G}=t-r.\label{eqn:defining-r-indirectly}
\end{align}

The quantity $v$ is implicitly defined by the equation $|\tilde{G}|=t-r+v$. Together with related objects, $J$ and $x'$, the quantity $v$ is associated with the compression step, i.e., transforming $\tilde{G}$ into $G$. Here, each generator $g\in \tilde{G}$ is mapped to a Pauli $f_x(g)\in \pauli{t}$ where $f_x(g):=\pphase{g}\bra{x}\pfac{g}{\rega}\ket{x} \pfac{g}{\regc}$. The set $\set{f_x(g)|g\in \gen{\tilde{G}}}$ is a group but the set of Pauli operators $\set{f_x(g)|g\in {\tilde{G}}}$ may not be independent. That is, for a fixed $x\in \bitstring{w}$ and $g^*\neq I^{\otimes n+t}$, it is possible that $f_x(g^*)\in \set{\pm I^{\otimes t}}$. When $f_x(g^*)=-I^{\otimes t}$, the sum over $g\in \tilde{G}$ of $f_x(g)$ is zero. The objects $J$ and $x'$ specify the constraints on $x$ that ensure $-I^{\otimes t}\not\in \set{f_x(g)|g\in \gen{\tilde{G}}}$. When $f_x(g^*)=I^{\otimes t}$, the sum over $g\in \gen{\tilde{G}}$ of $f_x(g)$ contains duplicate sums over the group. The quantity $v$ is the minimal number of deletions to $\tilde{G}$ required to ensure the image under $f_x$ is an independent set.


\subsection{Step 3: Gate sequence construction}
\label{sec:gate-seq}

So far we have replaced a general circuit $U$ with a post-selected Clifford circuit $V$ in Step~1, and then employed the stabilizer formalism in Step~2 to re-express the Born probability $p$ using a compressed stabiliser projector $\Pi_G$. Now, the final step is to go back from the compressed projector picture to a compressed unitary circuit $W$ built of Clifford gates. The aim of this step is summarised by the following lemma.

\begin{lem}[$\gateseq$ subroutine]
	\label{lem:gateseq}
	Given a stabilizer generator matrix ${G} \in \genset{t}{t-r}$, $\gateseq$ outputs an elementary description of a $t$-qubit Clifford unitary $W$ such that:
	\begin{equation}
		\label{eq:gateseq}
		\Pi_G=W^\dagger ( \density{0}^{\otimes t-r} \otimes I^{\otimes r}) W.
	\end{equation}
	The circuit $W$, consists of $\order{t^2}$ Clifford gates including at most $\order{t}$ Hadamard gates, and the run-time scaling of the algorithm is given by 
	\begin{equation}
		\tau_{\gateseq} = \mathrm{poly} (n,c,t).
	\end{equation}
\end{lem}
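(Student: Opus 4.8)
The plan is to reduce the claim to a statement about conjugating one stabilizer group into another, and then to build $W$ explicitly by Gaussian elimination on the binary symplectic representation of $G$. First I would observe that $\Pi_G=\prod_{i=1}^{t-r}(I+g_i)/2$ is the orthogonal projector onto the simultaneous $+1$-eigenspace of the group $\gen{G}$, while the target operator factorizes as
\begin{equation}
\density{0}^{\otimes t-r}\otimes I^{\otimes r}=\prod_{i=1}^{t-r}\frac{I+Z_i}{2}=\Pi_{\gen{Z_1,\ldots,Z_{t-r}}}.
\end{equation}
Since conjugation by any unitary commutes with this construction, $W\Pi_G W^\dagger=\Pi_{\gen{W g_1 W^\dagger,\ldots,W g_{t-r} W^\dagger}}$. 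Hence it suffices to produce a Clifford $W$ with $W g_i W^\dagger=Z_i$ (with $+1$ sign) for each $i\in[t-r]$: this forces $W\gen{G}W^\dagger=\gen{Z_1,\ldots,Z_{t-r}}$, so that $W\Pi_G W^\dagger=\density{0}^{\otimes t-r}\otimes I^{\otimes r}$, and rearranging gives Eq.~\eqref{eq:gateseq}.

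The constructive core would represent $G$ by its $(t-r)\times 2t$ check matrix $[A\,|\,B]$ over $\mathbb{F}_2$ together with a phase bit per generator, and exploit the standard dictionary under which group multiplication acts as $\mathbb{F}_2$ row operations (re-choosing generators, leaving $\gen{G}$ fixed) while the elementary gates $H$, $S$, $CX$ and $SWAP$ act as prescribed column operations on $[A\,|\,B]$. I would then run a Gaussian elimination in stages: reduce the $X$-block $A$ to echelon form and clear it with $CX$ and $SWAP$ gates so that the $\mathrm{rank}(A)$ pivot rows each carry a single $X$ on a distinct pivot qubit; apply one layer of at most $t$ Hadamards to convert these pivots from $X$-type to $Z$-type; and finally clear the residual $Z$-block with further $CX$ and $S$ gates until every generator has been driven to $\pm Z_i$. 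Each elimination stage is $\order{t^2}$ column operations and the Hadamards are confined to a single layer of size at most $t$, yielding the advertised $\order{t^2}$ total gate count with only $\order{t}$ Hadamards; because every step is linear algebra on a matrix of size $\order{t^2}$, the run-time is $\mathrm{poly}(n,c,t)$.

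It remains to fix signs: the reduction above produces $W_0 g_i W_0^\dagger=s_i Z_i$ with $s_i\in\{\pm1\}$, so I would set $W=\big(\prod_{i:\,s_i=-1}X_i\big)W_0$; since $X_i$ anticommutes with $Z_i$ alone, this flips exactly the offending signs while leaving the other generators untouched, at the cost of $\order{t}$ extra non-Hadamard Clifford gates. The main obstacle I anticipate is the bookkeeping rather than the existence of $W$: one must (i) confine all Hadamards to the single type-conversion layer so the count stays $\order{t}$ rather than scattering them through the elimination, and (ii) correctly propagate the phase bits through every elementary gate — in particular through the $Y=iXZ$ sign that appears when an $X$ and a $Z$ coincide on a qubit — so that the final $s_i$ are computed exactly and the sign correction is the right one.
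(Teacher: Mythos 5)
Your opening reduction is correct: it suffices to find a Clifford $W$ with $W g_i W^\dagger = +Z_i$, and your sign-fixing step with $X_i$ gates at the end is valid. The genuine gap is in the ordering of your three elimination stages, and it is structural, not the phase bookkeeping you anticipate. Your pipeline applies only $CX$ and $SWAP$ gates before the single Hadamard layer and defers all phase-type gates ($S$, and the $CZ$'s they generate) until after it. That cannot work, because after your first stage a pivot row has $X$-part $e_i$ but its $Z$-part is in general nonzero \emph{on the pivot columns themselves}: it may carry a $Y$ on its own pivot qubit, and commutativity only forces the pivot-block of $Z$-entries to be symmetric (this is exactly the matrix $\tilde{Z}$ in the paper's proof), not zero. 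A concrete counterexample: take $t=2$, $r=0$, $G=\set{X_1Z_2,\,Z_1X_2}$ (the stabilizer group of the two-qubit graph state $CZ\ket{++}$). Your stage one is vacuous, since the $X$-block is already the identity; the Hadamard layer on both pivot qubits conjugates the generating set to $\set{Z_1X_2,\,X_1Z_2}$, i.e.\ maps the group to itself; and no subsequent circuit of $CX$, $S$, $SWAP$ and $X$ gates can map this group to $\gen{Z_1,Z_2}$, because every one of those gates preserves the number of purely-$Z$-type elements of a stabilizer group (each sends $X(a)Z(b)$ to $\pm i^{k} X(aM)Z(b')$ with $aM=0$ iff $a=0$), and $\gen{X_1Z_2,Z_1X_2}$ contains one $Z$-type element (the identity) while $\gen{Z_1,Z_2}$ contains four. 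The same count rules out every other choice of Hadamard layer for this $G$, so the failure is not repairable downstream.

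What is missing is phase-type clearing \emph{before} the Hadamards, which is how the paper proceeds: it first brings every generator to the form $\pm X_i$ by clearing the entire $Z$-block while the pivots are still $X$-type — $S$ gates on pivot qubits kill the diagonal ($Y$) entries, and $CZ$ gates kill the off-diagonal entries, where the commutativity-forced symmetry of $\tilde{Z}$ is precisely what lets a single $CZ_{jk}$ clear the matched pair $(j,k)$ and $(k,j)$, as well as the entries on non-pivot columns — then fixes signs with $S^2=Z$ gates, and only then applies the final Hadamard layer, cleanly producing $Z_i$. The reason $CZ$ (equivalently, $S$ plus $CX$, via $CZ_{1,2}=S_1S_2\,CX_{1,2}\,S_2^3\,CX_{1,2}$) is indispensable here is that $CZ_{jk}$ toggles the $Z$-entry at column $k$ of exactly those rows carrying an $X$ at column $j$; pure $CX$ circuits cannot emulate this on rows with $X$ pivots, since they act as paired column operations $(a,b)\mapsto(aM,\,b(M^{T})^{-1})$ and therefore preserve all overlaps $a_i\cdot b_j$. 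If you move your $S$ gates (and the $CZ$'s they generate) into the pre-Hadamard stage and clear the full $Z$-block there, your construction becomes correct — and is then essentially the paper's proof, including its gate counts.
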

We note that $W$ can be interpreted as a unitary encoding of the stabilizer code.
The proof of the above lemma can be found in Supplemental Material Sec.~\ref{app:gate_seq}, and it is simply based on an explicit construction of a circuit $W$ out of elementary Clifford gates using the stabilizer formalism. 

Applying Lemma~\ref{lem:gateseq} to Eq.~\eqref{eq:congens short} we immediately get
\begin{align}
	p=2^{t-r+v-w}\twonorm{\bra{0}^{\otimes t-r}W \Tkets}^2,
\end{align}
which is precisely the main statement of Theorem~\ref{thm:compress} (with the second equality already proven in Eq.~\eqref{eq:congens short}). Moreover, since Steps~1~to~3 all required polynomial number of operations, the total run-time of the $\qcompress$ algorithm is \mbox{$\mathrm{poly}(n,c,t)$}, and so we have proven Theorem~\ref{thm:compress}.

\newtext{
\subsection{{\normalfont\texorpdfstring{$T$}{T}}-count reduction extension}
\label{sec:T-count reduction}
In this section, we describe an extension to Theorem~\ref{thm:compress} that allows us to compute an effective $T$-count $t'$ that may be lower than the $T$-count of the initial input circuit and can, in certain cases, result in a significant reduction in the run-times of our $\compute$, $\child$ and $\parent$ algorithms.

Having applied Lemma~\ref{lem:extract} to express the Born-rule probability in the form
\begin{equation}
    p = 2^{v-w}  \Tbras\prod_{i=1}^{t-r}(I+g_i) \Tkets,\tag{\ref{eq:congens short}}
\end{equation}
we note that we can also apply constraints to the register `$\regc$' qubits. First, note that the magic states comprising $\Tkets$ all lie on the equator of the Bloch sphere and, in particular, have $Z$-expectation-value $\Tbra{\phi_i} Z \Tket{\phi_i} = 0$. Given a particular register $\regc$ qubit $q$, we first check if any stabilizer generator contains a Pauli-$X$ or $Y$ operator on qubit $q$. If not, then we can multiply between the generators to obtain a new generating set with exactly one generator having a non-identity operator on $q$, assume this generator is $g_1$. Rewriting equation~\eqref{eq:congens short}, we obtain
\begin{align}
    p &= 2^{v-w}  \Tbras(I + g_1)\prod_{i=2}^{t-r}(I+g_i) \Tkets,\\
    &= 2^{v-w}  \Tbras\prod_{i=2}^{t-r}(I+g_i) \Tkets,
\end{align}
where the second equality follows since $\pfac{g_1}{q} = Z$, while $\pfac{g_i}{q} = I$, for $i\geq 2$. Following the removal of stabilizer $g_i$, we are left with a \emph{trivial qubit}; a qubit for which every remaining stabilizer in the generating set is the identity. Since the expectation value of the identity in any state is $1$, this qubit may \emph{also} be removed from our generating set. We repeat this procedure for each qubit until no more generators are removed. Note that the removal of a generator associated with one qubit can result in another qubit which previously contained, e.g., both $Z$ and $X$ generators, now only containing a $Z$ generator. Thus, if a generator is removed on any round of sweeping through each qubit, then another round must be performed, resulting in at most $t^2$ qubit checks. This gives a polynomial-time algorithm which reduces the number of generators from $t-r$ to $t^\prime -r^\prime$. This step defines the difference $t^\prime -r^\prime$, leaving the components $t^\prime$ and $r^\prime$ unspecified until the next step.

The removal of $(t-r)-(t^\prime -r^\prime)$ generators in the previous step will result in the creation of a matching number of trivial qubits. There may also be $u\geq 0$ other trivial qubits arising from the application of register $\rega$ and $\regb$ constraints. We remove these $(t-r)-(t^\prime -r^\prime)+u$ trivial qubits from the stabilizer table. This step does not change the number of stabilizer generators, leaving us with a stabilizer tableau of $t^\prime$ qubits and $t^\prime - r^\prime$ generators. Thus, the number of qubits at the end of this procedure defines $t^\prime$.

The string of generating sets that are produced by these manipulations are summarized below:
\begin{align*}
    G^{(0)}\in \genset{n+t}{n+t} \overset{(1)}{\rightarrow} \tilde{G}\in \genset{n+t}{t-r+v}\overset{(2)}{\rightarrow} G\in \genset{t}{t-r}\overset{(3)}{\rightarrow} \tilde{G}^\prime\in \genset{t}{t^\prime -r^\prime}\overset{(4)}{\rightarrow} G^\prime\in \genset{t^\prime}{t^\prime-r^\prime}
\end{align*}
where step $(1)$ corresponds to the application of register $\rega$ and $\regb$ constraints that remove generators, step $(2)$ corresponds to the first qubit removal step associated with the $f_x$ map, step $(3)$ corresponds to the application of register $\regc$ constraints and step $(4)$ corresponds to the second qubit removal step associated with removing additional trivial qubits. We note that the removal of these qubits produces a corresponding $t^\prime$ qubit magic state $\Tketsprime$ constructed from $\Tkets$ in the obvious way. We also note that the gate sequence construction step presented in Sec.~\ref{sec:gate-seq} can be applied at the level of the generating set $G'$ with all calculations following analogously.

We now show bounds on these primed variables as claimed in Remark~\ref{rmk}. It is clear that $t^\prime\leq t$ since no qubits were added in steps $(3)$ and $(4)$. By the multiplicativity of the stabilizer extent, it is clear that the removal of each magic qubit $\Tket{\phi}$ will result in a reduction of the stabilizer extend by a factor of $\xi(\Tket{\phi})^{-1}\leq 1$. Hence, $\xi^{\prime}\leq \xi^*$.  Additionally, we note that $r$ and $r^\prime$ can be defined as the difference between the number of qubits and the number of stabilizers immediately after steps $(2)$ and $(4)$, respectively. Since the number of qubits removed in step $(4)$ exceeds the number of generator removed in step $(3)$ by $u\geq 0$, we see that $r^\prime=r-u\leq r$. Finally, since $(t-r)-(t^\prime -r^\prime)\geq 0$ generators were removed in step $(3)$, it follows that $(t^\prime -r^\prime)\leq t-r$. 

For completeness, we note that in the general case where $t^\prime\neq t$, the run-times of our $\compute$ algorithm given in Eq.~\eqref{alg:run-time compute}, and $\child$ algorithm given in Eq.~\eqref{eq:child run-time} can be modified to:
	\begin{align}\label{alg:run-time general compute}
		\tau_\compute=\order{2^{t^\prime-r^\prime}t'}.
	\end{align}
and:
	\begin{align}
		\label{alg:run-time general child}
		\tau_{\child} = \order{s {t^\prime}^3 + sL {r^\prime}^3}.
	\end{align}
In addition, we note that for fixed precision parameters $\epsilon_{\rm tot}$ and $\delta_{\mathrm{tot}}$, an exponentially smaller parameter $s$ can be used since in Eq.~\eqref{eq:thm3}, we replace $\xi^*$, the stabilizer extent of $\Tkets$, by $\xi'$, the stabilizer extent of $\Tketsprime$. That is, for fixed precision parameters $\epsilon_{\rm tot}$ and $\delta_{\mathrm{tot}}$, we would now require $s$ and $L$ sufficiently large to satisfy:
\begin{align}\label{eq:general thm3}
	\bigpr{\abs{\hat{p}-p}\geq \epsilon_{\rm tot}}\leq 2e^2\exp\left(\frac{-s(\sqrt{p+\epsilon}-\sqrt{p})^2}{2(\sqrt{\xi'}+\sqrt{p})^2}\right)+\exp\left(-\left(\frac{\epsilon_{\rm tot}-\epsilon}{p+\epsilon}\right)^2 L \right)=:\delta_{\mathrm{tot}}.
\end{align}
The run-time improvements to $\tau_{\child}$ flow through to the $\parent$ algorithm as expected.}


\section{The {\normalfont\texorpdfstring{$\compute$}{Compute}} algorithm}
\label{app:compute_proof}

The $\compute$ algorithm computes $p$ from Eq.~\eqref{eq:congens short} by multiplying out the product into a sum of $2^{t-r}$ terms
\begin{align}
    p &= 2^{v-w}  \Tbras\prod_{i=1}^{t-r}(I+g_i) \Tkets= 2^{v-w} \sum_{g\in \langle G\rangle } \Tbras g \Tkets~\label{eq:congens sum form}
\end{align}
and directly evaluating each term in time $\order{t}$. The terms of the sum can be ordered such that for $j\in \set{1,\ldots, 2^{t-r}-1}$, the $j^{\rm th}$ term $2^{v-w}  \Tbras P_j \Tkets$ of the sum has the form $2^{v-w}  \Tbras g_{i_j} P_{j-1} \Tkets$, where $P_0$ is just the identity operator on every qubit. Here, $g_{i_j}$ is one of the $t-r$ stabilizer generators and for all $j$, the index $i_j\in [t-r]$ can be computed in time $\order{t}$. Multiplication of the length $t$ Pauli operators and evaluation of the expectation value both take time $\order{t}$. By only storing in running memory the partial sum up to the $j^{\rm th}$ term and the Pauli $P_j$, the algorithm iterates through all the terms with run-time $\order{2^{t-r}t}$.

We now establish that the generator index $i_j$ can be computed in the claimed run-time and that all the terms in the sum are included exactly once. The group $\langle G \rangle$ generated by the set of stabilizers $G$ is isomorphic to $\mathbb{Z}_2^{t-r}$. In particular, we identity the $i^\text{th}$ generator appearing in the stabilizer tableau with the bitstring that has a $1$ in position $i$ and all other bits equal to $0$. If the bistrings associated with two group elements differ by a single bit in position $i$ we may compute one from the other by multiplying by the $i^\text{th}$ stabilizer generator.
Thus, we require an enumeration of the bitstrings of length $t-r$ such that subsequent bitstrings in the enumeration differ by a single bit. The well-known reflected binary Gray code~\cite{BitnerBinaryReflectedGray1976}, $\operatorname{Gray}: \{0,1\hdots 2^{t-r} -1\}\to \mathbb{Z}_2^{t-r}$ has exactly this property and may be evaluated as
\begin{align}
    \operatorname{Gray}(j) = B(j) \oplus B\left(\left \lfloor \frac{j}{2}\right\rfloor\right),
\end{align}
where $B$ takes natural number $j$ to its usual binary representation, and $\oplus$ denotes element wise addition mod-$2$. Thus, $\operatorname{Gray}(j) \oplus \operatorname{Gray}(j-1)$ is a bitstring with unit Hamming weight. The position of `1' in this bitstring is the index $i_j$ and can be computed in run-time $\order{t}$ as claimed.


\section{The {\normalfont\texorpdfstring{$\child$}{RawEstim}} algorithm}
\label{app:child}

\subsection{Step 1: Stabiliser decomposition and sampling}
\label{sec:decomposition}

Each state $\Tket{\phi_j}$ appearing in $\Tkets$ can be decomposed into stabilizer states,
\begin{equation}
	\ket{\tilde{0}}:=\ket{+}=\frac{1}{\sqrt{2}}(\ket{0}+\ket{1}),\qquad \ket{\tilde{1}}:=\ket{-i}=\frac{1}{\sqrt{2}}(\ket{0}-i\ket{1}),
\end{equation}
as follows:
\begin{equation}
	\label{eq:T-decomp}
	\Tket{\phi_j}=\alpha_{\phi_j} \ket{\tilde{0}}+\alpha_{\phi_j}' \ket{\tilde{1}},
\end{equation}
where
\begin{equation}
	\alpha_{\phi_j}=\frac{i+e^{-i\phi_j}}{1+i}=e^{i\varphi_j}\sqrt{1-\sin\phi_j},\qquad \alpha_{\phi_j}'=\frac{1-e^{-i\phi_j}}{1+i}=e^{i\varphi_j'}\sqrt{1-\cos\phi_j},
\end{equation}
for some phases $\varphi_j, \varphi_j'$.

The above decomposition achieves the minimum defining the stabiliser extent $\xi$~\cite{bravyi2019simulation},
\begin{equation}
	\label{eq:extent}
	\xi(\ket{\psi}):=\min_{c} \left\{ \|c\|_1^2 ~\vline ~\ket{\psi}=\sum_j c_j \ket{\sigma_j},~~\ket{\sigma_j}~\text{is~a~stabiliser~state}\right\},
\end{equation}	
i.e.,
\begin{equation}
	\xi(\Tket{\phi_j})=(|\alpha_{\phi_j}|+|\alpha'_{\phi_j}|)^2=(\sqrt{1-\sin\phi_j}+\sqrt{1-\cos\phi_j})^2.
\end{equation}	
We choose this particular decomposition because it minimises the run-time of the algorithm: as we will shortly see, it scales in the square of the $l_1$-norm of the expansion coefficients. Moreover, as proven in Ref.~\cite{bravyi2019simulation}, the stabilizer extent for products of single-qubit states is multiplicative. Thus, denoting by $\cextent$ the total stabiliser extent of all states coming from reverse gadgetization of non-Clifford gates in $U$, we have
\begin{equation}
	\cextent:=\xi(\Tkets)=\prod_{j=1}^t\xi(\Tket{\phi_j}),
\end{equation} 
and so the optimal stabiliser decomposition of $\Tkets$ is simply obtained by decomposing each $\Tket{\phi_i}$ according to Eq.~\eqref{eq:T-decomp}. In Fig.~\ref{fig:extent}, we present the values of the stabiliser extent of $\Tket{\phi}$ as a function of $\phi$.

\begin{figure}[t]
\includegraphics[width=0.45\columnwidth]{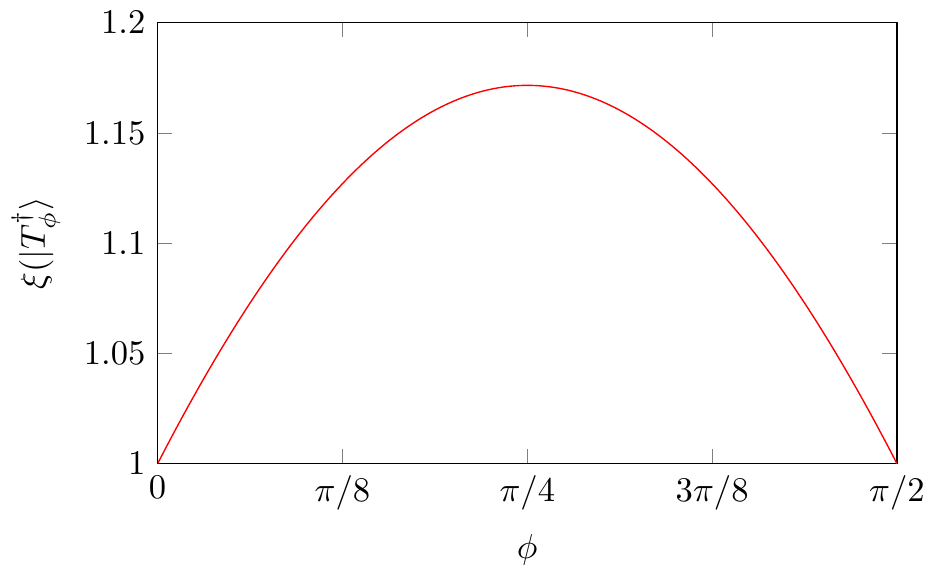}
	\caption{\label{fig:extent}\textbf{Stabiliser extent.} The values of the stabiliser extent $\xi$ of $\Tket{\phi}$ states as a function of $\phi$. Note that the maximum at $\phi=\pi/4$ is achieved for $2^\gamma$ with $\gamma \approx 0.228$ being exactly the same as in the exponential component of the run-time of the sampling algorithm presented in Ref.~\cite{bravyi2016improved}.}
\end{figure}

Using the optimal stabiliser decomposition, we can rewrite Eq.~\eqref{eq:compressed} as follows
\begin{subequations}
\begin{align}
	p&=2^{t-r+v-w} \twonorm{\sum_y \prod_{j=1}^t \alpha_{\phi_j}^{1-y_j}\alpha_{\phi_j}'^{y_j} \bra{0}^{\otimes t-r}W\ket{\tilde{y}}}^2\\
	&=\cextent\cdot 2^{t-r+v-w}  \twonorm{\sum_y \prod_{j=1}^t \frac{\alpha_{\phi_j}^{1-y_j}\alpha_{\phi_j}'^{y_j}}{|\alpha_{\phi_j}|+|\alpha'_{\phi_j}|} \bra{0}^{\otimes t-r}W\ket{\tilde{y}}}^2\\
	&=\cextent\cdot 2^{t-r+v-w} \twonorm{\sum_y q(y) \prod_{j=1}^te^{i\varphi_j(1-y_j)}e^{i\varphi'_j y_j} \bra{0}^{\otimes t-r}W\ket{\tilde{y}}}^2,
\end{align}
\end{subequations}
where $q(y)$ is a normalised product probability distribution,
\begin{equation}\label{eq: y distribution}
	q(y)=\prod_{j=1}^t q(y_j),\quad q(y_j)=\left\{
	\begin{array}{l}
		\frac{|\alpha_{\phi_j}|}{|\alpha_{\phi_j}|+|\alpha'_{\phi_j}|}\quad \mathrm{for~}y_j=0,\\
		\frac{|\alpha'_{\phi_j}|}{|\alpha_{\phi_j}|+|\alpha'_{\phi_j}|}\quad \mathrm{for~}y_j=1.
	\end{array}
	\right.
\end{equation}
Therefore, we can introduce the following unnormalised states:
\begin{equation}
	\label{eq:psi(y)}
	\ket{\psi(y)}:=\sqrt{\cextent} \cdot 2^{\frac{t-r+v-w}{2}} \prod_{j=1}^t e^{i\varphi_j(1-y_j)}e^{i\varphi'_j y_j} \bra{0}^{\otimes t-r}W\ket{\tilde{y}},
\end{equation}
and write $p$ as 
\begin{equation}
	p=\twonorm{\ket{\mu}}^2,\qquad \ket{\mu}:=\expval{Y\sim q}{\ket{\psi(Y)}}=\sum_y q(y) \ket{\psi(y)}.
\end{equation}

We thus see that the Born rule probability $p$ is given by the squared length of a vector $\ket{\mu}$ that is an expectation value over vectors $\ket{\psi(y)}$ distributed according to $q(y)$. The idea behind our algorithm is then to estimate this expectation value $\ket{\mu}$ using a mean $\ket{\widebar{\psi}}$ over $s$ samples:
\begin{equation}
	\label{eq:mean}
	\ket{\widebar{\psi}}=\frac{1}{s}\sum_{j=1}^s \ket{\psi_j},
\end{equation}
where each $\ket{\psi_j}$ takes the value $\ket{\psi(y)}$ with probability $q(y)$. More precisely, in order to obtain each sample we first generate a $t$-bit string $y$ bit by bit according to $q(y_j)$. This way we generate the state $\ket{\tilde{y}}$ with probability $q(y)$. We then evolve it by a Clifford $W$ and project on $\ket{0}^{\otimes t-r}$ to finally obtain $\ket{\psi(y)}$ with probability $q(y)$. The evolution and projection can be performed efficiently and we describe how to do it in the next step. Here, assuming that we have $s$ such samples, we bound the estimation error.

First, we note that by construction $\ket{\widebar{\psi}}$ is an unbiased estimator of $\ket{\mu}$. Next, we use the following lemma, the proof of which can be found in Supplemental Material Sec.~\ref{app:bound}, to upper-bound the norm of each $\ket{\psi(y)}$. 
\begin{lem}[Upper-bound for $\twonorm{\ket{\psi(y)}}^2$]
	\label{lem:psi(y)_bound}
	 For every elementary description of $p$, the corresponding vectors $\ket{\psi(y)}$ defined in Eq.~\eqref{eq:psi(y)} are unnormalised stabilizer states with the squared $l_2$-norm upper-bounded by the total stabiliser extent $\cextent$ of all states coming from reverse gadgetization of non-Clifford gates appearing in that elementary description:
	\begin{equation}
		\twonorm{\ket{\psi(y)}}^2\leq \cextent.
	\end{equation}
\end{lem}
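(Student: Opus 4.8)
The plan is to strip away the parts of the definition \eqref{eq:psi(y)} that cannot affect the norm, reducing the lemma to one clean scalar inequality. Since each factor $e^{i\varphi_j(1-y_j)}e^{i\varphi'_j y_j}$ is unimodular and $\sqrt{\cextent}\cdot 2^{(t-r+v-w)/2}$ is a positive real scalar, we immediately have
\begin{equation}
\twonorm{\ket{\psi(y)}}^2=\cextent\cdot 2^{t-r+v-w}\,\twonorm{\bra{0}^{\otimes t-r}W\ket{\tilde{y}}}^2.
\end{equation}
Dividing through by $\cextent=\ubex$, the entire lemma is therefore equivalent to the single inequality $2^{t-r+v-w}\twonorm{\bra{0}^{\otimes t-r}W\ket{\tilde{y}}}^2\leq 1$. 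First I would flag the trap: bounding $\twonorm{\bra{0}^{\otimes t-r}W\ket{\tilde{y}}}^2\leq 1$ directly (it is the norm of a sub-normalized vector) is useless, since it loses the factor $2^{t-r+v-w}$. The real content is that the full left-hand side is itself an honest probability.

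The key structural observation is that the identities underpinning Theorem~\ref{thm:compress}, namely Eq.~\eqref{eq:extract short} of Lemma~\ref{lem:extract} and Eq.~\eqref{eq:gateseq} of Lemma~\ref{lem:gateseq}, are operator statements about the Clifford circuit $V$ alone; the register-`$\regc$' state enters only when one finally pairs these identities against $\Tkets\!\Tbras_{\regc}$. I would therefore repeat the derivation of Eq.~\eqref{eq:congens short} verbatim but with $\Tkets$ replaced by the arbitrary register-`$\regc$' state $\ket{\tilde{y}}$. Combining ${\rm Tr}_{\rega\regb}\!\left(V\density{0}^{\otimes n+t}V^{\dagger}\density{x}_{\rega}\right)=2^{-r+v-w}\Pi_G$ with $\Pi_G=W^\dagger(\density{0}^{\otimes t-r}\otimes I^{\otimes r})W$ gives, for any state $\ket{\tilde y}$ on register `$\regc$',
\begin{equation}
2^{t-r+v-w}\twonorm{\bra{0}^{\otimes t-r}W\ket{\tilde{y}}}^2
=2^{t}\,\tr{V\density{0}^{\otimes n+t}V^{\dagger}\big(\density{x}_{\rega}\otimes I_{\regb}\otimes\density{\tilde{y}}_{\regc}\big)}
=2^{t}\,\twonorm{\bra{x}_{\rega}\bra{\tilde{y}}_{\regc}V\ket{0}^{\otimes n+t}}^2=:p_{\tilde y}.
\end{equation}

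It then remains to show $p_{\tilde y}\leq 1$ by exhibiting it as a genuine Born rule probability, which is where the reverse gadget of Step~1 does the work. I would note that $\ket{\tilde{0}}=\ket{+}$ and $\ket{\tilde{1}}=\ket{-i}$ are both equatorial stabilizer states, so post-selecting the ancilla on $\ket{\tilde{y}_j}$ in the reverse gadget \eqref{eq:reverse-gadget} occurs with probability exactly $1/2$ (the same computation that established equiprobable outcomes for the $\Tket{\phi_j}$ branch) and implements the Clifford gate $I$ when $y_j=0$ and $S$ when $y_j=1$. Hence, exactly as in Eq.~\eqref{eq:psi-U-gadget}, the Clifford unitary $U_{\tilde{y}}$ obtained from $U$ by substituting these gates for the $\Tgate{\phi_j}$ satisfies $U_{\tilde{y}}\ket{0}^{\otimes n}=2^{t/2}\bra{\tilde{y}}_{\regc}V\ket{0}^{\otimes n+t}$, so $p_{\tilde y}=\twonorm{\bra{x}_{\rega}U_{\tilde{y}}\ket{0}^{\otimes n}}^2$ is the probability of outcome $x$ for the circuit $U_{\tilde y}$ and is therefore at most $1$. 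This closes the inequality, and the remaining assertion, that $\ket{\psi(y)}$ is an unnormalised stabilizer state, follows because $W\ket{\tilde{y}}$ is a $t$-qubit stabilizer state and projecting its first $t-r$ qubits onto $\ket{0}$ yields a (possibly zero) unnormalised stabilizer state on the remaining $r$ qubits. The main obstacle is purely the conceptual one of recognizing that the large prefactor $2^{t-r+v-w}$ is precisely absorbed by interpreting the expression as a probability of a Clifford circuit; once the substitution-invariance of Lemmas~\ref{lem:extract} and~\ref{lem:gateseq} and the equiprobability of equatorial post-selections are in hand, the calculation is routine.
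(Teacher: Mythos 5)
Your proof is correct, and its first half coincides with the paper's own: both invoke Lemma~\ref{lem:extract} and Lemma~\ref{lem:gateseq} as operator identities on register~`$\regc$' (in which $\Tkets$ never appears), so that they may be sandwiched with the arbitrary state $\ket{\tilde{y}}$ to give $\twonorm{\ket{\psi(y)}}^2=\cextent\cdot 2^{t}\twonorm{\bra{x}_{\rega}\bra{\tilde{y}}_{\regc}V\ket{0}^{\otimes n+t}_{\rega\regb\regc}}^2$. The two arguments part ways only in how they bound this quantity by $\cextent$. The paper discards the projector $\bra{x}_{\rega}$ (which can only increase the norm) and then shows $\twonorm{\bra{\tilde{y}}_{\regc}V\ket{0}^{\otimes n+t}}^2=2^{-t}$ exactly, by stepping through the gadgetized circuit: before each $CX_j$ it decomposes the current normalized state along the control qubit, projects the ancilla onto $\ket{\tilde{y}_j}$, and verifies that the squared norm halves while the residual $n$-qubit state remains normalized. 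You instead keep $\bra{x}_{\rega}$ and observe that $\ket{\tilde{0}}=\ket{+}=\Tket{0}$ and $\ket{\tilde{1}}=\ket{-i}=\Tket{\pi/2}$, so the reverse-gadget identity of Eq.~\eqref{eq:reverse-gadget} applies verbatim with the Clifford substitutions $\Tgate{\phi_j}\mapsto I$ or $S$; this yields $\bra{\tilde{y}}_{\regc}V\ket{0}^{\otimes n+t}=2^{-t/2}\,U_{\tilde{y}}\ket{0}^{\otimes n}$ for a Clifford unitary $U_{\tilde{y}}$, exhibiting the whole quantity as a genuine Born probability $\twonorm{\bra{x}_{\rega}U_{\tilde{y}}\ket{0}^{\otimes n}}^2\leq 1$. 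The underlying fact is the same in both cases, namely that equatorial post-selection through the $CX$ gadget costs exactly a factor $1/\sqrt{2}$ in amplitude and acts unitarily on the data qubits; indeed the paper's normalized intermediate states $\ket{\Phi_j'}$ are precisely your partially substituted circuits applied to $\ket{0}^{\otimes n}$, though the paper never states this. What your framing buys is economy and interpretation: the bound is immediate once the substitution is recognized, and it makes explicit that each sampled state $\ket{\psi(y)}$ is, up to the factor $\sqrt{\cextent}$, the post-measurement vector of a ``Cliffordized'' version of the circuit. What the paper's computation buys is the slightly stronger exact statement $\twonorm{\bra{\tilde{y}}_{\regc}V\ket{0}^{\otimes n+t}}^2=2^{-t}$ (saturation once $\bra{x}_{\rega}$ is dropped), obtained in a self-contained way without re-interpreting the post-selection as a gate substitution.
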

It is very important to note that the above bound for $\twonorm{\ket{\psi(y)}}^2$ is general, i.e. independent of the particularities of a given quantum circuit. We do expect that stronger circuit-specific bounds can be efficiently computed, which would translate into improved run-times of the $\child$ algorithm. Now, the key technical tool that we will employ is the next lemma, proven in Supplemental Material Sec.~\ref{app:hoeffding}, which applies a concentration inequality for vector martingales given by Heyes~\cite{hayes2005large} to our setting.
\begin{lem}
	\label{lem:hoeffding}
	Let $N, s \in \bbn$ and $\set{\ket{\psi_j}}_{j\in [N]}$ be a set of $d$-dimensional vectors over $\bbc$ satisfying $\twonorm{\ket{\psi_j}}^2\leq m$. Moreover, let $q$ be a probability distribution over $[N]$ and define $\ket{\mu}$ as the $d$-dimensional vector over $\bbc$ that is the expectation of $\ket{\psi_X}$ with respect to the random variable $X$ with probability distribution $q$:
	\begin{align*}
		\ket{\mu}=\expval{X\sim q}{\ket{\psi_X}}=\sum_{j \in [N]} q_j \ket{\psi_j}\text{.}
	\end{align*}
	For $j\in [s]$, let $x_j\in [N]$ be independently sampled from the probability distribution $q$, and define a vector sample mean over $s$ samples by:
	\begin{align}\label{eq: mean phi}
		\ket{\widebar{\psi}}=\frac{1}{s}{\sum_{j=1}^s \ket{\psi_{x_j}}}\text{.}
	\end{align}
	Then, for all $\epsilon>0$:
	\begin{align}\label{hayes hoeffding twonorm}
		\bigpr{\twonorm{\ket{\widebar{\psi}}-\ket{\mu}}\geq \epsilon}\leq 2 e^2 \exp{\left(\frac{-{s}{\epsilon^2}}{2 (\sqrt{m}+\sqrt{p})^2}\right)}
	\end{align}
	and
	\begin{align}\label{hayes hoeffding onenorm}
		\bigpr{\sonenorm{\density{\widebar{\psi}}-\density{\mu}}\geq \epsilon}\leq 2 e^2 \exp{\left(\frac{-{s}{(\sqrt{p+\epsilon}-\sqrt{p})^2}}{2 (\sqrt{m}+\sqrt{p})^2}\right)}
	\end{align}
	where $p:=\twonorm{\ket{\mu}}^2$ and $\sonenorm{\cdot}$ is the Schatten 1-norm.
\end{lem}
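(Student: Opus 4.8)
The plan is to reduce both inequalities to a single application of the vector-valued martingale concentration bound of Hayes~\cite{hayes2005large}, and then to upgrade the vector statement of Eq.~\eqref{hayes hoeffding twonorm} to the density-operator statement of Eq.~\eqref{hayes hoeffding onenorm} by a purely deterministic trace-norm estimate.

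First I would set up the martingale. Writing $\ket{Z_j}:=\ket{\psi_{x_j}}-\ket{\mu}$ for $j\in[s]$, the samples $x_j$ are i.i.d.\ and, by the definition of $\ket{\mu}$ as the mean of $\ket{\psi_X}$, each $\ket{Z_j}$ has expectation zero; hence the partial sums $\ket{S_k}:=\sum_{j=1}^k\ket{Z_j}$ form a martingale taking values in $\bbc^d$, which I identify with $\R^{2d}$ under the Euclidean norm so that Hayes' real inner-product-space result applies verbatim. Each increment is bounded by the triangle inequality,
\[
  \twonorm{\ket{Z_j}}\leq\twonorm{\ket{\psi_{x_j}}}+\twonorm{\ket{\mu}}\leq\sqrt{m}+\sqrt{p},
\]
using $\twonorm{\ket{\psi_j}}^2\leq m$ and $p=\twonorm{\ket{\mu}}^2$. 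Since $\ket{\widebar{\psi}}-\ket{\mu}=\tfrac{1}{s}\ket{S_s}$, the event $\{\twonorm{\ket{\widebar{\psi}}-\ket{\mu}}\geq\epsilon\}$ coincides with $\{\twonorm{\ket{S_s}}\geq s\epsilon\}$. Applying Hayes' bound with $n=s$ increments of size at most $\sqrt{m}+\sqrt{p}$ and deviation $s\epsilon$ produces the exponent $-(s\epsilon)^2/\big(2s(\sqrt{m}+\sqrt{p})^2\big)=-s\epsilon^2/\big(2(\sqrt{m}+\sqrt{p})^2\big)$ together with the characteristic $2e^2$ prefactor, which is exactly Eq.~\eqref{hayes hoeffding twonorm}.

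For Eq.~\eqref{hayes hoeffding onenorm} I would first prove a deterministic comparison between trace distance and vector distance. Setting $\ket{\delta}:=\ket{\widebar{\psi}}-\ket{\mu}$ and expanding gives $\density{\widebar{\psi}}-\density{\mu}=\ket{\delta}\!\bra{\mu}+\ket{\mu}\!\bra{\delta}+\ket{\delta}\!\bra{\delta}$, so by subadditivity of the Schatten $1$-norm and the identity $\sonenorm{\ket{u}\!\bra{v}}=\twonorm{\ket{u}}\twonorm{\ket{v}}$ for rank-one operators,
\[
  \sonenorm{\density{\widebar{\psi}}-\density{\mu}}\leq 2\twonorm{\ket{\delta}}\sqrt{p}+\twonorm{\ket{\delta}}^2=\big(\sqrt{p}+\twonorm{\ket{\delta}}\big)^2-p.
\]
Hence $\sonenorm{\density{\widebar{\psi}}-\density{\mu}}\geq\epsilon$ forces $\big(\sqrt{p}+\twonorm{\ket{\delta}}\big)^2\geq p+\epsilon$, i.e.\ $\twonorm{\ket{\delta}}\geq\sqrt{p+\epsilon}-\sqrt{p}$. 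This event containment bounds the probability in Eq.~\eqref{hayes hoeffding onenorm} by the already-established bound of Eq.~\eqref{hayes hoeffding twonorm} evaluated at $\epsilon'=\sqrt{p+\epsilon}-\sqrt{p}$, and substituting this value reproduces the claimed exponent exactly.

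The main obstacle lies not in any single estimate but in correctly importing Hayes' inequality: I must confirm the martingale-difference hypothesis (immediate from the i.i.d.\ mean-zero structure), justify the passage from complex to real inner-product spaces, and fix the normalization of the increment bound so that both the $2e^2$ prefactor and the factor of $2$ in the exponent match the stated form rather than some rescaled variant. The trace-norm comparison is elementary once the difference is written as $\ket{\delta}\!\bra{\mu}+\ket{\mu}\!\bra{\delta}+\ket{\delta}\!\bra{\delta}$; the only points to watch are that all three terms are rank one, so their trace norms are exactly products of vector norms, and that completing the square yields precisely $\big(\sqrt{p}+\twonorm{\ket{\delta}}\big)^2-p$ so that the two exponents align.
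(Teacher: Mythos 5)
Your proposal is correct and follows essentially the same route as the paper's proof: both apply Hayes' vector-martingale concentration bound to the partial sums of the centered samples (with the same $\mathbb{C}^d\cong\R^{2d}$ identification and the same increment bound $\sqrt{m}+\sqrt{p}$, which the paper handles by pre-normalizing the martingale to unit increments), and both derive Eq.~\eqref{hayes hoeffding onenorm} from Eq.~\eqref{hayes hoeffding twonorm} via the identical rank-one decomposition and the substitution $\epsilon'=\sqrt{p+\epsilon}-\sqrt{p}$. Your contrapositive event-containment phrasing of the trace-norm step is just a restatement of the paper's quadratic-solving argument, so there is nothing substantive to distinguish the two proofs.
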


The bound on the estimation error, leading to the exponential scaling of the run-time (measured by the number of steps $s$) with the total stabiliser extent, can now be given as a simple corollary of the above technical lemmas.

\begin{cor}[Upper-bound for estimation error] 
	\label{cor:bound}
	The mean vector $\ket{\widebar{\psi}}$ from Eq.~\eqref{eq:mean} satisfies
	\begin{align}
		\label{eq:bound}
		\bigpr{\abs{\twonorm{\ket{\widebar{\psi}}}^2-p}\geq \epsilon}\leq \delta, \quad \delta:=2 e^2 \exp{\left(\frac{-{s}{(\sqrt{p+\epsilon}-\sqrt{p})^2}}{2 (\sqrt{\cextent}+\sqrt{p})^2}\right)}.
	\end{align}
\end{cor}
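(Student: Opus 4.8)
The plan is to obtain the corollary as a near-immediate consequence of Lemma~\ref{lem:psi(y)_bound} and Lemma~\ref{lem:hoeffding}, bridged by a single elementary operator inequality. The quantity controlled in Eq.~\eqref{hayes hoeffding onenorm} is the Schatten 1-norm $\sonenorm{\density{\widebar{\psi}}-\density{\mu}}$, whereas the corollary concerns the difference of squared $l_2$-norms $\abs{\twonorm{\ket{\widebar{\psi}}}^2-p}$. The first task is therefore to relate these two quantities.

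First I would rewrite the squared norms as traces of the associated rank-one operators, using $\twonorm{\ket{\widebar{\psi}}}^2=\tr{\density{\widebar{\psi}}}$ and $p=\twonorm{\ket{\mu}}^2=\tr{\density{\mu}}$, so that the target quantity becomes $\abs{\tr{\density{\widebar{\psi}}-\density{\mu}}}$. The operator $M:=\density{\widebar{\psi}}-\density{\mu}$ is Hermitian, and for any Hermitian $M$ the trace is the sum of its (real) eigenvalues while the Schatten 1-norm is the sum of their absolute values, giving $\abs{\tr{M}}\leq\sonenorm{M}$. This yields the key bound $\abs{\twonorm{\ket{\widebar{\psi}}}^2-p}\leq\sonenorm{\density{\widebar{\psi}}-\density{\mu}}$.

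Consequently, the event $\abs{\twonorm{\ket{\widebar{\psi}}}^2-p}\geq\epsilon$ is contained in the event $\sonenorm{\density{\widebar{\psi}}-\density{\mu}}\geq\epsilon$, so the former probability is bounded above by the latter. I would then invoke Lemma~\ref{lem:hoeffding} directly: its hypotheses match our setting, since $\ket{\widebar{\psi}}$ from Eq.~\eqref{eq:mean} is precisely the sample mean over $s$ samples of the vectors $\ket{\psi(y)}$ drawn according to $q$, with mean $\ket{\mu}$. By Lemma~\ref{lem:psi(y)_bound}, every $\ket{\psi(y)}$ satisfies $\twonorm{\ket{\psi(y)}}^2\leq\cextent$, so we may take the uniform bound $m=\cextent$ in Eq.~\eqref{hayes hoeffding onenorm}. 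Substituting $m=\cextent$ reproduces exactly the claimed $\delta$, completing the argument.

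There is no substantive obstacle here: all the analytic work has already been carried out in the two preceding lemmas, and the corollary only requires recognising that a difference of squared norms is a difference of traces and is therefore controlled by the Schatten 1-norm of the underlying operator difference. The one point worth stating carefully is the inequality $\abs{\tr{M}}\leq\sonenorm{M}$ for Hermitian $M$, as it is the sole ingredient not already supplied by the earlier results.
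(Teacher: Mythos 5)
Your proposal is correct and matches the paper's own proof essentially verbatim: both reduce the difference of squared norms to $\abs{\tr{A}}\leq\sonenorm{A}$ for the Hermitian operator $A=\density{\widebar{\psi}}-\density{\mu}$, then apply Eq.~\eqref{hayes hoeffding onenorm} of Lemma~\ref{lem:hoeffding} with $m$ replaced by $\cextent$ via Lemma~\ref{lem:psi(y)_bound}. No gaps.
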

\begin{proof}
	We first note that $\abs{\tr{A}}\leq \sonenorm{A}$ for any Hermitian operator $A$. This follows from the fact that $\tr{A}$ is the sum of the eigenvalues of $A$ while $\sonenorm{A}$ is the sum of the singular values of $A$. By applying this inequality to $A=\density{\widebar{\psi}}-\density{\mu}$, the result follows immediately from Eq.~\eqref{hayes hoeffding onenorm}, where $m$ can be replaced by $\cextent$ due to Lemma~\ref{lem:psi(y)_bound}.
\end{proof}


\subsection{Step 2: State evolution}
\label{sec:evolution}

In the previous step we showed that by randomly sampling $s$ stabiliser states $\ket{\psi_j}$, each equal to $\ket{\psi(y)}$ with probability $q(y)$, and creating their uniform superposition $\ket{\widebar{\psi}}$, we can estimate $p$ by calculating $\twonorm{\ket{\widebar{\psi}}}^2$. Here, we will show how to efficiently obtain the description of each sampled state. It is clear from Eq.~\eqref{eq:psi(y)} that to find a given $\ket{\psi(y)}$ it is enough to find an efficient way of representing $\bra{0}^{\otimes t-r}W\ket{\tilde{y}}$ for every $y$. This step of the algorithm will consist of three parts: first, we will explain how to get $W\ket{\tilde{0}\dots\tilde{0}}$; then, how to modify this state to obtain $W\ket{\tilde{y}}$ for arbitrary $y$; and finally, how to perform post-selection to end up with $\bra{0}^{\otimes t-r}W\ket{\tilde{y}}$.

In the first part, we use the phase-sensitive Clifford simulator described in Ref.~\cite{bravyi2019simulation} to efficiently calculate the CH form of a $t$-qubit stabiliser state $W\ket{\tilde{0}\dots\tilde{0}}$. The CH form of a general $t$-qubit stablizer state $\ket{\sigma}$ can be described by a tuple \mbox{$\mathcal{T}(\sigma)=\{F,G,M,{\gamma},v,s,\omega\}$}. Here $F, G$ and $M$ are $t \times t$ binary matrices, $\gamma$ is a length $t$ vector with entries in $\mathbb{Z}_4$, ${v}$ and $s$ are binary vectors of length $t$, and $\omega$ is a complex number. Ref.~\cite{bravyi2019simulation} shows that for each gate $\Gamma\in \set{S, CX, CZ}$, the updated information $\mathcal{T}(\sigma')$ representing $\ket{\sigma'}=\Gamma \ket{\sigma}$ can be computed in $\order{t}$ elementary steps. Updates associated with each Hadamard gate can be computed in $\order{t^2}$ steps. Since $W$ is composed of $\order{t^2}$ elementary Clifford gates including $\order{t}$ Hadamard gates, in $\order{t^3}$ steps we can calculate the CH form of $W\ket{\tilde{0}\dots\tilde{0}}$ by updating the CH form of $\ket{\tilde{0}\dots\tilde{0}}$ step by step with every application of the elementary Clifford gates composing $W$. For completeness, we provide a more detailed introduction of the CH form in Supplemental Material Sec.~\ref{app:ch_form}. This first step can be performed as pre-computation, before any sampling of $y$ is started.

In the second part of this step, we employ a technique similar to Ref.~\cite{Qassim2019} in order to update the CH form of $W\ket{\tilde{0}\dots\tilde{0}}$ to get the CH form of $W\ket{\tilde{y}}$ after sampling a given $y$. If the $k^\text{th}$ bit of a bitstring $z$ is zero, and $y$ is the same bitstring with the $k^\text{th}$ bit set to one then $\ket{\tilde{y}} = S_k^3\ket{\tilde{z}}$. Hence $W \ket{\tilde{y}} = W S_k^3 W^\dagger W  \ket{\tilde{z}}$. In order to update the state $W\ket{\tilde{0}\dots\tilde{0}}$ to $W\ket{\tilde{y}}$, for arbitrary $y$ we therefore pre-compute the $t$ Clifford operators $W S_k^3 W^\dagger$. By writing $S^3_k = \frac{1}{\sqrt{2}}e^{-i\frac{\pi}{4}}\left(I + i Z_k\right)$ we can apply Lemma~4 of Ref.~\cite{bravyi2019simulation} to update a single bit of $y$ in time $\order{t^2}$. Transforming $W\ket{\tilde{0}\dots\tilde{0}}$ into $W\ket{\tilde{y}}$ for arbitrary $y$ therefore takes time $\order{t^3}$. \newtext{Although we could also obtain $W\ket{\tilde{y}}$ in time $\order{t^3}$ by starting from the state $\ket{0}$ and performing standard CH-form evolution we have observed that our method of updating the state from $W\ket{\tilde{0}\dots\tilde{0}}$ is faster in practice. This method also allows for further optimisations, since if a sampled bitstring $y$ is close to a previously sampled bitstring $y^\prime$ (in the sense of their sum having small Hamming-weight), then one may obtain $W\ket{\tilde{y}}$ by updating $W\ket{\tilde{y}^\prime}$, rather than starting afresh from $W\ket{\tilde{0}}$.}

In the third and final part, we need to transform the CH form of a $t$-qubit stabilizer state $W\ket{\tilde{y}}$ into an $r$-qubit stabiliser state $\bra{0}^{\otimes t-r}W\ket{\tilde{y}}$. The authors of Ref.~\cite{bravyi2019simulation} explained how, in $\order{t^2}$ steps, one can update the CH form of a given $t$-qubit state to simulate the action of a projector $\ketbra{0}{0}^{\otimes t-r}$. Surprisingly, despite the fact that the resulting unnormalised state, $\left(\ketbra{0}{0}^{\otimes t-r}\otimes I^{\otimes r}\right) W\ket{\tilde{y}}$, is a product state \newtext{(the first $t-r$ qubits are not entangled with the last $r$)}, it is  non-trivial to discard the $(t-r)$ measured qubits in the CH form description. This complication is related to the fact that the tuple $T(\sigma)$ corresponding to a stabiliser state $\ket{\sigma}$ is not unique. Thus, there exists a large number of equivalent tuples describing a given product state that do not admit a decomposition into two tuples representing each component of the tensor product. 

We have developed a new subroutine to address this issue. This result is summarised in the following lemma, the proof of which can be found in Supplemental Material Sec.~\ref{app:discard}.

\begin{lem}[Discarding systems in CH form]
	\label{lem:discard}
	Given a tuple describing the CH form of an $(n+1)$-qubit stabilizer state $\ket{0}\otimes\ket{\sigma}$, one can find the tuple describing the CH form of an $n$-qubit stabilizer state $\ket{\sigma}$ in $O(n^2)$ time.
\end{lem}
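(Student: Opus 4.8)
The plan is to work directly with the CH-form tuple $\mathcal T(\sigma_{\rm full})=\{F,G,M,\gamma,v,s,\omega\}$ of the $(n+1)$-qubit state $\ket 0\otimes\ket\sigma$ and to massage it, by a bounded number of state-preserving re-gaugings, into a form in which the first qubit is manifestly decoupled and in the state $\ket 0$; once this is achieved, the $n$-qubit tuple for $\ket\sigma$ is obtained by simply deleting the first row and column of $F,G,M$ together with the first entries of $\gamma,v,s$ while keeping $\omega$. Recall that the tuple encodes $\ket{\sigma_{\rm full}}=\omega\,U_C U_H\ket s$, with $U_H$ a layer of Hadamards on the qubits with $v_j=1$ and $U_C$ a Clifford from the group generated by $\{S,CX,CZ\}$, specified through $U_C^\dagger Z_p U_C=\prod_q Z_q^{G_{pq}}$ and $U_C^\dagger X_p U_C=i^{\gamma_p}\prod_q X_q^{F_{pq}}Z_q^{M_{pq}}$. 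The target ``decoupled'' form is $v_1=0$, $s_1=0$, $\gamma_1=0$, with the first row and column of $F$ and $G$ equal to $e_1$ and the first row and column of $M$ vanishing, so that $U_C$ and $U_H$ act trivially on qubit $1$ and the tuple factorises.

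First I would exploit the promise. Since $\ket{\sigma_{\rm full}}=\ket 0\otimes\ket\sigma$, the operator $Z_1$ is a $+1$ stabilizer of the state. Translating $Z_1\ket{\sigma_{\rm full}}=\ket{\sigma_{\rm full}}$ through the CH decomposition gives $U_H^\dagger\big(\prod_q Z_q^{G_{1q}}\big)U_H\ket s=\ket s$, which forces $G_{1q}=0$ for every $q$ with $v_q=1$ and $\bigoplus_q G_{1q}s_q=0$. These are precisely the structural constraints that make the subsequent decoupling possible, and they hold automatically in any valid input tuple for a state of this product form.

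Next I would reduce to the case $v_1=0$. If $v_1=1$ the first qubit carries a Hadamard, and I would remove it using (the reverse of) the $O(n^2)$ Hadamard-layer manipulation of Ref.~\cite{bravyi2019simulation}: the $Z_1$-stabilizer constraint guarantees that qubit $1$ sits in a $Z$-eigenstate of the post-$U_C$ frame, so the Hadamard on qubit $1$ can be absorbed into $U_C$ and $\ket s$ while updating $F,G,M,\gamma,s$ consistently. With $v_1=0$, I would then clear the couplings between qubit $1$ and the rest by exploiting the gauge freedom of the CH form, i.e.\ the state-preserving right-multiplications of $U_C$ by $\{S,CX,CZ\}$ on qubit $1$ together with the matching bit-flip and phase updates of $\ket s$ (which preserve the $U_H\ket s$ form because $v_1=0$). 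These realise binary row/column operations on $F,G,M$ that zero the off-diagonal first row and column while respecting the CH invariants $FG^T=I$ and the symmetry of $MF^T$. The $Z_1$ constraint derived above ensures this elimination leaves the first qubit in a $Z$-eigenstate, which the promise pins to $\ket 0$ (so $s_1=0$), with any residual global phase absorbed into $\omega$.

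Finally, with qubit $1$ decoupled I would delete its row and column from the matrices and its entries from the vectors, and verify that the truncated tuple encodes $\ket\sigma$, since the factorisation $\omega\,U_C U_H\ket s=\ket 0_1\otimes\big(\omega\,U_C'U_H'\ket{s'}\big)$ is immediate once the first row and column are trivial. For the run-time, the Hadamard removal is a single $O(n^2)$ step and each of the $O(n)$ elimination moves touches $O(n)$ matrix entries, so the whole procedure runs in $O(n^2)$. I expect the main obstacle to be the $v_1=1$ case: because the CH form is highly non-unique, showing that the Hadamard on the discarded qubit can always be eliminated --- in $O(n^2)$ time and while maintaining the algebraic invariants of the tuple --- is the delicate point, and it is exactly where the $Z_1$-stabilizer promise must be invoked.
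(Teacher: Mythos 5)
Your overall skeleton (state-preserving re-gaugings of the tuple, then deletion of the first row/column) matches the paper's strategy, and your $Z_1$-stabilizer constraint is correct: pushing $U_C^\dagger Z_1 U_C=\prod_q Z_q^{G_{1q}}$ through $U_H$ indeed forces $G_{1q}=0$ whenever $v_q=1$ and $\bigoplus_q G_{1q}s_q=0$. However, your handling of the $v_1=1$ case fails. The claim that the constraint ``guarantees that qubit $1$ sits in a $Z$-eigenstate of the post-$U_C$ frame'' is false: when $v_1=1$ the constraint forces $G_{11}=0$, so the stabilizer $\prod_q Z_q^{G_{1q}}$ does not act on qubit $1$ at all, and qubit $1$ of $U_H\ket{s}$ is $H\ket{s_1}$, an $X$-eigenstate. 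There is also no ``reverse'' of the Hadamard-layer update that removes $H_1$ while fixing the state: $U_C H_1$ is not C-type, so it cannot be absorbed into $U_C$, and expanding $H_1=(X_1+Z_1)/\sqrt{2}$ and re-encoding the resulting two-term superposition simply reproduces the tuple you started with. The paper never removes this Hadamard; instead it inserts a pair of $\SWAPgate{1}{q}$ gates, exchanging $(v_1,s_1)$ with the data of a qubit $q$ satisfying $v_q=s_q=0$ and $G_{1q}=1$, whose existence is exactly what its Lemma~\ref{lem:first-row-of-g-is-nice} provides.

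The second, more fundamental gap is that your elimination step has no argument that usable pivots exist, and your $Z_1$ constraints are too weak to supply one. Concretely, take $U_C=\CXgate{2}{1}$, $U_H=I$, $s=(1,1)$, $\omega=1$: this is a valid CH form of $\ket{0}\otimes\ket{1}$ and satisfies both of your constraints, yet $G_{1\cdot}=(1,1)$ is supported entirely on qubits with $v_q=0$, $s_q=1$, and there is no qubit with $v_q=s_q=0$ anywhere in the tuple --- hence no qubit whose post-$U_H$ state is $\ket{0}$ on which a trivially-acting $\CX$ insertion could be controlled, and no valid swap partner for the step above. This is precisely the hole the paper fills with Lemma~\ref{lem:constraining-fcal} (either some qubit has $v_q=s_q=0$, or at least two have $v_q=0$, $s_q=1$), Lemma~\ref{lem:fixing-fcal-1} (insert $\CX$ pairs controlled on the least-index $v=0$, $s=1$ qubit to flip all other such $s$ bits to $0$), and only then Lemma~\ref{lem:first-row-of-g-is-nice}; your proposal contains no analogue of this normalization, so the procedure can stall at the first move. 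Separately, your target form is stronger than needed: the paper's Lemma~\ref{lem:if-f-is-good-we-are-happy} shows deletion is already valid once the first \emph{column} of $F$ equals $e_1$ and $v_1=s_1=0$ (the first rows of $F,G,M$, the first columns of $G,M$, and $\gamma_1$ may remain arbitrary), with correctness proven by matching the inner-product formula of Eq.~\eqref{eq:bravyi-inner-product} before and after deletion. Aiming for full decoupling forces extra row-clearing work that your available moves --- right-multiplications, which act as column operations --- do not straightforwardly provide.
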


Using the above and given $y$, we can generate the CH form of a state $\ket{\psi(y)}$ in $\order{t^3}$ steps.


\subsection{Step 3: Norm estimation}
\label{sec:norm}

We are now at the point that we have the description of a state $\ket{\widebar{\psi}}$ as a uniform superposition of $s$ stabiliser states $\ket{\psi_1},\dots\ket{\psi_s}$, and the squared $l_2$-norm of $\ket{\widebar{\psi}}$ is an estimate for the Born rule probability $p$. The goal of the final step is to find and estimate $\hat{p}$ for $\twonorm{\ket{\widebar{\psi}}}^2$ (so effectively for $p$), and bound the total estimation error by relating it to the run-time.

Given a vector $\ket{\widebar{\psi}}$ with a decomposition into an $s$-term linear combination of $r$-qubit stabilizer states $\ket{\psi_j}$ from Eq.~\eqref{eq:mean}, one can estimate the squared $l_2$-norm of $\ket{\widebar{\psi}}$ using a fast norm estimation algorithm by Bravyi and Gosset~\cite{bravyi2016improved}. As inputs, the algorithm is given the CH-forms of $\ket{\psi_j}$. Next, it generates $L$ randomly sampled $r$-qubit stabilizer states $\ket{\theta_1}, \ldots, \ket{\theta_L}$. The estimate $\hat{p}$ is then given by:
\begin{align}\label{eq:fast norm phat}
	\hat{p}=\frac{2^r}{s^2 L}\sum_{j=1}^L \abs{\sum_{k=1}^s \innerbraket{\theta_j}{\psi_k}}^2\text{.}
\end{align}
Each phase sensitive stabilizer inner product, $\innerbraket{\theta_j}{\psi_k}$, appearing above takes $\order{r^3}$ steps to evaluate, and so we need $\order{sLr^3}$ steps to evaluate $\hat{p}$. By choosing:
\begin{align}
	L=\left\lceil \tilde{\epsilon}^{-2}  \log \tilde{\delta}^{-1} \right\rceil\text{,}\label{eq: L bound}
\end{align}
we ensure that the estimate $\hat{p}$ has multiplicative precision, i.e., for any desired error level, $\tilde{\epsilon}>0$ and failure probability, $\tilde{\delta}>0$, we have
\begin{equation}
	\label{eq:norm_est}
	\operatorname{Pr}\left(\left|\hat{p}-\twonorm{\ket{\widebar{\psi}}}^2\right| \geq \tilde{\epsilon}\twonorm{\ket{\widebar{\psi}}}^2\right) \leq \tilde{\delta},
\end{equation}
with the run-time scaling as:
\begin{align}
	\order{s r^3 \ \tilde{\epsilon}^{-2}  \log \tilde{\delta}^{-1}}\text{.}\label{eq: fast norm estim run-time}
\end{align} 

We have now estimated the squared $l_2$-norm of $\ket{\mu}$, i.e. $p$, by an estimate of the squared $l_2$-norm of $\ket{\widebar{\psi}}$, i.e. $\hat{p}$. This introduced two sources of error. The first is due to the deviation between the two squared $l_2$-norms, and we have bounded this error in Corollary~\ref{cor:bound}. The second source of error is due to the deviation between the estimate $\hat{p}$ and the squared $l_2$-norm of $\ket{\widebar{\psi}}$. We now combine these two errors to show that, for an appropriate choice of $s$ and $L$, our estimate satisfies Eq.~\eqref{eq:thm3}. First, we can employ the triangle inequality to obtain
\begin{equation}
	|\hat{p}-p|=\left|\hat{p}-\twonorm{\ket{\widebar{\psi}}}^2+\twonorm{\ket{\widebar{\psi}}}^2-p\right|\leq \left|\hat{p}-\twonorm{\ket{\widebar{\psi}}}^2\right| + \left|\twonorm{\ket{\widebar{\psi}}}^2-p\right|.
\end{equation}
From Eq.~\eqref{eq:norm_est} we have that with probability larger than $1-\tilde{\delta}$ the following holds:
\begin{equation}
	\left|\hat{p}-\twonorm{\ket{\widebar{\psi}}}^2\right|\leq \tilde{\epsilon} \twonorm{\ket{\widebar{\psi}}}^2 \leq \tilde{\epsilon} \left(\left|\twonorm{\ket{\widebar{\psi}}}^2 -p\right|+p\right)=\tilde{\epsilon} \left(\epsilon+p\right).
\end{equation}
Then, from Eq.~\eqref{eq:bound} we get that with probability larger than $1-\delta$ we have 
\begin{align}
	\left|\twonorm{\ket{\widebar{\psi}}}^2 -p\right|\leq \epsilon.
\end{align}
Since both steps (computing the sample average vector $\ket{\widebar{\psi}}$ and computing $\hat{p}$ using fast norm estimation) are independent, we get that with probability larger than $(1-\delta)(1-\tilde{\delta})$ we have
\begin{equation}
	\abs{\hat{p}-p} \leq \tilde{\epsilon}(\epsilon+p)+\epsilon.
\end{equation}
We can thus write
\begin{equation}
	\label{eq:total-bound}
	\bigpr{\abs{\hat{p}-p} \geq \tilde{\epsilon}(\epsilon+p)+\epsilon}\leq \tilde{\delta}+\delta. 
\end{equation}

Introducing variables describing the total estimation error, $\epsilon_{\mathrm{tot}}>0$ and $\delta_{\mathrm{tot}}>0$, we want to find the bounds on the number of samples $s$ from Step~1 and on the number of repetitions $L$ from Step~3, so that the estimate $\hat{p}$ satisfies:
\begin{equation}
	\bigpr{\abs{\hat{p}-p} \geq \epsilon_{\mathrm{tot}}}\leq \delta_{\mathrm{tot}}\label{eq:target_accuracy}.
\end{equation}
Employing Eq.~\eqref{eq:total-bound}, together with Eqs.~\eqref{eq:bound}~and~\eqref{eq: L bound}, the above is satisfied whenever for any arbitrary choice of $\epsilon\in (0,\epsilon_{\mathrm{tot}})$ and $\delta\in (0,\delta_{\mathrm{tot}})$ we have
\begin{subequations}
\begin{align}
	s&\geq \frac{2 (\sqrt{\cextent} +\sqrt{p})^2}{{\left(\sqrt{p+\epsilon}-\sqrt{p}\right)}^{2}} \log~\brackn{\frac{2 e^2}{\delta}}\label{eq:number-of-samples},\\
	L&\geq \brackn{\frac{p+\epsilon}{\epsilon_{\mathrm{tot}}-\epsilon}}^2 \log\left(\frac{1}{\delta_{\mathrm{tot}}-\delta}\right) \label{eq: number of iterations of FNA}\text{.}
\end{align}
\end{subequations}

The output of the fast norm estimation algorithm $\hat{p}$ is the output of our $\child$ algorithm. The bound on the estimation error, Eq.~\eqref{eq:target_accuracy}, together with the bounds on $s$ and $L$, Eqs.~\eqref{eq:number-of-samples}~and~\eqref{eq: number of iterations of FNA}, are equivalent to the main statement of Theorem~\ref{thm:child alg}. To finish the proof, we need to show that the run-time is indeed as claimed in the theorem. To see this, recall that producing each of $s$ samples $\ket{\psi_j}$ (sampling $y$ in Step~1 and evolving the state in Step~2) takes $\order{t^3}$. Moreover, each sample has to go through $L$ repetitions of the norm estimation subroutine, with each repetition taking $\order{r^3}$ steps. Putting this all together, the run-time of the $\child$ algorithm is \mbox{$\order{s t^3 +sL r^3}$}, as claimed in Theorem~\ref{thm:child alg}.

We present the psuedocode for the $\child$ algorithm below.

\begin{algorithm}[H]
\caption{$\child$ outputs an estimate $\hat{p}$ as characterized by Theorem~\ref{thm:child alg}.}\label{algo:child}
\begin{algorithmic}[1]
\algrenewcommand\algorithmicrequire{\textbf{Input:}}
\algrenewcommand\algorithmicensure{\textbf{Output:}}
\Require Output of $\qcompress$ for an elementary description of $p$ and parameters $s,L\in \bbn$. 
\Ensure An estimate $\hat{p}$.
\Statex
\State $[r, v, W]\gets \qcompress(\mathcal{D})$ \Comment{$\mathcal{D}$ represents the elementary description of $p$.}
\State Compute $\chformvar{r}{W}{\tilde{0}}$\Comment{This is the CH-form of the state $\bra{0}^{\otimes t-r} W \ket{\tilde{0}}^{\otimes t}$.}
\For{$k\in [s]$}
	\State $y\gets Y$ where $Y \in \bitstring{t}$ is sampled according to the product distribution in Eq.~\eqref{eq: y distribution} 
	\State Compute $\ket{\psi_k} \gets \chformvar{r}{W}{\tilde{y}}$ \Comment{$\chformvar{r}{W}{\tilde{y}}$ is computed from $\chformvar{r}{W}{\tilde{0}}$ as per App.~\ref{sec:evolution}}.
\EndFor
\For{$j\in [L]$}
	\State $\ket{\theta_{j}}\gets $ random $r$-qubit equatorial state.
\EndFor
\State $\hat{p}\gets \overlapsvar{\set{\ket{\psi_k}}_{k\in [s]}}{\set{\ket{\theta_j}}_{j\in [L]}}$\Comment{The $\overlaps$ sub-procedure evaluates Eq.~\eqref{eq:fast norm phat}.}
\State \textbf{return} {$\hat{p}$}
\end{algorithmic}
\end{algorithm}

Finally, we note that both $s$ and $L$ depend on the unknown quantity $p$ and increase with $p$. Thus if we require a bound on our estimate's failure probability, we can make a conservative choice by substituting $p=1$ into Eqs.~\eqref{eq:number-of-samples} and~\eqref{eq: number of iterations of FNA}. Instead of this naive approach, the $\parent$ algorithm allows us to significantly improve the run-time by making a less conservative choice of $p$, thus taking advantage of the substantial run-time improvements that occur for smaller $p$ values.


\section{The {\normalfont\texorpdfstring{$\main$}{Estimate}} algorithm}
\label{app:parent}

The role of the $\parent$ algorithm is to choose the parameters used in making repeated calls to the $\child$ algorithm with the goal of finally attaining a Born rule probability estimate $\hat{p}$ satisfying Eq.~\eqref{thm:main}. In what follows, we first present the overview of the algorithm and explain the key ideas behind it that crucially depend on the error function $\epsilon^*$. Then, we present a rigorous definition of $\epsilon^*$ and prove its properties necessary for the performance of $\parent$. Finally, we upper bound the run-time of $\parent$.

\subsection{Overview}

In each call to the $\child$ algorithm, the $\parent$ algorithm chooses the input parameters $s$ and $L$ based on the input parameters to $\parent$ (i.e., the desired level of additive error $\epsilon_{\mathrm{tot}}$ and the failure probability $\delta_{\mathrm{tot}}$), and the output Born rule probability estimates produced by $\child$ in prior rounds. Each round has a specified run-time budget $\mathcal{T}_k$, which limits the choice of parameters $s$ and $L$ by imposing $\modeltau(s,L)\leq \mathcal{T}_k$ on the run-time model defined in Eq.~\eqref{eq:model child run-time}. 
The budget allocation in the first step is $2\mathcal{T}_0$, where $\mathcal{T}_0$ is a budget that is insufficient (in the best case scenario of $p=0$) to satisfy Eq.~\eqref{thm:main}. Starting from this low initial run-time allocation, the budget doubles at each iteration. Thus, the run-time budget for each call of the $\child$ algorithm and the total run-time over all prior calls both grow exponentially in the round number.

Now, in each round $k$, given failure probability $\delta_k$ (yet to be specified), the algorithm computes the optimal choice of parameters $(s^*_k, L^*_k)$, such that they minimize the effective additive error of the output estimate from $\child$, while satisfying the run-time constraints, $\modeltau(s^*_k,L^*_k)\leq \mathcal{T}_k$, and the constraints on the failure probability (it has to be lower than $\delta_k$). This additive error depends on the unknown $p$ and will be denoted by the function $\epsilon^*(p,\delta_k,\mathcal{T}_k)$, with $\epsilon^*_k$ denoting its value in the $k^{\mathrm{th}}$ round. Thus, for the input parameters $(s^*_k,L^*_k)$, the Born rule probability estimate $\hat{p}_k$ obtained in the $k^{\mathrm{th}}$ call to $\child$ satisfies
\begin{align}
	\bigpr{\abs{\hat{p}_k-p}\geq \epsilon^{*}(p,\delta_{k},\mathcal{T}_k)}\leq \delta_{k}.\label{eq: failure prob using optimals}
\end{align}

As we prove in the next subsection, the function $\epsilon^*$ is monotonically increasing in $p$. Therefore, Eq.~\eqref{eq: failure prob using optimals} is still satisfied if we replace the unknown $p$ appearing in $\epsilon^*(p,\delta_k,\mathcal{T}_k)$ by a known upper bound. For that, we will be using a \emph{probabilistic} upper bound $p^*_k$ for $p$. Starting with a trivial upper bound $p^*_0=1$, we can use the estimates $\hat{p}_k$ to compute subsequent probabilistic upper bounds for $p$ in the following way: 
\begin{align}
    p^{*}_{k}:=\hat{p}_{k}+\epsilon^{*}(p^{*}_{k-1},\delta_{k},\mathcal{T}_k).
\end{align}
Thus, with each call to $\child$, the $\parent$ algorithm is able to learn tighter and tighter upper bounds for the target Born rule probability. The halting condition for $\parent$ is satisfied when $\epsilon^{*}(p^{*}_{k-1},\delta_{k},\mathcal{T}_k)\leq\epsilon_{\rm tot}$ is first satisfied.

Of course, each of the above probabilistic upper bounds can fail with some small probability. However, as we now show, a proper choice of $\delta_k$ guarantees that the desired accuracy parameters are satisfied by the final round's Born rule probability estimate. Note that $p^{*}_1$ is a probabilistic upper bound of $p$ with failure probability $\delta_1$. If $p^{*}_k$ is a probabilistic upper bound of $p$ with failure probability $\delta_1+\ldots +\delta_k$, then, by the union bound, $p^{*}_{k+1}$ is a probabilistic upper bound for $p$ with failure probability $\delta_1+\ldots+\delta_{k+1}$. For the choice $\delta_k=\frac{6}{\pi^2 k^2}\delta_{\rm tot}$, the infinite sum $\delta_1+\delta_2+\ldots$ converges to $\delta_{\rm tot}$, and hence the probability that at least one of the upper bounds, $p^{*}_{1}, p^{*}_{2} \ldots$, fails is at most $\delta_{\rm tot}$.

We provide the pseudocode of the $\parent$ algorithm below, while Fig.~\ref{fig:optimize-perf} shows the output and intermediate values of $\hat{p}, p^{*}$ and $\epsilon^*$ produced using the $\main$ algorithm. Note that some quantities appearing in the pseudocode will be rigorously introduced in the next subsection. In particular, $\textsc{OptParams}$ denotes the subroutine that finds the parameters $(s_k^*,L_k^*)$ which optimize the run-time cost while achieving the minimum error.

\begin{algorithm}[H]
\caption{$\parent$ returns the estimate $\hat{p}$ as characterized by Eq.~\eqref{thm:main}.}
\label{algo:parent}
\begin{algorithmic}[1]
\algrenewcommand\algorithmicrequire{\textbf{Input:}}
\algrenewcommand\algorithmicensure{\textbf{Output:}}
\Require  Output of $\qcompress$ for an elementary description of $p$ and accuracy parameters $\epsilon_{\rm tot}, \delta_{\rm tot}>0$.	
\Ensure An estimate $\hat{p}$.
\Statex
\State $[r, v, W]\gets \qcompress(\mathcal{D})$ \Comment{$\mathcal{D}$ represents the elementary description of $p$.}
\State $p^{*}_0\gets 1$\Comment{This is the running upper bound for the unknown $p$.}
\State ${\rm Exit}\gets 0$, $k\gets 0$
\State $\mathcal{T}_0\gets \modeltau(-\frac{2(\sqrt{\ubex}+1)^2}{\epsilon_{\rm tot}} \ln \frac{\delta_{\rm tot}}{2 e^2},1)$\label{alg:s0 L0 defin}\Comment{$\mathcal{T}_{0}$ is a run-time budget that is too small to satisfy Eq.~\eqref{thm:main} even assuming $p=0$.}
\vspace{0.05cm}
\While{${\rm Exit}=0$}
	\State $k\gets k+1$
	\State $(\eta,s,L^+)\gets \textsc{OptParams}(p^{*}_{k-1},\frac{6}{\pi^2 k^2}\delta_{\rm tot},2^{k}\mathcal{T}_0)$\Comment{In each round, we double the run-time budget.}
	\State $\epsilon^{*}_k\gets \epsilon'(p^{*}_{k-1},\frac{6}{\pi^2 k^2}\delta_{\rm tot},\eta,s,L^+)$\label{alg:eps dash}\Comment{Equivalently, $\epsilon^{*}_k\gets \epsilon^{*}(p^{*}_{k-1},\frac{6}{\pi^2 k^2}\delta_{\rm tot},2^{k}\mathcal{T}_0)$.}
	\If{$\epsilon^{*}_k\leq \epsilon_{\rm tot}$} 
		\State ${\rm Exit}\gets 1$
	\EndIf
	\State $\hat{p}_k\gets \childvar{s}{L^+ +L_{\min}(\frac{6}{\pi^2 k^2}\delta_{\rm tot},\eta)}$\label{alg:child call}
	\State $p^{*}_k\gets \max \set{0,\min \set{1, p^{*}_{k-1},\hat{p}_k+\epsilon^{*}_k}}$
\EndWhile
\State \textbf{return} {$\hat{p}_k$}
\end{algorithmic}
\end{algorithm}


\subsection{Definition and properties of \texorpdfstring{$\epsilon^*$}{Epsilon-star}}

Consistent with Eq.~\eqref{eq:child run-time}, we model the run-time of $\child$ using Eq.~\eqref{eq:model child run-time} \newtext{which we repeat here for convenience:
\begin{equation}
    \tag{\ref{eq:model child run-time}}
    \modeltau(s,L):=c_1 s t^3 + c_2 s L r^3,
\end{equation}}
Hence the $\parent$ algorithm aims to minimize the run-time cost, $\mathcal{C}$, as defined in Eq.~\eqref{eq:child cost}. For $p\in [0,1]$, $\epsilon_{\rm tot}\in \bbr^+$, $\eta\in (0,1)$ and $s,L\in \bbn^+$, we define the function: 
\begin{align}\label{eq:delta dash}
	\delta'(p,\epsilon_{\rm tot},\eta,s,L):= 2e^2 {\rm exp}\left(\frac{-s(\sqrt{p+\eta \epsilon_{\rm tot}}-\sqrt{p})^2}{2(\sqrt{\ubex}+1)^2} \right)+\exp\left(-\left(\frac{(1-\eta)\epsilon_{\rm tot}}{p+\eta \epsilon_{\rm tot}}\right)^2 L \right).
\end{align}
Comparing to Eq.~\eqref{eq:thm3}, we note that $\childvar{s}{L}$ outputs an estimate $\hat{p}$ such that for all $\epsilon_{\rm tot}>0$ and \mbox{$\eta \in (0,1)$}:
\begin{align}\label{eq:child concentration}
	\bigpr{\abs{\hat{p}-p}\geq \epsilon_{\rm tot}} \leq \delta'(p,\epsilon_{\rm tot},\eta,s,L).
\end{align}
For fixed $p,\eta,s,L$, we want to view $\delta'$ as a function of $\epsilon_{\rm tot}$ and define its functional inverse. We will need this to be defined for all $\delta'>0$. By inspection of Eq.~\eqref{eq:delta dash}, we note that for $\delta_{\rm targ}>0$ close to zero and $L,\eta$ too small there does not exist $\epsilon'$ such that $\delta'(p,\epsilon',\eta,s,L,m)=\delta_{\rm targ}$. To resolve this technicality, we define a minimal $L$ value:
\begin{align}\label{eq:Lmin}
	L_{\min}(\delta,\eta):=\left\lceil {-\left(\frac{\eta}{(1-\eta)}\right)^2 \ln \delta}\right\rceil.		
\end{align}
To specify a well defined inverse $\epsilon'(p,\delta_{\rm targ},\eta,s,L)\in \bbr^+$ of the $\delta'$ function, let us define its domain
\begin{align}
	D=[0,1]\times(0,1)\times(0,1)\times \bbn^+ \times \bbn^+.
\end{align}	
By inspecting Eq.~\eqref{eq:delta dash}, it is clear that on $D$, there exists a well defined function	$\epsilon'$ that satisfies the following: for all $(p,\delta_{\rm targ},\eta,s,L^{+})\in D$, there exists $\epsilon_{\rm targ}=:\epsilon'(p,\delta_{\rm targ},\eta,s,L^{+})$ such that $\delta'(p,\epsilon_{\rm targ},\eta,s,L_{\min}(\delta_{\rm targ}, \eta) + L^{+})=\delta_{\rm targ}$. To see this, we just note that for $p\in [0,1]$ fixed, the function \mbox{$f(c)=\sqrt{p+c}-\sqrt{p}$} is strictly increasing and unbounded from above. 

\newtext{We now establish a property of the function $\epsilon'(p,\delta_{\rm targ},\eta,s,L^{+})$ that will be useful. Namely, we now show that this function is a monotonically increasing function of $p$. To see this we note that by the definition of the function $\epsilon'$, the evaluation of $\delta'(p,\epsilon'(p,\delta_{\rm targ},\eta,s,L^{+}),\eta,s,L_{\min}(\delta_{\rm targ}, \eta) + L^{+})$ is a constant. Thus, we have:
\begin{align}
	0&=d \delta'(p,\epsilon',\eta,s,L_{\min}(\delta,\eta))=\frac{\partial \delta'}{\partial p}dp+ \frac{\partial \delta'}{\partial \epsilon'}d \epsilon',
\end{align}
where we have omitted terms containing $d\eta$, $d s$, $d \delta_{\rm targ}$ and $d L^{+}$ as we are interested in the case were $\eta$, $s$, $\delta_{\rm targ}$ and $L^{+}$ are held constant.
Using Eq.~\eqref{eq:delta dash}, it is easy to verify that $\frac{\partial \delta'}{\partial p}\geq 0$ and $\frac{\partial \delta'}{\partial \epsilon'}\leq 0$. Thus $\frac{d \epsilon'}{d p}\geq 0$ as claimed.}

For $\mathcal{T}\in [2,\infty)$ and all other ranges as before, we can now define the function $\epsilon^{*}(p,\delta_{\rm tot},\mathcal{T})\in \bbr$ as:
\begin{align}\label{eq:eps star as inf}
	\epsilon^{*}(p,\delta_{\rm tot},\mathcal{T})=\underset{\eta, s, L^+}{\inf}~ \epsilon'(p,\delta_{\rm tot},\eta,s,L^+),
\end{align}
where the infimum is over all $\eta\in (0,1)$, $s,L^+ \in \bbn^+$ subject to the constraint:
\begin{align}
	\modeltau(s,L^+ +L_{\min}(\delta_{\rm tot},\eta)) \leq \mathcal{T}.\label{eq:esp inf constraint}
\end{align}
Since the range $\eta\in (0,1)$ is not closed, in principal the function $\epsilon'$ could get arbitrarily close to its infimum $\epsilon^{*}$ without attaining it. However, one can show that for all $p\in [0,1]$, $\delta_{\rm tot}\in (0,1)$ and $\mathcal{T}\geq 2$, there always exists $\eta\in (0,1)$ and $s,L^+ \in \bbn^+$ such that the infimum is attained, i.e. $\epsilon'(p,\delta_{\rm tot},\eta,s,L^+)=\epsilon^{*}(p,\delta_{\rm tot},\mathcal{T})$. To see this, we note that Eqs.~\eqref{eq:Lmin} and \eqref{eq:esp inf constraint} can be used to impose a closed upper bound on $\eta$. Similarly, using the fact that in the limit of $\eta \rightarrow 0$, $\delta'(p,\epsilon_{\rm tot},\eta,s,L)$  becomes greater than $1$, we can impose a closed lower bound on $\eta$. Having restricted the range of $\eta$ in Eq.~\eqref{eq:eps star as inf} to a closed interval contained in $(0,1)$, we can apply the Extreme Value Theorem to prove our claim.

We now show that for $\delta_{\rm tot}\in (0,1]$ and $\mathcal{T}\geq 2$ fixed, $\epsilon^{*}(p,\delta_{\rm tot},\mathcal{T})$ is monotonically increasing in $p$. Let $\eta',s',L'$ and $\eta'',s'',L''$ be such that $\epsilon^{*}(p',\delta_{\rm tot},\mathcal{T})= \epsilon'(p',\delta_{\rm tot},\eta',s',L')$ and $\epsilon^{*}(p'',\delta_{\rm tot},\mathcal{T})= \epsilon'(p'',\delta_{\rm tot},\eta'',s'',L'')$. Then, for $p'\leq p''$, we have:
\begin{align}
	\epsilon^{*}(p',\delta_{\rm tot},\mathcal{T})&= \epsilon'(p',\delta_{\rm tot},\eta',s',L')\nonumber\\
	&\leq \epsilon'(p',\delta_{\rm tot},\eta'',s'',L'')\nonumber\\
	&\leq \epsilon'(p'',\delta_{\rm tot},\eta'',s'',L'')\nonumber\\
	&=\epsilon^{*}(p'',\delta_{\rm tot},\mathcal{T}),
\end{align}
where the first inequality holds by the definition of $\epsilon^*$ and the second inequality holds by the fact that $\epsilon'$ is monotone increasing in $p$.


Having the rigorous definitions, we now see that from Eq.~\eqref{eq:thm3} and the definitions of $\epsilon{'}$ and $\epsilon^{*}$, we have that for all $\delta_{k}>0$ and $\eta, s, L^+$ satisfying \mbox{$\epsilon^{*}(p,\delta_{k},\mathcal{T}_k)= \epsilon'(p,\delta_{k},\eta,s,L^{+})$}, the output $\hat{p}$ of $\childvar{s}{L_{\min}(\delta_{k},\eta)+L^{+}}$ satisfies Eq.~\eqref{eq: failure prob using optimals}. Also, for $(p,\delta_{\rm tot},\mathcal{T})\in [0,1]\times \bbr^+ \times [2,\infty)$, we define the function $\textsc{OptParams}(p,\delta_{\rm tot},\mathcal{T})\in(0,1) \times \bbn^+ \times \bbn^+$. For a given target cost $\mathcal{T}$, this function numerically optimises the choice of parameters $\eta$, $s$ and $L^{+}$ (subject to the cost budget constraint) such that $\epsilon^{*}(p,\delta_{\rm tot},\mathcal{T})=\epsilon'(p,\delta_{\rm tot},\eta,s,L^+)$. Finally, we note that small improvements in performance can be achieved by using a larger choice of $\mathcal{T}_0$ subject to the requirement that $\mathcal{T}_{0}$ is still too small to satisfy Eq.~\eqref{thm:main} even assuming $p=0$.

\subsection{Upper-bounding the run-time}

We now move on to computing an upper bound for the total modeled run-time cost associated with $\main$ as defined in Eq.~\eqref{eq:child cost}. The run-time cost of $\main$ is probabilistic and dependent on the unknown $p$. Here, we introduce our $\mainruntime$ algorithm which produces run-time cost upper bounds for any given $p$ value. This algorithm can be used to produce a run-time cost upper bound as a function of $p$.  We point out that the actual run-time of $\main$ may differ from the run-time cost for a number of reasons. Firstly, run-time cost is a modelled and/or expected run-time and may differ from actual run-time it aims to predict due to limitations of the model or incorrectly calibrated model parameters $c_1$ and $c_2$. Secondly, the run-time cost only aims to model the total run-time of $\child$ over all calls made in Step~\ref{alg:child call} of Algorithm~\ref{algo:parent}. Thus, it ignores the run-time incurred by $\main$ in all other steps. We justify the choice to not model the run-time cost associated with these other steps since their run-time is insensitive to circuit parameters. 

We now present the pseudo-code for our $\mainruntime$ algorithm. We note that all steps except Steps~\ref{alg: run-time child call step} and~\ref{alg:run-time return step} are identical to the pseudo-code for $\main$.

\begin{algorithm}[H]
\caption{$\mainruntime$ returns a probabilistic upper bound of $\mathcal{C}$, the run-time cost defined in Eq.~\eqref{eq:child cost}.}
\label{algo:mainrun-time}
\begin{algorithmic}[1]
\algrenewcommand\algorithmicrequire{\textbf{Input:}}
\algrenewcommand\algorithmicensure{\textbf{Output:}}
\Require  Assumed value of $p$; $\delta_{\rm UB}>0$, the required maximum failure probability of the probabilistic upper bound for $\mathcal{C}$; the output of $\qcompress$; and accuracy parameters $\epsilon_{\rm tot}, \delta_{\rm tot}>0$.
\Ensure The probabilistic upper bound $\mathcal{C}_{\rm UB}=\mathcal{C}_{\rm UB}(p)$.
\Statex
\State $[r, v, W]\gets \qcompress(\mathcal{D})$ \Comment{$\mathcal{D}$ represents the elementary description of $p$.}
\State $p^{*}_0\gets 1$\Comment{This is the running upper bound for the unknown $p$.}
\State ${\rm Exit}\gets 0$, $k\gets 0$
\State $\mathcal{T}_0\gets \modeltau(-\frac{2(\sqrt{\ubex}+1)^2}{\epsilon_{\rm tot}} \ln \frac{\delta_{\rm tot}}{2 e^2},1)$\label{alg:s0 L0 defin run-time}\Comment{$\mathcal{T}_{0}$ is a run-time budget that is too small to satisfy Eq.~\eqref{thm:main} even assuming $p=0$.}
\vspace{0.05cm}
\While{${\rm Exit}=0$}
	\State $k\gets k+1$
	\State $(\eta,s,L^+)\gets \textsc{OptParams}(p^{*}_{k-1},\frac{6}{\pi^2 k^2}\delta_{\rm tot},2^{k}\mathcal{T}_0)$\Comment{In each round, we double the run-time budget.}\label{alg:run-time optc step}
	\State $\epsilon^{*}_k\gets \epsilon'(p^{*}_{k-1},\frac{6}{\pi^2 k^2}\delta_{\rm tot},\eta,s,L^+)$\label{alg:eps dash run-time}\Comment{Equivalently, $\epsilon_k^{*}\gets \epsilon_k^{*}(p^{*}_{k-1},\frac{6}{\pi^2 k^2}\delta_{\rm tot},2^{k}\mathcal{T}_0)$.}
	\If{$\epsilon^{*}_k\leq \epsilon_{\rm tot}$} 
		\State ${\rm Exit}\gets 1$
	\EndIf
	\State $\hat{p}_k\gets p+\epsilon'\left(p,\delta_{\rm UB}/K_{\rm UUB},\tilde{\eta},s,L^+ +L_{\min}\left(\frac{6}{\pi^2 k^2}\delta_{\rm tot},\eta\right)-L_{\rm min}(\delta_{\rm UB}/K_{\rm UUB},\tilde{\eta})\right)$\label{alg: run-time child call step}\Comment{The choice of $K_{\rm UUB}>0$ and $\tilde{\eta}\in (0,1)$ are discussed below.} 
	\State $p^{*}_k\gets \max \set{0,\min \set{1, p^{*}_{k-1},\hat{p}_k+\epsilon^{*}_k}}$
\EndWhile
\State \textbf{return} {$\mathcal{C}_{\rm UB} \gets 2^{k+1}\mathcal{T}_0$}\label{alg:run-time return step}
\end{algorithmic}
\end{algorithm}

To establish the correctness of our $\mainruntime$ algorithm, we first establish some notation. The $\parent$ algorithm generates the following strings of random variables: $\set{\hat{p}_k}_{k\in [K]}$, $\set{{\epsilon}_k^{*}}_{k\in [K]}$ , $\set{{p}_k^{*}}_{k\in [K]}$ and the string of triples $\set{(\eta_k, s_k, L_k^+)}_{k\in [K]}$ where $K$ is itself a random variable indicating when the exit condition is triggered. The exit condition is triggered when
                                                                                                                                                                   
\begin{align}
	\epsilon^{*}_k:=\epsilon^{*}\left(p^{*}_{k-1},\frac{6}{\pi^2 k^2}\delta_{\rm tot}, 2^{k}\mathcal{T}_0\right)\leq \epsilon_{\rm tot}
\end{align}
for the first time. The lowest value of $k$ for which the exit condition is triggered defines the random variable $K$. 

Since $\modeltau(s_k, L_k)\leq 2^k \mathcal{T}_0$, where $L_k=L_k^+ +L_{\min}(\frac{6}{\pi^2 k^2}\delta_{\rm tot},\eta_k)$, the total cost associated with calls to the $\child$ algorithm as modelled by Eq.~\eqref{eq:child cost} is upper bounded by:
\begin{align}
	\mathcal{C}\leq (2+2^2+\ldots+2^K)\mathcal{T}_0 < 2^{K+1}\mathcal{T}_0.
\end{align}
We note that the run-time cost is a random variable that depends on $K$. We will show that $K_{\rm UB}$, the value of $k$ used in Step~\ref{alg:run-time return step} of the $\mainruntime$ pseudo-code, is a probabilistic upper bound for $K$ and hence $\mathcal{C}\leq \mathcal{C}_{\rm UB}$ with probability larger than $1-\delta_{\rm UB}$.

We note that the $\mainruntime$ algorithm is deterministic. In the $\main$ algorithm, the randomness of the variables $K$, ${\epsilon}_k^{*}$, ${p}_k^{*}$ is entirely due to their functional dependence on $\set{\hat{p}_k}_{k\in [K]}$. The stochastic $\hat{p}_k$ used in Step~\ref{alg:child call} of the $\main$ algorithm are replaced with deterministic $\hat{p}_k$ in Step~\ref{alg: run-time child call step} of the $\mainruntime$ algorithm. Thus, the associated strings of variables generated by the $\mainruntime$ algorithm are all deterministic. 

Let $\v{p}=\set{p_k}_{k\in \bbn^+}$ be a sequence of probabilities $p_k\in [0,1]$. Then we will use $K(\v{p})$ and $\set{{\epsilon^{*}}_k(\v{p})}_{k\in [K(\v{p})]}$ to denote the values computed by $\parent$ in the setting when the $\child$ algorithm's output is forced to be exactly the sequence $\v{p}$. For \mbox{$k=1,2,\ldots$}, the variable $\epsilon_k^{*}$ computed in Line~\ref{alg:eps dash} can be specified by the recursion:
\begin{align}\label{eq:eps star recursive}
	\epsilon_k^{*}(\v{p})=\epsilon^{*}\left(\max \set{0,\min \set{1,p_0+\epsilon_{0}^{*}(\v{p}),\ldots, p_{k-1}+\epsilon_{k-1}^{*}(\v{p})}},\frac{6}{\pi^2 k^2}\delta_{\rm tot},2^k \mathcal{T}_0\right),
\end{align}
where $\epsilon_1^{*}(\v{p}):=\epsilon^{*}(1,6\delta_{\rm tot}/\pi^2,2 \mathcal{T}_0)$. From Eq.~\eqref{eq:eps star recursive} and that $\epsilon^*(p,\delta, \mathcal{T})$ is monotone increasing in $p$, it is clear that higher values of ${p}_{k-1}$ and ${\epsilon}_{k-1}^{*}$ both result in higher values of ${\epsilon}_k^{*}$. Thus, for some fixed $\v{p}$ with $p_1,\ldots, p_{k-1}$ sufficiently large, the deterministic quantity ${\epsilon}_k^{*}(\v{p})$ is a probabilistic upper bounded of the random variable ${\epsilon}_k^{*}={\epsilon}_k^{*}(\hat{p}_1, \hat{p}_2, \ldots)$ computed in the $\main$ algorithm. In particular, let us define $\v{p}=\set{p_k}_{k\in \bbn^+}$ as per Step~\ref{alg:run-time return step} of the $\mainruntime$ algorithm:
\begin{align}
	p_k:=p+\epsilon'\left(p,\delta_{\rm UB}/K_{\rm UUB},\tilde{\eta}_k,s_k,L_k^+ +L_{\min}\left(\frac{6}{\pi^2 k^2}\delta_{\rm tot},\eta_k\right)-L_{\rm min}(\delta_{\rm UB}/K_{\rm UUB},\tilde{\eta}_k)\right).\label{eq:p recur}
\end{align}
Here, $\eta_k, s_k$ and $L_k^+$ are parameters computed on the $k^{\rm th}$ iteration of Step~\ref{alg:run-time optc step} of $\mainruntime$ but $\tilde{\eta}_k$ is a new independent parameter. We will see that any choice of $\tilde{\eta}_k\in (0,1)$ will result in the desired upper bound and hence we can optimize the choice of $\tilde{\eta}_k$ to achieve a tighter upper bound. Although $K_{\rm UUB}$ must be chosen before $K_{\rm UB}$ can be computed, $K_{\rm UUB}$ can be any quantity that satisfies $K_{\rm UUB}\geq K_{\rm UB}$. Due to the weak dependence of $K_{\rm UB}$ on $K_{\rm UUB}$, such a choice is always possible.

We note that the deterministic quantity $p_k$ serves as a probabilistic upper bound of $\hat{p}_k\gets \childvar{s_k}{L_k^+ +L_{\min}(\frac{6}{\pi^2 k^2}\delta_{\rm tot},\eta_k)}$ such that the probability that $p_k$ fails to upper bound $\hat{p}_k$ for any $k\in \bbn^+$ is $\leq \delta_{\rm UB}/K_{\rm UUB}$. Further, since Eq.~\eqref{eq:child concentration} holds for all $\eta$, our statement holds for all choices of $\tilde{\eta}_k\in (0,1)$ subject to $L_k^+ +L_{\min}\left(\frac{6}{\pi^2 k^2}\delta_{\rm tot},\eta_k\right)-L_{\rm min}(\delta_{\rm UB}/K_{\rm UUB},\tilde{\eta}_k)\geq 1$. Thus, $K>K_{\rm UB}$ implies that there is a $\kappa\in [K_{\rm UB}]$ that is the smallest $k\in [K_{\rm UB}]$, such that $\hat{p}_k$ produced in Step~\ref{alg:child call} of $\main$ exceeds $\hat{p}_k$ produced in Step~\ref{alg: run-time child call step} of $\mainruntime$. By the union bound and our choice of $p_k$ the probability of this happening is $\leq K_{\rm UB} \delta_{\rm UB}/K_{\rm UUB}$.

This implies that:
\begin{align}
	\bigpr{K \leq K_{\rm UB} }\geq 1-\delta_{\rm UB}.
\end{align}
We note that before any costly calls to the $\child$ algorithm are made, $\mathcal{C}_{\rm UB}$ can easily by computed and plotted as a function of $p$ thus predicting probabilistic run-time upper bounds conditional on the unknown $p$. A similar plot is presented in Fig.~\ref{fig:run-time-bounds} for the probabilistic upper bound of run-time cost $\mathcal{C}$.

Finally, we note that since the functions $\epsilon'(p,\delta_{\rm tot},\eta,s,L-L_{\min}(\delta_{\rm tot},\eta))$ and $\textsc{OptParams}(\mathcal{T},p,\delta_{\rm tot})$ are not given in a closed form, their evaluation requires using numerical techniques. These will inevitably be subject to small levels of imprecision with run-times that mildly (logarithmically or poly-logarithmically) depend on precision requirements. 

In principle, the run-time for the numerical evaluation of these functions depends on the Born probability estimation problem parameters such as $\epsilon_{\rm tot}$ since the precision parameters must be $\ll \epsilon_{\rm tot}$. In practice, the precision parameters are so small that $\epsilon_{\rm tot}$ values of this order would produce completely infeasible run-times for the $\child$ algorithm. Thus, we ignore such run-time dependencies and treat the evaluation of these functions as having a fixed cost independent of the estimation problem parameters.

\pagebreak
\widetext
\renewcommand{\appendixname}{Supplemental Material}

\appendix
\begin{center}
  \textbf{\large Supplemental Material}
\end{center}
\setcounter{equation}{0}
\setcounter{figure}{0}
\setcounter{table}{0}
\setcounter{page}{1}
\makeatletter
\renewcommand{\theequation}{S\arabic{section}.\arabic{equation}}
\renewcommand{\thefigure}{S\arabic{figure}}
\renewcommand{\thefigure}{S\arabic{figure}}
\setcounter{section}{0}
\renewcommand{\thesection}{\arabic{section}} 
\renewcommand{\theHsection}{appendixsection.\Alph{section}}


\section{Estimating Pauli expectation values}
\label{sec:supp-pauli-expectation-values}

Our algorithms may also be used to compute the expectation value $\bra{0}^{\otimes n} U^\dagger P U \ket{0}^{\otimes n}$ of an $n$ qubit Pauli operator
\begin{align}
    P &= \omega P_1 \otimes P_2 \otimes \hdots \otimes P_n,
\end{align}
where $\omega\in\mathbb{C}$ and each $P_i \in \{I, X, Y, Z\}$. If every $P_i = I$ then the expectation value is $\omega$ so we assume at least one of the $P_i$ is not equal to the identity.

We first show that if there is a qubit $i$ such that $P_i\neq I$ there is a Clifford unitary $C$ such that
\begin{align}
    C^\dagger Z_1 C = P,
\end{align}
where $Z_1$ is the Pauli-$Z$ operator on the first qubit. If $a\neq 0$ we first apply the operator $X_1$ and use the equality $X_1 Z_1 X_1 = -Z_1$. Now let $j$ be the qubit number of the first non-identity $P_i$ making up $P$. If $j\neq 1$ apply a swap gate to qubits $1$ and $j$ to obtain
\begin{align}
    \pm\SWAPgate{1}{j}^\dagger Z_1 \SWAPgate{1}{j} = \pm Z_j.
\end{align}
Now for each $k > j$ such that $P_k\neq I$ we apply a $\CX$ controlled on qubit $k$ and targeted on qubit $j$. We obtain a tensor product operator consisting of single qubit identity and Pauli-$Z$ operators with Pauli-$Z$ operators on exactly the qubits for which $P$ is not the identity
\begin{align}
    \pm\left(\prod_{k> j: P_k\neq I} \CXgate{k}{j}\right)^\dagger Z_j\left(\prod_{k> j: P_k\neq I} \CXgate{k}{j}\right) &= \pm\prod_{k: P_k\neq I} Z_k.
\end{align}
It is easy to transform the $Z_k$ into the required $P_k$ applying single qubit phase and Hadamard gates
\begin{align}
    H_k^\dagger Z_k H_k &= X_k\\
    S_k H_k^\dagger Z_k H_k S_k^\dagger &= Y_k.
\end{align}

The expectation value may be calculated as $\bra{0}^{\otimes n} U^\dagger P U \ket{0}^{\otimes n} =\bra{0}^{\otimes n} (CU)^\dagger Z_1 CU \ket{0}^{\otimes n} = 2 p - 1$ where $p$ is the probability of obtaining the outcome $1$ from a computational basis measurement on the first qubit. 

We note that performance may be improved since the probability distribution is binary. We are free to estimate either $p$ or $1-p$ to obtain an estimate of the expectation value. A modified version of the $\parent$ algorithm may be employed which switches to estimating the probability of the other outcome each time it estimates a probability which is greater than $0.5$.


\section{Proof of Lemma~\ref{lem:extract}}
\label{sec:extract detailed}

We define the following useful form for a generating set .
\begin{defn}
\label{def:zx_form}
For $G=\set{g_1,\ldots,g_k}\in \genset{n}{k}$ and $j\in [n]$, we say that $G$ is in $ZX(j)$-form iff at most two generators act non-trivially (i.e. as a Pauli $X$, $Y$ or $Z$) on qubit $j$.  Further, if exactly two generators act non-trivially on the $j^{\rm th}$ qubit then, one acts as a Pauli $X$ and the other as a Pauli $Z$.
\end{defn} 
When $G$ is in $ZX(j)$-form, we call any generator $g\in G$ a \emph{leading} generator if $\pfac{g}{j}\neq I$. We call $g$ the \emph{leading-X} generator if $\pfac{g}{j}= X$ and similarly for $Y$ and $Z$. Two generating sets $G$ and $G'$ are equivalent iff they generate the same stabilizer group i.e. $\gen{G}=\gen{G'}$. If $g_1,g_2 \in G$ are distinct then replacing $g_2$ by $g_1 g_2$ produces an equivalent generating set. By repeatedly using this method, any generating set $G$ can be transformed to an equivalent generating set $G'$ such that $G'$ is in $ZX(j)$-form for any suitable choice of $j$.

We now present a Lemma that will be useful for the proof of Lemma~\ref{lem:extract}.

\begin{lem}\label{lem: independence under fx}
Let $n\in \set{2,3,\ldots}$ and $k\in \set{0, 1,\ldots,n-2}$. Let $G=\set{g_z, g_x, g_1,\ldots,g_k}\in \genset{n}{k+2}$ be a stabilizer generating set in $ZX(j)$-form such that it has both a leading-$Z$ and a leading-$X$ generator, $g_z$ and $g_x$ respectively. For $a\in \set{0,1}$ fixed, let $\tilde{g}_z:=\bra{a}g_z\ket{a}$ and for $i\in [k]$, $\tilde{g}_i:=\bra{a}{g}_i\ket{a}$ where the $\ket{a}$ vectors act on the $j^{\rm th}$ qubit. Then $\tilde{G}:=\set{\tilde{g}_z, \tilde{g}_1,\ldots,\tilde{g}_k}\in \genset{n-1}{k+1}$.
\end{lem}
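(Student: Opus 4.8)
The plan is to verify the three defining properties of a set in $\genset{n-1}{k+1}$ for $\tilde G$: that its elements are $(n-1)$-qubit Paulis, that they pairwise commute, and that they are independent. The first two properties are immediate. Since $G$ is in $ZX(j)$-form with leading-$Z$ generator $g_z$ and leading-$X$ generator $g_x$, every $g_i$ ($i\in[k]$) satisfies $\pfac{g_i}{j}=I$ and $\pfac{g_z}{j}=Z$, while $\pfac{g_x}{j}=X$. Hence $\tilde g_i=\bra{a}g_i\ket{a}$ merely deletes the (identity) $j$-th tensor factor, and $\tilde g_z=\bra{a}g_z\ket{a}$ deletes the $j$-th factor while scaling the phase by $\bra{a}Z\ket{a}=(-1)^a$; both lie in $\pauli{n-1}$, and there are $k+1$ of them with $k+1\le n-1$. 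Note that $g_x$ is dropped rather than projected, consistent with $\bra{a}X\ket{a}=0$. Commutation is inherited: each of $g_z,g_1,\dots,g_k$ acts on qubit $j$ as $I$ or $Z$, operators that mutually commute, so their pairwise commutation is determined entirely by the remaining $n-1$ qubits and is therefore equivalent to the commutation of their projections; since all pairs in $G$ commute, so do all pairs in $\tilde G$.

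The only substantive property is independence. Here I would pass to the binary symplectic representation $b:\pauli{n}\to\bbz_2^{2n}$, under which Pauli multiplication becomes addition and commutation of two operators corresponds to the vanishing of the symplectic form $\langle\cdot,\cdot\rangle$ on $\bbz_2^{2n}$. For stabilizer generating sets (those generating a group not containing $-I$) independence in the sense of the definition is equivalent to $\bbz_2$-linear independence of the symplectic images, so the hypothesis $G\in\genset{n}{k+2}$ provides that $b(g_z),b(g_x),b(g_1),\dots,b(g_k)$ are linearly independent. It then suffices to show the projected images $\pi b(g_z),\pi b(g_1),\dots,\pi b(g_k)$ are linearly independent, where $\pi:\bbz_2^{2n}\to\bbz_2^{2(n-1)}$ deletes the two coordinates attached to qubit $j$. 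Suppose a nontrivial relation $c_z\,\pi b(g_z)+\sum_i c_i\,\pi b(g_i)=0$ held. Then $v:=c_z\,b(g_z)+\sum_i c_i\,b(g_i)$ would be supported only on the qubit-$j$ coordinates; reading those off gives $X$-part $0$ and $Z$-part $c_z$, i.e. $v=c_z\,b(Z_j)$.

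The crux of the argument, and the step I expect to be the main obstacle, is ruling out $c_z=1$, in which case $v=b(Z_j)$ while simultaneously $v=b(g_z)+\sum_i c_i\,b(g_i)$. This is exactly where the leading-$X$ generator $g_x$ is used, via the commutation (symplectic) form. On one hand, $\langle b(Z_j),b(g_x)\rangle=1$, since $g_x$ carries an $X$ on qubit $j$ and so anticommutes with $Z_j$ (they agree as the identity elsewhere). On the other hand, $\langle v,b(g_x)\rangle=\langle b(g_z),b(g_x)\rangle+\sum_i c_i\langle b(g_i),b(g_x)\rangle=0$, because $g_z$ and every $g_i$ commute with $g_x$ inside $G$. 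These two evaluations contradict $v=b(Z_j)$, forcing $c_z=0$. The relation then reduces to $\sum_i c_i\,\pi b(g_i)=0$, which lifts to $\sum_i c_i\,b(g_i)=0$ (each $b(g_i)$ is already zero on the qubit-$j$ coordinates, so $\pi$ is injective on their span), and linear independence of $\{b(g_i)\}$ forces all $c_i=0$. Hence the projected images are linearly independent, $\tilde G$ is independent, and together with the first two properties this establishes $\tilde G\in\genset{n-1}{k+1}$. The one bookkeeping point I would take care over in the write-up is the equivalence between the definitional independence and symplectic linear independence, justified by recalling that the generating sets in question generate stabilizer groups not containing $-I$; I would also note that this same commutation identity automatically guarantees $\tilde g_z$ is non-scalar, since $\langle b(g_z),b(g_x)\rangle=0$ with a unit qubit-$j$ contribution forces $b(g_z)$ to be nontrivial on the remaining qubits.
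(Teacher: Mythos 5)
Your proof is correct and follows essentially the same route as the paper's: both reduce the problem to showing that a dependence involving $\tilde{g}_z$ would force an element proportional to $Z_j$ into the commuting group generated by $G$, which is impossible because $Z_j$ anticommutes with the leading-$X$ generator $g_x$, while dependencies among the $\tilde{g}_i$ alone lift directly since those generators act as the identity on qubit $j$. The only difference is cosmetic --- you phrase the argument in the binary symplectic picture (with the correctly flagged caveat that definitional independence coincides with $\bbz_2$-linear independence only for groups not containing $-I$), whereas the paper manipulates the Pauli operators and projections directly.
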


\begin{proof}
It is clear that $\tilde{G}\subset \pauli{n-1}$ is commuting so we only need to show that the subset is independent. The independence of the set $\set{\tilde{g}_1,\ldots,\tilde{g}_k}$ is inherited from the independence of $G$. Hence we only need to show that $\tilde{g}_z\not \in \gen{\tilde{g}_1,\ldots,\tilde{g}_k}$. For a contradiction, let us assume $\tilde{g}_z \in \gen{\tilde{g}_1,\ldots,\tilde{g}_k}$. Thus, there exists $g\in \gen{g_1,\ldots,g_k}$ such that $\bra{a}{g}_z\ket{a}=\bra{a}{g}\ket{a}$. This implies that $\bra{a}{g}_z g\ket{a}=I^{\otimes n-1}$ and hence ${g}_z g=(-1)^a Z_j$. But $(-1)^a Z_j$ does not commute with $g_x$ and $\gen{G}$ is a commuting group containing ${g}_z g$ and $g_x$ resulting in a contradiction.
\end{proof}

We are ready to prove Lemma~\ref{lem:extract}.

\begin{proof}
	Let us define the generating set $G^{(0)}=\set{g_1,\ldots,g_{n+t}}\in \genset{n+t}{n+t}$ by $g_{j}:=V Z_j V^{\dagger}$ where $Z_j$ is $I^{\otimes n+t}$ with the $j^{\rm th}$ tensor factor replaced by $Z$. Then, from Eq.~\eqref{eq:stab projector defn}, it is evident that $\Pi_{G^{(0)}}=V\density{0}^{\otimes n+t}_{\rega \regb \regc} V^{\dagger}$. Substituting into the LHS of Eq.~\eqref{eq:extract long}, we get:
	\begin{align}
	{\rm Tr}_{\rega \regb}\left(V\density{0}^{\otimes n+t}_{\rega \regb \regc} V^{\dagger} \density{x}_{\rega}\right)&=2^{-(n+t)}\sum_{g\in \gen{G^{(0)}}} \pphase{g}\tr{ \pfac{g}{\rega} \density{x}_{\rega}} \tr{\pfac{g}{\regb}} \pfac{g}{\regc}\label{eq:can see constraints}.
	\end{align}
	
	From the RHS of Eq.~\eqref{eq:can see constraints}, we note that terms associated with $g$ will be zero if certain constraints on $g$ are not satisfied. In particular, for a fixed $g\in \gen{G^{(0)}}$ to produce a non-zero contribution to the sum, it is necessary that:
	\begin{itemize}
		\item Register `$\rega$' constraints: for all $j\in [w]$, $\pfac{g}{j}\in \set{I,Z}$
		\item Register `$\regb$' constraints: for all $j\in [n-w]$, $\pfac{g}{w+j}= I$.
	\end{itemize}
	
	We note that for any $G\in \genset{n}{k}$ and $j\in [n]$ the sets $S_{\rega}(G,j):=\set{g\in \gen{G}|\pfac{g}{j}\in \set{I,Z}}$ and $S_{\regb}(G,j):=\set{g\in \gen{G}|\pfac{g}{j}=I}$ are both subgroups of $\gen{G}$. Hence, these are generated by some generating sets $G_{\rega}(G,j)$ and $G_{\regb}(G,j)$, respectively. Through a procedure similar to performing computational basis measurements on a stabilizer state in the Gottesman-Knill theorem, the $\extract$ algorithm computes the generating set $\tilde{G}$.
	
    Starting from the stabilizer group $\gen{G^{(0)}}$, the $\extract$ algorithm imposes the above constraints to find a stabilizer generating set $\tilde{G}=\set{\tilde{g}_1, \ldots, \tilde{g}_{\tilde{k}}}$ that satisfies the properties:
	\begin{enumerate}
		\item $\gen{\tilde{G}}\subseteq \gen{G^{(0)}}$
		\item for all $g\in \gen{G^{(0)}}$, $g$ satisfies the register `$\rega$' and `$\regb$' constraints if and only if $g\in \gen{\tilde{G}}$.
	\end{enumerate}
	This allows us to replace the sum over $g\in\gen{G^{(0)}}$ by a sum over $g\in \gen{\tilde{G}}$ in the RHS of Eq.~\eqref{eq:can see constraints}. Some further manipulation gives:
	\begin{align}
	2^{-(n+t)}\sum_{g\in \gen{\tilde{G}}} \pphase{g}\tr{ \pfac{g}{a} \density{x}_a} \tr{\pfac{g}{b}} \pfac{g}{c}&=2^{-(n+t)}\abs{\gen{\tilde{G}}}{\rm Tr}_{\rega \regb}\left(\Pi_{\tilde{G}} \density{x}_{\rega} \right)\notag\\
	&=2^{-n-t+\tilde{k}}{\rm Tr}_{\rega \regb}\left(\Pi_{\tilde{G}} \density{x}_{\rega} \right)\label{eq:extract part1},
	\end{align}
	where $\tilde{k}:=\abs{\tilde{G}}$. We will later define $r$ and $v$ such that $\tilde{k}=t-r+v$ thus proving Eq.~\eqref{eqn:defining-r-indirectly}.

	We now define the linear map $f_x:\gen{\tilde{G}}\rightarrow \pauli{t}$ by:
	\begin{align}
		f_x(g)&=2^{-(n-w)}{\rm Tr}_{\rega \regb}\left(g \density{x}_{\rega} \right)\\
		&=\pphase{g}\bra{x}\pfac{g}{\rega}\ket{x} \pfac{g}{\regc}.
	\end{align}
	We show that this is a group homomorphism. Let $g,g' \in \gen{\tilde{G}}$, then:
	\begin{align}
		f_x(g) f_x(g')&=\pphase{g}\pphase{g'}\bra{x}\pfac{g}{\rega}\density{x}\pfac{g'}{\rega}\ket{x} \pfac{g}{\regc} \pfac{g'}{\regc}\nonumber\\
		&=\pphase{g}\pphase{g'}\bra{x}\pfac{g}{\rega} \pfac{g'}{\rega}\ket{x} \pfac{g}{\regc} \pfac{g'}{\regc}\nonumber\\
		&=2^{-(n-w)}{\rm Tr}_{\rega \regb}\left(g g'\density{x}_{\rega} \right)\nonumber\\
		&=f_x(g g')
	\end{align}
	where in the second equality, we used the fact that $\pfac{g'}{a}$ commutes with $\density{x}$ since $g'$ satisfies the register `$\rega$' constraints.
	
	This shows that the image $f_x(\gen{\tilde{G}})$ is an Abelian subgroup of $\pauli{t}$ generated by the set $\set{f_x(g)~|~g\in \tilde{G}}$. Starting from Eq.~\eqref{eq:extract part1}, we now note that:
	\begin{align}\label{eq:projector trace in terms of fx}
		2^{-n-t+\tilde{k}}{\rm Tr}_{\rega \regb}\left(\Pi_{\tilde{G}} \density{x}_{\rega} \right)&=2^{-t+\tilde{k}-w} f_x(\Pi_{\tilde{G}})
	\end{align}
	where we have extended the domain of $f_x$ by linearity, i.e.
	\begin{align}
		f_x(\Pi_{\tilde{G}})=\abs{\gen{\tilde{G}}}^{-1} \sum_{g\in \tilde{G}} f_x (g).
	\end{align}
	
	We note that the list of elements $f_x(\tilde{g}_1), \ldots, f_x(\tilde{g}_{\tilde{k}})$ that generate $f_x(\gen{\tilde{G}})$ may be dependent. This can happen if and only if there is a $g\in \gen{\tilde{G}}\setminus \set{I^{\otimes n+t}}$ such that $f_x(g)=I^{\otimes t}$. Further, we note that the group $f_x(\gen{\tilde{G}})$ may contain the element $-I^{\otimes t}$. If this is the case, it is easy to show that $f_x(\Pi_{\tilde{G}})=0$. 
	
	We now outline a simple procedure that allows us to:
	\begin{enumerate}
		\item Identify constraints on $x$ that are necessary to ensure that $-I^{\otimes t}\not \in f_x(\gen{\tilde{G}})$ and, assuming $x$ satisfies all such constraints,
		\item Identify a subset of $\gen{\tilde{G}}$ such that its image under $f_x$ is independent and generates $f_x(\gen{\tilde{G}})$. 
	\end{enumerate}
	Let us start with the set $\tilde{G}^{(0)}:=\tilde{G}$ and for each $j\in [w]$ we update $\tilde{G}^{(0)} \rightarrow \tilde{G}^{(1)} \rightarrow \ldots \rightarrow \tilde{G}^{(w)}$. On, the $j^{\rm th}$ update procedure we:
	\begin{enumerate}
		\item Put $\tilde{G}^{(j-1)}$ into $ZX(j)$-form. 
		\item If $\tilde{G}^{(j-1)}$ does not have a leading generator, $\tilde{G}^{(j)}\gets \tilde{G}^{(j-1)}$
		\item If $\tilde{G}^{(j-1)}$ has a leading generator, then it must be a leading-$Z$. Call this $g_z$.
		\item Map all elements $g\in \tilde{G}^{(j-1)}\setminus \set{g_z}$ to $\bra{x_1 \ldots x_{j}}g\ket{x_1 \ldots x_{j}}$ where the $\ket{x_1 \ldots x_{j}}$ vectors act on the first $j$ qubits. Check if the group generated by these elements, $\gen{G(x,j)}$, contains either $\bra{0}\bra{x_1 \ldots x_{j-1}} g_z\ket{x_1 \ldots x_{j-1}}\ket{0}$ or $-\bra{0}\bra{x_1 \ldots x_{j-1}} g_z\ket{x_1 \ldots x_{j-1}}\ket{0}$ (where the $\ket{0}$ vectors act on qubit $j$ and the $\ket{x_1 \ldots x_{j-1}}$ vectors act on the first $j-1$ qubits). If neither is true then $\tilde{G}^{(j)}\gets \tilde{G}^{(j-1)}$.
		\item If the group $\gen{G(x,j)}$ contains $(-1)^{a}\bra{0}\bra{x_1 \ldots x_{j-1}} g_z\ket{x_1 \ldots x_{j-1}}\ket{0}$ for either $a=0$ or $a=1$, then we require $x_{j}=a$. In either case ($x_j=0$ or $x_j=1$), we note that $f_x(g_z)$ is dependent on $\set{f_x(g)~|~g\in \tilde{G}^{(j-1)}\setminus \set{g_z}}$ hence we set $\tilde{G}^{(j)}\gets \tilde{G}^{(j-1)}\setminus \set{g_z}$.  
	\end{enumerate}
	It is clear that for all $j$ where Step 5 applies, our constraint on $x_{j}$ is necessary to ensure that $-I^{\otimes t}\not \in f_x(\gen{\tilde{G}})$.
	
	The final output $\tilde{G}^{(w)}$ of this procedure is a subset of $\gen{\tilde{G}}$ such that its image under $f_x$ is independent and generates $f_x(\gen{\tilde{G}})$. To see that $\set{f_x(g)~|~g\in \tilde{G}^{(w)}}$ generates $f_x(\gen{\tilde{G}})$, we note that the above procedure only deletes elements in Step 5. In this case, the deleted element $g_z$ has the property that $f_x(g_z)$ is dependent on the $f_x$ image of the remaining elements. Hence $\set{f_x(g)~|~g\in \tilde{G}^{(w)}}$ generates $f_x(\gen{\tilde{G}})$ since $\set{f_x(g)~|~g\in \tilde{G}^{(0)}}$ generates $f_x(\gen{\tilde{G}})$. We can see that $\set{f_x(g)~|~g\in \tilde{G}^{(w)}}$ is independent by induction. First, we note that $\set{g\in \tilde{G}^{(0)}}$ is independent. Now for the $j^{\rm th}$ induction step, assume that $\set{\bra{x_1 \ldots x_{j-1}}g\ket{x_1 \ldots x_{j-1}}~|~ g\in \tilde{G}^{(j-1)}}$ is independent (where the vector $\ket{x_1 \ldots x_{j-1}}$ acts on the first $j-1$ qubits). Then to see that $\set{\bra{x_1 \ldots x_{j}}g\ket{x_1 \ldots x_{j}}~|~ g\in \tilde{G}^{(j)}}$ is independent let us, without loss of generality, assume $\tilde{G}^{(j-1)}$ is in $\zxform$ with respect to qubit $j$. Then it is clear that $\set{\bra{x_1 \ldots x_{j}}g\ket{x_1 \ldots x_{j}}~|~ g\in \tilde{G}^{(j-1)}, g \text{ not a leading generator}}$ are independent. Thus dependence can only arise if $\bra{x_1 \ldots x_{j}}g_z\ket{x_1 \ldots x_{j}}$ is dependent on $\set{\bra{x_1 \ldots x_{j}}g\ket{x_1 \ldots x_{j}}~|~ g\in \tilde{G}^{(j-1)}, g \text{ not a leading generator}}$. We explicitly check this (in Step 4) and exclude $g_z$ from $\tilde{G}^{(j)}$ if it gives rise to dependence. Hence by induction we have shown that $\set{\bra{x}g\ket{x}~|~g\in \tilde{G}^{(w)}}$ is independent. This immediately leads to the independence of  $\set{f_x(g)~|~g\in \tilde{G}^{(w)}}$.

	We define $G:=f_x(\tilde{G}^{(w)})\in \genset{t}{k}$ for some $k\leq t$. Using Eq.~\eqref{eq:projector trace in terms of fx} and the definition of $\tilde{k}$, Eq.~\eqref{eq:extract long} follows.	We define $r:=t-k$ and define $v$ as the number of rows deleted in Step 5. Hence, $\abs{\tilde{G}}=\abs{\tilde{G}^{(w)}}+v=\abs{G}+v=t-r+v$. Thus, it is clear that $v\in \set{0,1,\ldots,w}$ and $r\leq t$. We now show that $r\leq n-w$. Let us note that to derive $\tilde{G}$ from $G^{(0)}$, we imposed the constraints on registers `$\rega$' and `$\regb$'. Since $G^{(0)}$ is a maximal generating set, just imposing the register `$\regb$' constraints must result in the deletion of between $(n-w)$ and $2(n-w)$ generators. Subsequently imposing register `$\rega$' constraints must result in the further deletion of $d_{\rega}$ generators where, by using Lemma~\ref{lem: independence under fx}, one can show that $d_{\rega}$ is between $0$ and $w-v$. Thus,
	\begin{align}
		\abs{\tilde{G}} &\in \set{\abs{G^{(0)}}-2(n-w)-(w-v), \abs{G^{(0)}}-2(n-w)-w+1 \ldots, \abs{G^{(0)}}-(n-w)}\nonumber\\
		&= \set{(n+t)-2(n-w)-(w-v), (n+t)-2(n-w)-(w-v)+1 \ldots, (n+t)-(n-w)}\nonumber\\
		&= \set{-n+t+w+v, -n+t+w+v+1 \ldots, t+w}
	\end{align}
	Now, $\abs{\tilde{G}}=t-r+v$ by definition of $r$ and $v$. So:
	\begin{align}
		r&\in \set{(t+v)-(t+w), \ldots, (t+v)-(-n+t+w+v)}\nonumber\\
		&=\set{v-w,\ldots,n-w}.
	\end{align}
	This shows that $r\leq n-w$ completing our proof.

\end{proof}


\section{Proof of Lemma~\ref{lem:gateseq}}
\label{app:gate_seq}

\begin{proof}
	We will provide a sequence of gates that forms a unitary $W$ transforming $\Pi_G$ into $\ketbra{0}{0}^{\otimes t-r}\otimes I^{\otimes r}$ for an arbitrary stabilizer generator matrix ${G} \in \genset{t}{t-r}$. Equivalently, it means that $W$ should transform a generator matrix $G$ into a generator matrix corresponding to a projector $\ketbra{0}{0}^{\otimes t-r}\otimes I^{\otimes r}$, i.e.,	
	\begin{equation}	
		G_0=[~X~\|~Z~] \quad\xrightarrow{W}\quad G_{fin}=\left[~0~\|~I~|~0~\right].\label{eq:generator-block-form}
	\end{equation} 
	Here, we will adopt the formulation in terms of tableaux given in Ref.~\cite{aaronson2004improved}. For our algorithm we only require the stabilizers, not the destabilisers so the matrices given in equation~\ref{eq:generator-block-form} (and in the sequel) correspond to the lower half of the destabiliser+stabiliser tableaux of Aaronson and Gottesman.

    Each of $(t-r)$ rows of a generator matrix corresponds to a stabilizer generator given by a Pauli matrix encoded as a binary vector of length $2t$. The first $t$ entries correspond to $X$ stabilizers (i.e. if the entry in column $j\in[t]$ equals one, then there is a Pauli $X$ acting on the $j^{\mathrm{th}}$ qubit), while the remaining $t$ entries correspond to $Z$ stabilizers. If a row $k$ has a qubit with a $1$ in both the $X$ and $Z$ portion then the stabiliser has the operator $Y_k$ on qubit $k$ (this differs from $X_kZ_k$ by a factor of $i$). We use the symbol $\|$ to visually separate the $X$ and $Z$ parts, and the separator $|$ to separate a square block within each part. We also use $X$ and $Z$ to denote arbitrary entries in the corresponding parts, and $I$ to denote a square $(t-r)\times (t-r)$ identity matrix. In addition to the Pauli matrices each stabiliser has a phase $\pm 1$ associated to it. This information is stored in a binary vector $f$ of length $(t-r)$.
	
	First, given $G_0$, we can perform row sums, as they correspond to stabiliser multiplication. We can also swap pairs of columns $j$ and $j+t$ (one in $X$ part, the other in $Z$ part) by applying a Hadamard gate to qubits $j$. Using these two operations, we can bring the $X$ part to the reduced row echelon form. More precisely, using row sums we can perform Gaussian elimination of the $X$ part, and each time we find a column with no leading $1$ in it, we can bring the missing $1$ from the $Z$ part (if one exists) using a Hadamard gate. Since the rows are independent we will obtain exactly $(n-r)$ leading $1$s in this way. Finally, using $\operatorname{SWAP}$ gates (each composed of three $\operatorname{CX}$ gates), we can permute the columns so that the first $(t-r)\times (t-r)$ block in the $X$ part is given by the identity matrix. Thus, after using at most $t-r$ Hadamard gates and $3(t-r)$ instances of $\operatorname{CX}$ gates, we arrive at
	\begin{equation}
	G_1=[~I~|~X~\|~Z~].
	\end{equation}
	
	Next, we can use $\operatorname{CX}$ gates to clear the remaining $(t-r) \times r$ block of the $X$ part. Specifically, if there is a $1$ in row $j$ and column $k$ of this block, the application of $\operatorname{CX}$ controlled on qubit $j$ and targeted on qubit $k$ flips that $1$ to $0$. It also non-trivially affects the entries of the $Z$ part, but we will deal with that in the next step. Thus, after using at most $(t-r)r$ instances of $\operatorname{CX}$ gates, we obtain
	\begin{equation}
	G_2=[~I~|~0~\|~Z~].
	\end{equation}
	
	The third step employs phase gates to ensure that the main diagonal of the $Z$ part, i.e. the elements $Z_{jj}$, are all zero. To achieve this, it is enough to apply $S$ to every qubit $j$ such that $Z_{jj}=1$. This requires the use of at most $(t-r)$ phase gates and results in
	\begin{equation}
	G_3=[~I~|~0~\|~\tilde{Z}~],
	\end{equation}
	with tilde indicating the zero diagonal of the main $(t-r)\times(t-r)$ block in the $Z$ part.
	
	Now, we can use $CZ$ gates to clear the last $(t-r) \times r$ block of the $Z$ part. Specifically, if there is a $1$ in row $j$ and column $k$ of this block, the application of $\operatorname{CZ}$ controlled on qubit $j$ and targeted on qubit $k$ flips that $1$ to $0$. It also does not affect the entries of the $X$ part at all. Thus, after using at most $(t-r)r$ instances of $\operatorname{CZ}$ gates, we obtain
	\begin{equation}
	G_4=[~I~|~0~\|~\tilde{Z}~|~0~].
	\end{equation}
	
	In the fifth step, we employ the fact that stabilisers commute. Assume that there is a non-zero element $\tilde{Z}_{ij} = 1$, $i\neq j$ in the $\tilde{Z}$ matrix, this means there is Pauli $Z$ matrix on qubit $j$ in stabiliser $i$. Stabiliser $i$ has to commute with stabiliser $j$, since there is a $1$ in element $(j,j)$ of the $X$ block there is a Pauli $X$ acting on qubit $j$ in stabiliser $j$ which would lead to stabiliser $i$ anti-commuting with stabiliser $j$. Since the $X$ part of the tableau is $\left[~I~|~0~\right]$ we must also have a Pauli $Z$ in stabiliser $j$ acting on qubit $i$, $\tilde{Z}_{ij} = 1\implies \tilde{Z}_{ji} = 1$. Repeating this argument with the initial assumption $\tilde{Z}_{ij} = 0$ proves that $\tilde{Z}$ is symmetric $\tilde{Z}^T = \tilde{Z}$. 

    This allows us to zero the whole $Z$ part using $\operatorname{CZ}$ gates. More precisely, for each unordered pair $(j,k)$ such that \mbox{$\tilde{Z}_{jk} = \tilde{Z}_{kj} = 1$}, the application of a $\operatorname{CZ}$ gate controlled on qubit $j$ and targeted on qubit $k$ flips both those $1$'s to $0$'s. It also does not affect any other entries. Thus, after using at most $r(r-1)/2$ instances of $\operatorname{CZ}$ gates, we arrive at
	\begin{equation}
	G_5=[~I~|~0~\|~0~].
	\end{equation}
	At this stage it is convenient to zero the phase vector $f$. For each row $k$ if $p_k = 1$ we apply $Z_k = S_k^2$. Due to the form of the $X$ part of the matrix it is easy to see that this will multiply stabiliser $k$ by $-1$ and leave all the others invariant. This requires at most $2(t-r)$ applications of the $\operatorname{S}$ gate.
	
	The final step requires implementing $t-r$ Hadamard gates to transform the above $G_5$ to $G_{fin}$. Summarising, in all steps we used at most $ 2 (4 + r) t-r (17 + 3 r)/2$ Clifford gates including at most $2(t-r)$ Hadamard gates.

\end{proof}


\section{Proof of Lemma~\ref{lem:psi(y)_bound}}
\label{app:bound}

\begin{proof}
	From the definition of $\ket{\psi(y)}$, Eq.~\eqref{eq:psi(y)}, we have
	\begin{equation}
		\twonorm{\ket{\psi(y)}}^2:=\cextent\cdot  2^{t-r+v-w} \twonorm{\bra{0}^{\otimes t-r}W\ket{\tilde{y}}}^2,
	\end{equation}
	Now, combining the statements of Lemma~\ref{lem:extract} and Lemma~\ref{lem:gateseq} we have:
	\begin{equation}
		{\rm Tr}_{\rega \regb}\left(V\density{0}^{\otimes n+t}_{\rega \regb\regc} V^{\dagger} \density{x}_{\rega}\right)=2^{-r+v-w}W^\dagger ( \density{0}^{\otimes t-r} \otimes I^{\otimes r}) W.
	\end{equation}
	and so
	\begin{equation}
          2^t\twonorm{\bra{x}_{\rega}\bra{\tilde{y}}_{\regc} V\ket{0}_{\rega\regb\regc}^{\otimes n+t}}^2 = 2^{t-r+v-w}\twonorm{\bra{0}^{\otimes t-r}W\ket{\tilde{y}}}^2. 
	\end{equation}
	Therefore,
	\begin{equation}
		\label{eq:bound-part}
		\twonorm{\ket{\psi(y)}}^2=\cextent\cdot  2^{t} \twonorm{\bra{x}_{\rega}\bra{\tilde{y}}_{\regc} V\ket{0}_{\rega\regb\regc}^{\otimes n+t}}^2\leq \cextent \cdot  2^{t} \twonorm{\bra{\tilde{y}}_{\regc} V\ket{0}_{\rega\regb\regc}^{\otimes n+t}}^2.
	\end{equation}
	Let us take a closer look at $\bra{\tilde{y}}_{\regc} V\ket{0}_{\rega\regb\regc}^{\otimes n+t}$. Referring to Fig.~\ref{fig:circuit}, recall that the unitary $V$ describes a Clifford circuit of the form
	\begin{equation}
		V =C_{t} \prod_{j=1}^{t} CX_{j} C_{j-1}
	\end{equation}
	where $C_j$ is an arbitrary Clifford gate on $n$ qubits in register `$\rega\regb$', $CX_j$ is a CNOT gate controlled on one of the qubits from register `$\rega\regb$' and targeted at the $j$-th qubit in register `$\regc$', and the product is ordered from right to left (i.e., the rightmost term is given by $C_0$). We then have
	\begin{align}
		\bra{\tilde{y}}_{\regc} V\ket{0}_{\rega\regb\regc}^{\otimes n+t}&=\bra{\tilde{y}_1\dots\tilde{y}_t}_{\regc} \left(C_{t} \prod_{j=1}^t CX_{j} C_{j-1}\right) \ket{0}_{\rega\regb}^{\otimes n}\otimes \ket{0}_{\regc}^{\otimes t}\\
		&=\bra{\tilde{y}_2\dots\tilde{y}_t}_{\regc} \left(C_{t} \prod_{j=2}^t CX_{j} C_{j-1}\right) \ket{\Phi_1}_{\rega\regb}\otimes\ket{0}_{\regc}^{\otimes (t-1)}
	\end{align}
	with 
	\begin{align}
		\ket{\Phi_1}_{\rega\regb}=\bra{\tilde{y}_1}_{\regc} CX_1C_0\ket{0}_{\rega\regb}^{\otimes n}\otimes \ket{0}_{\regc}
	\end{align}
	being an $n$-qubit unnormalised stabliser state (note that we could commute the projector $\bra{\tilde{y}_1}_{\regc}$ through the circuit as the first qubit in the register `$\regc$' is never again affected by it). In order to normalize $\ket{\Phi_1}_{\rega\regb}$ we first write
	\begin{equation}
		\label{eq:decompose}
		C_0\ket{0}_{\rega\regb}^{\otimes n}=c_0\ket{0\psi_0}_{\rega\regb}+\ket{1\psi_1}_{\rega\regb},
	\end{equation}
	with $|c_0|^2+|c_1|^2=1$ and the distinguished first qubit corresponding to the control qubit of $CX_1$. We then use the above to obtain
	\begin{align}
		\ket{\Phi_1}_{\rega\regb}&=\bra{\tilde{y}_1}_{\regc} CX_1 (c_0\ket{0\psi_0}_{\rega\regb}+c_1\ket{1\psi_1}_{\rega\regb})\otimes \ket{0}_{\regc}\\
		&=\bra{\tilde{y}_1}_{\regc}(c_0\ket{0\psi_0}_{\rega\regb}\otimes\ket{0}_{\regc}+c_1\ket{1\psi_1}_{\rega\regb}\otimes\ket{1}_{\regc})\\
		&=\frac{1}{\sqrt{2}}(c_0\ket{0\psi_0}_{\rega\regb}+c_1(-i)^{\tilde{y}_1}\ket{1\psi_1}_{\rega\regb}),
	\end{align}
	and so we conclude that the normalised state is given by
	\begin{equation}
		\ket{\Phi_1'}_{\rega\regb}=\frac{1}{\sqrt{2}}\ket{\Phi_1}_{\rega\regb}.
	\end{equation} 
	
	We thus have
	\begin{align}	
		\bra{\tilde{y}}_{\regc} V\ket{0}_{\rega\regb\regc}^{\otimes n+t}&=\frac{1}{\sqrt{2}}\bra{\tilde{y}_2\dots\tilde{y}_t}_{\regc} \left(C_{t} \prod_{j=2}^t CX_{j} C_{j-1}\right) \ket{\Phi_1'}_{\rega\regb}\otimes\ket{0}_{\regc}^{\otimes (t-1)},
	\end{align}
	and we can repeat the whole procedure again. More precisely, we introduce an $n$-qubit unnormalised stabiliser state
	\begin{align}
		\ket{\Phi_2}_{\rega\regb}=\bra{\tilde{y}_2}_{\regc} CX_2C_1\ket{\Phi_1'}_{\rega\regb}\otimes \ket{0}_{\regc},
	\end{align}
	we decompose $C_1\ket{\Phi_1'}_{\rega\regb}$ analogously as we did in Eq.~\eqref{eq:decompose}, and repeating the same reasoning we arrive at
	\begin{align}	
		\bra{\tilde{y}}_{\regc} V\ket{0}_{\rega\regb\regc}^{\otimes n+t}&=\frac{1}{\sqrt{2^2}}\bra{\tilde{y}_3\dots\tilde{y}_t}_{\regc} \left(C_{t} \prod_{j=3}^t CX_{j} C_{j-1}\right) \ket{\Phi_2'}_{\rega\regb}\otimes\ket{0}_{\regc}^{\otimes (t-2)}.
	\end{align}
	Repeating it $t$ times in total we finally arrive at
	\begin{align}		
		\bra{\tilde{y}}_{\regc} V\ket{0}_{\rega\regb\regc}^{\otimes n+t}&=\frac{1}{\sqrt{2^t}} C_{t} \ket{\Phi_t'}_{\rega\regb}.
	\end{align}
	Substituting the above to the inequality from Eq.~\eqref{eq:bound-part} we finally arrive at
	\begin{equation}
		\twonorm{\ket{\psi(y)}}^2\leq \cextent.
	\end{equation}
\end{proof}


\section{Proof of Lemma~\ref{lem:hoeffding}}
\label{app:hoeffding}

In order to prove our result, we will need the definition of a \emph{very-weak martingale} and a theorem from Ref.~\cite{hayes2005large} given below.

\begin{defn}[Very-weak martingale]
	Let $N\in \bbn$, $\Omega$ be a sample space and for all $j\in \bbn$, let $X_j:\Omega \rightarrow \bbr^N$ be a random variable taking values in $\bbr^N$ such that $X_0=0$, $\expval{}{\twonorm{X_j}}< \infty$ and $\expval{}{X_j~|~X_{j-1}}=X_{j-1}$. Then we call the sequence $(X_0, X_1, \ldots)$ a \emph{very-weak martingale in} $\bbr^N$.
\end{defn}

\begin{thm}[Theorem 1.8 of Ref.~\cite{hayes2005large}]
	\label{thm:hayes} 
	Let $X$ ba a very-weak martingale taking values in $\bbr^N$ such that $X_0=0$ and for every $j$, $\twonorm{X_j-X_{j-1}}\leq 1$. Then for every $a>0$:
\begin{align}
	\label{eq: hayes}
	\pr \brackn{\twonorm{X_s}\geq a}\leq 2 e^{1-(a-1)^2/2s}<2 e^2 \exp \brackn{-a^2/2s}.
\end{align}
\end{thm}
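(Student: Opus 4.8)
The statement is a dimension-free vector-valued Azuma--Hoeffding inequality, and the plan is to run the exponential-moment (Chernoff) method with a rotationally symmetric potential. Fixing $\lambda>0$, I would use the radial function $\Phi_\lambda(x)=\cosh(\lambda\twonorm{x})$, which is nonnegative, convex, and strictly increasing in $\twonorm{x}$. Since $\twonorm{X_s}\geq a$ is equivalent to $\Phi_\lambda(X_s)\geq\cosh(\lambda a)$, Markov's inequality gives
\begin{align}
\pr\brackn{\twonorm{X_s}\geq a}\leq \frac{\mathbb{E}\left[\cosh(\lambda\twonorm{X_s})\right]}{\cosh(\lambda a)},
\end{align}
so everything reduces to controlling the growth of $\mathbb{E}\left[\cosh(\lambda\twonorm{X_s})\right]$ along the martingale and then optimizing over $\lambda$.

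The heart of the argument is a dimension-free one-step drift bound: conditionally on $X_{j-1}=x$, the increment $D_j=X_j-X_{j-1}$ satisfies $\mathbb{E}[D_j\mid X_{j-1}]=0$ and $\twonorm{D_j}\leq 1$, and I claim
\begin{align}
\mathbb{E}\left[\cosh(\lambda\twonorm{x+D_j})\mid X_{j-1}=x\right]\leq \cosh(\lambda)\,\cosh(\lambda\twonorm{x}).
\end{align}
I would prove this from a ``hyperbolic law of cosines'': writing $r=\twonorm{x}$, $\rho=\twonorm{y}$ and $c=\langle x,y\rangle/(r\rho)$,
\begin{align}
\cosh(\lambda\twonorm{x+y})\leq \cosh(\lambda r)\cosh(\lambda\rho)+c\,\sinh(\lambda r)\sinh(\lambda\rho),
\end{align}
which is an equality at $c=\pm 1$ and reduces, by rotational invariance, to a one-variable inequality in $c\in[-1,1]$ (so no factor of the ambient dimension $N$ ever appears). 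Averaging this over the antipodal pair $y=\pm D_j$ cancels the $c$-term and, using $\twonorm{D_j}\leq 1$, yields the drift bound for symmetric increments; the general mean-zero case is recovered by a Jensen symmetrization (introducing an independent copy of $D_j$). Because only the very-weak property $\mathbb{E}[X_j\mid X_{j-1}]=X_{j-1}$ is available, I apply the drift bound conditionally and take full expectations at each step, telescoping to
\begin{align}
\mathbb{E}\left[\cosh(\lambda\twonorm{X_s})\right]\leq \cosh(\lambda)^{s}\cosh(\lambda\twonorm{X_0})=\cosh(\lambda)^{s}\leq e^{s\lambda^2/2},
\end{align}
where the last step uses $\cosh(\lambda)\leq e^{\lambda^2/2}$ and $X_0=0$.

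Combining the two displays with $\cosh(\lambda a)\geq\tfrac12 e^{\lambda a}$ gives $\pr\brackn{\twonorm{X_s}\geq a}\leq 2\,e^{s\lambda^2/2-\lambda a}$, and the choice $\lambda=a/s$ produces the clean sub-Gaussian tail $2e^{-a^2/2s}$. The slightly lossier constants in the stated bound --- the prefactor $2e^2$ and the shift in $2e^{1-(a-1)^2/2s}$ --- come from the careful treatment of genuinely non-symmetric mean-zero increments, where the symmetrization step and the behaviour of the potential near the origin force one to concede a bounded multiplicative factor and an $O(1)$ shift in $a$. I expect this one-step drift bound to be the main obstacle: proving the hyperbolic law of cosines so that the reduction to a scalar inequality is airtight (no hidden $N$-dependence), and then extending it from symmetric to arbitrary mean-zero increments while tracking constants precisely enough to land exactly on $2e^{1-(a-1)^2/2s}$, is the delicate part; the Chernoff envelope, telescoping, and optimization over $\lambda$ are routine by comparison.
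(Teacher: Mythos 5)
The first thing to note is that the paper does not prove this statement at all: it is imported verbatim as Theorem~1.8 of Hayes's paper on vector-valued martingales, so your attempt can only be compared against that source rather than anything in the text. Your overall scheme --- Chernoff with the radial potential $\cosh(\lambda\twonorm{\cdot})$, a one-step drift bound, telescoping through the very-weak martingale property, then optimizing $\lambda$ --- is viable, and your ``hyperbolic law of cosines'' is correct and genuinely dimension-free: writing $\twonorm{x+y}^2=r^2+\rho^2+2r\rho c$, the map $\tau\mapsto\cosh(\lambda\sqrt{\tau})$ is convex (its derivative is $\tfrac{\lambda}{2}\sinh(\lambda\sqrt{\tau})/\sqrt{\tau}$, increasing because $\sinh(z)/z$ is increasing), so $\cosh(\lambda\twonorm{x+y})$ is convex in $c$ and bounded by the linear interpolant between the equality cases $c=\pm1$, which is exactly $\cosh(\lambda r)\cosh(\lambda\rho)+c\,\sinh(\lambda r)\sinh(\lambda\rho)$. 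For symmetric increments this yields the per-step factor $\cosh(\lambda)$ and the rest is routine.

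The genuine gap is your treatment of non-symmetric mean-zero increments by symmetrization. Replacing $D_j$ by $D_j-D_j'$ with an independent copy (via Jensen) is legitimate, but the symmetrized increment only satisfies $\twonorm{D_j-D_j'}\leq 2$, so the per-step factor degrades to $\cosh(2\lambda)\leq e^{2\lambda^2}$ and the optimized tail is $2e^{-a^2/8s}$: the exponent is a factor of $4$ worse than the stated $e^{-a^2/2s}$ rate. This is not ``a bounded multiplicative factor and an $O(1)$ shift in $a$'' as you assert --- no prefactor or shift repairs a weakened exponent, and the loss would propagate with strictly worse constants into the Lemma the paper derives from this theorem. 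The fix is to avoid symmetrization entirely. After the law-of-cosines step, use the pointwise scalar majorization, valid for $\rho=\twonorm{D}\leq 1$, $u=\langle \hat x,D\rangle\in[-\rho,\rho]$, $t=\tanh(\lambda r)\in[0,1)$:
\begin{align}
\cosh(\lambda\rho)+t\,\frac{u}{\rho}\,\sinh(\lambda\rho)\;\leq\;\cosh(\lambda)+t\,u\,\sinh(\lambda).
\end{align}
Both sides are affine in $u$, so it suffices to check $u=\pm\rho$: at $u=\rho$ it follows from $\sinh(\lambda\rho)\leq\rho\sinh\lambda$ and $\cosh(\lambda\rho)\leq\cosh\lambda$; at $u=-\rho$ it reduces (worst case $t=1$) to $e^{-\lambda\rho}\leq\cosh\lambda-\rho\sinh\lambda$, which holds because the left side is convex in $\rho$, both sides agree at $\rho=1$, and the left side is smaller at $\rho=0$. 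The term linear in $D$ is then annihilated by the conditional mean-zero property, giving $\mathbb{E}\left[\cosh(\lambda\twonorm{X_j})\,\middle|\,X_{j-1}\right]\leq\cosh(\lambda)\cosh(\lambda\twonorm{X_{j-1}})$ for arbitrary very-weak increments; telescoping, Markov with $\cosh(\lambda a)\geq e^{\lambda a}/2$, and $\lambda=a/s$ yield $\bigpr{\twonorm{X_s}\geq a}\leq 2e^{-a^2/2s}$, which is in fact slightly stronger than Hayes's $2e^{1-(a-1)^2/2s}$ (his constants stem from a different argument) and certainly implies the stated inequality.
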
 

We can now prove Lemma~\ref{lem:hoeffding}.


\begin{proof}
	The proof is a simple application of Theorem~\ref{thm:hayes}. Let us use $R:\bbc^d \rightarrow \bbr^{2 d}$ to denote the two-norm preserving linear map $R(a_1+i b_1, \ldots, a_d+ i b_d)=(a_1, b_1, \ldots, a_d, b_d)$. For $s\in \bbn$, we define the random variable $Y_s\in \bbr^{2d}$ as follows: $Y_0=(0,\ldots,0)$ and for $s>0$:
	\begin{align}
		Y_s:=\frac{s}{\sqrt{m}+\sqrt{p}} R\brackn{\ket{\widebar{\psi}}-\ket{\mu}}\text{,}
	\end{align}
	where we note that $\ket{\widebar{\psi}}$ depends on $s$ as per Eq.~\eqref{eq: mean phi}.
	
	We now note that $Y_s$ is a very-weak martingale since $Y_0=0$ and
	\begin{align}
		\expval{}{\twonorm{Y_s}}&=\frac{s}{\sqrt{m}+\sqrt{p}}\expval{}{\sqrt{\brackn{\bra{\widebar{\psi}}-\bra{\mu}}\brackn{\ket{\widebar{\psi}}-\ket{\mu}}}}< \infty, 
	\end{align}
	 as well as 
	 \begin{align}
		 \expval{}{Y_s~|~Y_{s-1}}&=\expval{}{Y_{s-1}+\frac{1}{\sqrt{m}+\sqrt{p}}R\brackn{\ket{\psi_{x_s}}-\ket{\mu}}~|~Y_{s-1}}=Y_{s-1}\text{.}
	 \end{align}
	Additionally, we note that $\twonorm{Y_s-Y_{s-1}}\leq 1$, since:
	\begin{align}
		\twonorm{Y_s-Y_{s-1}}&=\twonorm{\frac{1}{\sqrt{m}+\sqrt{p}}R\brackn{\ket{\psi_{x_s}}-\ket{\mu}}}\\
		&=\frac{1}{\sqrt{m}+\sqrt{p}}\sqrt{\innerbrakets{\psi_{x_s}}{\psi_{x_s}}-\innerbrakets{\psi_{x_s}}{\mu}-\innerbrakets{\mu}{\psi_{x_s}}+\innerbrakets{\mu}{\mu}}\\
		&\leq \frac{1}{\sqrt{m}+\sqrt{p}}\sqrt{m+2\sqrt{m p}+p}=1\text{.}
	\end{align}
	Hence, by Theorem~\ref{thm:hayes}:
	\begin{align}
		\pr \brackn{\twonorm{Y_s}\geq a} &=\pr \brackn{\twonorm{\ket{\widebar{\psi}}-\ket{\mu}}\geq \frac{a(\sqrt{m}+\sqrt{p})}{s}}<2 e^2 \exp \brackn{-a^2/2s}\text{.}
	\end{align}
	Substituting $\epsilon=a(\sqrt{m}+\sqrt{p})/s$ proves Eq.~\eqref{hayes hoeffding twonorm}. 
	
	To prove Eq.~\eqref{hayes hoeffding onenorm}, we define:
	\begin{align}
		\ket{\Delta}:=\ket{\mu}-\ket{\widebar{\psi}},
	\end{align}
	and note that
	\begin{align}
		\sonenorm{\density{\widebar{\psi}}-\density{\mu}}&=\sonenorm{\ketbra{\mu}{\Delta}+\ketbra{\Delta}{\mu}-\density{\Delta}}\leq\sonenorm{\ketbra{\mu}{\Delta}}+\sonenorm{\ketbra{\Delta}{\mu}}+\sonenorm{\density{\Delta}}\\
		&=2\twonorm{\ket{\mu}}\twonorm{\ket{\Delta}}+\twonorm{\ket{\Delta}}^2=\twonorm{\ket{\Delta}}(\twonorm{\ket{\Delta}}+2\sqrt{p})\text{.}
	\end{align}
	Now, employing the above, if $\twonorm{\ket{\Delta}}\leq \epsilon$ then $\sonenorm{\density{\widebar{\psi}}-\density{\mu}}\leq \epsilon (\epsilon + 2\sqrt{p})$. Applying this observation to the already proven Eq.~\eqref{hayes hoeffding twonorm} yields:
	\begin{align}
		\bigpr{\sonenorm{\density{\widebar{\psi}}-\density{\mu}}\geq \epsilon (\epsilon + 2\sqrt{p})}\leq 2 e^2 \exp{\left(\frac{-{s}{\varepsilon^2}}{2 (\sqrt{m} +\sqrt{p})^2}\right)}\text{.}\notag
	\end{align}
	We can now define a new variable $\varepsilon=\epsilon (\epsilon + 2\sqrt{p})$ and solve this quadratic equation for $\epsilon$. Taking only the positive solution gives $\epsilon=\sqrt{p+\varepsilon}-\sqrt{p}$, which immediately leads to Eq.~\eqref{hayes hoeffding onenorm}.
	
\end{proof}


\section{CH form}
\label{app:ch_form}

Following the formalism developed in  Ref.~\cite{bravyi2019simulation}, any stabilizer state $\ket{\sigma}$ of $n$ qubits can be written as
\begin{equation}
	\ket{\sigma}=\omega U_C U_H \ket{s},
\end{equation}
where $U_C$ is the control type operator (in our case effectively meaning that it consists of products of $S$, $CX$ and $CZ$ gates), $U_H$ is the Hadamard-type operator (consisting only of products of $H$ gates), $s$ is a bit string of length $n$ representing one of the computational basis states, and $\omega$ is a complex number. The unitary $U_C$ is fully specified by three $n\times n$ matrices $F,G,M$ with entries in $\mathbb{Z}_2$ and a phase vector~$\gamma$ of length $n$ with entries in $\mathbb{Z}_4$. Together, they describe the action of $U_C$ on Pauli matrices:
\begin{align}
	U_C^\dagger	Z_j	U_C = \prod_{k=1}^{n} Z_k^{G_{jk}},\quad 	U_C^\dagger	X_j	U_C = i^{\gamma_j}\prod_{k=1}^{n} X_k^{F_{jk}}Z_k^{M_{jk}},
\end{align}
where $X_j$ and $Z_j$ are Pauli matrices acting on the $j$-th qubit. The unitary $U_H$ is fully specified by a bit string $v$ of length $n$, with $v_j=1$ if $U_H$ acts with a Hadamard on the $j$-th qubit. Thus, a general stabilizer state $\ket{\sigma}$ of $n$ qubits is described by a tuple \mbox{$\{F,G,M,{\gamma},{v},s,\omega\}$}.

The initial state is represented by the following tuple
\begin{equation}
	\ket{0}^{\otimes n} \quad \Longleftrightarrow\quad \{F=\iden,G=\iden,M=0,{\gamma}={0},{v}={0},{s}={0},\omega=1\}.
\end{equation}
The authors of Ref.~\cite{bravyi2019simulation} found an efficient way to find the tuple \mbox{$\{F',G',M',{\gamma}',{v}',{s}',\omega'\}$} representing \mbox{$V\ket{0}^{\otimes n}$} for an arbitrary Clifford unitary $V$. The run-time of this evolution subroutine scales polynomially with the total number of qubits $n$: the ``C-type'' gates ($S$, $CX$, $CZ$) have linear time complexity $O(n)$, while applying a Hadamard gate takes $O(n^2)$ steps. For completeness, we include the update rules for left, $\L[\Gamma]$, and right, $\Rcal[\Gamma]$, multiplication of $U_C$ by a C-type unitary $\Gamma$. All phase vector updates are performed modulo four, and each update containing the symbol $p$ should be read as applying to all $p\in \{1,\ldots, n\}$ in turn.

\begin{subequations}
	\begin{align}	
	\label{eq:update1}
	\Rcal[\Sgate{q}] \, :& \, \left\{
	\begin{array}{rcl}
	M_{p,q} &\gets& M_{p,q} \oplus F_{p,q}  \\
	\gamma_p &\gets&  \gamma_p - F_{p,q} \\
	\end{array} \right.
	\quad & \quad
	\L[\Sgate{q}] \, :& \, \left\{
	\begin{array}{rcl}
	M_{q,p} &\gets& M_{q,p} \oplus G_{q,p}\\
	\gamma_q &\gets&  \gamma_q - 1  \\
	\end{array} \right.\\
	\Rcal[\CZgate{q}{r}] \, :& \, \left\{
	\begin{array}{rcl}
	M_{p,q} &\gets& M_{p,q} \oplus F_{p,r}  \\
	M_{p,r} &\gets& M_{p,r} \oplus F_{p,q}  \\
	\gamma_p &\gets&  \gamma_p + 2F_{p,q}F_{p,r}  \\
	\end{array} \right.
	\quad & \quad
	\L[\CZgate{q}{r}] \, :& \, \left\{
	\begin{array}{rcl}
	M_{q,p} &\gets& M_{q,p} \oplus G_{r,p}  \\
	M_{r,p} &\gets& M_{r,p} \oplus G_{q,p}  \\
	\end{array} \right.\\
	\Rcal[\CXgate{q}{r}] \, :& \, \left\{
	\begin{array}{rcl}
	G_{p,q} &\gets& G_{p,q} \oplus G_{p,r}  \\
	F_{p,r} &\gets& F_{p,r} \oplus F_{p,q}  \\
	M_{p,q} &\gets& M_{p,q} \oplus M_{p,r}\\
	\end{array} \right.
	\quad & \quad
	\L[\CXgate{q}{r}] \, :& \, \left\{
	\begin{array}{rcl}
	\gamma_q &\gets& \gamma_q+ \gamma_r + 2(MF^T)_{q,r} \\
	G_{r,p} &\gets& G_{r,p} \oplus G_{q,p}  \\
	F_{q,p} &\gets& F_{q,p} \oplus F_{r,p}  \\
	M_{q,p} &\gets& M_{q,p} \oplus M_{r,p}\\
	\end{array} \right.
	\label{eq:update3}
	\end{align}
\end{subequations}
We note a slight difference to the update rules as presented by the authors of Ref.~8: in the update rule for $\L[\CXgate{q}{r}]$ we update $\gamma$ before updating $F$ and $M$ to emphasise that $\gamma$ must be updated based on the old values of $F$ and $M$, rather than the new. It will be significant that the action of the operation $\L[\CXgate{q}{r}]$ on the $F$ matrix is column addition and that, since this addition is modulo two, the right-action of the swap gate, $\CXgate{q}{r} \CXgate{r}{q} \CXgate{q}{r}$, on $F$ is just to swap the columns $r$ and $q$. 

We will also employ the equation given in Ref.~\cite{bravyi2019simulation} to compute inner products between CH-form stabiliser states and computational basis states,
\begin{align}
	\bra{x} U_C U_H \ket{s} &= \bra{0}^{\otimes n}\left(\prod_{p=1}^n U_C^{-1} X_p^{x_p}U_C\right) U_H\ket{s}= 2^{-\frac{\abs{v}}{2}} i^{\mu} \prod_{j:~v_j = 1} (-1)^{u_j s_j} \prod_{j:~v_j = 0} \braket{u_j}{s_j},\label{eq:bravyi-inner-product}
\end{align}
where $u_j = x F$, and $\mu = x\cdot\gamma + 2k$ for a constant $k\in\{0,1\}$ which may be computed in quadratic time given $x$ and the CH-form data. Indeed, some algebra demonstrates that $k$ may be determined by the relation
\begin{align}
	\prod_{p=1}^n U_C^{-1} X_p^{x_p}U_C &= i^{x\cdot\gamma}\prod_{p :~ x_p = 1}  \prod_{j=1}^n X_j^{F_{pj}} Z_j^{M_{pj}}=  i^{x\cdot\gamma} (-1)^k \prod_{j=1}^n \prod_{p :~ x_p = 1} Z_j^{M_{pj}} X_j^{F_{pj}}.
\end{align}


\section{Proof of Lemma~\ref{lem:discard}}
\label{app:discard}

The proof splits into two parts. First, we show in Lemma~\ref{lem:if-f-is-good-we-are-happy} that if the CH-form describing the initial $(n+1)$-qubit state $\ket{0}\otimes\ket{\sigma}$ has a certain form, then a CH-form describing $\ket{\sigma}$ may be obtained by simply deleting the first row and column of each $F$, $G$ and $M$, and the first element of $\gamma$, $v$ and $s$. We assume the deletion operation takes quadratic time to leading order as an implementation is likely to simply allocate a new $O(n^2)$ sized block of memory and then copy the required values across. Second, we give an algorithm which takes an arbitrary CH-form representing $\ket{0}\otimes\ket{\sigma}$ and outputs in quadratic time a CH-form representing the same state, but in the form required by Lemma~\ref{lem:if-f-is-good-we-are-happy}. We show, in Lemmas~\ref{lem:constraining-fcal} and \ref{lem:fixing-fcal-1}, how the CH-form representing such a product state can be brought into a form with at most one qubit $k$ with $v_k = 0$, $s_k=1$. This is important because we can insert $\CX$ gates controlled on any qubit with $v_k = 0$, $s_k=0$ between $U_C$ and $U_H$ in the CH-form without changing the state (a $\CX$ controlled on $\ket{0}$ does nothing). Finally, in Lemma~\ref{lem:first-row-of-g-is-nice}, we show that if the CH-form is in the form produced by Lemmas~\ref{lem:constraining-fcal} and~\ref{lem:fixing-fcal-1}, then inserting $\CX$ gates controlled on $\ket{0}$ qubits can bring the CH-form into the form required by Lemma~\ref{lem:if-f-is-good-we-are-happy}.

We label computational basis vectors with bit-strings, denote bitwise addition modulo-$2$ with the symbol $\oplus$, bitwise multiplication with juxtaposition, and use the operator $\join{}{}$ to denote concatenation of bitstrings, e.g., if $a=01$ then $\join{0}{a} = 001$ and $\join{a}{0}=010$. The first part of the proof is then captured by the following.

\begin{lem}\label{lem:if-f-is-good-we-are-happy}
	Consider a stabiliser state $\ket{0}\otimes\ket{\sigma}$ with CH-form $\F = \left\{F, G, M, \gamma, v, s, \omega\right\}$ such that: the first column of $F$ has a $1$ in the first element and zeros elsewhere, and $s_1 = v_1 = 0$. Then, the CH-form $\F^\prime = \left\{F^\prime, G^\prime, M^\prime, \gamma^\prime, v^\prime, s^\prime ,\omega\right\}$, where $F^\prime$, $G^\prime$ and $M^\prime$ are formed by deleting the first row and column of $F$, $G$ and $M$, respectively, and $\gamma^\prime$ $v^\prime$ and $s^\prime$ are formed by deleting the first element of $\gamma$, $v$ and $s$, respectively, is a CH-form for $\ket{\sigma}$.
\end{lem}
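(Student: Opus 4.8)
The plan is to peel qubit~1 off each factor of the CH decomposition $\ket{0}\otimes\ket{\sigma}=\omega\, U_C U_H\ket{s}$ and show that what survives on the remaining $n$ qubits is exactly the CH form built from the truncated data. First I would dispose of the Hadamard layer and the basis string: since $v_1=0$ the operator $U_H$ acts as the identity on qubit~1, so $U_H=I_1\otimes U_H'$ with $U_H'$ the Hadamard layer specified by $v'$; and since $s_1=0$ we have $\ket{s}=\ket{0}_1\otimes\ket{s'}$. Hence $U_H\ket{s}=\ket{0}_1\otimes\ket{\tau}$ with $\ket{\tau}:=U_H'\ket{s'}$, and the whole problem reduces to understanding $U_C\!\left(\ket{0}_1\otimes\ket{\tau}\right)$.

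The crucial structural input is the hypothesis on the first column of $F$. Recall that a control-type Clifford permutes computational basis states up to phase, $U_C\ket{x}=\mu(x)\ket{Ax}$ for an invertible $\mathbb{F}_2$-linear map $A$, and that the standard correspondence between the CH matrices and the symplectic action of $U_C$ identifies the $X$-support of $U_C^\dagger X_j U_C$ with the $j$-th row of $F$; equivalently $A^{-1}=F^{T}$. I would use this to show that ``the first column of $F$ equals $e_1$'' is precisely the statement $U_C Z_1 U_C^\dagger = Z_1$: the condition forces the first row of $A^{-1}=F^T$ to be $e_1^{T}$, so $(A^{-1}z)_1=z_1$ for all $z$, and therefore $U_C Z_1 U_C^\dagger\ket{z}=(-1)^{(A^{-1}z)_1}\ket{z}=Z_1\ket{z}$. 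Consequently $U_C$ commutes with $\ket{0}\!\bra{0}_1\otimes I=(I+Z_1)/2$ and so preserves the subspace in which qubit~1 is $\ket{0}$, acting there as $\ket{0}_1\otimes W$ for some control-type unitary $W$ on the last $n$ qubits.

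The heart of the argument is then to identify $W$ with $U_C'$, the control unitary specified by $F',G',M',\gamma'$. For this I would restrict the defining Pauli relations of $U_C$ to the subspace. For $j\geq 2$, writing out $U_C^\dagger Z_j U_C=\prod_k Z_k^{G_{jk}}$ and acting on $\ket{0}_1\otimes\ket{\tau}$, the $k=1$ factor $Z_1^{G_{j1}}$ annihilates harmlessly because $Z_1\ket{0}=\ket{0}$; using $U_C^\dagger(\ket{0}_1\otimes\ket{\phi})=\ket{0}_1\otimes W^\dagger\ket{\phi}$ this yields $W^\dagger Z_j W=\prod_{k\geq 2}Z_k^{G_{jk}}$. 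Likewise $U_C^\dagger X_j U_C=i^{\gamma_j}\prod_k X_k^{F_{jk}}Z_k^{M_{jk}}$ has $k=1$ factor equal to $Z_1^{M_{j1}}$ \emph{because $F_{j1}=0$}, which again acts trivially on $\ket{0}_1$, giving $W^\dagger X_j W=i^{\gamma_j}\prod_{k\geq 2}X_k^{F_{jk}}Z_k^{M_{jk}}$. These are exactly the relations defining the control unitary $U_C'$ attached to the truncated matrices (and block-triangularity of $F$ shows $F'$ is invertible, so $\F'$ is a bona fide CH form); since a control-type Clifford is fixed by its action on Paulis together with $U_C'\ket{0}^{\otimes n}=\ket{0}^{\otimes n}$, we conclude $W=U_C'$. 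Assembling the pieces, $\ket{0}_1\otimes\ket{\sigma}=\omega\,U_C(\ket{0}_1\otimes\ket{\tau})=\ket{0}_1\otimes\bigl(\omega\,U_C' U_H'\ket{s'}\bigr)$, so $\ket{\sigma}=\omega\,U_C' U_H'\ket{s'}$ and $\F'$ represents $\ket{\sigma}$ as claimed.

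I expect the main obstacle to be the bookkeeping in the identification step: making the restriction of the Pauli relations rigorous (in particular confirming that it is exactly the vanishing of the first column of $F$, and not of some other data, that removes the $X_1$ factor that would otherwise push a state out of the qubit-1-$\ket{0}$ subspace) and checking that the surviving operators and the tracked phase $\gamma$ match the truncated tuple entry-for-entry. The rest --- the Hadamard/basis-string reduction and the $U_C Z_1 U_C^\dagger=Z_1$ computation --- is routine once the $F\leftrightarrow A$ correspondence is invoked.
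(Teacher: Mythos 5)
Your proof is correct, but it follows a genuinely different route from the paper's. The paper works at the level of amplitudes: it evaluates $\braket{a}{\sigma}=\bra{\join{0}{a}}\left(\ket{0}\otimes\ket{\sigma}\right)$ using the inner-product formula \eqref{eq:bravyi-inner-product} and verifies factor by factor --- including the mod-2 sign bit $k$ generated when commuting the $X_j$'s past the $Z_j$'s --- that nothing in that formula changes when the first row/column of $F,G,M$ and the first entries of $\gamma,v,s$ are deleted; the hypotheses $x_1=0$, $v_1=s_1=0$ and $F_{p1}=\delta_{p1}$ enter precisely to make the deleted terms inert. You instead argue at the operator level: the hypothesis on the first column of $F$ is equivalent (via $A^{-1}=F^{T}$, where $U_C\ket{x}=\mu(x)\ket{Ax}$) to $U_C Z_1 U_C^{\dagger}=Z_1$, so $U_C$ preserves the qubit-one-$\ket{0}$ subspace and restricts there to a unitary $W$; restricting the defining Pauli-conjugation relations then identifies $W$ with the control unitary of the truncated tuple. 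Your route is more conceptual --- it explains structurally why truncation works (qubit 1 decouples from $U_C$) and avoids the fiddly sign-bit bookkeeping --- and, by exhibiting $W$ itself as a control-type unitary realizing $(F',G',M',\gamma')$, it also certifies that the truncated tuple is realizable, a point the paper's amplitude comparison takes for granted (and which your parenthetical shortcut, invertibility of $F'$, would not by itself deliver). The paper's route buys economy: it needs nothing beyond the already-stated formula \eqref{eq:bravyi-inner-product}, which is exactly the quantity the algorithm consumes downstream. Two spots in your argument deserve one extra line each: the identification $W=U_C'$ from equality of conjugation actions only determines $W$ up to a global phase, so you must note that $W\ket{0}^{\otimes n}=\ket{0}^{\otimes n}$ (immediate from $U_C\ket{0}^{\otimes(n+1)}=\ket{0}^{\otimes(n+1)}$, since $U_C$ is a product of $S$, $CX$ and $CZ$ gates, all of which fix the all-zeros state); and the step $U_C^{\dagger}\left(\ket{0}_1\otimes\ket{\phi}\right)=\ket{0}_1\otimes W^{\dagger}\ket{\phi}$ should be justified by observing that $U_C^{\dagger}$ commutes with the same projector $\ketbra{0}{0}_1\otimes I$, hence also preserves the subspace.
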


\begin{proof}
	Choose an arbitrary $n$-qubit computational-basis vector $\ket{a}$ and use Eq.~\eqref{eq:bravyi-inner-product} to compute
	\begin{align}
	\braket{a}{\sigma} &= \braket{0}{0} \braket{a}{\sigma}= \bra{\join{0}{a}} \ket{0}\otimes\ket{\sigma}= \omega \bra{\join{0}{a}} U_C U_H\ket{s}= (-1)^k \omega i^{u\cdot \gamma}  \bra{0^n} X(u)  U_H\ket{s}\label{eqn:first-inner-prod-line-with-swaps}\\ 
	&= (-1)^k \omega i^{u\cdot \gamma}  2^{-\abs{v}/2} \prod_{j:~v_j=1} (-1)^{u_j s_j} \prod_{j:~v_j=0} \braket{u_j}{s_j},\label{eqn:inner-prod-last-line}
	\end{align}
	where $k$ is a bit we have introduced to count whether we have swapped an even or odd number of $X_j$ with their $Z_j$ to arrive at the final equality in Eq.~\eqref{eqn:first-inner-prod-line-with-swaps}. We recall that $u = (\join{0}{a})F$. If $F^\prime$ is the matrix obtained from $F$ by removing the first row and column, then one can verify that $u = \join{0}{(aF^\prime)}$; in particular $u_1=0$. In addition, we define $\gamma^\prime$, $v^\prime$ and $s^\prime$ by deleting the first element of $\gamma$, $v$ and $s$, respectively, and we let $u^\prime = aF^\prime$. Ignoring $k$ for the moment we go through the rest of the terms in turn.
	First, since $u_1 = 0$
	\begin{align}
	u^\prime \cdot\gamma^\prime = u\cdot\gamma.
	\end{align}
	Next, since we have $v_1 = 0$,
	\begin{align}
	\abs{v^\prime} = \abs{v}
	\end{align}
	and
	\begin{align}
	\prod_{j:~v_j^\prime=1} (-1)^{u_j^\prime s_j^\prime} &= \prod_{j:~v_j=1} (-1)^{u_j s_j}.
	\end{align}
	Finally, since $\braket{u_1}{s_1} = 1$,
	\begin{align}
	\prod_{j:~v_j^\prime=0} \braket{u_j^\prime}{s_j^\prime} &= \prod_{j:~j>1, v_j=0} \braket{u_j}{s_j}=\prod_{j:~ v_j=0} \braket{u_j}{s_j}.
	\end{align}
	We now turn to the calculation of $k$. For conciseness we write $x = \join{0}{a}$, and want to simplify
	\begin{align}
	X(x) U_C = U_C \prod_{p:~x_p = 1}\left(i^{\gamma_k}\prod_j X_j^{F_{pj}} Z_j^{M_{pj}}\right).\label{eqn:measurement-sign-bit-quadratic-product}
	\end{align}
	In particular, we will commute all the $Z$'s to the back to write
	\begin{align}
	X(x) U_C &= (-1)^k i^{u\cdot \gamma} U_C Z(t) X(u).
	\end{align}
	The bit $k$ may be calculated by the following algorithm. First initialise $k := 0$ and set $t$ to be a length $n$ vector of zeros. Then, for each $p$ with $x_p=1$, we want to compute the product
	\begin{align}
	\left(\prod_j X_j^{F_{pj}} Z_j^{M_{pj}})\right) Z(t) X(u).
	\end{align}
	We update $t$ by adding (mod-$2$) the $p^\text{th}$ row of $M$ (i.e., we combine the adjacent $Z$-type operators), then we commute each $X_j$ through the new $Z(t)$, which gives a $(-1)$ for each $j$ for which both $F_{pj}$ and $t_j$ are non-zero. More explicitly, we update $k$ to be $k + F_p\cdot t (\mathrm{mod}\ 2)$, where $F_p$ is the $p^\text{th}$ row vector of $F$. 
	
	Since the first column of $F$ has a $1$ in the first element and $0$ in all others, each $F_p$ for $p>1$ starts with a zero. Therefore, $k$ is not sensitive to the first bit of $t$ except when $p = 1$ (since the first element of $F_1$ is $1$). However $x = (\join{0}{a})$, so $x_1=0$ and so the case $p=1$ does not appear in the product in Eq.~\eqref{eqn:measurement-sign-bit-quadratic-product}.	We will therefore compute the same $k$ bit with the ``trimmed'' data as we would with the original data.
\end{proof}

In order to present the second part of the proof we will need a few lemmas. First, we will prove a useful constraint on the CH-form of a state in the form $\ket{0}\otimes\ket{\sigma}$.
\begin{lem}\label{lem:constraining-fcal}
	Given a stabiliser state,
	\begin{align}
	\ket{0}\otimes\ket{\sigma} = \omega U_C U_H\ket{s},
	\end{align}
	where the CH-form on the right is defined by the data $\F = \left\{F, G, M, \gamma, v, s ,\omega\right\}$, at least one of the following is true:
	\begin{enumerate}
		\item $\omega = 0$;
		\item $\exists q$ such that $v_q = s_q = 0$;
		\item $\exists q,r$ ($q\neq r$) such that $s_q = s_r = 1$ and $v_q = v_r = 0$.
	\end{enumerate}
\end{lem}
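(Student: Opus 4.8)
The plan is to argue by contraposition: I assume that none of the three conclusions holds and derive a contradiction with the hypothesis that the first tensor factor of the state is $\ket{0}$. Let $N$ denote the total number of qubits carried by the CH-form. Suppose then that $\omega\neq 0$, that there is no index $q$ with $v_q=s_q=0$, and that there is no pair of distinct indices $q,r$ with $s_q=s_r=1$ and $v_q=v_r=0$. Writing $V_0:=\set{q\mid v_q=0}$, the failure of the second conclusion forces $s_q=1$ for every $q\in V_0$, and the failure of the third conclusion then forces $\abs{V_0}\leq 1$. Thus under the contrary assumption at most one qubit escapes a Hadamard in $U_H$, and if one does then its $s$-bit equals $1$.

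Next I would unpack the CH-form. Since $U_H$ applies a Hadamard exactly on the qubits in $V_1:=\set{q\mid v_q=1}$, the state $U_H\ket{s}$ is a product state that, expanded in the computational basis, is a uniform-amplitude superposition (up to $\pm1$ phases coming from the $\ket{-}$ factors) over the affine subspace
\begin{equation}
    \mathcal{A}:=\set{y\in \mathbb{Z}_2^{N}\mid y_q=s_q \text{ for all } q\in V_0},
\end{equation}
whose dimension is $\abs{V_1}=N-\abs{V_0}$. Because $U_C$ is built from $S$, $CX$ and $CZ$ gates only, it acts on computational basis states as $U_C\ket{x}=\theta(x)\ket{Lx}$, where $\theta(x)$ is a unit-modulus phase and $L$ is an invertible $\mathbb{Z}_2$-linear map (the diagonal gates contribute only phases, while each $CX$ is an invertible linear relabelling of the basis). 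The crucial point is that $L$ is a bijection, so the vectors $\set{Ly\mid y\in\mathcal{A}}$ are pairwise distinct and no cancellation can occur: every $\ket{Ly}$ with $y\in\mathcal{A}$ appears in $\omega\, U_C U_H\ket{s}$ with a nonzero coefficient (here I use $\omega\neq 0$). Since this state equals $\ket{0}\otimes\ket{\sigma}$, its first qubit is $\ket{0}$, and hence $(Ly)_1=0$ must hold for \emph{every} $y\in\mathcal{A}$. Denoting by $R$ the first row of $L$, this says that the linear functional $y\mapsto R\cdot y$ vanishes identically on $\mathcal{A}$.

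It then remains to rule out both surviving cases for $\abs{V_0}$. If $\abs{V_0}=0$ then $\mathcal{A}=\mathbb{Z}_2^{N}$, so $R\cdot y=0$ for all $y$ forces $R=0$. If $\abs{V_0}=1$, say $V_0=\set{q_0}$ with $s_{q_0}=1$, then $\mathcal{A}=\set{y\mid y_{q_0}=1}$ and $R\cdot y=R_{q_0}+\sum_{k\neq q_0}R_k y_k$ must vanish for all free choices of the bits $y_k$ with $k\neq q_0$, which again forces every entry of $R$ to be zero. In either case the first row of $L$ is zero, contradicting the invertibility of $L$, and this contradiction establishes the lemma. I expect the only genuinely delicate step to be this last one: handling $\abs{V_0}=1$, where $\mathcal{A}$ is a true affine subspace not through the origin. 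The no-cancellation observation (bijectivity of $L$) is what upgrades the first-qubit condition from a statement about a sum to a pointwise constraint on all of $\mathcal{A}$, and it is the conceptual crux on which the whole argument rests.
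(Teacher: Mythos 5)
Your proof is correct, and while it rests on the same two pillars as the paper's argument---the vanishing of every amplitude on basis strings whose first bit is $1$, and the invertibility of the $\mathbb{Z}_2$-linear part of $U_C$---the execution is genuinely different. The paper argues directly rather than by contraposition: it evaluates $\bra{\join{1}{a}}U_CU_H\ket{s}$ for all $a$ via the CH-form inner-product formula, Eq.~\eqref{eq:bravyi-inner-product}, which reduces the vanishing condition to $\prod_{j:\,v_j=0}\braket{[(\join{1}{a})F]_j}{s_j}=0$. Non-emptiness of this product forces at least one $j$ with $v_j=0$; when that $j$ is unique, substituting $a=0$ and then the standard basis vectors shows that column $j$ of $F$ can be nonzero only in its first entry, and invertibility of $F$ then gives $F_{1j}=1$ and hence $s_j=0$ (case 2), while two or more indices with $v_j=0$ land in case 2 or case 3 immediately. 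You instead bypass the inner-product formula entirely: observing that $U_C$ permutes the computational basis up to phases through an invertible linear map $L$, so that no cancellation can occur, and that $U_H\ket{s}$ is supported exactly on the affine subspace $\mathcal{A}$, you conclude that the first row of $L$ must vanish identically on $\mathcal{A}$, which contradicts invertibility in both of your surviving cases ($\lvert V_0\rvert=0$ and $\lvert V_0\rvert=1$ with $s_{q_0}=1$); your handling of the off-origin affine case is the exact counterpart of the paper's substitutions $a=0$ and $a=e_k$. What your route buys: it is self-contained (phases become irrelevant the moment bijectivity is noted, so there is no bookkeeping with $\gamma$ or the sign bit $k$) and it does not depend on the CH-form conventions beyond ``$U_C$ is a phase-permutation and $U_H$ is a tensor of Hadamards.'' What the paper's route buys: it stays in the $F,G,M$ notation that the neighbouring Lemmas~\ref{lem:fixing-fcal-1} and~\ref{lem:first-row-of-g-is-nice} manipulate, and it identifies concretely which case of the trichotomy holds rather than only deriving a contradiction.
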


\begin{proof}
	We consider the inner product
	\begin{align}
	\left(\bra{1}\otimes\bra{a} \right)\left(\ket{0}\otimes\ket{\sigma}\right) = \braket{1}{0}\braket{a}{\sigma} = 0,
	\end{align}
	where $a\in\{0,1\}^{n}$ defines a computational basis vector. We thus have that for all $a$:
	\begin{align}
	0 = \omega \bra{\join{1}{a}}U_CU_H\ket{s}.
	\end{align}
	If $\omega = 0$ we are in case 1, otherwise we divide by $\omega$ to get
	\begin{align}
	0 = \bra{\join{1}{a}}U_CU_H\ket{s}.
	\end{align}
	Applying Eq.~\eqref{eq:bravyi-inner-product} we obtain
	\begin{align}
	0 = \bra{0}^{\otimes n} X((\join{1}{a}) F) U_H \ket{s},
	\end{align}
	where $X(b)$ denotes the tensor-product unitary applying $X^{b_i}$ to qubit $i$.The above is equivalent to
	\begin{align}
	0 = \prod_{j:~v_j = 0} \braket{[(\join{1}{a}) F]_j}{s_j}.
	\end{align}
	We therefore have at least one $j$ such that $v_j = 0$. We first consider the case where there is exactly one $j$ such that $v_j = 0$. Then, for this $j$, we have $\forall a\in\{0,1\}^{n}$ 
	\begin{align}
	\braket{[(\join{1}{a})F]_j }{s_j} = 0. \label{eqn:j-bitstring-mul-equals-zero}
	\end{align}
	Choosing $a = 00\hdots 0$ and computing the matrix multiplication $(\join{1}{a})F$ we obtain
	\begin{align}
	F_{1j} \neq s_j.
	\end{align}
	Now choosing (for each $k$ individually) $a_k = \delta_{kj}$, we obtain
	\begin{align}
	F_{1j} + F_{kj} \neq s_j.
	\end{align}
	This implies that for $k > 1$ we have $F_{kj} = 0$, and since the column $F_{:,j}$ cannot consist of entirely zeros as $F$ is invertible (indeed the inverse of $F$ is the transpose of $G$), we have
	\begin{align}
	F_{1j} = 1 \implies s_j = 0.
	\end{align}
	We are therefore in case 2. We note that the assumption that exactly one of the $v_j$ is equal to zero is necessary in the above, to allow us to change the bitstring $a$ without the $j$ in Eq.~\eqref{eqn:j-bitstring-mul-equals-zero} changing.
	
	Finally we consider the case where there exist distinct $j,k$ such that $v_j = v_k = 0$. If either of $s_j$ or $s_k$ equals $0$ we are in case 2, otherwise both are equal to $1$ and we are in case 3.
\end{proof}

In what follows we will neglect case $1$ since if $\omega=0$ the state is independent of the rest of the CH-form and all computations are trivial. We now provide two lemmas that show that any CH-form for a tensor-product state $\ket{0}\otimes\ket{\sigma}$ may be efficiently brought into a convenient form.

\begin{lem}\label{lem:fixing-fcal-1}
	Given $\omega \neq 0$ and 
	\begin{align}
	\ket{0}\otimes \ket{\sigma} = \omega U_C U_H\ket{s},
	\end{align}
	where the CH-form on the right is given by $\F = \left\{F, G, M, \gamma, v, s ,\omega\right\}$, we can compute $\F^\prime = \left\{F^\prime, G^\prime, M^\prime, \gamma^\prime, v^\prime, s^\prime ,\omega^\prime\right\}$ defining the same state such that there is at most one index $j$ with $v^\prime_j = 0$, $s^\prime_j = 1$.
\end{lem}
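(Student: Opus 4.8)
The plan is to exploit the non-uniqueness of the CH-form by inserting $\CXgate{}{}$ gates between $U_C$ and $U_H$ so as to eliminate all but one of the offending indices. Define the set $T=\set{j : v_j=0,\ s_j=1}$ of indices with $v_j=0$ and $s_j=1$. If $|T|\le 1$ there is nothing to do, so I would assume $|T|\ge 2$ and fix a \emph{pivot} $j^*\in T$. The mechanism rests on a single observation: whenever two qubits $q,r$ both satisfy $v_q=v_r=0$, the Hadamard-type operator $U_H=\prod_{l:\,v_l=1}H_l$ has no support on $q$ or $r$ and therefore commutes with $\CXgate{q}{r}$.

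Using this, for each $k\in T\setminus\set{j^*}$ I would replace $U_C$ by $U_C\,\CXgate{j^*}{k}$ — updating the tuple $(F,G,M,\gamma)$ via the right-multiplication rule $\Rcal[\CXgate{j^*}{k}]$ of Eq.~\eqref{eq:update3} — and simultaneously set the bit $s_k\gets 0$, leaving $v$ and $\omega$ untouched. To confirm the state is preserved, let $s'$ denote the string after the flip; since $v_{j^*}=v_k=0$ the inserted $\CXgate{j^*}{k}$ commutes through $U_H$, and because $s'_{j^*}=s_{j^*}=1$ while $s'_k=0$, acting with $\CXgate{j^*}{k}$ recovers the old bitstring, $\CXgate{j^*}{k}\ket{s'}=\ket{s}$. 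Hence
\begin{align}
\omega\,U_C\,\CXgate{j^*}{k}\,U_H\ket{s'} = \omega\,U_C\,U_H\,\CXgate{j^*}{k}\ket{s'} = \omega\,U_C\,U_H\ket{s} = \ket{0}\otimes\ket{\sigma},
\end{align}
so the modified tuple describes the same state. Each step removes exactly one element from $T$ (the pivot is never a target and keeps $s_{j^*}=1$), so after processing all $k\in T\setminus\set{j^*}$ precisely one index with $v=0,\ s=1$ remains.

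I do not anticipate a genuine obstacle: the only things to verify are the commutation of $\CXgate{}{}$ with $U_H$ (immediate from the support of $U_H$) and the correctness of the simultaneous tuple-and-bitstring update, both of which are elementary. The conceptually important point to flag in the write-up is \emph{why} the count cannot be driven to zero in general — flipping the pivot's own bit $s_{j^*}$ would require an $X_{j^*}$, which is not a C-type gate available for right-multiplication into $U_C$, so a single $v=0,\ s=1$ index is irreducible by these moves. Finally, for the runtime I would note that each right-multiplication by a $\CXgate{}{}$ costs $O(n)$ by the CH-form update rules, and there are at most $|T|-1\le n-1$ of them, yielding the claimed $O(n^2)$ bound and fitting into the overall quadratic-time budget of Lemma~\ref{lem:discard}.
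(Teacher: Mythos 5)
Your proof is correct and takes essentially the same route as the paper: the paper likewise inserts $\CXgate{a}{b}\,\CXgate{a}{b}=I$ between $U_C$ and $U_H$ for a fixed pivot $a$ (the least offending index) and each other offending index $b$, absorbs the left copy into $U_C$ via $\Rcal[\CXgate{a}{b}]$, and uses $v_a=v_b=0$ to commute the right copy through $U_H$ so that it flips $s_b$ to $0$. Your added remarks on the irreducibility of the last offending index and the $O(n^2)$ cost are consistent with, and slightly more explicit than, the paper's treatment.
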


\begin{proof}
	Assume there are multiple indices $j$ such that $v_j = 0$ and $s_j = 1$. We recall that the controlled $X$ gate, $\CXgate{p}{q}$, is its own inverse, so we have 
	\begin{align}
	\omega U_C U_H \ket{s} = \omega U_C  \CXgate{p}{q} \CXgate{p}{q} U_H \ket{s},
	\end{align}
	for all $p\neq q$. Let $a$ be the least index such that $v_a = 0$ and $s_a = 1$. Then, for all $b > a$ such that $v_b = 0$ and $s_b = 1$, we insert a pair of controlled $X$ gates controlled on $a$ and targeted on $b$. Since $\CXgate{a}{b}$ is its own inverse, this insertion does not change the quantum state we are representing. We then let the left hand $\CXgate{a}{b}$ act on $U_C$ in accordance with the update rules given in Eqs.\eqref{eq:update1}-\eqref{eq:update3}, while the right hand $\CXgate{a}{b}$ acts on the state $U_H \ket{s}$. Since $v_a = v_b = 0$ and $s_a = s_b = 1$, the action of this is to flip $s_b$ to $0$. 
\end{proof}

\begin{lem}\label{lem:first-row-of-g-is-nice}
	Consider $\omega \neq 0$ and 
	\begin{align}
	\ket{0}\otimes\ket{\sigma} = \omega U_C U_H\ket{s},
	\end{align}
	where the CH-form on the right is given by $\F = \left\{F, G, M, \gamma, v, s ,\omega\right\}$, and assume there is at most one $j$ such that $v_j=0$, and $s_j=1$. Then, the first row of $G$ is non-zero only for elements $G_{1p}$ for which $s_p = v_p = 0$.  
\end{lem}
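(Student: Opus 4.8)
The plan is to exploit the fact that the first qubit of $\ket{0}\otimes\ket{\sigma}$ is in the state $\ket{0}$, so that $Z_1$ stabilises the whole state: $Z_1(\ket{0}\otimes\ket{\sigma})=\ket{0}\otimes\ket{\sigma}$. Writing this in CH form and dividing by $\omega\neq 0$ gives $Z_1 U_C U_H\ket{s}=U_C U_H\ket{s}$, and applying $U_C^\dagger$ on the left turns the left-hand side into the defining action of $U_C$ on $Z_1$. Since the CH-form conjugation rule reads $U_C^\dagger Z_1 U_C=\prod_{k=1}^n Z_k^{G_{1k}}=:Z(G_1)$, where $G_1$ denotes the first row of $G$, I obtain the single clean identity
\begin{align}
Z(G_1)\, U_H\ket{s}=U_H\ket{s}.
\end{align}
The entire lemma will follow from analysing this one equation.

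The next step is to read off per-qubit constraints. Since $U_H\ket{s}=\bigotimes_{k}H^{v_k}\ket{s_k}$ is a product state and $Z(G_1)$ is a tensor product of single-qubit $Z$'s, I would take the overlap of both sides of the identity with $U_H\ket{s}$ (using $U_H^\dagger=U_H$), which factorises over qubits:
\begin{align}
1=\bra{s}U_H\,Z(G_1)\,U_H\ket{s}=\prod_{k=1}^{n}\bra{s_k}H^{v_k}Z_k^{G_{1k}}H^{v_k}\ket{s_k}.
\end{align}
For a qubit $k$ with $v_k=1$ and $G_{1k}=1$ the corresponding factor is $\bra{s_k}HZH\ket{s_k}=\bra{s_k}X\ket{s_k}=0$, which would force the product to vanish; hence $G_{1k}=1$ is impossible when $v_k=1$. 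This already establishes $G_{1p}=1\Rightarrow v_p=0$, which is half of the claim. For $v_k=0$ and $G_{1k}=1$ the factor is $\bra{s_k}Z\ket{s_k}=(-1)^{s_k}$, so the identity collapses to a single parity constraint $\prod_{p:\,G_{1p}=1}(-1)^{s_p}=1$, i.e. $\sum_{p:\,G_{1p}=1}s_p\equiv 0\pmod 2$.

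The final step — and the place where the hypothesis of the lemma is essential — is to upgrade this parity constraint into the pointwise statement $s_p=0$ for every $p$ with $G_{1p}=1$. By the previous paragraph every such $p$ satisfies $v_p=0$, so any $p$ with $G_{1p}=1$ and $s_p=1$ would be an index with $v_p=0$ and $s_p=1$. By assumption there is at most one such index $j$ in the whole register; hence among the indices $p$ with $G_{1p}=1$ there is at most one with $s_p=1$. If there were exactly one, the parity sum would be odd, contradicting the constraint just derived; therefore there are none, so $s_p=0$ whenever $G_{1p}=1$. Combined with $v_p=0$, this yields $s_p=v_p=0$ for all such $p$, as required.

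I expect the main obstacle to be the tensor-factorisation bookkeeping in the middle step: one must argue carefully that equality of the two product states (equivalently, that the factorised overlap equals $1$) rules out any $v_k=1$, $G_{1k}=1$ factor and leaves exactly a single global parity among the $v_k=0$ factors, with no possibility of a flipped $\ket{\pm}$ on one qubit being ``cancelled'' by sign changes elsewhere. Once this is pinned down, the hypothesis limiting the number of $(v_j=0,\ s_j=1)$ indices is precisely what converts the global parity into the desired pointwise vanishing of $s_p$.
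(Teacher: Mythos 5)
Your proof is correct, and it takes a genuinely different route from the paper's. The paper works at the level of amplitudes: it evaluates $\bra{x}\left(\ket{0}\otimes\ket{\sigma}\right)$ for computational-basis vectors built from columns of $G$ (namely $x = e_p G^T$ and $x=(e_p+e_k)G^T$), uses the CH-form inner-product formula together with $G^T F = I$ to force $v_p=0$, and splits into two cases according to whether an index $k$ with $v_k=0,\ s_k=1$ exists (in which case it first shows $G_{1k}=0$ and then handles $p\neq k$). You instead work at the operator level: the observation that $Z_1$ stabilises $\ket{0}\otimes\ket{\sigma}$, combined with the defining relation $U_C^\dagger Z_1 U_C = \prod_k Z_k^{G_{1k}}$, gives the single identity $Z(G_1)\,U_H\ket{s}=U_H\ket{s}$, whose expectation value in the product state $U_H\ket{s}$ factorises qubit-by-qubit. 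The factor $\bra{s_k}HZH\ket{s_k}=\bra{s_k}X\ket{s_k}=0$ kills any $v_k=1,\ G_{1k}=1$ term (so $G_{1p}=1\Rightarrow v_p=0$), and the surviving factors give the parity constraint $\sum_{p:G_{1p}=1}s_p\equiv 0 \pmod 2$, which the hypothesis (at most one index with $v_j=0,\ s_j=1$) upgrades to $s_p=0$ pointwise. Your approach avoids both the case split and the prefactor bookkeeping of the inner-product formula, and it isolates exactly where the hypothesis is used; the paper's approach stays within the amplitude-evaluation toolkit it deploys throughout that appendix and extracts along the way the explicit fact $G_{1k}=0$ for the exceptional index $k$ (which your parity argument also implies, since $s_k=1$). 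The concern you flag about sign cancellations between tensor factors is in fact fully resolved by your own overlap argument: orthogonality gives a hard zero in the $v_k=1$ case, and all residual signs are collected into the single parity constraint, so there is no gap.
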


\begin{proof}
	First assume there is no $j$ such that $v_j = 0$ while $s_j = 1$. Let $p$ be an index such that $G_{1p} = 1$ and let $x = e_p G^T$, where $e_p$ is the vector which has $1$ in the $p^\text{th}$ entry and $0$ in all other entries. We consider the inner product of $\bra{x}(\ket{0}\otimes\ket{\sigma})$. Since $G_{1p} = 1$, we have that $x_0 = 1$, so the inner product equals $0$. From Eq.~\eqref{eq:bravyi-inner-product} we read
	\begin{align}
	0 &= \prod_{j:~v_j = 0} \braket{(x F)_j }{0}=\prod_{j:~v_j = 0} \braket{(e_p G^T F)_j }{0}= \prod_{j:~v_j = 0} \braket{(e_p)_j }{0}\implies v_p = 0,
	\end{align}
	since $G^T F$ is the identity matrix and $(e_p)_j = \delta_{pj}$.
	
	Now assume there exists a single index $k$ such that $v_k=0$, $s_k = 1$. Consider the inner product 
	\begin{align}
	\bra{e_k G^T}(\ket{0}\otimes\ket{\sigma}) &= a \prod_{j:~v_j=0} \braket{(e_k G^T F)_j}{s_j}= a \prod_{j:~v_j=0} \braket{(e_k)_j}{s_j}= a \prod_{j\neq k :~v_j=0} \braket{(e_p)_j}{0} \cdot \braket{(e_k)_k}{1}= a,
	\end{align}
	where $a\neq 0$ is a constant given by Eq.~\eqref{eq:bravyi-inner-product}. Since $a\neq 0$ we have $\bra{e_k G^T}\ket{0}\ket{\sigma} \neq 0$, which implies $(e_k G^T)_1 = 0$, and therefore $G_{1k} = 0$.
	
	Finally, for a $p\neq k$ such that $G_{1p} = 1$ consider $x = (e_p + e_k)G^T$. Since $G_{1k} = 0$ and $G_{1p} = 1$, we have $x_1 = 1$ and hence
	
	\begin{align}
	0 &= \bra{x}(\ket{0}\otimes\ket{\sigma})= \prod_{j:~v_j=0} \braket{((e_k + e_p)G^T F)_j}{s_j}= \prod_{j:~v_j=0} \braket{(e_k + e_p)_j}{s_j}\nonumber\\
	&= \prod_{j\neq k :~v_j=0} \braket{(e_p + e_k)_j}{0} \cdot \braket{(e_p + e_k)_k}{1}= \prod_{j\neq k :~v_j=0} \braket{(e_p)_j}{0}\implies v_p = 0.
	\end{align}
\end{proof}

We now have all the ingredients to present the last part of the proof. We first ensure that $G_{11} = 1$. If this not the case, we choose a $q$ such that $G_{1q} = 1$ (such a $q$ exists since $G$ is invertible) and insert a pair of swap gates using the identity
\begin{align}
\omega U_C U_H \ket{s} = \omega U_C \SWAPgate{1}{q}\SWAPgate{1}{q} U_H \ket{s},
\end{align}
multiply the left hand $SWAP$ onto $U_C$ (where it swaps the first and $q^\text{th}$ column of $G$), and apply the right hand one to $U_H\ket{s}$ (where it swaps the first bit of $v$ with the $q^\text{th}$ bit of $v$ and the first bit of $s$ with the $q^\text{th}$ bit of $s$).

The formula $G^T F = I$ implies that (the sums below are $\text{mod~} 2$)
\begin{align}
	\sum_{p} G_{1p}F_{:,p} &= \sum_{p: G_{1p} =1} F_{:,p} = e_1^T,
\end{align}
since $e_1^T$ is the first column of the identity matrix. We now consider all the indices $p > 1$ such that $G_{1p} = 1$., Lemma~\ref{lem:first-row-of-g-is-nice} implies that for such a $p$ the equation
\begin{align}
	\omega U_C U_H \ket{s} = \omega U_C \CXgate{p}{1} U_H \ket{s}
\end{align}
holds, since $v_p = s_p = 0$ implies the $p^\text{th}$ qubit of $U_H \ket{s}$ is in the state $\ket{0}$. Right-multiplying this $\CXgate{p}{1}$ onto $U_C$ causes the $p^\text{th}$ column of the $F$ matrix to be added onto the $1^\text{st}$ column. The right-multiplication does not alter the first row of the $G$ matrix. We therefore have the identity
\begin{align}
\omega U_C U_H \ket{s} = \omega U_C\prod_{p:G_{1p} = 1} \CXgate{p}{1} U_H \ket{s},
\end{align}
resulting in the update to the $F$ matrix
\begin{align}
	F_{:,1} \gets \bigoplus_{p:G_{1p}=1} F_{:,p}   = e_1^T.
\end{align}

\section{Details of hidden-shift circuit data}
\label{sec:supp-hidden-shift-details}
Looking more closely in to the hidden-shift data we notice an interesting fact. One can take each hidden-shift instance and add up the $40$ compressed $T$-counts we get from the single qubit measurements on the $40$ qubits. We performed this calculation for the $10,000$ hidden-shift instances with $8$ CCZ gates summarised in Fig.~\ref{fig:compress-hidden-shift-data} of the main text, and for a subsequent $10,000$ instances with $16$ CCZ gates. We observe that for every one of the $8$ CCZ gate instances the sum of the compressed $T$-counts is $96$, similarly for every one of the $16$ CCZ instances the sum of the compressed $T$-counts is $192$. We do not provide the $800,000$ compressed $T$-counts here, but summarised data is provided in table~\ref{tab:supplemental-material-hidden-shift-data}. Note that for each class of circuits listed in table~\ref{tab:supplemental-material-hidden-shift-data} the total compressed $T$-count is $10,000\times 12\times\text{ the number of CCZ gates}$.

Despite the total compressed $T$-count being the same for every hidden-shift instance the individual compressed $T$-counts are distributed differently for each instance. The first $20$ measurements for each instance are deterministically solved by $\qcompress$ so have compressed $T$-counts of $0$. Focusing on the $8$ CCZ gate data the remaining $20$ qubits have $96$ compressed $T$-gates distributed between them. Since the run-time of our algorithms is exponential in the compressed $T$-count we observe the fastest performance when these $96$ are distributed as evenly as possible. We observe examples where up to $32$ of the $T$ gates are allocated to a single qubit. 

\begin{table}[h]
    \centering
    \newtext{
    \begin{tabular}{c|cc|cc}
         & \multicolumn{2}{c|}{8 CCZ gates} & \multicolumn{2}{c}{16 CCZ gates}\\
         Compressed T count & frequency & \hspace{0.5cm} frequency$\times$count  & frequency  & \hspace{0.5cm} frequency$\times$count \\\hline
         0 & 304,307 &0& 254,420 &0\\
         8 & 73,895&591,160& 77,236 &617,888\\
         16 &19,391 &310,256& 47,187&754,992\\
         24 & 2,305&55,320&16,834 &404,016\\
         32 & 102&3,264& 3770&120,640\\
         40 & 0&0& 511&20,440\\
         48 & 0&0& 41&1,968\\
         56 & 0&0& 1&56\\
         total & 400,000 &960,000 & 400,000 &1,920,000
    \end{tabular}}
    \newtext{\caption{For each listed CCZ count we randomly generate 10,000 hidden-shift circuits, for each of these we run the $\qcompress$ algorithm 40 times, each corresponding to a measurement on one of the 40 qubits in the circuit and report the compressed $T$-counts we observe. The $8$ CCZ count data is plotted in figure~\ref{fig:compress-hidden-shift-data} in the main text.}}
    \label{tab:supplemental-material-hidden-shift-data}
\end{table}

\bibliography{bibRef}

\begin{thebibliography}{49}%
\makeatletter
\providecommand \@ifxundefined [1]{%
 \@ifx{#1\undefined}
}%
\providecommand \@ifnum [1]{%
 \ifnum #1\expandafter \@firstoftwo
 \else \expandafter \@secondoftwo
 \fi
}%
\providecommand \@ifx [1]{%
 \ifx #1\expandafter \@firstoftwo
 \else \expandafter \@secondoftwo
 \fi
}%
\providecommand \natexlab [1]{#1}%
\providecommand \enquote  [1]{``#1''}%
\providecommand \bibnamefont  [1]{#1}%
\providecommand \bibfnamefont [1]{#1}%
\providecommand \citenamefont [1]{#1}%
\providecommand \href@noop [0]{\@secondoftwo}%
\providecommand \href [0]{\begingroup \@sanitize@url \@href}%
\providecommand \@href[1]{\@@startlink{#1}\@@href}%
\providecommand \@@href[1]{\endgroup#1\@@endlink}%
\providecommand \@sanitize@url [0]{\catcode `\\12\catcode `\$12\catcode
  `\&12\catcode `\#12\catcode `\^12\catcode `\_12\catcode `\%12\relax}%
\providecommand \@@startlink[1]{}%
\providecommand \@@endlink[0]{}%
\providecommand \url  [0]{\begingroup\@sanitize@url \@url }%
\providecommand \@url [1]{\endgroup\@href {#1}{\urlprefix }}%
\providecommand \urlprefix  [0]{URL }%
\providecommand \Eprint [0]{\href }%
\providecommand \doibase [0]{http://dx.doi.org/}%
\providecommand \selectlanguage [0]{\@gobble}%
\providecommand \bibinfo  [0]{\@secondoftwo}%
\providecommand \bibfield  [0]{\@secondoftwo}%
\providecommand \translation [1]{[#1]}%
\providecommand \BibitemOpen [0]{}%
\providecommand \bibitemStop [0]{}%
\providecommand \bibitemNoStop [0]{.\EOS\space}%
\providecommand \EOS [0]{\spacefactor3000\relax}%
\providecommand \BibitemShut  [1]{\csname bibitem#1\endcsname}%
\let\auto@bib@innerbib\@empty
\bibitem [{\citenamefont {Preskill}(2018)}]{preskill2018quantum}%
  \BibitemOpen
  \bibfield  {author} {\bibinfo {author} {\bibfnamefont {John}\ \bibnamefont
  {Preskill}},\ }\bibfield  {title} {\enquote {\bibinfo {title} {Quantum
  computing in the {NISQ} era and beyond},}\ }\href {\doibase
  10.22331/q-2018-08-06-79} {\bibfield  {journal} {\bibinfo  {journal}
  {Quantum}\ }\textbf {\bibinfo {volume} {2}},\ \bibinfo {pages} {79} (\bibinfo
  {year} {2018})}\BibitemShut {NoStop}%
\bibitem [{\citenamefont {Harrow}\ and\ \citenamefont
  {Montanaro}(2017)}]{harrow2017quantum}%
  \BibitemOpen
  \bibfield  {author} {\bibinfo {author} {\bibfnamefont {Aram~W}\ \bibnamefont
  {Harrow}}\ and\ \bibinfo {author} {\bibfnamefont {Ashley}\ \bibnamefont
  {Montanaro}},\ }\bibfield  {title} {\enquote {\bibinfo {title} {Quantum
  computational supremacy},}\ }\href {\doibase 10.1038/nature23458} {\bibfield
  {journal} {\bibinfo  {journal} {Nature}\ }\textbf {\bibinfo {volume} {549}},\
  \bibinfo {pages} {203} (\bibinfo {year} {2017})}\BibitemShut {NoStop}%
\bibitem [{\citenamefont {Arute}\ \emph {et~al.}(2019)\citenamefont {Arute},
  \citenamefont {Arya}, \citenamefont {Babbush}, \citenamefont {Bacon},
  \citenamefont {Bardin}, \citenamefont {Barends}, \citenamefont {Biswas},
  \citenamefont {Boixo}, \citenamefont {Brandao}, \citenamefont {Buell},
  \citenamefont {Burkett}, \citenamefont {Chen}, \citenamefont {Chen},
  \citenamefont {Chiaro}, \citenamefont {Collins}, \citenamefont {Courtney},
  \citenamefont {Dunsworth}, \citenamefont {Farhi}, \citenamefont {Foxen},
  \citenamefont {Fowler}, \citenamefont {Gidney}, \citenamefont {Giustina},
  \citenamefont {Graff}, \citenamefont {Guerin}, \citenamefont {Habegger},
  \citenamefont {Harrigan}, \citenamefont {Hartmann}, \citenamefont {Ho},
  \citenamefont {Hoffmann}, \citenamefont {Huang}, \citenamefont {Humble},
  \citenamefont {Isakov}, \citenamefont {Jeffrey}, \citenamefont {Jiang},
  \citenamefont {Kafri}, \citenamefont {Kechedzhi}, \citenamefont {Kelly},
  \citenamefont {Klimov}, \citenamefont {Knysh}, \citenamefont {Korotkov},
  \citenamefont {Kostritsa}, \citenamefont {Landhuis}, \citenamefont
  {Lindmark}, \citenamefont {Lucero}, \citenamefont {Lyakh}, \citenamefont
  {Mandr{\`a}}, \citenamefont {McClean}, \citenamefont {McEwen}, \citenamefont
  {Megrant}, \citenamefont {Mi}, \citenamefont {Michielsen}, \citenamefont
  {Mohseni}, \citenamefont {Mutus}, \citenamefont {Naaman}, \citenamefont
  {Neeley}, \citenamefont {Neill}, \citenamefont {Niu}, \citenamefont {Ostby},
  \citenamefont {Petukhov}, \citenamefont {Platt}, \citenamefont {Quintana},
  \citenamefont {Rieffel}, \citenamefont {Roushan}, \citenamefont {Rubin},
  \citenamefont {Sank}, \citenamefont {Satzinger}, \citenamefont {Smelyanskiy},
  \citenamefont {Sung}, \citenamefont {Trevithick}, \citenamefont
  {Vainsencher}, \citenamefont {Villalonga}, \citenamefont {White},
  \citenamefont {Yao}, \citenamefont {Yeh}, \citenamefont {Zalcman},
  \citenamefont {Neven},\ and\ \citenamefont {Martinis}}]{google2019supremacy}%
  \BibitemOpen
  \bibfield  {author} {\bibinfo {author} {\bibfnamefont {Frank}\ \bibnamefont
  {Arute}}, \bibinfo {author} {\bibfnamefont {Kunal}\ \bibnamefont {Arya}},
  \bibinfo {author} {\bibfnamefont {Ryan}\ \bibnamefont {Babbush}}, \bibinfo
  {author} {\bibfnamefont {Dave}\ \bibnamefont {Bacon}}, \bibinfo {author}
  {\bibfnamefont {Joseph~C.}\ \bibnamefont {Bardin}}, \bibinfo {author}
  {\bibfnamefont {Rami}\ \bibnamefont {Barends}}, \bibinfo {author}
  {\bibfnamefont {Rupak}\ \bibnamefont {Biswas}}, \bibinfo {author}
  {\bibfnamefont {Sergio}\ \bibnamefont {Boixo}}, \bibinfo {author}
  {\bibfnamefont {Fernando G. S.~L.}\ \bibnamefont {Brandao}}, \bibinfo
  {author} {\bibfnamefont {David~A.}\ \bibnamefont {Buell}}, \bibinfo {author}
  {\bibfnamefont {Brian}\ \bibnamefont {Burkett}}, \bibinfo {author}
  {\bibfnamefont {Yu}~\bibnamefont {Chen}}, \bibinfo {author} {\bibfnamefont
  {Zijun}\ \bibnamefont {Chen}}, \bibinfo {author} {\bibfnamefont {Ben}\
  \bibnamefont {Chiaro}}, \bibinfo {author} {\bibfnamefont {Roberto}\
  \bibnamefont {Collins}}, \bibinfo {author} {\bibfnamefont {William}\
  \bibnamefont {Courtney}}, \bibinfo {author} {\bibfnamefont {Andrew}\
  \bibnamefont {Dunsworth}}, \bibinfo {author} {\bibfnamefont {Edward}\
  \bibnamefont {Farhi}}, \bibinfo {author} {\bibfnamefont {Brooks}\
  \bibnamefont {Foxen}}, \bibinfo {author} {\bibfnamefont {Austin}\
  \bibnamefont {Fowler}}, \bibinfo {author} {\bibfnamefont {Craig}\
  \bibnamefont {Gidney}}, \bibinfo {author} {\bibfnamefont {Marissa}\
  \bibnamefont {Giustina}}, \bibinfo {author} {\bibfnamefont {Rob}\
  \bibnamefont {Graff}}, \bibinfo {author} {\bibfnamefont {Keith}\ \bibnamefont
  {Guerin}}, \bibinfo {author} {\bibfnamefont {Steve}\ \bibnamefont
  {Habegger}}, \bibinfo {author} {\bibfnamefont {Matthew~P.}\ \bibnamefont
  {Harrigan}}, \bibinfo {author} {\bibfnamefont {Michael~J.}\ \bibnamefont
  {Hartmann}}, \bibinfo {author} {\bibfnamefont {Alan}\ \bibnamefont {Ho}},
  \bibinfo {author} {\bibfnamefont {Markus}\ \bibnamefont {Hoffmann}}, \bibinfo
  {author} {\bibfnamefont {Trent}\ \bibnamefont {Huang}}, \bibinfo {author}
  {\bibfnamefont {Travis~S.}\ \bibnamefont {Humble}}, \bibinfo {author}
  {\bibfnamefont {Sergei~V.}\ \bibnamefont {Isakov}}, \bibinfo {author}
  {\bibfnamefont {Evan}\ \bibnamefont {Jeffrey}}, \bibinfo {author}
  {\bibfnamefont {Zhang}\ \bibnamefont {Jiang}}, \bibinfo {author}
  {\bibfnamefont {Dvir}\ \bibnamefont {Kafri}}, \bibinfo {author}
  {\bibfnamefont {Kostyantyn}\ \bibnamefont {Kechedzhi}}, \bibinfo {author}
  {\bibfnamefont {Julian}\ \bibnamefont {Kelly}}, \bibinfo {author}
  {\bibfnamefont {Paul~V.}\ \bibnamefont {Klimov}}, \bibinfo {author}
  {\bibfnamefont {Sergey}\ \bibnamefont {Knysh}}, \bibinfo {author}
  {\bibfnamefont {Alexander}\ \bibnamefont {Korotkov}}, \bibinfo {author}
  {\bibfnamefont {Fedor}\ \bibnamefont {Kostritsa}}, \bibinfo {author}
  {\bibfnamefont {David}\ \bibnamefont {Landhuis}}, \bibinfo {author}
  {\bibfnamefont {Mike}\ \bibnamefont {Lindmark}}, \bibinfo {author}
  {\bibfnamefont {Erik}\ \bibnamefont {Lucero}}, \bibinfo {author}
  {\bibfnamefont {Dmitry}\ \bibnamefont {Lyakh}}, \bibinfo {author}
  {\bibfnamefont {Salvatore}\ \bibnamefont {Mandr{\`a}}}, \bibinfo {author}
  {\bibfnamefont {Jarrod~R.}\ \bibnamefont {McClean}}, \bibinfo {author}
  {\bibfnamefont {Matthew}\ \bibnamefont {McEwen}}, \bibinfo {author}
  {\bibfnamefont {Anthony}\ \bibnamefont {Megrant}}, \bibinfo {author}
  {\bibfnamefont {Xiao}\ \bibnamefont {Mi}}, \bibinfo {author} {\bibfnamefont
  {Kristel}\ \bibnamefont {Michielsen}}, \bibinfo {author} {\bibfnamefont
  {Masoud}\ \bibnamefont {Mohseni}}, \bibinfo {author} {\bibfnamefont {Josh}\
  \bibnamefont {Mutus}}, \bibinfo {author} {\bibfnamefont {Ofer}\ \bibnamefont
  {Naaman}}, \bibinfo {author} {\bibfnamefont {Matthew}\ \bibnamefont
  {Neeley}}, \bibinfo {author} {\bibfnamefont {Charles}\ \bibnamefont {Neill}},
  \bibinfo {author} {\bibfnamefont {Murphy~Yuezhen}\ \bibnamefont {Niu}},
  \bibinfo {author} {\bibfnamefont {Eric}\ \bibnamefont {Ostby}}, \bibinfo
  {author} {\bibfnamefont {Andre}\ \bibnamefont {Petukhov}}, \bibinfo {author}
  {\bibfnamefont {John~C.}\ \bibnamefont {Platt}}, \bibinfo {author}
  {\bibfnamefont {Chris}\ \bibnamefont {Quintana}}, \bibinfo {author}
  {\bibfnamefont {Eleanor~G.}\ \bibnamefont {Rieffel}}, \bibinfo {author}
  {\bibfnamefont {Pedram}\ \bibnamefont {Roushan}}, \bibinfo {author}
  {\bibfnamefont {Nicholas~C.}\ \bibnamefont {Rubin}}, \bibinfo {author}
  {\bibfnamefont {Daniel}\ \bibnamefont {Sank}}, \bibinfo {author}
  {\bibfnamefont {Kevin~J.}\ \bibnamefont {Satzinger}}, \bibinfo {author}
  {\bibfnamefont {Vadim}\ \bibnamefont {Smelyanskiy}}, \bibinfo {author}
  {\bibfnamefont {Kevin~J.}\ \bibnamefont {Sung}}, \bibinfo {author}
  {\bibfnamefont {Matthew~D.}\ \bibnamefont {Trevithick}}, \bibinfo {author}
  {\bibfnamefont {Amit}\ \bibnamefont {Vainsencher}}, \bibinfo {author}
  {\bibfnamefont {Benjamin}\ \bibnamefont {Villalonga}}, \bibinfo {author}
  {\bibfnamefont {Theodore}\ \bibnamefont {White}}, \bibinfo {author}
  {\bibfnamefont {Z.~Jamie}\ \bibnamefont {Yao}}, \bibinfo {author}
  {\bibfnamefont {Ping}\ \bibnamefont {Yeh}}, \bibinfo {author} {\bibfnamefont
  {Adam}\ \bibnamefont {Zalcman}}, \bibinfo {author} {\bibfnamefont {Hartmut}\
  \bibnamefont {Neven}}, \ and\ \bibinfo {author} {\bibfnamefont {John~M.}\
  \bibnamefont {Martinis}},\ }\bibfield  {title} {\enquote {\bibinfo {title}
  {Quantum supremacy using a programmable superconducting processor},}\ }\href
  {\doibase 10.1038/s41586-019-1666-5} {\bibfield  {journal} {\bibinfo
  {journal} {Nature}\ }\textbf {\bibinfo {volume} {574}},\ \bibinfo {pages}
  {505--510} (\bibinfo {year} {2019})}\BibitemShut {NoStop}%
\bibitem [{\citenamefont {Zhong}\ \emph {et~al.}(2020)\citenamefont {Zhong}
  \emph {et~al.}}]{zhong2020quantum}%
  \BibitemOpen
  \bibfield  {author} {\bibinfo {author} {\bibfnamefont {Han-Sen}\ \bibnamefont
  {Zhong}} \emph {et~al.},\ }\bibfield  {title} {\enquote {\bibinfo {title}
  {Quantum computational advantage using photons},}\ }\href {\doibase
  10.1126/science.abe8770} {\bibfield  {journal} {\bibinfo  {journal}
  {Science}\ }\textbf {\bibinfo {volume} {370}},\ \bibinfo {pages} {1460--1463}
  (\bibinfo {year} {2020})}\BibitemShut {NoStop}%
\bibitem [{\citenamefont {Havl{\'i}{\vv{c}}ek}\ \emph
  {et~al.}(2019)\citenamefont {Havl{\'i}{\vv{c}}ek}, \citenamefont
  {C{\'o}rcoles}, \citenamefont {Temme}, \citenamefont {Harrow}, \citenamefont
  {Kandala}, \citenamefont {Chow},\ and\ \citenamefont
  {Gambetta}}]{Havlicek2019}%
  \BibitemOpen
  \bibfield  {author} {\bibinfo {author} {\bibfnamefont {Vojt{\vv{e}}ch}\
  \bibnamefont {Havl{\'i}{\vv{c}}ek}}, \bibinfo {author} {\bibfnamefont
  {Antonio~D.}\ \bibnamefont {C{\'o}rcoles}}, \bibinfo {author} {\bibfnamefont
  {Kristan}\ \bibnamefont {Temme}}, \bibinfo {author} {\bibfnamefont {Aram~W.}\
  \bibnamefont {Harrow}}, \bibinfo {author} {\bibfnamefont {Abhinav}\
  \bibnamefont {Kandala}}, \bibinfo {author} {\bibfnamefont {Jerry~M.}\
  \bibnamefont {Chow}}, \ and\ \bibinfo {author} {\bibfnamefont {Jay~M.}\
  \bibnamefont {Gambetta}},\ }\bibfield  {title} {\enquote {\bibinfo {title}
  {Supervised learning with quantum-enhanced feature spaces},}\ }\href
  {\doibase 10.1038/s41586-019-0980-2} {\bibfield  {journal} {\bibinfo
  {journal} {Nature}\ }\textbf {\bibinfo {volume} {567}},\ \bibinfo {pages}
  {209--212} (\bibinfo {year} {2019})}\BibitemShut {NoStop}%
\bibitem [{\citenamefont {Gottesman}(1998)}]{gottesman1998heisenberg}%
  \BibitemOpen
  \bibfield  {author} {\bibinfo {author} {\bibfnamefont {Daniel}\ \bibnamefont
  {Gottesman}},\ }\bibfield  {title} {\enquote {\bibinfo {title} {The
  {H}eisenberg representation of quantum computers},}\ }\href {\doibase
  10.48550/arXiv.quant-ph/9807006} {\bibfield  {journal} {\bibinfo  {journal}
  {22nd International Colloquium on Group Theoretical Methods in Physics}\ ,\
  \bibinfo {pages} {32--43}} (\bibinfo {year} {1998})}\BibitemShut {NoStop}%
\bibitem [{\citenamefont {Bravyi}\ and\ \citenamefont
  {Gosset}(2016)}]{bravyi2016improved}%
  \BibitemOpen
  \bibfield  {author} {\bibinfo {author} {\bibfnamefont {Sergey}\ \bibnamefont
  {Bravyi}}\ and\ \bibinfo {author} {\bibfnamefont {David}\ \bibnamefont
  {Gosset}},\ }\bibfield  {title} {\enquote {\bibinfo {title} {Improved
  classical simulation of quantum circuits dominated by {C}lifford gates},}\
  }\href {\doibase 10.1103/PhysRevLett.116.250501} {\bibfield  {journal}
  {\bibinfo  {journal} {Phys. Rev. Lett.}\ }\textbf {\bibinfo {volume} {116}},\
  \bibinfo {pages} {250501} (\bibinfo {year} {2016})}\BibitemShut {NoStop}%
\bibitem [{\citenamefont {Bravyi}\ \emph {et~al.}(2019)\citenamefont {Bravyi},
  \citenamefont {Browne}, \citenamefont {Calpin}, \citenamefont {Campbell},
  \citenamefont {Gosset},\ and\ \citenamefont {Howard}}]{bravyi2019simulation}%
  \BibitemOpen
  \bibfield  {author} {\bibinfo {author} {\bibfnamefont {Sergey}\ \bibnamefont
  {Bravyi}}, \bibinfo {author} {\bibfnamefont {Dan}\ \bibnamefont {Browne}},
  \bibinfo {author} {\bibfnamefont {Padraic}\ \bibnamefont {Calpin}}, \bibinfo
  {author} {\bibfnamefont {Earl}\ \bibnamefont {Campbell}}, \bibinfo {author}
  {\bibfnamefont {David}\ \bibnamefont {Gosset}}, \ and\ \bibinfo {author}
  {\bibfnamefont {Mark}\ \bibnamefont {Howard}},\ }\bibfield  {title} {\enquote
  {\bibinfo {title} {Simulation of quantum circuits by low-rank stabilizer
  decompositions},}\ }\href {\doibase 10.22331/q-2019-09-02-181} {\bibfield
  {journal} {\bibinfo  {journal} {Quantum}\ }\textbf {\bibinfo {volume} {3}},\
  \bibinfo {pages} {181} (\bibinfo {year} {2019})}\BibitemShut {NoStop}%
\bibitem [{\citenamefont {Seddon}\ \emph {et~al.}(2021)\citenamefont {Seddon},
  \citenamefont {Regula}, \citenamefont {Pashayan}, \citenamefont {Ouyang},\
  and\ \citenamefont {Campbell}}]{seddon2020quantifying}%
  \BibitemOpen
  \bibfield  {author} {\bibinfo {author} {\bibfnamefont {James~R}\ \bibnamefont
  {Seddon}}, \bibinfo {author} {\bibfnamefont {Bartosz}\ \bibnamefont
  {Regula}}, \bibinfo {author} {\bibfnamefont {Hakop}\ \bibnamefont
  {Pashayan}}, \bibinfo {author} {\bibfnamefont {Yingkai}\ \bibnamefont
  {Ouyang}}, \ and\ \bibinfo {author} {\bibfnamefont {Earl~T}\ \bibnamefont
  {Campbell}},\ }\bibfield  {title} {\enquote {\bibinfo {title} {Quantifying
  quantum speedups: improved classical simulation from tighter magic
  monotones},}\ }\href
  {https://journals.aps.org/prxquantum/abstract/10.1103/PRXQuantum.2.010345}
  {\bibfield  {journal} {\bibinfo  {journal} {PRX Quantum}\ }\textbf {\bibinfo
  {volume} {2}},\ \bibinfo {pages} {010345} (\bibinfo {year}
  {2021})}\BibitemShut {NoStop}%
\bibitem [{\citenamefont {Jozsa}\ and\ \citenamefont
  {Nest}(2013)}]{jozsa2013classical}%
  \BibitemOpen
  \bibfield  {author} {\bibinfo {author} {\bibfnamefont {Richard}\ \bibnamefont
  {Jozsa}}\ and\ \bibinfo {author} {\bibfnamefont {Maarten Van~den}\
  \bibnamefont {Nest}},\ }\bibfield  {title} {\enquote {\bibinfo {title}
  {Classical simulation complexity of extended {C}lifford circuits},}\ }\href
  {https://doi.org/10.48550/arXiv.1305.6190} {\bibfield  {journal} {\bibinfo
  {journal} {arXiv:1305.6190}\ } (\bibinfo {year} {2013})}\BibitemShut
  {NoStop}%
\bibitem [{\citenamefont {Pashayan}\ \emph {et~al.}(2020)\citenamefont
  {Pashayan}, \citenamefont {Bartlett},\ and\ \citenamefont
  {Gross}}]{pashayan2017sampling}%
  \BibitemOpen
  \bibfield  {author} {\bibinfo {author} {\bibfnamefont {Hakop}\ \bibnamefont
  {Pashayan}}, \bibinfo {author} {\bibfnamefont {Stephen~D}\ \bibnamefont
  {Bartlett}}, \ and\ \bibinfo {author} {\bibfnamefont {David}\ \bibnamefont
  {Gross}},\ }\bibfield  {title} {\enquote {\bibinfo {title} {From estimation
  of quantum probabilities to simulation of quantum circuits},}\ }\href
  {\doibase 10.22331/q-2020-01-13-223} {\bibfield  {journal} {\bibinfo
  {journal} {Quantum}\ }\textbf {\bibinfo {volume} {4}},\ \bibinfo {pages}
  {223} (\bibinfo {year} {2020})}\BibitemShut {NoStop}%
\bibitem [{\citenamefont {Valiant}(1979)}]{valiant1979sharpp}%
  \BibitemOpen
  \bibfield  {author} {\bibinfo {author} {\bibfnamefont {L.G.}\ \bibnamefont
  {Valiant}},\ }\bibfield  {title} {\enquote {\bibinfo {title} {The complexity
  of computing the permanent},}\ }\href {\doibase
  https://doi.org/10.1016/0304-3975(79)90044-6} {\bibfield  {journal} {\bibinfo
   {journal} {Theoretical Computer Science}\ }\textbf {\bibinfo {volume} {8}},\
  \bibinfo {pages} {189--201} (\bibinfo {year} {1979})}\BibitemShut {NoStop}%
\bibitem [{\citenamefont {Goldberg}\ and\ \citenamefont
  {Guo}(2017)}]{goldberg2017complexity}%
  \BibitemOpen
  \bibfield  {author} {\bibinfo {author} {\bibfnamefont {Leslie~Ann}\
  \bibnamefont {Goldberg}}\ and\ \bibinfo {author} {\bibfnamefont {Heng}\
  \bibnamefont {Guo}},\ }\bibfield  {title} {\enquote {\bibinfo {title} {The
  complexity of approximating complex-valued {I}sing and {T}utte partition
  functions},}\ }\href {\doibase https://doi.org/10.1007/s00037-017-0162-2}
  {\bibfield  {journal} {\bibinfo  {journal} {computational complexity}\
  }\textbf {\bibinfo {volume} {26}},\ \bibinfo {pages} {765--833} (\bibinfo
  {year} {2017})}\BibitemShut {NoStop}%
\bibitem [{\citenamefont {Fujii}\ and\ \citenamefont
  {Morimae}(2017)}]{Fujii2017commuting}%
  \BibitemOpen
  \bibfield  {author} {\bibinfo {author} {\bibfnamefont {Keisuke}\ \bibnamefont
  {Fujii}}\ and\ \bibinfo {author} {\bibfnamefont {Tomoyuki}\ \bibnamefont
  {Morimae}},\ }\bibfield  {title} {\enquote {\bibinfo {title} {Commuting
  quantum circuits and complexity of {I}sing partition functions},}\ }\href
  {\doibase 10.1088/1367-2630/aa5fdb} {\bibfield  {journal} {\bibinfo
  {journal} {New Journal of Physics}\ }\textbf {\bibinfo {volume} {19}},\
  \bibinfo {pages} {033003} (\bibinfo {year} {2017})}\BibitemShut {NoStop}%
\bibitem [{\citenamefont {Aaronson}\ and\ \citenamefont
  {Gottesman}(2004)}]{aaronson2004improved}%
  \BibitemOpen
  \bibfield  {author} {\bibinfo {author} {\bibfnamefont {Scott}\ \bibnamefont
  {Aaronson}}\ and\ \bibinfo {author} {\bibfnamefont {Daniel}\ \bibnamefont
  {Gottesman}},\ }\bibfield  {title} {\enquote {\bibinfo {title} {Improved
  simulation of stabilizer circuits},}\ }\href {\doibase
  10.1103/PhysRevA.70.052328} {\bibfield  {journal} {\bibinfo  {journal} {Phys.
  Rev. A}\ }\textbf {\bibinfo {volume} {70}},\ \bibinfo {pages} {052328}
  (\bibinfo {year} {2004})}\BibitemShut {NoStop}%
\bibitem [{\citenamefont {Valiant}(2002)}]{valiant2002quantum}%
  \BibitemOpen
  \bibfield  {author} {\bibinfo {author} {\bibfnamefont {Leslie~G}\
  \bibnamefont {Valiant}},\ }\bibfield  {title} {\enquote {\bibinfo {title}
  {Quantum circuits that can be simulated classically in polynomial time},}\
  }\href {\doibase 10.1137/S0097539700377025} {\bibfield  {journal} {\bibinfo
  {journal} {SIAM Journal on Computing}\ }\textbf {\bibinfo {volume} {31}},\
  \bibinfo {pages} {1229--1254} (\bibinfo {year} {2002})}\BibitemShut {NoStop}%
\bibitem [{\citenamefont {Terhal}\ and\ \citenamefont
  {DiVincenzo}(2002)}]{terhal2002classical}%
  \BibitemOpen
  \bibfield  {author} {\bibinfo {author} {\bibfnamefont {Barbara~M}\
  \bibnamefont {Terhal}}\ and\ \bibinfo {author} {\bibfnamefont {David~P}\
  \bibnamefont {DiVincenzo}},\ }\bibfield  {title} {\enquote {\bibinfo {title}
  {Classical simulation of noninteracting-fermion quantum circuits},}\ }\href
  {\doibase 10.1103/PhysRevA.65.032325} {\bibfield  {journal} {\bibinfo
  {journal} {Phys. Rev. A}\ }\textbf {\bibinfo {volume} {65}},\ \bibinfo
  {pages} {032325} (\bibinfo {year} {2002})}\BibitemShut {NoStop}%
\bibitem [{\citenamefont {Bernstein}\ and\ \citenamefont
  {Vazirani}(1997)}]{bernstein1997quantum}%
  \BibitemOpen
  \bibfield  {author} {\bibinfo {author} {\bibfnamefont {Ethan}\ \bibnamefont
  {Bernstein}}\ and\ \bibinfo {author} {\bibfnamefont {Umesh}\ \bibnamefont
  {Vazirani}},\ }\bibfield  {title} {\enquote {\bibinfo {title} {Quantum
  complexity theory},}\ }\href {\doibase 10.1137/S0097539796300921} {\bibfield
  {journal} {\bibinfo  {journal} {SIAM J. Comput.}\ }\textbf {\bibinfo {volume}
  {26}},\ \bibinfo {pages} {1411--1473} (\bibinfo {year} {1997})}\BibitemShut
  {NoStop}%
\bibitem [{\citenamefont {Reardon-Smith}(2020)}]{smith2020clifford}%
  \BibitemOpen
  \bibfield  {author} {\bibinfo {author} {\bibfnamefont {Oliver}\ \bibnamefont
  {Reardon-Smith}},\ }\href@noop {} {\emph {\bibinfo {title}
  {Clifford-T-estimator}}} (\bibinfo {year} {2020}),\ \bibinfo {note}
  {\href{https://github.com/or1426/Clifford-T-estimator}{https://github.com/or1426/Clifford-T-estimator}}\BibitemShut
  {NoStop}%
\bibitem [{\citenamefont {{Boykin}}\ \emph {et~al.}(1999)\citenamefont
  {{Boykin}}, \citenamefont {{Mor}}, \citenamefont {{Pulver}}, \citenamefont
  {{Roychowdhury}},\ and\ \citenamefont {{Vatan}}}]{boykin1999}%
  \BibitemOpen
  \bibfield  {author} {\bibinfo {author} {\bibfnamefont {P.~O.}\ \bibnamefont
  {{Boykin}}}, \bibinfo {author} {\bibfnamefont {T.}~\bibnamefont {{Mor}}},
  \bibinfo {author} {\bibfnamefont {M.}~\bibnamefont {{Pulver}}}, \bibinfo
  {author} {\bibfnamefont {V.}~\bibnamefont {{Roychowdhury}}}, \ and\ \bibinfo
  {author} {\bibfnamefont {F.}~\bibnamefont {{Vatan}}},\ }\bibfield  {title}
  {\enquote {\bibinfo {title} {On universal and fault-tolerant quantum
  computing: a novel basis and a new constructive proof of universality for
  {S}hor's basis},}\ }in\ \href {\doibase 10.1109/SFFCS.1999.814621} {\emph
  {\bibinfo {booktitle} {40th Annual Symposium on Foundations of Computer
  Science (Cat. No.99CB37039)}}}\ (\bibinfo {year} {1999})\ pp.\ \bibinfo
  {pages} {486--494}\BibitemShut {NoStop}%
\bibitem [{\citenamefont {Pashayan}(2019)}]{pashayanthesis}%
  \BibitemOpen
  \bibfield  {author} {\bibinfo {author} {\bibfnamefont {Hakop}\ \bibnamefont
  {Pashayan}},\ }\emph {\bibinfo {title} {On the classical simulability of
  quantum circuits}},\ \href
  {https://ses.library.usyd.edu.au/handle/2123/21526} {Ph.D. thesis},\ \bibinfo
   {school} {University of Sydney} (\bibinfo {year} {2019})\BibitemShut
  {NoStop}%
\bibitem [{\citenamefont {Pashayan}\ \emph {et~al.}(2015)\citenamefont
  {Pashayan}, \citenamefont {Wallman},\ and\ \citenamefont
  {Bartlett}}]{pashayan2015estimating}%
  \BibitemOpen
  \bibfield  {author} {\bibinfo {author} {\bibfnamefont {Hakop}\ \bibnamefont
  {Pashayan}}, \bibinfo {author} {\bibfnamefont {Joel~J}\ \bibnamefont
  {Wallman}}, \ and\ \bibinfo {author} {\bibfnamefont {Stephen~D}\ \bibnamefont
  {Bartlett}},\ }\bibfield  {title} {\enquote {\bibinfo {title} {Estimating
  outcome probabilities of quantum circuits using quasiprobabilities},}\ }\href
  {\doibase 10.1103/PhysRevLett.115.070501} {\bibfield  {journal} {\bibinfo
  {journal} {Phys. Rev. Lett.}\ }\textbf {\bibinfo {volume} {115}},\ \bibinfo
  {pages} {070501} (\bibinfo {year} {2015})}\BibitemShut {NoStop}%
\bibitem [{\citenamefont {Fatima}\ and\ \citenamefont
  {Markov}(2020)}]{fatima2020faster}%
  \BibitemOpen
  \bibfield  {author} {\bibinfo {author} {\bibfnamefont {Aneeqa}\ \bibnamefont
  {Fatima}}\ and\ \bibinfo {author} {\bibfnamefont {Igor~L}\ \bibnamefont
  {Markov}},\ }\bibfield  {title} {\enquote {\bibinfo {title} {Faster
  {S}chr\"{o}dinger-style simulation of quantum circuits},}\ }\href
  {https://doi.org/10.48550/arXiv.2008.00216} {\bibfield  {journal} {\bibinfo
  {journal} {arXiv:2008.00216}\ } (\bibinfo {year} {2020})}\BibitemShut
  {NoStop}%
\bibitem [{\citenamefont {De~Raedt}\ \emph {et~al.}(2019)\citenamefont
  {De~Raedt}, \citenamefont {Jin}, \citenamefont {Willsch}, \citenamefont
  {Willsch}, \citenamefont {Yoshioka}, \citenamefont {Ito}, \citenamefont
  {Yuan},\ and\ \citenamefont {Michielsen}}]{de2019massively}%
  \BibitemOpen
  \bibfield  {author} {\bibinfo {author} {\bibfnamefont {Hans}\ \bibnamefont
  {De~Raedt}}, \bibinfo {author} {\bibfnamefont {Fengping}\ \bibnamefont
  {Jin}}, \bibinfo {author} {\bibfnamefont {Dennis}\ \bibnamefont {Willsch}},
  \bibinfo {author} {\bibfnamefont {Madita}\ \bibnamefont {Willsch}}, \bibinfo
  {author} {\bibfnamefont {Naoki}\ \bibnamefont {Yoshioka}}, \bibinfo {author}
  {\bibfnamefont {Nobuyasu}\ \bibnamefont {Ito}}, \bibinfo {author}
  {\bibfnamefont {Shengjun}\ \bibnamefont {Yuan}}, \ and\ \bibinfo {author}
  {\bibfnamefont {Kristel}\ \bibnamefont {Michielsen}},\ }\bibfield  {title}
  {\enquote {\bibinfo {title} {Massively parallel quantum computer simulator,
  eleven years later},}\ }\href {\doibase 10.1016/j.cpc.2018.11.005} {\bibfield
   {journal} {\bibinfo  {journal} {Comput. Phys. Commun.}\ }\textbf {\bibinfo
  {volume} {237}},\ \bibinfo {pages} {47--61} (\bibinfo {year}
  {2019})}\BibitemShut {NoStop}%
\bibitem [{\citenamefont {Markov}\ and\ \citenamefont
  {Shi}(2008)}]{markov2008simulating}%
  \BibitemOpen
  \bibfield  {author} {\bibinfo {author} {\bibfnamefont {I.}~\bibnamefont
  {Markov}}\ and\ \bibinfo {author} {\bibfnamefont {Y.}~\bibnamefont {Shi}},\
  }\bibfield  {title} {\enquote {\bibinfo {title} {Simulating quantum
  computation by contracting tensor networks},}\ }\href {\doibase
  10.1137/050644756} {\bibfield  {journal} {\bibinfo  {journal} {SIAM J.
  Comput.}\ }\textbf {\bibinfo {volume} {38}},\ \bibinfo {pages} {963--981}
  (\bibinfo {year} {2008})}\BibitemShut {NoStop}%
\bibitem [{\citenamefont {De~Raedt}\ \emph {et~al.}(2007)\citenamefont
  {De~Raedt}, \citenamefont {Michielsen}, \citenamefont {De~Raedt},
  \citenamefont {Trieu}, \citenamefont {Arnold}, \citenamefont {Richter},
  \citenamefont {Lippert}, \citenamefont {Watanabe},\ and\ \citenamefont
  {Ito}}]{de2007massively}%
  \BibitemOpen
  \bibfield  {author} {\bibinfo {author} {\bibfnamefont {Koen}\ \bibnamefont
  {De~Raedt}}, \bibinfo {author} {\bibfnamefont {Kristel}\ \bibnamefont
  {Michielsen}}, \bibinfo {author} {\bibfnamefont {Hans}\ \bibnamefont
  {De~Raedt}}, \bibinfo {author} {\bibfnamefont {Binh}\ \bibnamefont {Trieu}},
  \bibinfo {author} {\bibfnamefont {Guido}\ \bibnamefont {Arnold}}, \bibinfo
  {author} {\bibfnamefont {Marcus}\ \bibnamefont {Richter}}, \bibinfo {author}
  {\bibfnamefont {Th}~\bibnamefont {Lippert}}, \bibinfo {author} {\bibfnamefont
  {Hiroshi}\ \bibnamefont {Watanabe}}, \ and\ \bibinfo {author} {\bibfnamefont
  {Nobuyasu}\ \bibnamefont {Ito}},\ }\bibfield  {title} {\enquote {\bibinfo
  {title} {Massively parallel quantum computer simulator},}\ }\href {\doibase
  10.1016/j.cpc.2006.08.007} {\bibfield  {journal} {\bibinfo  {journal}
  {Comput. Phys. Commun.}\ }\textbf {\bibinfo {volume} {176}},\ \bibinfo
  {pages} {121--136} (\bibinfo {year} {2007})}\BibitemShut {NoStop}%
\bibitem [{\citenamefont {Markov}\ \emph {et~al.}(2018)\citenamefont {Markov},
  \citenamefont {Fatima}, \citenamefont {Isakov},\ and\ \citenamefont
  {Boixo}}]{markov2018quantum}%
  \BibitemOpen
  \bibfield  {author} {\bibinfo {author} {\bibfnamefont {Igor~L}\ \bibnamefont
  {Markov}}, \bibinfo {author} {\bibfnamefont {Aneeqa}\ \bibnamefont {Fatima}},
  \bibinfo {author} {\bibfnamefont {Sergei~V}\ \bibnamefont {Isakov}}, \ and\
  \bibinfo {author} {\bibfnamefont {Sergio}\ \bibnamefont {Boixo}},\ }\bibfield
   {title} {\enquote {\bibinfo {title} {Quantum supremacy is both closer and
  farther than it appears},}\ }\href
  {https://doi.org/10.48550/arXiv.1807.10749} {\bibfield  {journal} {\bibinfo
  {journal} {arXiv:1807.10749}\ } (\bibinfo {year} {2018})}\BibitemShut
  {NoStop}%
\bibitem [{\citenamefont {Bartlett}\ \emph {et~al.}(2002)\citenamefont
  {Bartlett}, \citenamefont {Sanders}, \citenamefont {Braunstein},\ and\
  \citenamefont {Nemoto}}]{bartlett2002efficient}%
  \BibitemOpen
  \bibfield  {author} {\bibinfo {author} {\bibfnamefont {Stephen~D}\
  \bibnamefont {Bartlett}}, \bibinfo {author} {\bibfnamefont {Barry~C}\
  \bibnamefont {Sanders}}, \bibinfo {author} {\bibfnamefont {Samuel~L}\
  \bibnamefont {Braunstein}}, \ and\ \bibinfo {author} {\bibfnamefont {Kae}\
  \bibnamefont {Nemoto}},\ }\bibfield  {title} {\enquote {\bibinfo {title}
  {Efficient classical simulation of continuous variable quantum information
  processes},}\ }\href {\doibase 10.1103/PhysRevLett.88.097904} {\bibfield
  {journal} {\bibinfo  {journal} {Phys. Rev. Lett.}\ }\textbf {\bibinfo
  {volume} {88}},\ \bibinfo {pages} {097904} (\bibinfo {year}
  {2002})}\BibitemShut {NoStop}%
\bibitem [{\citenamefont {Jozsa}\ and\ \citenamefont
  {Miyake}(2008)}]{Jozsa2008matchgates}%
  \BibitemOpen
  \bibfield  {author} {\bibinfo {author} {\bibfnamefont {Richard}\ \bibnamefont
  {Jozsa}}\ and\ \bibinfo {author} {\bibfnamefont {Akimasa}\ \bibnamefont
  {Miyake}},\ }\bibfield  {title} {\enquote {\bibinfo {title} {Matchgates and
  classical simulation of quantum circuits},}\ }\href {\doibase
  10.1098/rspa.2008.0189} {\bibfield  {journal} {\bibinfo  {journal} {Proc. R.
  Soc. A}\ }\textbf {\bibinfo {volume} {464}},\ \bibinfo {pages} {3089--3106}
  (\bibinfo {year} {2008})}\BibitemShut {NoStop}%
\bibitem [{\citenamefont {Rall}\ \emph {et~al.}(2019)\citenamefont {Rall},
  \citenamefont {Liang}, \citenamefont {Cook},\ and\ \citenamefont
  {Kretschmer}}]{Rall2019}%
  \BibitemOpen
  \bibfield  {author} {\bibinfo {author} {\bibfnamefont {Patrick}\ \bibnamefont
  {Rall}}, \bibinfo {author} {\bibfnamefont {Daniel}\ \bibnamefont {Liang}},
  \bibinfo {author} {\bibfnamefont {Jeremy}\ \bibnamefont {Cook}}, \ and\
  \bibinfo {author} {\bibfnamefont {William}\ \bibnamefont {Kretschmer}},\
  }\bibfield  {title} {\enquote {\bibinfo {title} {Simulation of qubit quantum
  circuits via {P}auli propagation},}\ }\href {\doibase
  10.1103/PhysRevA.99.062337} {\bibfield  {journal} {\bibinfo  {journal} {Phys.
  Rev. A}\ }\textbf {\bibinfo {volume} {99}},\ \bibinfo {pages} {062337}
  (\bibinfo {year} {2019})}\BibitemShut {NoStop}%
\bibitem [{\citenamefont {Howard}\ and\ \citenamefont
  {Campbell}(2017)}]{howard2017application}%
  \BibitemOpen
  \bibfield  {author} {\bibinfo {author} {\bibfnamefont {Mark}\ \bibnamefont
  {Howard}}\ and\ \bibinfo {author} {\bibfnamefont {Earl}\ \bibnamefont
  {Campbell}},\ }\bibfield  {title} {\enquote {\bibinfo {title} {Application of
  a resource theory for magic states to fault-tolerant quantum computing},}\
  }\href {\doibase 10.1103/PhysRevLett.118.090501} {\bibfield  {journal}
  {\bibinfo  {journal} {Phys. Rev. Lett.}\ }\textbf {\bibinfo {volume} {118}},\
  \bibinfo {pages} {090501} (\bibinfo {year} {2017})}\BibitemShut {NoStop}%
\bibitem [{\citenamefont {Veitch}\ \emph {et~al.}(2012)\citenamefont {Veitch},
  \citenamefont {Ferrie}, \citenamefont {Gross},\ and\ \citenamefont
  {Emerson}}]{veitch2012negative}%
  \BibitemOpen
  \bibfield  {author} {\bibinfo {author} {\bibfnamefont {Victor}\ \bibnamefont
  {Veitch}}, \bibinfo {author} {\bibfnamefont {Christopher}\ \bibnamefont
  {Ferrie}}, \bibinfo {author} {\bibfnamefont {David}\ \bibnamefont {Gross}}, \
  and\ \bibinfo {author} {\bibfnamefont {Joseph}\ \bibnamefont {Emerson}},\
  }\bibfield  {title} {\enquote {\bibinfo {title} {Negative quasi-probability
  as a resource for quantum computation},}\ }\href {\doibase
  10.1088/1367-2630/14/11/113011/meta} {\bibfield  {journal} {\bibinfo
  {journal} {New J. Phys.}\ }\textbf {\bibinfo {volume} {14}},\ \bibinfo
  {pages} {113011} (\bibinfo {year} {2012})}\BibitemShut {NoStop}%
\bibitem [{\citenamefont {Mari}\ and\ \citenamefont
  {Eisert}(2012)}]{mari2012positive}%
  \BibitemOpen
  \bibfield  {author} {\bibinfo {author} {\bibfnamefont {Andrea}\ \bibnamefont
  {Mari}}\ and\ \bibinfo {author} {\bibfnamefont {Jens}\ \bibnamefont
  {Eisert}},\ }\bibfield  {title} {\enquote {\bibinfo {title} {Positive
  {W}igner functions render classical simulation of quantum computation
  efficient},}\ }\href {\doibase 10.1103/PhysRevLett.109.230503} {\bibfield
  {journal} {\bibinfo  {journal} {Phys. Rev. Lett.}\ }\textbf {\bibinfo
  {volume} {109}},\ \bibinfo {pages} {230503} (\bibinfo {year}
  {2012})}\BibitemShut {NoStop}%
\bibitem [{\citenamefont {Garcia}\ \emph {et~al.}(2012)\citenamefont {Garcia},
  \citenamefont {Markov},\ and\ \citenamefont {Cross}}]{garcia2012efficient}%
  \BibitemOpen
  \bibfield  {author} {\bibinfo {author} {\bibfnamefont {Hector~J}\
  \bibnamefont {Garcia}}, \bibinfo {author} {\bibfnamefont {Igor~L}\
  \bibnamefont {Markov}}, \ and\ \bibinfo {author} {\bibfnamefont {Andrew~W}\
  \bibnamefont {Cross}},\ }\bibfield  {title} {\enquote {\bibinfo {title}
  {Efficient inner-product algorithm for stabilizer states},}\ }\href
  {https://doi.org/10.48550/arXiv.1210.6646} {\bibfield  {journal} {\bibinfo
  {journal} {arXiv:1210.6646}\ } (\bibinfo {year} {2012})}\BibitemShut
  {NoStop}%
\bibitem [{\citenamefont {Garc{\'{\i}}a}\ \emph {et~al.}(2014)\citenamefont
  {Garc{\'{\i}}a}, \citenamefont {Markov},\ and\ \citenamefont
  {Cross}}]{garcia2014geometry}%
  \BibitemOpen
  \bibfield  {author} {\bibinfo {author} {\bibfnamefont {H{\'{e}}ctor~J.}\
  \bibnamefont {Garc{\'{\i}}a}}, \bibinfo {author} {\bibfnamefont {Igor~L.}\
  \bibnamefont {Markov}}, \ and\ \bibinfo {author} {\bibfnamefont {Andrew~W.}\
  \bibnamefont {Cross}},\ }\bibfield  {title} {\enquote {\bibinfo {title} {On
  the geometry of stabilizer states},}\ }\href
  {https://doi.org/10.48550/arXiv.1711.07848} {\bibfield  {journal} {\bibinfo
  {journal} {Quant. Inf. and Comp.}\ }\textbf {\bibinfo {volume} {14}},\
  \bibinfo {pages} {683--720} (\bibinfo {year} {2014})}\BibitemShut {NoStop}%
\bibitem [{\citenamefont {Bravyi}\ \emph {et~al.}(2016)\citenamefont {Bravyi},
  \citenamefont {Smith},\ and\ \citenamefont {Smolin}}]{bravyi2016trading}%
  \BibitemOpen
  \bibfield  {author} {\bibinfo {author} {\bibfnamefont {Sergey}\ \bibnamefont
  {Bravyi}}, \bibinfo {author} {\bibfnamefont {Graeme}\ \bibnamefont {Smith}},
  \ and\ \bibinfo {author} {\bibfnamefont {John~A}\ \bibnamefont {Smolin}},\
  }\bibfield  {title} {\enquote {\bibinfo {title} {Trading classical and
  quantum computational resources},}\ }\href {\doibase
  10.1103/PhysRevX.6.021043} {\bibfield  {journal} {\bibinfo  {journal} {Phys.
  Rev. X}\ }\textbf {\bibinfo {volume} {6}},\ \bibinfo {pages} {021043}
  (\bibinfo {year} {2016})}\BibitemShut {NoStop}%
\bibitem [{\citenamefont {Ferrie}\ and\ \citenamefont
  {Emerson}(2008)}]{Ferrie_2008}%
  \BibitemOpen
  \bibfield  {author} {\bibinfo {author} {\bibfnamefont {Christopher}\
  \bibnamefont {Ferrie}}\ and\ \bibinfo {author} {\bibfnamefont {Joseph}\
  \bibnamefont {Emerson}},\ }\bibfield  {title} {\enquote {\bibinfo {title}
  {Frame representations of quantum mechanics and the necessity of negativity
  in quasi-probability representations},}\ }\href {\doibase
  10.1088/1751-8113/41/35/352001} {\bibfield  {journal} {\bibinfo  {journal}
  {J. Phys. A}\ }\textbf {\bibinfo {volume} {41}},\ \bibinfo {pages} {352001}
  (\bibinfo {year} {2008})}\BibitemShut {NoStop}%
\bibitem [{\citenamefont {Gross}(2006)}]{GrossHudson}%
  \BibitemOpen
  \bibfield  {author} {\bibinfo {author} {\bibfnamefont {D.}~\bibnamefont
  {Gross}},\ }\bibfield  {title} {\enquote {\bibinfo {title} {Hudson{'}s
  theorem for finite-dimensional quantum systems},}\ }\href {\doibase
  10.1063/1.2393152} {\bibfield  {journal} {\bibinfo  {journal} {J. Math.
  Phys.}\ }\textbf {\bibinfo {volume} {47}},\ \bibinfo {pages} {122107}
  (\bibinfo {year} {2006})}\BibitemShut {NoStop}%
\bibitem [{\citenamefont {Gibbons}\ \emph {et~al.}(2004)\citenamefont
  {Gibbons}, \citenamefont {Hoffman},\ and\ \citenamefont
  {Wootters}}]{Gibbons2004Wootters}%
  \BibitemOpen
  \bibfield  {author} {\bibinfo {author} {\bibfnamefont {Kathleen}\
  \bibnamefont {Gibbons}}, \bibinfo {author} {\bibfnamefont {Matthew}\
  \bibnamefont {Hoffman}}, \ and\ \bibinfo {author} {\bibfnamefont {William}\
  \bibnamefont {Wootters}},\ }\bibfield  {title} {\enquote {\bibinfo {title}
  {Discrete phase space based on finite fields},}\ }\href {\doibase
  10.1103/PhysRevA.70.062101} {\bibfield  {journal} {\bibinfo  {journal} {Phys.
  Rev. A}\ }\textbf {\bibinfo {volume} {70}} (\bibinfo {year} {2004}),\
  10.1103/PhysRevA.70.062101}\BibitemShut {NoStop}%
\bibitem [{\citenamefont {Aleksandrowicz}\ \emph {et~al.}(2019)\citenamefont
  {Aleksandrowicz}, \citenamefont {Alexander}, \citenamefont {Barkoutsos},
  \citenamefont {Bello}, \citenamefont {Ben-Haim}, \citenamefont {Bucher},
  \citenamefont {Cabrera-Hern{\'a}ndez}, \citenamefont {Carballo-Franquis},
  \citenamefont {Chen}, \citenamefont {Chen} \emph
  {et~al.}}]{aleksandrowicz2019qiskit}%
  \BibitemOpen
  \bibfield  {author} {\bibinfo {author} {\bibfnamefont {Gadi}\ \bibnamefont
  {Aleksandrowicz}}, \bibinfo {author} {\bibfnamefont {Thomas}\ \bibnamefont
  {Alexander}}, \bibinfo {author} {\bibfnamefont {Panagiotis}\ \bibnamefont
  {Barkoutsos}}, \bibinfo {author} {\bibfnamefont {Luciano}\ \bibnamefont
  {Bello}}, \bibinfo {author} {\bibfnamefont {Yael}\ \bibnamefont {Ben-Haim}},
  \bibinfo {author} {\bibfnamefont {D}~\bibnamefont {Bucher}}, \bibinfo
  {author} {\bibfnamefont {FJ}~\bibnamefont {Cabrera-Hern{\'a}ndez}}, \bibinfo
  {author} {\bibfnamefont {J}~\bibnamefont {Carballo-Franquis}}, \bibinfo
  {author} {\bibfnamefont {A}~\bibnamefont {Chen}}, \bibinfo {author}
  {\bibfnamefont {CF}~\bibnamefont {Chen}},  \emph {et~al.},\ }\href {\doibase
  10.5281/zenodo.2562110} {\enquote {\bibinfo {title} {Qiskit: An open-source
  framework for quantum computing},}\ } (\bibinfo {year} {2019})\BibitemShut
  {NoStop}%
\bibitem [{\citenamefont {Garcia}\ and\ \citenamefont
  {Markov}(2014)}]{garcia2014simulation}%
  \BibitemOpen
  \bibfield  {author} {\bibinfo {author} {\bibfnamefont {Hector~J}\
  \bibnamefont {Garcia}}\ and\ \bibinfo {author} {\bibfnamefont {Igor~L}\
  \bibnamefont {Markov}},\ }\bibfield  {title} {\enquote {\bibinfo {title}
  {Simulation of quantum circuits via stabilizer frames},}\ }\href {\doibase
  10.1109/TC.2014.2360532} {\bibfield  {journal} {\bibinfo  {journal} {IEEE
  Trans. Comput.}\ }\textbf {\bibinfo {volume} {64}},\ \bibinfo {pages}
  {2323--2336} (\bibinfo {year} {2014})}\BibitemShut {NoStop}%
\bibitem [{\citenamefont {Villalonga}\ \emph {et~al.}(2019)\citenamefont
  {Villalonga}, \citenamefont {Boixo}, \citenamefont {Nelson}, \citenamefont
  {Henze}, \citenamefont {Rieffel}, \citenamefont {Biswas},\ and\ \citenamefont
  {Mandra}}]{villalonga2019flexible}%
  \BibitemOpen
  \bibfield  {author} {\bibinfo {author} {\bibfnamefont {Benjamin}\
  \bibnamefont {Villalonga}}, \bibinfo {author} {\bibfnamefont {Sergio}\
  \bibnamefont {Boixo}}, \bibinfo {author} {\bibfnamefont {Bron}\ \bibnamefont
  {Nelson}}, \bibinfo {author} {\bibfnamefont {Christopher}\ \bibnamefont
  {Henze}}, \bibinfo {author} {\bibfnamefont {Eleanor}\ \bibnamefont
  {Rieffel}}, \bibinfo {author} {\bibfnamefont {Rupak}\ \bibnamefont {Biswas}},
  \ and\ \bibinfo {author} {\bibfnamefont {Salvatore}\ \bibnamefont {Mandra}},\
  }\bibfield  {title} {\enquote {\bibinfo {title} {A flexible high-performance
  simulator for verifying and benchmarking quantum circuits implemented on real
  hardware},}\ }\href {\doibase 10.1038/s41534-019-0196-1} {\bibfield
  {journal} {\bibinfo  {journal} {npj Quantum Inf.}\ }\textbf {\bibinfo
  {volume} {5}},\ \bibinfo {pages} {1--16} (\bibinfo {year}
  {2019})}\BibitemShut {NoStop}%
\bibitem [{\citenamefont {R\"{o}tteler}(2010)}]{Rotteler2010hiddenshift}%
  \BibitemOpen
  \bibfield  {author} {\bibinfo {author} {\bibfnamefont {Martin}\ \bibnamefont
  {R\"{o}tteler}},\ }\enquote {\bibinfo {title} {Quantum algorithms for highly
  non-linear {B}oolean functions},}\ in\ \href {\doibase
  10.1137/1.9781611973075.37} {\emph {\bibinfo {booktitle} {Proceedings of the
  2010 Annual ACM-SIAM Symposium on Discrete Algorithms (SODA)}}},\ \bibinfo
  {editor} {edited by\ \bibinfo {editor} {\bibfnamefont {Moses}\ \bibnamefont
  {Charikar}}}\ (\bibinfo  {publisher} {Association for Computing Machinery,
  Inc. and the Society for Industrial and Applied Mathematics},\ \bibinfo
  {year} {2010})\ pp.\ \bibinfo {pages} {448--457}\BibitemShut {NoStop}%
\bibitem [{\citenamefont {Farhi}\ \emph {et~al.}(2014)\citenamefont {Farhi},
  \citenamefont {Goldstone},\ and\ \citenamefont {Gutmann}}]{farhi2014-qaoa}%
  \BibitemOpen
  \bibfield  {author} {\bibinfo {author} {\bibfnamefont {Edward}\ \bibnamefont
  {Farhi}}, \bibinfo {author} {\bibfnamefont {Jeffrey}\ \bibnamefont
  {Goldstone}}, \ and\ \bibinfo {author} {\bibfnamefont {Sam}\ \bibnamefont
  {Gutmann}},\ }\bibfield  {title} {\enquote {\bibinfo {title} {A quantum
  approximate optimization algorithm applied to a bounded occurrence constraint
  problem},}\ }\href {https://doi.org/10.48550/arXiv.1412.6062} {\bibfield
  {journal} {\bibinfo  {journal} {arXiv:1412.6062}\ } (\bibinfo {year}
  {2014})}\BibitemShut {NoStop}%
\bibitem [{\citenamefont {Gottesman}\ and\ \citenamefont
  {Chuang}(1999)}]{gottesman1999demonstrating}%
  \BibitemOpen
  \bibfield  {author} {\bibinfo {author} {\bibfnamefont {Daniel}\ \bibnamefont
  {Gottesman}}\ and\ \bibinfo {author} {\bibfnamefont {Isaac~L}\ \bibnamefont
  {Chuang}},\ }\bibfield  {title} {\enquote {\bibinfo {title} {Demonstrating
  the viability of universal quantum computation using teleportation and
  single-qubit operations},}\ }\href {\doibase 10.1038/46503} {\bibfield
  {journal} {\bibinfo  {journal} {Nature}\ }\textbf {\bibinfo {volume} {402}},\
  \bibinfo {pages} {390--393} (\bibinfo {year} {1999})}\BibitemShut {NoStop}%
\bibitem [{\citenamefont {Zhou}\ \emph {et~al.}(2000)\citenamefont {Zhou},
  \citenamefont {Leung},\ and\ \citenamefont {Chuang}}]{zhou2000methodology}%
  \BibitemOpen
  \bibfield  {author} {\bibinfo {author} {\bibfnamefont {Xinlan}\ \bibnamefont
  {Zhou}}, \bibinfo {author} {\bibfnamefont {Debbie~W}\ \bibnamefont {Leung}},
  \ and\ \bibinfo {author} {\bibfnamefont {Isaac~L}\ \bibnamefont {Chuang}},\
  }\bibfield  {title} {\enquote {\bibinfo {title} {Methodology for quantum
  logic gate construction},}\ }\href {\doibase 10.1103/PhysRevA.62.052316}
  {\bibfield  {journal} {\bibinfo  {journal} {Phys. Rev. A}\ }\textbf {\bibinfo
  {volume} {62}},\ \bibinfo {pages} {052316} (\bibinfo {year}
  {2000})}\BibitemShut {NoStop}%
\bibitem [{\citenamefont {Bitner}\ \emph {et~al.}(1976)\citenamefont {Bitner},
  \citenamefont {Ehrlich},\ and\ \citenamefont
  {Reingold}}]{BitnerBinaryReflectedGray1976}%
  \BibitemOpen
  \bibfield  {author} {\bibinfo {author} {\bibfnamefont {James~R.}\
  \bibnamefont {Bitner}}, \bibinfo {author} {\bibfnamefont {Gideon}\
  \bibnamefont {Ehrlich}}, \ and\ \bibinfo {author} {\bibfnamefont {Edward~M.}\
  \bibnamefont {Reingold}},\ }\bibfield  {title} {\enquote {\bibinfo {title}
  {Efficient generation of the binary reflected gray code and its
  applications},}\ }\href {\doibase https://doi.org/10.1145/360336.360343}
  {\bibfield  {journal} {\bibinfo  {journal} {Communications of the ACM}\
  }\textbf {\bibinfo {volume} {19}},\ \bibinfo {pages} {517--521} (\bibinfo
  {year} {1976})}\BibitemShut {NoStop}%
\bibitem [{\citenamefont {Hayes}(2005)}]{hayes2005large}%
  \BibitemOpen
  \bibfield  {author} {\bibinfo {author} {\bibfnamefont {Thomas~P}\
  \bibnamefont {Hayes}},\ }\bibfield  {title} {\enquote {\bibinfo {title} {A
  large-deviation inequality for vector-valued martingales},}\ }\href
  {https://www.cs.unm.edu/~hayes/papers/VectorAzuma/VectorAzuma20030207.pdf}
  {\bibfield  {journal} {\bibinfo  {journal} {Combinatorics, Probability and
  Computing}\ } (\bibinfo {year} {2005})}\BibitemShut {NoStop}%
\bibitem [{\citenamefont {Qassim}\ \emph {et~al.}(2019)\citenamefont {Qassim},
  \citenamefont {Wallman},\ and\ \citenamefont {Emerson}}]{Qassim2019}%
  \BibitemOpen
  \bibfield  {author} {\bibinfo {author} {\bibfnamefont {Hammam}\ \bibnamefont
  {Qassim}}, \bibinfo {author} {\bibfnamefont {Joel~J.}\ \bibnamefont
  {Wallman}}, \ and\ \bibinfo {author} {\bibfnamefont {Joseph}\ \bibnamefont
  {Emerson}},\ }\bibfield  {title} {\enquote {\bibinfo {title} {Clifford
  recompilation for faster classical simulation of quantum circuits},}\ }\href
  {\doibase 10.22331/q-2019-08-05-170} {\bibfield  {journal} {\bibinfo
  {journal} {{Quantum}}\ }\textbf {\bibinfo {volume} {3}},\ \bibinfo {pages}
  {170} (\bibinfo {year} {2019})}\BibitemShut {NoStop}%
\end{thebibliography}%

\end{document}